\documentclass{IEEEtran}
\usepackage{amsmath,amssymb,amsfonts}
\usepackage{graphicx}
\usepackage{textcomp,nicefrac}
\usepackage{longtable}

\DeclareMathSizes{5.6}{5.6}{5}{5}

\newtheorem{theorem}{Theorem}
\newtheorem{corollary}{Corollary}
\newtheorem{proposition}{Proposition}
\newtheorem{lemma}{Lemma}
\newtheorem{definition}{Definition}
\newtheorem{remark}{Remark}

\begin{document}
\title{A Nonuniform-Sampling Theory for Static Multi-Threshold Sampling}

\author{Ao Qiu, and Qingguo Xie
    \thanks{This work was supported in part by the National Natural Science Foundation of China, under Grant 625B2079, 61927801, 62250002, and 62050288. (\textit{Corresponding author: Qingguo Xie})}
    \thanks{Ao Qiu is with the Department of Biomedical Engineering, Huazhong University of Science and Technology, Wuhan, 430074 China (e-mail: aqiu@hust.edu.cn).}
    \thanks{Qingguo Xie is with the Department of Biomedical Engineering, Huazhong University of Science and Technology, Wuhan, 430074 China; Wuhan National Laboratory for Optoelectronics, Wuhan, 430074 China; and the Department of Electronic Engineering and Information Science, University of Science and Technology of China, Hefei, 230026 China (e-mail: qgxie@hust.edu.cn).}}

\maketitle

\begin{abstract}
Multi-Threshold (MT) sampling fixes voltage levels and records when a waveform crosses them. The resulting signal-generated event stream is governed not by a prescribed clock but by the nonuniform geometry of the realized crossing set. Within classical nonuniform sampling, we develop a retained-event MT theory for static, uniformly spaced thresholds. For a realized retained MT time set, Beurling's weak-limit theorem gives the exact necessary-and-sufficient condition for stable sampling of $PW_\Omega$. For a fixed signal class, the same principle gives the a priori criterion: uniform stable MT sampling is equivalent to Bernstein uniqueness for every set in the class weak-limit hull. This criterion reduces to computable forms under structured priors. Periodic templates give finite fiber/rank tests, periodic-gap streams give a Shannon-type complete-interpolation boundary and a stable-sampling density rule, and finite-state gap priors reduce noncritical certification to a maximum-cycle-mean condition. Kadec-type local certificates, retained-density bounds, and finite-matrix tests provide practical sufficient conditions for finite active records, while conditional perturbation bounds quantify how jitter, front-end noise, and threshold error consume a clean sampling margin under event correspondence. The theory also identifies the unavoidable boundary of this model: unrestricted finite-energy retained MT records have zero full-line lower Beurling density and therefore cannot provide a universal prior-free sampling theorem for $PW_\Omega$. Within retained-event point sampling, the MT recovery question is resolved precisely: exact recovery is governed by event-set geometry, computable guarantees require structural priors or finite-dimensional/tail assumptions, and the unrestricted static finite-energy problem is impossible.
\end{abstract}

\begin{IEEEkeywords}
Multi-Threshold (MT), Nonuniform Sampling, Static Threshold Sampling, Beurling Weak Limits, Complete Interpolating Sequences, Landau--Beurling Theorem, Kadec Theorem, Value-Domain Digitization.
\end{IEEEkeywords}

\section{Introduction}
\label{sec:introduction}
\IEEEPARstart{C}{onventional} analog-to-digital converters (ADCs) sample amplitudes at prescribed time instants. Multi-Threshold (MT) sampling reverses this geometry: it fixes amplitude levels and timestamps the crossings. Introduced for PET pulse processing by Xie et al.~\cite{Xie2005}, MT is attractive because it spends events where the waveform moves and stays quiet where little changes, a useful feature for low-power event-driven acquisition~\cite{Tsividis2010,Miskowicz2006}.

The same feature also creates the main technical challenge. In uniform sampling, the time grid is known before the signal arrives. In static MT sampling, the grid is the signal's own footprint. Fast segments can produce dense crossing streams; near-quiescent segments can leave long silent intervals. Thus the question is not simply how many thresholds are used, but whether the realized crossing set is geometrically good enough to support stable recovery.

The surrounding literature supplies many pieces of this puzzle. Classical sampling and communication theory establish the Nyquist--Shannon benchmark~\cite{Nyquist1928,Shannon1948}; asynchronous and level-crossing converters provide the circuit context~\cite{Allier2003,Allier2005}. Threshold-based ADC architectures have been studied in hardware settings~\cite{Sayiner1996,Kozmin2009,Koscielnik2008}, and multichannel MT systems have been developed for nuclear instrumentation~\cite{Xi2013,Palka2017,Qiu2023}. Reconstruction theory enters through nonuniform algorithms~\cite{Grochenig1992,Feichtinger1992}, zero- and sine-crossing recovery~\cite{Logan1977,BocheMonich2012}, and level-crossing or implicit-information methods~\cite{Senay2012,Rzepka2018,RiazifarStocks2026}. The exact geometry of nonuniform sampling is governed by Beurling weak limits and complete-interpolation theory~\cite{beurling1966local,Seip2004,GrochenigRomeroStoeckler2018,LyubarskiiSeip1997}, while time-encoding machines address related event streams generated by dynamical encoders~\cite{LazarToth2004,GontierVetterli2014,Adam2020}. Standard sampling surveys and monographs~\cite{Jerri1977,Unser2000,Higgins1996}, together with nonuniform-sampling treatments~\cite{Benedetto1992}, frame the general landscape but do not settle the static, signal-generated MT case considered here.

The problem therefore has four linked layers:
\begin{enumerate}
    \item \textbf{Exact post-realization geometry.} Once the retained event set $\Lambda_{\Delta V}(x)$ has been realized, stable sampling is decided by the Beurling weak-limit hull of that fixed time set.
    \item \textbf{Exact class-level geometry.} Once a signal class is declared, the a priori object is the class weak-limit hull $\mathfrak W_{\Delta V}(\mathcal X)$ rather than the scalar threshold density alone.
    \item \textbf{Computable structured reductions.} Periodic templates, periodic-gap streams, finite-state gap priors, and calibrated noncritical density models are the cases where the abstract hull criterion becomes finite or scalar.
    \item \textbf{Boundary and implementation.} Static finite-energy retained records cannot sample all of $PW_\Omega$ without additional information; finite records require a tail prior or a declared finite-dimensional model; hardware error bounds are conditional on event correspondence.
\end{enumerate}

We answer these questions in the clean setting needed for a sampling certificate: real-valued effectively bandlimited signals, a static, uniformly spaced threshold set, the retained adjacent-transition convention, and ideal timing unless hardware perturbations are explicitly introduced. Throughout this paper, the threshold levels are chosen before acquisition, remain unchanged during acquisition, and have fixed spacing $\Delta V$; the retained event topology is likewise specified before the sampling theorems are invoked. Time-varying or adaptive threshold sets would require a different hardware and control model, including reconfiguration latency, threshold-update dynamics, and comparator calibration. They are therefore outside the theory developed here.

Actual MT hardware produces finite records on active windows. Such records cannot by themselves be sampling sets for the full infinite-dimensional space $PW_\Omega$. The finite-record result in Subsection~\ref{sec:finite_exact_condition} therefore uses the appropriate finite-dimensional notion: after a reconstruction space $V_M$ is declared, exact and stable recovery is equivalent to full column rank, or equivalently $\sigma_{\min}(H_{\mathrm{fin}})>0$. This finite-matrix condition is the operative acceptance test in numerical reconstruction.

Four classical tools anchor the analysis. Beurling's weak-limit theorem supplies the post-realization sampling condition through uniqueness sets for Bernstein spaces~\cite{beurling1966local,Seip2004,GrochenigRomeroStoeckler2018}. Complete-interpolation theory yields the critical-density analogue of Shannon's cardinal series through the Pavlov/Lyubarskii--Seip Muckenhoupt condition~\cite{LyubarskiiSeip1997}. Kadec's $1/4$ theorem measures local displacement from a Nyquist lattice~\cite{Kadec1964,Young2001}; the lattice itself comes from the Nyquist--Shannon benchmark~\cite{Nyquist1928,Shannon1948}. Landau--Beurling density theory controls the long-window information rate~\cite{Landau1967,beurling1966local,Seip2004,GrochenigRomeroStoeckler2018}. These results are used here to isolate what fixed voltage thresholds imply, and what they do not imply, for signal-generated event sets.

The contributions are organized by logical status.
\begin{enumerate}
    \item[\textbf{C1.}] \textbf{Exact realized-set criterion.}
    For a fixed retained MT time set, stable sampling of $PW_\Omega$ is equivalent to the Beurling weak-limit/Bernstein-uniqueness condition. This is an exact post-realization theorem, not a scalar ladder-design rule.

    \item[\textbf{C2.}] \textbf{Exact class-level criterion.}
    For a uniformly separated MT signal class generated by static, uniformly spaced thresholds, uniform stable sampling is equivalent to Bernstein uniqueness for every set in the class weak-limit hull. This identifies the correct a priori object.

    \item[\textbf{C3.}] \textbf{Structured computable reductions.}
    We show when the abstract class-hull criterion reduces to finite or scalar tests: finite periodic-template hulls, periodic-gap complete interpolation and redundant stable sampling, finite-state gap priors, and noncritical calibrated-density conversion.

    \item[\textbf{C4.}] \textbf{Practical sufficient certificates.}
    Kadec-type local placement bounds and retained-density estimates give checkable sufficient conditions for realized active records and structured classes. These certificates are deliberately separated from the exact weak-limit criteria.

    \item[\textbf{C5.}] \textbf{Static finite-energy no-go boundary and finite-record status.}
    We prove that unrestricted finite-energy static retained MT records have zero full-line lower density, so no static, uniformly spaced threshold set gives a universal prior-free sampling theorem on all of $PW_\Omega$. Finite records become meaningful only with a tail prior or a declared finite-dimensional reconstruction space.

    \item[\textbf{C6.}] \textbf{Conditional hardware perturbation theory.}
    Under event correspondence, timing jitter, input noise, and threshold residuals are converted into sampling-geometry perturbations and finite-matrix error budgets.
\end{enumerate}

These results should be read as a scope-delimited theorem with a boundary, not as a universal reconstruction guarantee for every finite-energy waveform. The exact positive statements concern realized or class-level event-set geometry; the computable positive statements require structured priors, exact retained-density calibration, tail information, or finite-dimensional models; and the unrestricted static finite-energy full-line problem is ruled out by the no-go theorem.

\begin{table*}[!t]
\caption{Logical status of the main MT sampling results.}
\label{tab:result_taxonomy}
\centering
\footnotesize
\setlength{\tabcolsep}{3pt}
\begin{tabular}{p{0.15\textwidth}p{0.21\textwidth}p{0.15\textwidth}p{0.21\textwidth}p{0.18\textwidth}}
\hline
Layer & Main result & Status & Required object & Meaning for static MT \\
\hline
Realized event set & Beurling weak-limit criterion & Exact iff & Fixed separated retained time set $\Lambda$ & Stable recovery is decided by event geometry after acquisition \\
Class prior & Class weak-limit hull criterion & Exact iff & Uniformly separated class hull $\mathfrak W_{\Delta V}(\mathcal X)$ & The correct a priori object is the hull, not threshold density alone \\
Structured priors & Periodic-template and finite-state reductions & Finite/scalar criteria under structure & Periodic or finite-state weak-limit geometry & The abstract hull condition becomes computable \\
Density calibration & Noncritical calibrated-density criterion & Noncritical iff & Exact retained-density calibration and separation & Scalar threshold density is exact only after calibration and away from criticality \\
Local certificates & Kadec and retained-density bounds & Sufficient & Velocity, curvature, retained topology, window priors & Practical design and post-realization checks \\
Finite records & Matrix rank/singular-value condition & Exact on $V_M$ & Declared finite-dimensional model space or tail prior & Finite data do not sample all of $PW_\Omega$ without extra information \\
Static finite-energy boundary & No-go theorem & Negative theorem & Unrestricted finite-energy retained MT with static, uniformly spaced thresholds & No universal prior-free full-line theorem exists \\
Hardware & Perturbation bounds & Conditional sufficient & Event correspondence and bounded impairments & Clean sampling margins are consumed by real hardware errors \\
\hline
\end{tabular}
\end{table*}

This organization also explains what the paper does not claim. It does not assert that static, uniformly spaced thresholds recover every finite-energy bandlimited signal over the full line. That statement is false under the retained event convention, as proved by the no-go theorem. Instead, the paper identifies the exact nonuniform-sampling object for any realized or class-level MT geometry, the structured priors under which that object becomes computable, and the finite-record or hardware conditions under which the theory can be used in practice.

The no-go theorem is not a weakness of the theory; it is the boundary condition that prevents overclaiming. It shows exactly why static MT sampling requires persistent activity, structured priors, tail information, or finite-dimensional reconstruction when one asks for stable recovery rather than merely event detection.

The paper moves from exact nonuniform-sampling geometry to computable MT certificates. Section~\ref{sec:preliminaries} recalls the sampling benchmarks, fixes the retained-event convention, and separates static finite-energy records, persistent/structured bi-infinite event geometries, and finite active records. Section~\ref{sec:exact_iff} develops the weak-limit characterization, the class-hull criterion, structured prior reductions, Shannon-type subcases, and the finite-record condition. Sections~\ref{sec:sufficient}--\ref{sec:density_analysis} derive local and long-window sufficient certificates; Section~\ref{sec:no_go} states the static finite-energy boundary; and Section~\ref{sec:hardware_error} gives conditional hardware perturbation bounds for static MT records. Section~\ref{sec:reconstruction} describes reconstruction from certified nonuniform samples, and Section~\ref{sec:numerical} reports numerical diagnostics. Sections~\ref{sec:discussion} and~\ref{sec:limitations} summarize the scope and limitations of the static-threshold theory; Appendices~\ref{app:landau}--\ref{app:pedantic_extras} provide the detailed derivations, with notation summarized in Appendix~\ref{sec:notation}.

\section{Mathematical Preliminaries}
\label{sec:preliminaries}
A summary of principal notation used throughout this paper is provided in Tables~\ref{tab:notation} and~\ref{tab:notation-extended} in Appendix~\ref{sec:notation}. Here and below, $\mathbb{R}$ denotes the real line and $\mathbb{Z}$ denotes the integer lattice.

Finite bandwidth imposes a simple physical constraint: the waveform cannot move arbitrarily fast. After the adjacent-transition convention is applied to the raw horizontal intersections, this constraint turns the recorded MT event stream into quantitative timing information. The Paley-Wiener space encodes the bandwidth constraint. The total-variation identity converts it into a bound on vertical travel, and the frame inequality specifies when the resulting event record is \emph{informationally complete}. Together, these ingredients let the Kadec and Landau--Beurling benchmarks operate on the signal-dependent retained MT set under the modeling scope stated in the Introduction. Throughout, we use standard asymptotic notation $o(\cdot)$ and $\mathcal{O}(\cdot)$ together with the comparison symbols $\lesssim$ and $\gtrsim$.

\subsection{The Paley-Wiener Space and Frame Bounds}
Throughout this paper, $x:\mathbb{R} \to \mathbb{R}$ denotes a continuous, real-valued physical signal. The scalar real-valued model corresponds to one physical comparator channel. We assume $x$ belongs to the finite-energy, bandlimited Paley--Wiener space~\cite{paley1934fourier} $PW_\Omega$:
\begin{equation}
    PW_\Omega = \left\{ x \in L^2(\mathbb{R}) : \text{supp}(\hat{x}) \subseteq [-\Omega, \Omega] \right\}
    \label{eq:pw_space}
\end{equation}
where $L^2(\mathbb{R})$ is the square-integrable function space. The Fourier transform is $\hat{x}(\omega) = \int_{-\infty}^{\infty} x(t)\,e^{-j\omega t}\,dt$, written in the IEEE convention, where $j^2=-1$ denotes the IEEE imaginary unit. The operators $\mathcal{F}$ and $\mathcal{F}^{-1}$ denote the Fourier and inverse Fourier transform operators when operator notation is more convenient. The parameter $f_{max}$ is the absolute maximum frequency component present in the physical signal, measured in Hertz (Hz). Equivalently, $\Omega = 2\pi f_{max}$ is the maximum angular bandwidth (rad/s) and sets the strict support boundary of $\hat{x}$ in the frequency domain. Throughout, $\mathrm{sinc}(u) \triangleq \sin(\pi u)/(\pi u)$ denotes the normalized sinc function. By the Paley--Wiener theorem, any function $x \in PW_\Omega$ extends to an entire function of exponential type~\cite{Boas1954,Levin1996} at most $\Omega$ on the complex plane. Consequently, signals in $PW_\Omega$ are bounded, uniformly continuous, and infinitely differentiable on $\mathbb{R}$. Bernstein's inequality~\cite{Boas1954,Levin1996} further gives the derivative bound $\|x^{(k)}\|_\infty \le \Omega^k \|x\|_\infty$ for all orders $k \ge 1$, where $x^{(k)}$ denotes the $k$-th derivative of $x$. For continuously differentiable signals moving between finite energy limits, we define the total variation on an interval $[a,b]$ as:
\begin{equation}
    \mathrm{TV}(x;[a,b]) = \int_a^b |x'(t)|\,dt.
    \label{eq:tv_def}
\end{equation}

Physically, the assumption $x \in PW_\Omega$ is used as an \emph{effective bandlimitation} model: spectral content above $f_{max}$ has either been sufficiently attenuated by the analog front-end or is immaterial at the modeling accuracy sought here. The mathematical theorems below, however, use the exact support condition in~\eqref{eq:pw_space}. Approximate bandlimitation should be handled by writing $x=x_\Omega+r$ with $x_\Omega\in PW_\Omega$ and carrying the out-of-band residual $r$ into the finite-record or sample-value error budget. Thus $PW_\Omega$ is the standard ideal limit that makes the bandwidth constraint explicit, rather than a literal brick-wall hardware spectrum. Under that exact model, finite bandwidth controls derivative norms through Bernstein-type inequalities for a fixed amplitude envelope. It does \emph{not} by itself impose a finite-window zero-density or local-oscillation bound. Superoscillatory bandlimited functions may oscillate locally faster than their largest Fourier component, with the cost appearing in dynamic range or energy concentration elsewhere~\cite{Kempf2000}. Consequently, the local crossing-density claims below rely on explicit anti-superoscillation, monotonicity, or finite-window count certificates rather than on bandwidth alone. In real-world instrumentation, $f_{max}$ is the usable front-end analog bandwidth of the acquisition hardware, typically set by parasitic capacitance together with any explicit anti-alias or anti-noise filtering before the comparator bank. Geometrically, Bernstein's inequality prevents arbitrarily steep slopes or infinitely sharp corners under a fixed amplitude envelope. Once the additional local-density certificates have been supplied, this deterministic derivative control converts vertical MT threshold traversals into quantitative bounds on horizontal timing displacement.

For the MT architecture, the functional $\mathrm{TV}(x;[a,b])$ has a direct physical interpretation: it is the cumulative absolute vertical voltage distance traveled by the waveform's trajectory over the time interval $[a,b]$. In the idealized event model with static, uniformly spaced thresholds used here, each monotone traversal of one threshold interval of size $\Delta V$ contributes one raw threshold hit, while the retained hardware record may lose explicitly counted same-threshold returns under the adjacent-transition convention below. Thus the macroscopic ratio $\mathrm{TV}/\Delta V$ forms the natural first-order proxy for the raw traversal count, and the retained-count theorems subtract endpoint, floor, and suppression losses. These regularity properties---global energy finiteness, boundedness, smoothness, and controlled derivative growth---are used throughout the subsequent Kadec and Landau--Beurling arguments.

To isolate the static MT regime stated in the Introduction, we assume throughout that the threshold levels are chosen before acquisition, remain unchanged during acquisition, and are uniformly spaced by $\Delta V > 0$. This is the natural hardware choice for a resistor-ladder voltage reference, the simplest and most common comparator-bank architecture. It is also the prior-free static-threshold choice. A nonuniform threshold set allocates more event resolution to selected voltage ranges and therefore encodes an amplitude-domain prior about where the waveform is expected to spend time, where small changes matter most, or where the hardware should spend comparator resources. In the absence of such a value-domain prior, translation invariance in the voltage variable and the desire for a single traversal-to-count calibration lead to the uniform threshold set. This choice also makes raw crossing counts transparent: each monotone traversal of one threshold step contributes exactly one raw hit, regardless of which threshold is crossed; the retained record is obtained only after the topology convention below is applied. The present theory therefore reduces threshold design to the single scalar parameter $\Delta V$ and treats all sampling theorems after that threshold set and retained topology have been specified. Static nonuniform threshold sets are not ruled out, but they require an explicit value-domain prior and a model for the local threshold density. In particular, such sets are not merely a notational variant of the uniform case: after realization their event times can still be tested by classical nonuniform sampling, but the design of the voltage-level density itself is a value-domain allocation problem outside the classical time-sampling framework. Let $\{I_k\}$ denote the open maximal monotone intervals of $x(t)$ (on which $x$ is strictly increasing or decreasing). First define the raw horizontal-intersection set
\[
    \Lambda_{\Delta V}^{\mathrm{raw}}
    = \bigcup_k \{ t \in I_k : x(t) \in \Delta V \cdot \mathbb{Z} \}.
\]
The raw set above is not yet the hardware record used in the theorems. A physical comparator bank may report every edge, may include hysteresis, or may use digital filtering. We therefore make the recorded-event topology an explicit modeling assumption rather than a neutral definition.

\textbf{Recorded adjacent-transition convention.}
The ideal record in this paper is the ordered, labeled, adjacent-transition subsequence
\[
    E_{\Delta V}(x)
    =
    \bigl\{(t_n,m_n): t_n\in\Lambda_{\Delta V}^{\mathrm{raw}},
    \ x(t_n)=m_n\Delta V\bigr\}_{n\in\mathcal{N}},
\]
obtained by applying the retained adjacent-transition map to the raw crossings. The associated time-coordinate set used in the sampling analysis is
\begin{equation}
    \Lambda_{\Delta V}
    =
    \mathcal{A}_{\Delta V}\!\left(\Lambda_{\Delta V}^{\mathrm{raw}}\right),
    \label{eq:mvt_set}
\end{equation}
Here $\mathcal{A}_{\Delta V}$ is the following deterministic state map. On any finite interval, list the locally finite raw events in increasing time order as
\[
    R=\{(r_j,\ell_j)\}_{j\in J},
    \qquad
    x(r_j)=\ell_j\Delta V.
\]
For a finite active observation window, the retained map also depends on the retained state immediately before the left endpoint. Write this state as
\[
    m_{\mathrm{init}}^-\in\mathbb Z\cup\{\varnothing\}.
\]
The finite-window map is denoted by $\mathcal A_{\Delta V}^{m_{\mathrm{init}}^-}(R)$. If $R$ is empty, the retained record is empty. If $m_{\mathrm{init}}^-=\varnothing$, we use local initialization and retain the first raw event in the chosen time orientation. If $m_{\mathrm{init}}^-=m_0\in\mathbb Z$, the first retained event is instead
\[
    j_1=\min\{j\in J:\ |\ell_j-m_0|=1\},
\]
when this set is nonempty; raw same-state returns before this first adjacent transition are suppressed. Having retained $j_k$, define
\begin{equation}
    j_{k+1}
    =
    \min\{j>j_k:\ |\ell_j-\ell_{j_k}|=1\},
    \label{eq:retained_state_forward}
\end{equation}
when this set is nonempty, and repeat. All raw returns with $\ell_j=\ell_{j_k}$ before the next adjacent-level transition are suppressed. Continuity of $x$ prevents a first different raw label from skipping an intermediate ladder level; the explicit adjacent condition in~\eqref{eq:retained_state_forward} records this fact as part of the event topology. The notation $\mathcal A_{\Delta V}(R)$ means the globally initialized bi-infinite map when a retained origin is fixed, or local initialization $m_{\mathrm{init}}^-=\varnothing$ when only an isolated finite record is available. If the true pre-window state is unavailable, the possible discrepancy between $\mathcal A_{\Delta V}^{m_{\mathrm{init}}^-}$ and local initialization can affect only the first retained event of the finite record, and all finite-window density, Kadec-extraction, and boundary-collar estimates absorb this $O(1)$ uncertainty into the stated endpoint, monotone-count, support, or $S_{\mathrm{bdry}}$ corrections. For a bi-infinite retained stream one fixes one retained origin and applies the same recursion forward and backward in time; changing the origin can alter only a finite prefix in a finite record and does not change the local adjacent-transition rule used in the sampling theorems. Because the branches $I_k$ are open and strictly monotone, a point at which $x$ only touches a threshold at a turning point is not a raw crossing. If $x\equiv0$, or if a threshold is followed on a flat interval, no raw event is produced by this convention. For nonzero $PW_\Omega$ signals, analyticity gives only finitely many monotone branches and raw threshold hits on each compact interval, so the ordered state recursion is well defined locally.
\noindent Here $\mathcal{N}$ is the ordered index set of retained labeled events in $E_{\Delta V}(x)$.

All static MT theorems in this paper are conditional on this recorded adjacent-transition convention. If an implementation reports tangential contacts, same-threshold returns, or noisy comparator bounces as separate events, the event topology changes and the corresponding sampling set must be reanalyzed under that different record model. The mapping $x \mapsto \Lambda_{\Delta V}(x)$ is nonlinear because the sampling set itself depends on the signal. Accordingly, whenever frame or density theorems are invoked below, they are applied \emph{after fixing a signal instance} and studying the realized set $\Lambda_{\Delta V}(x)$.

Each element of $\Lambda_{\Delta V}$ is therefore a retained time coordinate at which the waveform crosses one of the horizontal lines $v = m\Delta V$ along a monotone branch and completes an adjacent threshold-index transition. The decomposition into open maximal monotone intervals prevents ambiguous double counting at turning points: on a strictly monotone branch, the equation $x(t)=m\Delta V$ has at most one solution for each threshold level, while endpoint contacts at extrema are excluded by definition.

The monotone-crossing convention is physically reasonable when the comparator output is filtered into a retained adjacent-transition record, either by analog hysteresis or by digital edge filtering, so that each traversal between adjacent threshold rungs produces one timestamped event. Fast bouncing near a threshold is therefore excluded from the core model and would require an explicit nonideal event model. This assumption is used essentially in the separation bound, the finite-energy obstruction, and the no-go construction; it should be regarded as part of the acquisition model, not as a theorem about all possible level-crossing hardware.

\textbf{Idealized hardware assumptions.} Throughout this paper, the MT system is modeled under three idealizations: (i)~each comparator detects threshold crossings with infinite time precision (zero jitter); (ii)~the time-to-digital converter has infinite resolution; and (iii)~transient effects such as comparator metastability and propagation delay are neglected. In practice, these effects introduce bounded timing errors $\epsilon_k^{(j)}$ at each crossing, using the jitter-error notation formalized later in~\eqref{eq:jitter_model}. If a selected crossing subsequence already satisfies Kadec's inequality with margin $\eta > 0$, namely $\sup_n |t_n-(\tau_0+n\Delta T)| \le \Delta T/4 - \eta$ for some grid offset $\tau_0$, then any additional timing error satisfying $\sup_k |\epsilon_k^{(j)}| < \eta$ preserves Kadec compatibility.

These idealizations are standard first-order abstractions for a sampling theorem: the core requirement is only that the total timing uncertainty stay below the available Kadec margin. When that perturbative margin is not small, timing nonidealities must be incorporated into the model explicitly rather than treated as negligible.

\subsection{Three Uses of the Retained MT Time Set}
\label{sec:event_set_regimes}
The same notation $\Lambda_{\Delta V}$ is used in three logically distinct ways.

\textbf{Static finite-energy retained record.}
When $x\in PW_\Omega$ is an unrestricted finite-energy full-line waveform, the static, uniformly spaced threshold set and retained adjacent-transition convention produce a static retained record. Theorem~\ref{thm:no_go} shows that this full-line retained set has zero lower Beurling density; indeed, under the retained convention it is finite after the finite-energy tails fall below one threshold cell. Therefore such a record cannot sample all of $PW_\Omega$ without additional information.

\textbf{Persistent or structured bi-infinite event geometry.}
The post-realization weak-limit theorem, periodic-gap theory, finite-template hulls, and finite-state gap priors are theorems about fixed separated event geometries, persistent streaming models, periodic steady-state extensions, idealized bi-infinite extensions of active records, or class-level event-set priors. These structured event models should not be read as claims that every such bi-infinite event set is generated by an unrestricted finite-energy full-line waveform.

\textbf{Finite active record.}
Actual hardware data live on an observation interval $[0,T_{\mathrm{obs}}]$. Exact finite-record recovery claims therefore pass through either a tail prior/padding values or a declared finite-dimensional reconstruction space $V_M$. In the latter case the exact stability object is the finite sampling matrix and its smallest singular value $\sigma_{\min}(H_{\mathrm{fin}})$, not a full-line frame for all of $PW_\Omega$.

Thus, when $x\in PW_\Omega$ is an unrestricted finite-energy full-line waveform, the static retained record cannot be a full-line sampling set. The infinite sampling theorems below should instead be read as theorems about a fixed separated retained event geometry, a persistent/periodic extension, or a class-level weak-limit prior. The finite-record reconstruction statements are separate and pass through either a tail prior or the finite-dimensional rank condition.

For a fixed realized nonuniform set $\Lambda = \{t_n\}$, stable reconstruction is understood in the usual frame sense~\cite{DuffinSchaeffer1952,Christensen2003}: $\Lambda$ is a valid sampling set if it satisfies the frame inequality with global bounds $0 < A \le B < \infty$:
\begin{equation}
    A \|f\|_{L^2}^2 \le \sum_{n \in \mathbb{Z}} |f(t_n)|^2 \le B \|f\|_{L^2}^2,
    \qquad f\in PW_\Omega,
    \label{eq:frame}
\end{equation}
The upper bound $B$ ensures operator boundedness (numerical stability), while the lower bound $A > 0$ guarantees injectivity (no information loss) and invertibility of the reconstruction.

From an engineering viewpoint, the lower frame bound rules out indistinguishable value data from two distinct bandlimited functions $f_1,f_2$ on the already realized set $\Lambda$. The upper frame bound prevents sample perturbations from being amplified without control. Stable sampling therefore requires uniqueness with quantitative robustness. The ratio $B/A$ acts as an effective condition number of the sampling operator: small $B/A$ corresponds to a well-conditioned inverse problem, while $A \to 0$ signals the onset of ill-posed reconstruction regardless of how many crossings are recorded.

\subsection{The Landau--Beurling Theorem}
A basic theorem governing nonuniform sampling is the Landau--Beurling density theorem~\cite{Landau1967,beurling1966local}, with modern sampling-theoretic treatments in analytic and shift-invariant settings~\cite{Seip2004,GrochenigRomeroStoeckler2018}. The lower Beurling density of $\Lambda$ is defined as
\begin{equation}
    D^-(\Lambda) = \liminf_{r \to \infty} \inf_{t \in \mathbb{R}} \frac{n(t, r)}{r}
    \label{eq:density_def}
\end{equation}
where $n(t, r)$ is the number of sampling points within the interval $[t, t+r]$.

The quantity $D^-(\Lambda)$ measures the worst-case average number of temporal constraints per unit time. It is a macroscopic information-rate descriptor: if the sampler undersupplies points on some sufficiently long interval, no excess of points elsewhere can recover the missing local degrees of freedom. The double limit---first an infimum over window positions $t$, then a liminf as the window length $r \to \infty$---selects the slowest long-term arrival rate of samples and therefore represents the most conservative reading of the effective sampling rate on $\Lambda$.

\textbf{Necessary condition (Landau).} If $\Lambda$ is a sampling set for $PW_\Omega$, then $D^-(\Lambda) \ge 2f_{max}$.

\textbf{Sufficient condition (Beurling).} If $\Lambda$ is uniformly discrete (i.e., $\inf_{m \neq n} |t_m - t_n| \ge \delta_{\mathrm{sep}} > 0$) and $D^-(\Lambda) > 2f_{max}$, then $\Lambda$ is a sampling set for $PW_\Omega$.

Appendix~\ref{app:landau} records the normalization, the Landau concentration argument used for necessity, and the Beurling strict-density input used for sufficiency. The auxiliary prolate transition theorem and localized Plancherel-P\'olya input are stated separately in Appendix~\ref{app:pedantic_extras}.

\textbf{The critical boundary case.} When $D^-(\Lambda) = 2f_{max}$, stable reconstruction depends on microscopic global point geometry beyond macroscopic density. As a canonical example, the uniform Shannon grid~\cite{Shannon1948} $\Lambda_0 = \{n/(2f_{max})\}_{n \in \mathbb{Z}}$ has $D^-(\Lambda_0) = 2f_{max}$ and is a stable sampling set. Removing even a single point (e.g., $t = 0$) leaves the density unchanged at $D^- = 2f_{max}$. Yet the frame lower bound $A$ collapses to zero: the signal $f(t) = \mathrm{sinc}(2f_{max}\, t)$ vanishes at all remaining grid points while carrying nonzero energy. The sampling system is therefore unable to detect this signal, and the boundary case $D^- = 2f_{max}$ is undecidable from density alone.

This example is especially relevant for MT. A threshold design may generate a crossing rate that looks Nyquist-compatible on average, while still leaving a problematic microscopic gap in the realized event pattern. Density alone is therefore not the whole story.

\subsection{Kadec's 1/4 Theorem}
\label{sec:kadec}
One convenient sufficient route to stable sampling is to compare the nonuniform sequence with a Nyquist lattice. Kadec's $1/4$ theorem gives a sharp perturbative certificate: if a sequence can be indexed so that its displacement from some Nyquist lattice is uniformly below one quarter of the Nyquist spacing, then it is a complete interpolating sequence, hence a stable sampling set. Given a uniform Nyquist interval $\Delta T = 1/(2f_{max})$, this sufficient certificate is
\begin{equation}
    \sup_n |t_n - n\Delta T| < \frac{\Delta T}{4} = \frac{1}{8f_{max}}
    \label{eq:kadec_base}
\end{equation}

Physically, Kadec's theorem shows that average density alone is insufficient for this particular lattice-perturbation route; the certified sample locations must also remain uniformly close to the Nyquist lattice. Many stable sampling sets are not Kadec perturbations of a single lattice, so the theorem is used here as a computable sufficient certificate rather than as a necessary condition. The constant $1/4$ is the sharp local placement threshold for that certificate.

Appendix~\ref{app:constraints} gives the detailed proof. After normalizing the bandwidth to $\Omega = \pi$, it analyzes the bounded synthesis operator that maps the orthonormal Fourier basis of $L^2[-\pi,\pi]$ onto the perturbed exponential system generated by $\{t_n\}$. The classical Kadec estimate yields the perturbation bound $\theta(L) = 1 - \cos(\pi L) + \sin(\pi L)$. Since $\theta(L) < 1$ exactly when $L < 1/4$, a Neumann-series argument provides a bounded inverse and establishes the Riesz-basis property.

\section{Post-Realization Necessary-and-Sufficient Characterization}
\label{sec:exact_iff}
The preceding density and Kadec statements are deliberately one-sided: Landau gives a necessary density bound, Beurling gives a strict-density sufficient condition, and Kadec gives a local sufficient condition. An exact infinite-dimensional condition is available, but it is global and microscopic rather than a finite-parameter design inequality. We state that condition for MT-generated point sets only after the events have been realized. This distinction is essential. The weak-limit theorem decides whether a fixed set of time nodes samples $PW_\Omega$; it is not, by itself, a stability theorem for the nonlinear acquisition map that sends a waveform to its crossing times.

\subsection{Weak Limits and Bernstein Uniqueness}
\label{sec:weak_limit_condition}
Let $\mathcal{B}_\Omega$ denote the Bernstein space of bounded real-line restrictions of entire functions of exponential type at most $\Omega$:
\begin{align*}
    \mathcal{B}_\Omega
    =
    \{F:\mathbb{R}\to\mathbb{C}:\
    &F \text{ is bounded on }\mathbb{R},\\
    &F \text{ is entire of type }\le\Omega\}.
\end{align*}
This is the $L^\infty$ counterpart of $PW_\Omega$; the symbol should not be confused with a frame upper bound $B$.

For two closed discrete sets $\Lambda_j,\Gamma\subset\mathbb{R}$, write $\Lambda_j\xrightarrow{w}\Gamma$ if, for every bounded open interval $I$ and every $\varepsilon>0$, all sufficiently large $j$ satisfy
\[
    \Lambda_j\cap I\subseteq \Gamma+(-\varepsilon,\varepsilon),
    \qquad
    \Gamma\cap I\subseteq \Lambda_j+(-\varepsilon,\varepsilon).
\]
The weak-limit hull of $\Lambda$ is
\[
    W(\Lambda)
    =
    \left\{
        \Gamma:\exists a_j\in\mathbb{R}
        \text{ such that } \Lambda-a_j\xrightarrow{w}\Gamma
    \right\}.
\]
A set $\Gamma$ is a uniqueness set for $\mathcal{B}_\Omega$ if
\begin{equation}
    F\in\mathcal{B}_\Omega,\qquad F|_\Gamma=0
    \quad\Longrightarrow\quad
    F\equiv0.
    \label{eq:bernstein_uniqueness}
\end{equation}

\begin{theorem}[Post-realization sampling certificate for retained MT time sets]
\label{thm:mvt_weak_limit_iff}
\textit{Let $\Lambda$ be a separated retained time-coordinate set produced by the adjacent-transition convention for static, uniformly spaced thresholds, or by a specified persistent/bi-infinite extension of such a retained record. Then the following statements about the fixed point set $\Lambda$ are equivalent:}
\begin{enumerate}
    \item[\textup{(i)}] \textit{The classical point-sampling operator $S_\Lambda f=(f(t_n))_{t_n\in\Lambda}$ satisfies}
    \begin{equation}
        A\|f\|_{L^2}^2
        \le
        \sum_{t_n\in\Lambda}|f(t_n)|^2
        \le
        B\|f\|_{L^2}^2,
        \qquad f\in PW_\Omega,
        \label{eq:exact_frame_condition}
    \end{equation}
    \textit{for some $0<A\le B<\infty$.}
    \item[\textup{(ii)}] \textit{The exponential system $\{e^{-j\omega t_n}\}_{t_n\in\Lambda}$ is a Fourier frame for $L^2[-\Omega,\Omega]$.}
    \item[\textup{(iii)}] \textit{Every weak limit $\Gamma\in W(\Lambda)$ is a uniqueness set for $\mathcal{B}_\Omega$.}
\end{enumerate}
\end{theorem}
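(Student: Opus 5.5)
The plan is to reduce the statement to two classical facts about fixed separated point sets and to check that nothing MT-specific is needed beyond separation. The MT origin of $\Lambda$ (retained adjacent transitions, static uniformly spaced thresholds, or a specified bi-infinite extension) enters only through the hypothesis that $\Lambda$ is separated: $\inf_{m\neq n}|t_m-t_n|\ge\delta_{\mathrm{sep}}>0$. After that, the theorem is a statement about a fixed closed discrete set, and I will treat it as such.

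\textbf{Step 1: (i)$\Leftrightarrow$(ii) via Plancherel.} For $f\in PW_\Omega$, write $f(t)=\frac{1}{2\pi}\int_{-\Omega}^{\Omega}\hat f(\omega)e^{j\omega t}\,d\omega$. Then
\[
f(t_n)=\tfrac{1}{2\pi}\langle \hat f, e^{-j\omega t_n}\rangle_{L^2[-\Omega,\Omega]},
\qquad
\|f\|_{L^2}^2=\tfrac{1}{2\pi}\|\hat f\|_{L^2[-\Omega,\Omega]}^2 .
\]
The map $f\mapsto\hat f$ is a bijection from $PW_\Omega$ onto $L^2[-\Omega,\Omega]$. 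Hence \eqref{eq:exact_frame_condition} is exactly the frame inequality for $\{e^{-j\omega t_n}\}$, with bounds rescaled by $2\pi$. The conjugation convention does not matter, since the system is closed under conjugating $\hat f$.

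\textbf{Step 2: the upper bound is automatic.} Separation gives the upper bound $B$ by the Plancherel--P\'olya inequality, which is recorded in Appendix~\ref{app:pedantic_extras}. Sketch: by subharmonicity, $|f(t_n)|^2$ is dominated by an average of $|f|^2$ over a disc of radius $\delta_{\mathrm{sep}}/2$. These discs are disjoint, and a horizontal strip integral of $|f|^2$ is bounded by $e^{2\Omega\delta}\|f\|^2$. So the only content of (i) is the lower bound.

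\textbf{Step 3: (i)$\Leftrightarrow$(iii) via Beurling's weak-limit theorem.} Beurling's theorem, in the form given by Seip and by Gr\"ochenig--Romero--St\"ockler and cited in the paper, says: a separated $\Lambda$ is a sampling set for $PW_\Omega$ if and only if every $\Gamma\in W(\Lambda)$ is a uniqueness set for $\mathcal B_\Omega$. I would invoke it directly. For the reader I would also outline both directions, using the fact that separation passes to weak limits.

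\emph{Necessity.} Suppose $F\in\mathcal B_\Omega$ is nonzero and vanishes on some $\Gamma=\lim(\Lambda-a_j)$. Form $F_\epsilon(t)=F(t)\,\mathrm{sinc}(\epsilon t/\pi)$ with small $\epsilon$; this lies in $PW_{\Omega+\epsilon}$. Since sampling is stable for slightly smaller bandwidth too, one uses the known openness of the sampling property: $A>0$ for $PW_\Omega$ implies stability for $PW_{\Omega'}$ with $\Omega'$ slightly larger, by Beurling's stability lemma. Translating $F_\epsilon$ by $a_j$ then makes the samples on $\Lambda$ small, because near the translated zeros Bernstein's inequality bounds $|F|$ by $\|F'\|_\infty\varepsilon$. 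The energy, however, stays bounded below, which contradicts $A>0$.

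\emph{Sufficiency.} Argue by contradiction. Take unit-norm $f_k$ with $\sum|f_k(t_n)|^2\to0$. Choose centers $a_k$ where $|f_k|$ is non-negligible in an $L^\infty$-normalized sense; this needs a lemma that sampling for $\mathcal B_\Omega$ implies sampling for $PW_\Omega$ (Beurling's $L^\infty$-to-$L^2$ transfer). Translate by $a_k$ and use normal-family compactness in $\mathcal B_\Omega$ together with compactness of separated sets under weak convergence. This produces a nonzero $F\in\mathcal B_\Omega$ vanishing on a weak limit $\Gamma$.

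\textbf{Main obstacle.} The delicate part is passing between the $L^2$ frame inequality on $PW_\Omega$ and $L^\infty$ uniqueness on $\mathcal B_\Omega$ at the same bandwidth $\Omega$ rather than $\Omega\pm\epsilon$. This is exactly where Beurling's theorem is nontrivial, so I would cite it rather than reprove it, and verify only that its hypotheses (closed, separated, real) hold for the retained MT set.
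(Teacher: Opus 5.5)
Your route is the paper's route: Plancherel for (i)$\Leftrightarrow$(ii), Plancherel--P\'olya for the upper bound, then a citation of Beurling for (i)$\Leftrightarrow$(iii). Steps~1 and~2 are fine. The problem is Step~3, and it cannot be repaired, because (i)$\Rightarrow$(iii) is false at the same bandwidth.

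Take $\Lambda=\Delta T\,\mathbb{Z}$ with $\Delta T=\pi/\Omega$. This is the case $q=1$, $H=1$ of Theorem~\ref{thm:periodic_gap_mvt_shannon_condition}, so it lies inside the persistent/bi-infinite scope the statement explicitly allows. It is separated and satisfies (i) with $A=B=1/\Delta T$. But $\Lambda\in W(\Lambda)$ (take $a_j=0$), and $F(t)=\sin(\Omega t)$ is a nonzero element of $\mathcal{B}_\Omega$ that vanishes on $\Lambda$, so (iii) fails. The same happens for every periodic-gap stream with $H=q$: the product $G_{\mathrm{per}}$ of \eqref{eq:periodic_generating_function} is bounded on $\mathbb{R}$, has type $\pi$, and vanishes on a set the paper proves to be complete interpolating.

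Your necessity sketch breaks exactly here. The ``openness'' claim, that $A>0$ on $PW_\Omega$ gives a sampling inequality on $PW_{\Omega'}$ for some $\Omega'>\Omega$, is false. The Nyquist grid has $D^-=\Omega/\pi$, so Landau's theorem rules out every $\Omega'>\Omega$. Nor can you avoid enlarging the bandwidth by compressing $F$ to type below $\Omega$, because dilation moves its zeros off $\Gamma$. Beurling's openness lemma and his weak-limit theorem concern sampling in $\mathcal{B}_\Omega$, i.e.\ $\|F\|_\infty\le K\sup_{\lambda\in\Lambda}|F(\lambda)|$, not the $L^2$ frame inequality. So the ``main obstacle'' you flag is not a hard step to outsource to a citation; it is where the claimed equivalence fails. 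The paper's own proof makes the same misattribution.

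What does hold for separated $\Lambda$ is one-sided. By Beurling, (iii) is equivalent to $\Lambda$ being a sampling set for $\mathcal{B}_\Omega$, which for separated sets happens exactly when $D^-(\Lambda)>\Omega/\pi$. Hence (iii)$\Rightarrow$(i), either by the strict-density theorem or by the $L^\infty$-to-$L^2$ transfer in your sufficiency sketch.

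Conversely, (i) gives only two things:
\begin{itemize}
\item every $\Gamma\in W(\Lambda)$ is itself a sampling set for $PW_\Omega$ with the same lower bound, because the lower inequality passes to weak limits of separated sets;
\item since $D^-(\Lambda)\ge\Omega/\pi$, every such $\Gamma$ is a uniqueness set for $\mathcal{B}_{\Omega'}$ for each $\Omega'<\Omega$.
\end{itemize}
Combined with Landau's necessary condition, this shows that (i)$\Leftrightarrow$(iii) holds under the extra hypothesis $D^-(\Lambda)\ne\Omega/\pi$. At critical density, (iii) always fails while (i) may hold.

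Even weakening (iii) to uniqueness of all weak limits for $PW_\Omega$ does not restore an equivalence at critical density. The exponentials with frequencies $\pm(n-\tfrac14)$, $n\ge1$, form a complete and minimal system in $L^2[-\pi,\pi]$ that is not a Riesz basis, hence not a frame. Yet every weak limit of this set (its translates and the translates of $\mathbb{Z}$) is a uniqueness set for $PW_\pi$. This is why the critical case needs an $A_2$-type criterion such as Theorem~\ref{thm:mvt_pavlov_condition}.
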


\begin{remark}[Finite static records]
For an unrestricted finite-energy static signal $x\in PW_\Omega$, the retained full-line set generated by a static, uniformly spaced threshold set does not satisfy the positive sampling alternatives above; Theorem~\ref{thm:no_go} shows that its lower Beurling density is zero. Thus Theorem~\ref{thm:mvt_weak_limit_iff} is exact for a fixed separated event geometry, but finite static MT records require the finite-dimensional or tail-prior formulations in Subsection~\ref{sec:finite_bridge}.
\end{remark}

\textit{Proof.}
The theorem is an application of classical nonuniform sampling theory to the fixed retained time set. First, by the inverse Fourier representation,
\[
    f(t_n)
    =
    \frac{1}{2\pi}
    \int_{-\Omega}^{\Omega}
        \hat{f}(\omega)e^{j\omega t_n}\,d\omega.
\]
Plancherel's theorem identifies the sample map $f\mapsto\{f(t_n)\}$ with the analysis map generated by the exponential family $\{e^{-j\omega t_n}\}$ on $L^2[-\Omega,\Omega]$, up to the fixed Fourier-normalization constant. Hence~\eqref{eq:exact_frame_condition} is equivalent to the Fourier-frame property.

Second, the theorem assumes the separated geometry required by Beurling's weak-limit theorem. When $\Lambda$ is generated by a finite-slew retained MT record, this separation follows from the adjacent-transition convention and Bernstein's inequality. Indeed, by the mean value theorem,
\begin{equation}
    |t_{n+1}-t_n|
    =
    \frac{\Delta V}{|x'(\xi_n)|}
    \ge
    \frac{\Delta V}{\|x'\|_\infty}
    \ge
    \frac{\Delta V}{\Omega\|x\|_\infty}.
    \label{eq:exact_mvt_separation}
\end{equation}
For a specified persistent or bi-infinite extension, separation is part of the stated hypothesis.
For separated point sets, Beurling's weak-limit theorem states that stable sampling in $PW_\Omega$ is equivalent to uniqueness in $\mathcal{B}_\Omega$ for every translate weak limit of the set~\cite{beurling1966local,Seip2004,GrochenigRomeroStoeckler2018}. Applying that theorem to $\Lambda$ proves the equivalence between~\eqref{eq:exact_frame_condition} and~\eqref{eq:bernstein_uniqueness} for all $\Gamma\in W(\Lambda)$. The reductions and scaling constants are expanded in Appendix~\ref{app:exact_iff}. \hfill $\square$

\textbf{Interpretation.}
The theorem is an exact iff for a realized or explicitly extended event geometry. It is not a finite scalar design rule for $\Delta V$, and it is not a global stability theorem for the nonlinear acquisition map $x\mapsto E_{\Delta V}(x)$.

\begin{corollary}[Conditional recovery of the generating waveform from ideal MT labels]
\label{cor:mvt_acquisition_recovery}
\textit{Let $E_{\Delta V}(x)=\{(t_n,m_n)\}$ be an ideal retained MT record generated by $x\in PW_\Omega$ under the recorded adjacent-transition convention. If the realized time set $\Lambda=\{t_n\}$, or a specified persistent/bi-infinite extension of that set, satisfies the equivalent conditions of the post-realization sampling certificate above, then the generating waveform $x$ is stably recoverable from the labeled record through the exact sample values}
\[
    y_n=m_n\Delta V=x(t_n).
\]
\textit{In particular, any $g\in PW_\Omega$ satisfying $g(t_n)=m_n\Delta V$ for all retained events must equal $x$.}
\end{corollary}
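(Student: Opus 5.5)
The plan is to reduce the corollary to the lower frame bound in Theorem~\ref{thm:mvt_weak_limit_iff}(i), applied to the difference of two candidate reconstructions. First I will pin down the sample values. By the definition of $E_{\Delta V}(x)$, each retained event $(t_n,m_n)$ comes from a raw crossing $t_n\in\Lambda_{\Delta V}^{\mathrm{raw}}$ on an open monotone branch with $x(t_n)=m_n\Delta V$. So the label gives the exact value $y_n=m_n\Delta V=x(t_n)$, with no quantization error. This identification uses only the ideal-timing assumptions (i)--(iii) of the preliminaries, and the labels $m_n$ are part of the record.

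\textbf{Uniqueness.} Suppose $g\in PW_\Omega$ satisfies $g(t_n)=m_n\Delta V$ for every retained $t_n$. Put $h=g-x\in PW_\Omega$, which vanishes on $\Lambda$. The lower inequality in~\eqref{eq:exact_frame_condition} gives $A\|h\|_{L^2}^2\le\sum|h(t_n)|^2=0$, and since $A>0$ we get $h=0$ in $L^2$. Because elements of $PW_\Omega$ are continuous, $h\equiv 0$ pointwise, so $g=x$.

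\textbf{Stability.} Let $S_\Lambda$ be the point-sampling operator; it is bounded by the upper bound $B$ and bounded below by $A$. Its frame operator $S_\Lambda^*S_\Lambda$ on $PW_\Omega$ is therefore invertible with $\|(S_\Lambda^*S_\Lambda)^{-1}\|\le A^{-1}$. The reconstruction is $x=(S_\Lambda^*S_\Lambda)^{-1}S_\Lambda^*y$, computed directly from the labels. For any perturbed data $\tilde y$, the reconstruction error in $L^2$ is at most $A^{-1/2}\|y-\tilde y\|_{\ell^2}$, which gives the quantitative meaning of ``stably recoverable.''

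\textbf{Extension case.} When the certificate holds only for a specified persistent or bi-infinite extension $\tilde\Lambda\supseteq\Lambda$, I will require, as part of that specification, that the sample values on the added nodes are also known, e.g.\ via the same labeled convention. The argument above then runs on $\tilde\Lambda$ unchanged.

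The main obstacle is not analytic but one of consistency. Sample exactness depends on the adjacent-transition convention: a retained event must be a genuine raw crossing, not a touch point or a suppressed return. There is also a tension with Theorem~\ref{thm:no_go}: for an unrestricted finite-energy $x$ the hypothesis fails, so the corollary is nonvacuous only under persistent or extension models. I will handle this by stating explicitly that the hypothesis is assumed, not derived.
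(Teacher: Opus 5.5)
Your proposal is correct and follows the paper's proof: the labels give exact samples $y_n=m_n\Delta V=x(t_n)$, the lower frame bound applied to $g-x$ forces $g=x$, and recovery uses a bounded left inverse of $S_\Lambda$. Your canonical choice $(S_\Lambda^*S_\Lambda)^{-1}S_\Lambda^*$, with norm at most $A^{-1/2}$, is just an explicit instance of that left inverse. Your requirement that the sample values on added nodes be supplied in the extension case makes explicit an assumption the paper's proof leaves implicit.
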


\textit{Proof.}
The labels provide exact point samples of the generating waveform at the realized event times. Since $S_\Lambda$ is bounded below on $PW_\Omega$, it is injective and has a bounded left inverse on its range. Applying that inverse to the sample vector $\{m_n\Delta V\}$ recovers $x$. If another $g\in PW_\Omega$ has the same labeled samples, then $S_\Lambda(g-x)=0$, and the lower frame bound forces $g=x$. \hfill $\square$

The corollary is the acquisition statement used in this paper. It is conditional on the event topology and on the realized point-set certificate. It does not assert that the nonlinear map $x\mapsto E_{\Delta V}(x)$ is globally one-to-one or Lipschitz on all of $PW_\Omega$; small waveform perturbations can create, delete, or relabel crossings unless an event-correspondence hypothesis such as the one in Section~\ref{sec:hardware_error} is imposed.

\textbf{Uniqueness versus stability.} If one asks only for algebraic uniqueness, the condition weakens to $\ker S_\Lambda=\{0\}$, i.e., no nonzero $f\in PW_\Omega$ vanishes on all points of $\Lambda$. This is not the recovery criterion used here. MT reconstruction in hardware must tolerate timing error, threshold error, finite windows, and numerical inversion, so the relevant notion is the stable frame condition~\eqref{eq:exact_frame_condition}.

\subsection{A Priori Weak-Limit Hull Criterion for MT Classes}
\label{sec:apriori_class_weak_limit_hull}

The post-realization theorem applies to one realized MT time set. A genuinely a priori statement becomes possible only after a signal class has been fixed. The right object is not a scalar density alone, but the weak-limit hull of all time sets that the class can generate.

\begin{definition}[Class weak-limit hull]
\label{def:class_weak_limit_hull}
Fix a threshold spacing $\Delta V>0$ and a signal class $\mathcal X$. Write
\[
    \Lambda_x=\Lambda_{\Delta V}(x),\qquad x\in\mathcal X,
\]
for the retained time-coordinate set produced by the recorded adjacent-transition convention. The class is called \emph{uniformly separated} if
\begin{equation}
    \operatorname{sep}(\mathcal X,\Delta V)
    :=
    \inf_{x\in\mathcal X}\inf_{\lambda\ne\lambda'\in\Lambda_x}
    |\lambda-\lambda'|>0.
    \label{eq:class_uniform_separation}
\end{equation}
Its MT weak-limit hull is
\begin{equation}
    \mathfrak W_{\Delta V}(\mathcal X)
    :=
    \left\{
    \Gamma:
    \exists x_j\in\mathcal X,
    \exists a_j\in\mathbb R,
    \ \Lambda_{x_j}-a_j\xrightarrow{w}\Gamma
    \right\}.
    \label{eq:class_weak_limit_hull}
\end{equation}
The class is called a \emph{uniform stable MT sampling class} for $PW_\Omega$ if there exist constants $0<A\le B<\infty$, independent of $x$, such that
\begin{equation}
    \begin{aligned}
    A\|f\|_{L^2}^2
    &\le
    \sum_{\lambda\in\Lambda_x}|f(\lambda)|^2
    \le
    B\|f\|_{L^2}^2,\\
    &\hspace{0.45in}
    f\in PW_\Omega,\quad x\in\mathcal X.
    \end{aligned}
    \label{eq:uniform_class_sampling}
\end{equation}
\end{definition}

The uniform-separation hypothesis is a point-set hypothesis. For an MT-generated class it must be checked from the signal model. A simple sufficient condition is an amplitude envelope:
\[
    \sup_{x\in\mathcal X}\|x\|_\infty\le M<\infty,\qquad M>0 .
\]
Together with the retained adjacent-transition convention, Bernstein's inequality gives
\begin{equation}
    \operatorname{sep}(\mathcal X,\Delta V)
    \ge
    \frac{\Delta V}{\Omega M}.
    \label{eq:class_sep_amplitude_envelope}
\end{equation}
If no such envelope, or no equivalent class-level separation certificate, is available, the theorem below remains a point-set family theorem but does not automatically apply to that MT signal class.

\begin{theorem}[A priori class-hull MT criterion]
\label{thm:apriori_class_hull_mvt}
\textit{Assume that $\mathcal X$ is uniformly separated in the sense of \eqref{eq:class_uniform_separation}. Then $\mathcal X$ is a uniform stable MT sampling class for $PW_\Omega$ if and only if every set $\Gamma\in\mathfrak W_{\Delta V}(\mathcal X)$ is a uniqueness set for the Bernstein space $\mathcal B_\Omega$:}
\begin{equation}
    F\in\mathcal B_\Omega,
    \quad F|_\Gamma=0
    \quad\Longrightarrow\quad
    F\equiv0.
    \label{eq:class_hull_uniqueness}
\end{equation}
\end{theorem}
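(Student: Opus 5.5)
The plan is to reduce the class-level statement to the single-set weak-limit theorem (Theorem~\ref{thm:mvt_weak_limit_iff}) together with a compactness argument on the space of uniformly separated point sets. Write $\delta=\operatorname{sep}(\mathcal X,\Delta V)>0$. Every $\Lambda_x$, every translate $\Lambda_x-a$, and every weak limit of such translates is $\delta$-separated. A $\delta$-separated set has at most $r/\delta+1$ points in any interval of length $r$. The Plancherel--P\'olya inequality therefore gives a \emph{uniform} upper frame bound $B=B(\delta,\Omega)$ for all of them. So the upper half of \eqref{eq:uniform_class_sampling} is automatic, and everything concerns the lower bound $A$.

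\textbf{Necessity.} Suppose \eqref{eq:uniform_class_sampling} holds with a uniform $A>0$. The sampling inequality is translation invariant, since $f\mapsto f(\cdot+a)$ is an isometry of $PW_\Omega$. Hence every $\Lambda_{x_j}-a_j$ satisfies the same bounds. Now let $\Gamma\in\mathfrak W_{\Delta V}(\mathcal X)$ and $f\in PW_\Omega$. For a fixed compact interval, separation and the local uniform convergence of samples of $f$ let us pass to the limit in the finite sums. This gives $\sum_{\gamma\in\Gamma}|f(\gamma)|^2\ge A\|f\|^2$ (Fatou/truncation), so $\Gamma$ is itself a stable sampling set. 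Weak limits of weak limits of $\Gamma$ lie again in $\mathfrak W_{\Delta V}(\mathcal X)$, by a diagonal argument in the metrizable weak topology. The same argument therefore shows that every $\Gamma'\in W(\Gamma)$ is sampling. Theorem~\ref{thm:mvt_weak_limit_iff}, (i)$\Rightarrow$(iii), applied to $\Gamma$ (or directly Beurling's theorem, since $\Gamma\in W(\Gamma)$) yields that $\Gamma$ is a uniqueness set for $\mathcal B_\Omega$.

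\textbf{Sufficiency.} This is the main obstacle, because Beurling's theorem gives a lower bound for each set separately, not a uniform one. I would argue by contradiction. Suppose there are $x_j\in\mathcal X$ and unit-norm $f_j\in PW_\Omega$ with $\sum_{\lambda\in\Lambda_{x_j}}|f_j(\lambda)|^2\to0$. I would follow Beurling's localization argument (as in Seip's and Gr\"ochenig--Romero--St\"ockler's treatments) and work in the Bernstein setting, where compactness is available. First I would pass to the equivalent $L^\infty$ formulation: uniform $L^2$ sampling for $\delta$-separated families is equivalent to a uniform $\mathcal B_\Omega$ sampling bound of the form $\|F\|_\infty\le K\sup_\Lambda|F|$ for type slightly less than $\Omega$, with the usual $\Omega$-versus-$\Omega'$ slack handled as in Beurling's proof. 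If that bound fails uniformly, there are $F_j$ with $\|F_j\|_\infty=1$ and $\sup_{\Lambda_{x_j}}|F_j|\to0$. Choose $a_j$ with $|F_j(a_j)|\ge1/2$ and translate, setting $G_j=F_j(\cdot+a_j)$. By Montel/normal families, $G_j\to G$ locally uniformly on $\mathbb C$, with $G\in\mathcal B_\Omega$ and $|G(0)|\ge1/2$. Separation guarantees that a subsequence satisfies $\Lambda_{x_j}-a_j\xrightarrow{w}\Gamma$, since locally finite $\delta$-separated sets form a weakly compact family. Local uniform convergence then gives $G|_\Gamma=0$, contradicting uniqueness for $\Gamma\in\mathfrak W_{\Delta V}(\mathcal X)$.

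The delicate points I expect are twofold. The first is the transfer between the $L^2$ and $L^\infty$ formulations with uniform constants. I would invoke the known quantitative form of Beurling's theorem, in which the constants depend only on $\delta$, $\Omega$ and the hull, noting that the hull $\mathfrak W_{\Delta V}(\mathcal X)$ is closed and translation invariant, exactly the hypotheses of that form. The second is making sure the extra type slack is absorbed. For this I would use that uniqueness for all sets in a compact translation-invariant hull is an open condition in the bandwidth.
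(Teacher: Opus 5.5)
The gap is in the last step of your necessity argument. Your truncation argument does show that every $\Gamma\in\mathfrak W_{\Delta V}(\mathcal X)$ is an $L^2$ sampling set for $PW_\Omega$ with the same constant $A$. You then invoke (i)$\Rightarrow$(iii) of Theorem~\ref{thm:mvt_weak_limit_iff}, or ``Beurling's theorem,'' to conclude that $\Gamma$ is a uniqueness set for $\mathcal B_\Omega$. That step fails. Beurling's weak-limit theorem characterizes sampling for the Bernstein space itself, i.e.\ the bound $\|F\|_\infty\le K\sup_\Gamma|F|$ on $\mathcal B_\Omega$, and $L^2$ sampling for $PW_\Omega$ does not imply it.

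A concrete counterexample is the shifted Nyquist lattice $\Gamma=\tau+(\pi/\Omega)\mathbb Z$. It is a complete interpolating set for $PW_\Omega$, yet $F(t)=\sin(\Omega(t-\tau))\in\mathcal B_\Omega$ vanishes on it. The two notions agree only up to bandwidth slack in opposite directions. Uniform $\mathcal B_{\Omega+\epsilon}$-sampling implies uniform $PW_\Omega$-sampling. In the other direction, $PW_\Omega$-sampling gives $D^-(\Gamma)\ge\Omega/\pi$ by Landau, hence uniqueness for $\mathcal B_{\Omega'}$ for every $\Omega'<\Omega$, but not for $\mathcal B_\Omega$.

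Your sufficiency sketch is sound, and it is more explicit than the paper's bare citation: compactness in $\mathcal B_\Omega$, openness of hull uniqueness in the bandwidth, then the $L^\infty\to L^2$ transfer. There the slack is harmless. In the necessity direction it cannot be removed.

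This is not only a defect of your write-up. The paper's proof cites the same misstated ``uniform-family form'' of Beurling's theorem for $PW_\Omega$. The ``only if'' direction is false whenever a uniformly sampling class has a critical-density set, such as a Nyquist lattice, in its hull.

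The paper's own Corollary~\ref{cor:finite_periodic_template_mvt} shows the conflict. With a single template, $q_1=1$ and $T_1=\pi/\Omega$, it declares the class uniformly sampling (the equality case). Yet every set in the hull is a translate of $(\pi/\Omega)\mathbb Z$ and fails the uniqueness condition.

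An MT realization is the persistent signal $x(t)=\Delta V/2+A\sin(\Omega t)$ with $\Delta V/2<A<3\Delta V/2$. On each monotone branch the first raw hit is a suppressed same-threshold return. So the forward retained stream is the lattice $\Omega^{-1}(\phi+\pi\mathbb Z)$, where $\sin\phi=\Delta V/(2A)$.

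For genuinely finite-energy nonempty classes $\mathcal X\subseteq PW_\Omega$, the equivalence holds only vacuously. By Theorem~\ref{thm:no_go} every $\Lambda_x$ is finite, so no $\Lambda_x$ samples $PW_\Omega$, and the empty set lies in the hull.

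A correct version has two options. One replaces uniform $PW_\Omega$-sampling by uniform $\mathcal B_\Omega$-sampling, which is the actual setting of Beurling's theorem. The other keeps $PW_\Omega$ and weakens necessity to uniqueness for $\mathcal B_{\Omega'}$, $\Omega'<\Omega$.
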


\textit{Proof.}
The ordinary Plancherel--P\'olya inequality gives a uniform upper frame bound in~\eqref{eq:uniform_class_sampling} from the uniform separation constant in~\eqref{eq:class_uniform_separation}. Hence only the lower bound is at issue.

We use Beurling's weak-limit theorem in its uniform-family form: a uniformly separated family of point sets is uniformly sampling for $PW_\Omega$ if and only if every translate weak limit of every set in the family is a uniqueness set for $\mathcal B_\Omega$. This is the same compactness theorem used in Section~\ref{sec:weak_limit_condition}, with the sequence of translates allowed to come from different members of the family.

Apply this external theorem to the uniformly separated family $\{\Lambda_x:x\in\mathcal X\}$. Its family translate weak-limit hull is exactly $\mathfrak W_{\Delta V}(\mathcal X)$ by definition. Hence the uniform lower sampling bound holds precisely when every $\Gamma\in\mathfrak W_{\Delta V}(\mathcal X)$ is a uniqueness set for $\mathcal B_\Omega$. Together with the uniform Plancherel--P\'olya upper bound, this proves the equivalence. \hfill $\square$

\begin{remark}[Relation to the finite-energy no-go theorem]
The full finite-energy class $PW_\Omega$ with static, uniformly spaced thresholds is not a uniform stable MT sampling class. Indeed, Theorem~\ref{thm:no_go} shows that every retained finite-energy static MT record has zero full-line lower density; equivalently, the class weak-limit hull contains non-uniqueness limits such as the empty set. The class-hull theorem therefore does not weaken the no-go result; it identifies the extra a priori information needed to escape it.
\end{remark}

\begin{corollary}[Finite periodic-template reduction modulo translations]
\label{cor:finite_periodic_template_mvt}
\textit{Assume the hypotheses of Theorem~\ref{thm:apriori_class_hull_mvt}. For a separated set $\Gamma\subset\mathbb R$, write}
\[
    [\Gamma]
    =
    \{\Gamma-\tau:\tau\in\mathbb R\}
\]
\textit{for its translation orbit. Suppose that the class weak-limit hull has only finitely many translation orbits,}
\begin{equation}
    \mathfrak W_{\Delta V}(\mathcal X)/\mathbb R
    =
    \{[\Gamma_1],\ldots,[\Gamma_J]\},
    \label{eq:finite_periodic_template_quotient_hull}
\end{equation}
\textit{where each representative $\Gamma_j$ is a periodic template}
\begin{equation}
    \begin{aligned}
    \Gamma_j
    &=
    \{kT_j+a_{j,r}:k\in\mathbb Z,\ r=0,\ldots,q_j-1\},\\
    &\hspace{0.2in}
    0\le a_{j,0}<\cdots<a_{j,q_j-1}<T_j,
    \end{aligned}
    \label{eq:periodic_template_family}
\end{equation}
\textit{with positive periods $T_j$. Then $\mathcal X$ is a uniform stable MT sampling class for $PW_\Omega$ if and only if}
\begin{equation}
    \frac{q_j}{T_j}
    \ge
    \frac{\Omega}{\pi}
    =
    2f_{max},
    \qquad j=1,\ldots,J.
    \label{eq:periodic_template_density}
\end{equation}
\textit{Equality is the complete-interpolation case for the corresponding periodic orbit; strict inequality gives a redundant sampling frame.}
\end{corollary}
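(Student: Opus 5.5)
By Theorem~\ref{thm:apriori_class_hull_mvt}, $\mathcal X$ is a uniform stable MT sampling class exactly when every $\Gamma\in\mathfrak W_{\Delta V}(\mathcal X)$ is a uniqueness set for $\mathcal B_\Omega$. The plan is to reduce this hull condition to a test on the finitely many templates $\Gamma_1,\ldots,\Gamma_J$, and then decide uniqueness for a periodic template by its density $q_j/T_j$.

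\textbf{Step 1: reduction to representatives.} Uniqueness for $\mathcal B_\Omega$ is translation invariant: if $F\in\mathcal B_\Omega$ vanishes on $\Gamma-\tau$, then $F(\cdot-\tau)\in\mathcal B_\Omega$ vanishes on $\Gamma$. By \eqref{eq:finite_periodic_template_quotient_hull}, the hull condition is therefore equivalent to the statement that each $\Gamma_j$ is a uniqueness set for $\mathcal B_\Omega$.

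\textbf{Step 2: sufficiency of $q_j/T_j\ge\Omega/\pi$.} Each $\Gamma_j$ is separated and equals its own weak-limit hull up to translation, since its translates are periodic in $\tau$ and the weak topology is compact on separated sets. So Theorem~\ref{thm:mvt_weak_limit_iff} applied to $\Gamma_j$ shows that $\Gamma_j$ is a uniqueness set for $\mathcal B_\Omega$ if and only if it is a sampling set for $PW_\Omega$.
\begin{itemize}
\item If $q_j/T_j>\Omega/\pi$, then $D^-(\Gamma_j)=q_j/T_j>2f_{max}$, and Beurling's sufficient condition gives a sampling set.
\item In the equality case, write $\Gamma_j$ as a union of $q_j$ cosets $a_{j,r}+T_j\mathbb Z$. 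Taking a Fourier transform, the sampling operator decomposes into fibers over $\omega\in[0,2\pi/T_j)$. The fiber is the $q_j\times q_j$ matrix $\bigl(e^{j(\omega+2\pi k/T_j)a_{j,r}}\bigr)_{r,k}$, with $k$ running over $q_j$ consecutive integers chosen so that the frequencies cover $[-\Omega,\Omega]$. Since $q_j\cdot 2\pi/T_j=2\Omega$, exactly $q_j$ aliases are present.
\item Factoring out $e^{j\omega a_{j,r}}$ from each row leaves a Vandermonde matrix in the distinct nodes $e^{2\pi j a_{j,r}/T_j}$. These nodes are distinct because $0\le a_{j,r}<T_j$ are distinct, so the matrix is invertible.
\item By continuity and compactness in $\omega$, the fiber matrices are uniformly invertible. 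Hence $\Gamma_j$ is a complete interpolating sequence, which is a Riesz basis and in particular a frame.
\end{itemize}

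\textbf{Step 3: necessity.} If $q_j/T_j<\Omega/\pi$, then Landau's necessary condition fails for $\Gamma_j$, so $\Gamma_j$ is not sampling and, by Step 2's equivalence, not a uniqueness set. For a direct witness, the fiber matrix is $q_j\times p$ with $p>q_j$ active aliases on a set of $\omega$ of positive measure. Its kernel produces a nonzero $\hat f$, and hence a nonzero $PW_\Omega$ function vanishing on $\Gamma_j$. The final sentence of the corollary follows because equality yields a Riesz basis, whereas strict inequality yields a frame that is not a Riesz basis (density exceeds $2f_{max}$).

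\textbf{Main obstacle.} The delicate point is the equality case. The fiber count must be handled precisely: at the endpoints $\pm\Omega$, $\omega$-dependent alias sets could change size, but this affects only a null set, while the uniformity of the Vandermonde lower bound must hold over all $\omega$. A further subtlety is that at critical density a uniqueness set for $PW_\Omega$ need not be one for $\mathcal B_\Omega$ (for example, $\sin(\Omega t)$ on the lattice). I would therefore rely on the sampling equivalence from Theorem~\ref{thm:mvt_weak_limit_iff}, together with the uniform fiber invertibility, rather than proving Bernstein uniqueness directly.
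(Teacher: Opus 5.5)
Your orbit reduction, the coset fibering into Vandermonde matrices, and the subcritical nullspace witness are the same computations as in the paper's proof, and they are fine. The gap is in the equality case, and your own ``main obstacle'' example is exactly what breaks it. In Step~2 you invoke Theorem~\ref{thm:mvt_weak_limit_iff} to conclude that $\Gamma_j$ is a uniqueness set for $\mathcal B_\Omega$ if and only if it samples $PW_\Omega$. You then feed this into Theorem~\ref{thm:apriori_class_hull_mvt}. But the Nyquist lattice samples $PW_\Omega$ while $\sin(\Omega t)\in\mathcal B_\Omega$ vanishes on it, so that equivalence is false at critical density. It fails for every critical template: if $q_j/T_j=\Omega/\pi$, then $F(t)=\prod_{r=0}^{q_j-1}\sin\bigl(\pi(t-a_{j,r})/T_j\bigr)$ is bounded on $\mathbb R$, has exponential type $q_j\pi/T_j=\Omega$, and vanishes on $\Gamma_j$. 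So the very templates you proved to be complete interpolating sequences are \emph{not} $\mathcal B_\Omega$-uniqueness sets. Read literally, the Bernstein hull criterion would reject the class rather than certify it. Relying on the equivalence from Theorem~\ref{thm:mvt_weak_limit_iff} instead of proving Bernstein uniqueness directly does not sidestep this, because that equivalence is precisely what your example refutes. The noncritical cases survive. For $q_j/T_j>\Omega/\pi$, Beurling's strict-density theorem gives genuine $\mathcal B_\Omega$-uniqueness. For $q_j/T_j<\Omega/\pi$, your $PW_\Omega$-null function already lies in $\mathcal B_\Omega$.

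The repair is to bypass Bernstein uniqueness altogether. Taking constant sequences $x_k=x$, $a_k=a$ in the definition of the hull gives $\Lambda_x-a\in\mathfrak W_{\Delta V}(\mathcal X)$. Hence, under \eqref{eq:finite_periodic_template_quotient_hull}, every realized $\Lambda_x$ is itself a translate of some $\Gamma_j$. Sufficiency is then immediate from your fiber computation. Frame bounds are translation invariant, so if $A_j,B_j$ are the frame bounds of $\Gamma_j$, then $\min_j A_j$ and $\max_j B_j$ are uniform bounds for the class, equality case included. For necessity, use the fact that a uniform lower frame bound for a uniformly separated family passes to weak limits of translates. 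Alternatively, note that weak limits of translates of a single periodic template stay in its orbit, so every orbit in the hull contains some $\Lambda_x$. Either way, each $\Gamma_j$ samples $PW_\Omega$, and Landau's inequality or your nullspace witness yields $q_j/T_j\ge\Omega/\pi$. The paper's proof ends the same way yours does: it minimizes the template frame bounds and then cites Theorem~\ref{thm:apriori_class_hull_mvt}. So the observation $\Lambda_x\in\mathfrak W_{\Delta V}(\mathcal X)$ is also what makes that final step legitimate at equality.
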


\textit{Proof.}
Sampling, interpolation, Bernstein uniqueness, and frame bounds are invariant under translation of the sampling set, up to the harmless modulation induced on the Fourier side. Hence it is enough to test one representative from each orbit.

Fix one template and write it as
\[
    \begin{aligned}
    \Gamma
    &=
    \{kT+a_r:k\in\mathbb Z,\ r=0,\ldots,q-1\},
    \\
    &\hspace{0.2in}
    0\le a_0<\cdots<a_{q-1}<T.
    \end{aligned}
\]
Set $P=2\pi/T$. For $f\in PW_\Omega$, Poisson summation on each coset gives, for almost every $\lambda\in[0,P)$,
\begin{equation}
    \sum_{k\in\mathbb Z} f(kT+a_r)e^{-ikT\lambda}
    =
    \frac{1}{T}
    \sum_{m\in M(\lambda)}
        \widehat f(\lambda+mP)e^{i(\lambda+mP)a_r},
    \label{eq:periodic_template_fiber}
\end{equation}
where
\[
    M(\lambda)=\{m\in\mathbb Z:\lambda+mP\in[-\Omega,\Omega]\}.
\]
Thus the sampling operator fibers into the finite matrices
\begin{equation}
    A(\lambda)_{r,m}
    =
    e^{i(\lambda+mP)a_r},
    \qquad
    r=0,\ldots,q-1,
    \quad m\in M(\lambda).
    \label{eq:periodic_template_alias_matrix}
\end{equation}

If $q/T\ge\Omega/\pi$, then $2\Omega\le qP$. Hence $|M(\lambda)|\le q$ for almost every $\lambda$. The active indices in $M(\lambda)$ are consecutive, and after removing the harmless row factors $e^{i\lambda a_r}$, every active matrix is a rectangular Vandermonde matrix in the distinct nodes
\[
    z_r=e^{iPa_r}.
\]
Therefore every active submatrix has full column rank. Since only finitely many active index sets occur, the smallest nonzero singular value is bounded below uniformly in $\lambda$. Integrating the fiber inequalities gives the sampling frame inequality for $\Gamma$, and therefore for every translate of $\Gamma$.

If $q/T<\Omega/\pi$, then $2\Omega>qP$, so on a positive-measure set of fibers one has $|M(\lambda)|\ge q+1$. On a smaller positive-measure subset the active index set is constant. There $A(\lambda)$ has more columns than rows and hence has a nontrivial nullspace. Choosing $\widehat f$ supported on that subset in a measurable null direction gives a nonzero $f\in PW_\Omega$ whose samples on $\Gamma$ vanish. Thus $\Gamma$ is not a uniqueness set, hence not a sampling set. This proves~\eqref{eq:periodic_template_density} for a single orbit.

Because the quotient hull~\eqref{eq:finite_periodic_template_quotient_hull} contains only finitely many template orbits, the positive lower frame bounds in the sampling cases can be minimized over the finitely many representatives, and the upper bounds can be maximized. The class-hull theorem then gives the asserted uniform MT conclusion. Conversely, if one orbit in the quotient hull violates~\eqref{eq:periodic_template_density}, that orbit contains a non-uniqueness weak limit, and Theorem~\ref{thm:apriori_class_hull_mvt} rules out a uniform lower sampling bound. \hfill $\square$

\begin{corollary}[Periodic-template threshold-density form]
\label{cor:periodic_template_threshold_density}
\textit{In the setting of Corollary~\ref{cor:finite_periodic_template_mvt}, assume that every gap in template $j$ is an adjacent retained MT threshold gap and that}
\begin{equation}
    T_j
    =
    \Delta V\sum_{r=0}^{q_j-1}v_{j,r}^{-1},
    \qquad
    v_{j,r}>0.
    \label{eq:template_velocity_period}
\end{equation}
\textit{Here $v_{j,r}$ is the gap-derived crossing-velocity parameter $v_{j,r}:=\Delta V/g_{j,r}$, where $g_{j,r}$ is the corresponding physical time gap. If the gap is generated by a differentiable monotone branch, the ordinary mean value theorem realizes this number as $|x'(\xi_{j,r})|$ for some point inside the gap.}
\textit{Let}
\begin{equation}
    v_{H,j}
    =
    \frac{q_j}{\sum_{r=0}^{q_j-1}v_{j,r}^{-1}}
    \label{eq:template_harmonic_velocity}
\end{equation}
\textit{be the harmonic mean of these gap-derived velocities for template $j$. Then the finite periodic-template quotient-hull class is uniformly stably sampled if and only if}
\begin{equation}
    \delta_V=\frac1{\Delta V}
    \ge
    \max_j \frac{\Omega}{\pi v_{H,j}}
    =
    \max_j \frac{2f_{max}}{v_{H,j}}.
    \label{eq:template_threshold_density_iff}
\end{equation}
\end{corollary}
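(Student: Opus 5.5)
This follows from Corollary~\ref{cor:finite_periodic_template_mvt} by rewriting the density condition~\eqref{eq:periodic_template_density} in threshold variables. The plan is to express $q_j/T_j$ through the harmonic mean and then take the maximum over the finitely many templates.

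First I will check that the parametrization~\eqref{eq:template_velocity_period} is consistent. Each template $\Gamma_j$ has $q_j$ consecutive gaps per period, namely $g_{j,r}=a_{j,r+1}-a_{j,r}$ for $r<q_j-1$ and $g_{j,q_j-1}=T_j+a_{j,0}-a_{j,q_j-1}$. These gaps sum telescopically to exactly $T_j$. Setting $v_{j,r}=\Delta V/g_{j,r}$ then gives $T_j=\Delta V\sum_r v_{j,r}^{-1}$. The mean-value realization $v_{j,r}=|x'(\xi_{j,r})|$ is only interpretive and is not used in the equivalence.

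Next I substitute into~\eqref{eq:periodic_template_density}. This gives
\[
\frac{q_j}{T_j}=\frac{q_j}{\Delta V\sum_r v_{j,r}^{-1}}=\frac{v_{H,j}}{\Delta V},
\]
so the condition $q_j/T_j\ge\Omega/\pi$ is equivalent to $1/\Delta V\ge \Omega/(\pi v_{H,j})$. All quantities here are positive, so dividing by $v_{H,j}>0$ preserves the inequality. The condition must hold for every $j=1,\dots,J$, and since the set is finite this is equivalent to the single inequality against the maximum. That is~\eqref{eq:template_threshold_density_iff}, using $\Omega/\pi=2f_{max}$. Both directions of the iff come directly from the iff in Corollary~\ref{cor:finite_periodic_template_mvt}.

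The only real subtlety is bookkeeping: the velocities $v_{j,r}$ must be tied to the actual template geometry and not to an a priori threshold design. The ``if and only if'' holds with $\Delta V$ varying only if the templates (equivalently the gaps $g_{j,r}$) scale so that the $v_{j,r}$ stay fixed. Otherwise $v_{H,j}$ itself depends on $\Delta V$. I would state explicitly that the corollary is read for the fixed quotient hull at the given $\Delta V$, with $v_{j,r}$ defined from that hull's gaps. Under that reading the equivalence is exact and no further analysis is needed.
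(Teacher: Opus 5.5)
Your proposal is correct and follows the paper's proof: substitute $q_j/T_j = v_{H,j}/\Delta V$, obtained from \eqref{eq:template_velocity_period}--\eqref{eq:template_harmonic_velocity}, into the density condition \eqref{eq:periodic_template_density} of Corollary~\ref{cor:finite_periodic_template_mvt}, then combine the finitely many inequalities into the single maximum. Two of your points are correct additions that the paper leaves implicit: the telescoping check that the $q_j$ gaps sum to $T_j$, and the caveat that $v_{H,j}$ itself depends on $\Delta V$, so the inequality is a reparametrization for the fixed hull rather than an independent design rule.
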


\textit{Proof.}
By~\eqref{eq:template_velocity_period},
\[
    \frac{q_j}{T_j}
    =
    \frac{q_j}{\Delta V\sum_r v_{j,r}^{-1}}
    =
    \frac{v_{H,j}}{\Delta V}
    =
    v_{H,j}\delta_V.
\]
Substituting this identity into~\eqref{eq:periodic_template_density} for every $j$ gives exactly~\eqref{eq:template_threshold_density_iff}. \hfill $\square$

\subsubsection{Finite-State Gap Priors}
\label{sec:finite_state_gap_priors}

The finite periodic-template quotient-hull corollary covers the case in which every
possible weak limit lies in one of finitely many translation orbits of periodic words.
A slightly richer a priori model is a finite-state gap prior: the long-run MT gap
pattern is generated by a finite directed graph. This includes periodic templates as
single cycles, while also allowing the signal to switch among finitely many certified
operating modes.

Let $G=(V,E)$ be a finite directed graph. Each edge $e\in E$ carries a positive
normalized time gap $h_e>0$, measured in Nyquist units. Thus the physical gap is
$g_e=h_e\Delta T$, where $\Delta T=\pi/\Omega$. A bi-infinite admissible path
$\gamma=(e_n)_{n\in\mathbb Z}$ generates a normalized point set
$\mu_\gamma=\{\mu_n\}_{n\in\mathbb Z}$ by
\begin{equation}
    \mu_{n+1}-\mu_n=h_{e_n}.
    \label{eq:finite_state_gap_generation}
\end{equation}
Let $\mathcal S_G$ denote the translation-invariant family of physical point sets
\begin{equation}
    \begin{aligned}
    \mathcal S_G
    &=
    \left\{
        a+\Delta T\,\mu_\gamma:
        a\in\mathbb R,\right.\\
    &\hspace{0.3in}\left.
        \gamma \text{ is a bi-infinite admissible path in }G
    \right\}.
    \end{aligned}
    \label{eq:finite_state_translation_invariant_family}
\end{equation}
Since $G$ is finite and all edge weights are positive, every set in $\mathcal S_G$ is separated with a common separation constant. For a directed cycle $C$ in $G$, define its mean normalized gap
\begin{equation}
    \bar h(C)
    =
    \frac{1}{|C|}\sum_{e\in C}h_e,
    \qquad
    \bar h_G
    =
    \max_C\bar h(C),
    \label{eq:max_cycle_mean_gap}
\end{equation}
where the maximum is over directed cycles that can occur in a bi-infinite admissible
path.
Equivalently, the cycle lies in a strongly connected component that supports a
bi-infinite path; finite transient edges outside such components can affect only
finite prefixes and do not change the lower-density or weak-limit obstruction.

\begin{theorem}[Finite-state gap-prior MT criterion]
\label{thm:finite_state_gap_prior}
Assume the finite-state gap model above.
\begin{enumerate}
    \item[\textup{(i)}]
    Every $S\in\mathcal S_G$ satisfies the uniform density estimate
    \begin{equation}
        D^-(S)
        \ge
        \frac{1}{\Delta T\,\bar h_G}
        =
        \frac{\Omega}{\pi\bar h_G}.
        \label{eq:finite_state_density_lower}
    \end{equation}
    Moreover, if $C_*$ is a cycle attaining the maximum in
    \eqref{eq:max_cycle_mean_gap}, then every translate of the periodic path obtained by repeating
    $C_*$ has density exactly $(\Delta T\bar h_G)^{-1}$.

    \item[\textup{(ii)}]
    If $\bar h_G<1$, then $\mathcal S_G$ is a uniform stable sampling family for
    $PW_\Omega$. In particular, if an MT signal class $\mathcal X$ is uniformly
    separated and its class weak-limit hull satisfies
    \begin{equation}
        \mathfrak W_{\Delta V}(\mathcal X)\subseteq\mathcal S_G,
        \label{eq:hull_contained_finite_state_family}
    \end{equation}
    then $\mathcal X$ is a uniform stable MT sampling class for $PW_\Omega$.

    \item[\textup{(iii)}]
    If $\bar h_G>1$, then the graph family $\mathcal S_G$ is not a sampling family
    for $PW_\Omega$: the periodic point set generated by a maximizing cycle has
    density below $\Omega/\pi$ and fails Landau's necessary density condition. If
    this maximizing-cycle template, or any of its translates, belongs to
    $\mathfrak W_{\Delta V}(\mathcal X)$, then the MT class $\mathcal X$ cannot
    have a uniform stable MT sampling guarantee.
\end{enumerate}
At the critical value $\bar h_G=1$, the cycle mean is not decisive: one must return
to the class-hull uniqueness criterion of Theorem~\ref{thm:apriori_class_hull_mvt}
or to a structure-specific complete-interpolation test.
\end{theorem}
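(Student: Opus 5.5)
Part (i) is a combinatorial statement about walks in a finite graph, and I would prove it first. Every bi-infinite admissible path eventually lives in a strongly connected component that supports bi-infinite paths; transient edges are finitely many and contribute only an $O(1)$ count error. Any long finite walk of $N$ edges decomposes into at most $|V|$ edges plus a union of simple directed cycles, by repeatedly excising cycles. Each cycle $C$ has total gap $|C|\,\bar h(C)\le |C|\,\bar h_G$. Hence $N$ consecutive gaps span normalized length at most $N\bar h_G+|V|\max_e h_e$. A physical window of length $r$ then contains at least $r/(\Delta T\bar h_G)-O(1)$ points, with the $O(1)$ constant uniform over all $S\in\mathcal S_G$. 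Dividing by $r$ and letting $r\to\infty$ gives \eqref{eq:finite_state_density_lower} uniformly in $S$. For a maximizing cycle $C_*$, the periodic set obtained by repeating it has exactly $|C_*|$ points per period $\Delta T\,|C_*|\,\bar h_G$. Its density is therefore exactly $(\Delta T\bar h_G)^{-1}$, and translation does not change this.

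For (ii) I would use Theorem~\ref{thm:apriori_class_hull_mvt}, stated for point-set families. Uniform separation holds with constant $\Delta T\min_e h_e$. The next step is to show that the family weak-limit hull of $\mathcal S_G$ is contained in $\mathcal S_G$ itself. Given translates $a_j+\Delta T\mu_{\gamma_j}$ converging weakly, separation and the upper gap bound $\Delta T\max_e h_e$ prevent points from escaping or accumulating. A diagonal (König) compactness argument on the finite alphabet $E$ then extracts a limiting admissible path $\gamma$ and offset $a$, and the limit set is $a+\Delta T\mu_\gamma$. So every $\Gamma$ in the hull lies in $\mathcal S_G$, and by (i) $D^-(\Gamma)\ge(\Delta T\bar h_G)^{-1}>\Omega/\pi$.

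Now I need Bernstein uniqueness for each such $\Gamma$. I would use the standard fact that a separated set with $D^-(\Gamma)>\Omega/\pi$ is a sampling set for $PW_{\Omega'}$ for some $\Omega'>\Omega$, via Beurling's sufficiency. Sampling for $PW_{\Omega'}$ implies uniqueness for $\mathcal B_\Omega$: if $F\in\mathcal B_\Omega$ vanishes on $\Gamma$, then $F(t)\,\mathrm{sinc}(\epsilon t)$ lies in $PW_{\Omega'}$ for small $\epsilon$ and vanishes on $\Gamma$, so it is zero, and hence $F\equiv0$. The class-hull criterion then gives uniform stable sampling of $\mathcal S_G$. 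If $\mathfrak W_{\Delta V}(\mathcal X)\subseteq\mathcal S_G$, the same uniqueness holds on the class hull, and Theorem~\ref{thm:apriori_class_hull_mvt} applies directly to $\mathcal X$.

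For (iii), the maximizing-cycle periodic set lies in $\mathcal S_G$ and, by (i), has density $(\Delta T\bar h_G)^{-1}<\Omega/\pi$. Landau's necessary condition then rules it out as a sampling set. Alternatively, it is a periodic template, so Corollary~\ref{cor:finite_periodic_template_mvt} applies with $q=|C_*|$ and $T=\Delta T|C_*|\bar h_G$, which gives a non-uniqueness limit. If that template or one of its translates lies in $\mathfrak W_{\Delta V}(\mathcal X)$, the class-hull theorem forbids a uniform lower bound for $\mathcal X$.

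The main obstacle is the closure step in (ii), namely that weak limits of path sets are again path sets with the same offset structure. The concern is that the indexing of points can shift along the sequence; I would handle this by re-anchoring each translate at its point nearest the origin before extracting subsequences. The critical remark needs no proof beyond noting that the example of the Shannon grid with one point removed shows density alone is not decisive.
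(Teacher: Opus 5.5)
Your proof is correct. Parts (i) and (iii) follow the paper's argument closely: cycle excision with an $O_G(1)$ residual, the repeated maximizing cycle, and Landau.

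Part (ii) takes a different route. The paper never identifies the weak-limit hull of $\mathcal S_G$. It observes that the count bound \eqref{eq:finite_state_count_lower} is uniform in the path and the window position. It feeds this, together with the common separation, into a uniform-family form of Beurling's strict-density theorem. It then asserts that hull elements of $\mathcal X$, since they lie in a sampling family, are ``in particular'' Bernstein uniqueness sets. You instead show that $\mathcal S_G$ is closed under weak limits. You then convert the strict margin into $\mathcal B_\Omega$-uniqueness through sampling at a larger bandwidth $\Omega'$ and the factor $\mathrm{sinc}(\epsilon t)$. Everything then runs through the one weak-limit family theorem behind Theorem~\ref{thm:apriori_class_hull_mvt}.

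Your route is more rigorous exactly where the paper is loose. Sampling for $PW_\Omega$ does not imply uniqueness for $\mathcal B_\Omega$: the grid $\Delta T\mathbb Z$ samples $PW_\Omega$, yet $\sin(\Omega t)$ vanishes on it. It is the margin $\bar h_G<1$, spent through your $\Omega'$ device, that makes the inference valid. The paper's route buys brevity.

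Your closure step is more than you need, though. The uniform count bound passes directly to any weak limit, since by separation each $\varepsilon$-collar of a window changes the count by at most one point. That already gives $D^-(\Gamma)\ge(\Delta T\bar h_G)^{-1}$ on the whole hull without the K\"onig extraction.

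In (iii) you lean, as the paper does, on the ``only if'' half of the weak-limit characterization. The same grid example shows that this half cannot hold for general separated families. It is cleaner to note that a uniform lower frame bound survives weak limits of uniformly separated sets. The template would then itself be a $PW_\Omega$ sampling set below Landau's density, a contradiction.

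For the critical remark, place the punctured grid inside an actual graph family. Two unit self-loops at distinct vertices, joined by a one-way edge of gap $2$, give $\bar h_G=1$. That family contains both $\Delta T\mathbb Z$ and $\Delta T(\mathbb Z\setminus\{1\})$ in $\mathcal S_G$.
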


\textit{Proof.}
We first prove the density estimate. Let $N\ge1$ and consider any block of $N$
consecutive edges in an admissible path. Remove directed cycles from this finite path
one at a time. Each removed cycle has average gap at most $\bar h_G$. After all cycles
have been removed, the remaining path is simple and has at most $|V|-1$ edges. If
$h_{\max}=\max_{e\in E}h_e$, the residual contribution is at most
$(|V|-1)h_{\max}$. Hence there is a graph constant
\begin{equation}
    C_G=(|V|-1)h_{\max}
    \label{eq:finite_state_residual_constant}
\end{equation}
such that every block of $N$ consecutive normalized gaps has total length at most
\begin{equation}
    \sum_{k=m}^{m+N-1} h_{e_k}
    \le
    \bar h_G N+C_G.
    \label{eq:path_block_cycle_bound}
\end{equation}
Because the edge set is finite and positive, a window of normalized length $R$ can
lose only $O_G(1)$ points at its two endpoints. Using
\eqref{eq:path_block_cycle_bound} gives
\begin{equation}
    \#(\mu_\gamma\cap[u,u+R])
    \ge
    \frac{R}{\bar h_G}-O_G(1),
    \qquad R\to\infty,
    \label{eq:finite_state_count_lower}
\end{equation}
uniformly in the admissible path $\gamma$ and the window position $u$. Dividing by
$R$, taking the infimum over $u$, and then the lower limit as $R\to\infty$ gives the
normalized lower-density bound $D^-(\mu_\gamma)\ge1/\bar h_G$, and scaling by
$\Delta T=\pi/\Omega$ gives~\eqref{eq:finite_state_density_lower}. Translation does
not change density. If $C_*$ attains $\bar h_G$, the periodic path
$C_*C_*C_*\cdots$ has one point per average normalized length $\bar h_G$, so each
of its translates has density exactly $(\Delta T\bar h_G)^{-1}$.

If $\bar h_G<1$, then~\eqref{eq:finite_state_density_lower} gives the uniform strict
margin
\[
    D^-(S)
    \ge
    \frac{\Omega}{\pi\bar h_G}
    >
    \frac{\Omega}{\pi},
    \qquad S\in\mathcal S_G.
\]
The finite positive gap alphabet gives a common separation constant. Beurling's
strict-density theorem, in its uniform separated-family form, therefore gives common
sampling frame bounds for all $S\in\mathcal S_G$. If
\eqref{eq:hull_contained_finite_state_family} holds, every weak limit of every MT set
in $\mathcal X$ lies in a uniformly sampling family and is in particular a Bernstein
uniqueness set; Theorem~\ref{thm:apriori_class_hull_mvt} then gives uniform stable
MT sampling for $\mathcal X$.

If $\bar h_G>1$, the periodic point set generated by a maximizing cycle has density
$(\Delta T\bar h_G)^{-1}<\Omega/\pi$. Landau's necessary density theorem excludes
stable sampling of $PW_\Omega$ by that periodic set or by any translate of it. Thus
the full graph family cannot be uniformly sampled. If the same periodic template, or
one of its translates, occurs in the MT class weak-limit hull,
Theorem~\ref{thm:apriori_class_hull_mvt} also rules out a uniform lower sampling
bound for $\mathcal X$. \hfill $\square$

\begin{corollary}[Finite-state threshold-density form]
\label{cor:finite_state_threshold_density}
Assume the finite-state gap prior of Theorem~\ref{thm:finite_state_gap_prior}. Suppose
that every edge gap is an adjacent retained MT threshold gap of the form
\begin{equation}
    g_e=\frac{\Delta V}{v_e},
    \qquad v_e>0,
    \label{eq:finite_state_edge_velocity}
\end{equation}
where $v_e:=\Delta V/g_e$ is the edge's gap-derived crossing-velocity parameter, with the same mean-value interpretation as above when the edge is generated by a differentiable monotone branch,
so that
\begin{equation}
    h_e
    =
    \frac{g_e}{\Delta T}
    =
    \frac{\Omega\Delta V}{\pi v_e}.
    \label{eq:finite_state_normalized_velocity_gap}
\end{equation}
For a directed cycle $C$, define the harmonic mean of these gap-derived velocities
\begin{equation}
    v_H(C)
    =
    \frac{|C|}{\sum_{e\in C}v_e^{-1}}.
    \label{eq:finite_state_cycle_harmonic_velocity}
\end{equation}
Then, away from the critical value, the graph family $\mathcal S_G$ is uniformly
stably sampling for $PW_\Omega$ if and only if
\begin{equation}
    \delta_V=\frac1{\Delta V}>\delta_G^*,
    \label{eq:finite_state_threshold_rule_intro}
\end{equation}
where
\begin{equation}
    \delta_G^*
    =
    \max_C \frac{\Omega}{\pi v_H(C)}
    =
    \max_C \frac{2f_{max}}{v_H(C)},
    \label{eq:finite_state_threshold_critical_value}
\end{equation}
provided $\delta_V\ne\delta_G^*$. More explicitly, $\delta_V>\delta_G^*$ gives a
uniform stable sampling guarantee for every graph-admissible MT gap stream, whereas
$\delta_V<\delta_G^*$ produces a periodic graph-admissible stream whose density is
below Nyquist. At equality, density is critical and the class-hull or periodic
complete-interpolation geometry must be checked.
\end{corollary}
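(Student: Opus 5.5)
The plan is to reduce Corollary~\ref{cor:finite_state_threshold_density} to Theorem~\ref{thm:finite_state_gap_prior} by rewriting the maximum cycle mean $\bar h_G$ in terms of $\delta_V$ and the cycle harmonic velocities. The key identity comes from \eqref{eq:finite_state_normalized_velocity_gap}. For any directed cycle $C$,
\[
    \bar h(C)
    =
    \frac{1}{|C|}\sum_{e\in C}\frac{\Omega\Delta V}{\pi v_e}
    =
    \frac{\Omega\Delta V}{\pi}\cdot\frac{1}{v_H(C)}
    =
    \frac{\Omega}{\pi v_H(C)\,\delta_V}.
\]
The maximum over admissible cycles is taken over the same finite set of cycles in both \eqref{eq:max_cycle_mean_gap} and \eqref{eq:finite_state_threshold_critical_value}. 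The factor $\Omega/(\pi\delta_V)$ does not depend on $C$. Hence $\bar h_G=\delta_G^*/\delta_V$, and this yields the three equivalences
\[
    \bar h_G<1\iff\delta_V>\delta_G^*,\qquad
    \bar h_G>1\iff\delta_V<\delta_G^*,\qquad
    \bar h_G=1\iff\delta_V=\delta_G^*.
\]

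The proof then splits into two noncritical directions. If $\delta_V>\delta_G^*$, then $\bar h_G<1$, and Theorem~\ref{thm:finite_state_gap_prior}(ii) gives common frame bounds for all of $\mathcal S_G$. This is the uniform stable sampling guarantee for every graph-admissible stream. If $\delta_V<\delta_G^*$, then $\bar h_G>1$, and Theorem~\ref{thm:finite_state_gap_prior}(iii) applies. The periodic stream obtained by repeating a maximizing cycle $C_*$ has density $(\Delta T\bar h_G)^{-1}=\delta_V\,v_H(C_*)<\Omega/\pi$. It is graph-admissible and lies in $\mathcal S_G$, so by Landau's necessary condition the family is not uniformly sampling. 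Together these give the ``if and only if'' under the proviso $\delta_V\ne\delta_G^*$. The equality case corresponds exactly to $\bar h_G=1$, where the theorem explicitly declares the cycle mean indecisive, so we only remark that it must be deferred to the class-hull or complete-interpolation test.

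The step needing the most care is the bookkeeping on which cycles are maximized over. The maximum in \eqref{eq:finite_state_threshold_critical_value} must be read over the same cycles as in \eqref{eq:max_cycle_mean_gap}, namely those lying in strongly connected components that support bi-infinite paths. Otherwise a transient cycle could inflate $\delta_G^*$ without producing a bad periodic stream. I would state this convention explicitly and note that, under it, the positivity of all $v_e$ makes $v_H(C)$ well defined. Since the cycle set is finite, the maximum is attained, so a maximizing cycle $C_*$ exists for the negative direction.
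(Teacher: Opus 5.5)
Your proposal is correct and follows the paper's proof. You substitute $h_e=\Omega\Delta V/(\pi v_e)$ into the cycle mean to get $\bar h(C)=\Omega\Delta V/(\pi v_H(C))$, hence $\bar h_G=\delta_G^*/\delta_V$, and then invoke the noncritical alternatives (ii) and (iii) of Theorem~\ref{thm:finite_state_gap_prior}; your concern about transient cycles is harmless but vacuous, since any directed cycle repeated bi-infinitely is itself an admissible path.
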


\textit{Proof.}
For a cycle $C$, substituting~\eqref{eq:finite_state_normalized_velocity_gap} into
\eqref{eq:max_cycle_mean_gap} gives
\[
    \bar h(C)
    =
    \frac{\Omega\Delta V}{\pi}
    \frac{1}{|C|}\sum_{e\in C}v_e^{-1}
    =
    \frac{\Omega\Delta V}{\pi v_H(C)}.
\]
Therefore $\bar h_G<1$ is equivalent to
\[
    \Delta V
    <
    \min_C \frac{\pi v_H(C)}{\Omega},
\]
or, in threshold-density form,
\[
    \delta_V
    >
    \max_C\frac{\Omega}{\pi v_H(C)}.
\]
The cases $\bar h_G<1$ and $\bar h_G>1$ are exactly the two noncritical alternatives
in Theorem~\ref{thm:finite_state_gap_prior}; replacing $\Omega/\pi$ by $2f_{max}$
gives the final expression. \hfill $\square$

\begin{corollary}[Interval-uncertain finite-state gap prior]
\label{cor:interval_uncertain_finite_state_gap_prior}
\textit{Let $G=(V,E)$ be a finite directed graph. Suppose that each normalized edge
gap is not known exactly but is constrained to an interval}
\[
    h_e\in[\underline h_e,\overline h_e],
    \qquad
    0<\underline h_e\le \overline h_e,
    \qquad e\in E.
\]
\textit{Let $\mathcal S_G([\underline h,\overline h])$ denote the translation-invariant
family of all physical point sets $a+\Delta T\,\mu_\gamma$ generated by bi-infinite
admissible paths in $G$, with each edge gap chosen inside its interval. Define the
worst-case maximum cycle mean}
\begin{equation}
    \overline h_G
    =
    \max_C
    \frac1{|C|}
    \sum_{e\in C}\overline h_e,
    \label{eq:interval_worst_cycle_mean}
\end{equation}
\textit{where the maximum is over directed cycles that can occur in a bi-infinite
path, equivalently cycles contained in a strongly connected component supporting
such a path. If}
\begin{equation}
    \overline h_G<1,
    \label{eq:interval_finite_state_safe}
\end{equation}
\textit{then every point set in $\mathcal S_G([\underline h,\overline h])$ is a
sampling set for $PW_\Omega$, with uniform frame bounds depending only on the graph,
the interval endpoints, and the strict margin $1-\overline h_G$.}

\textit{Conversely, if $\overline h_G>1$, then there exists an admissible choice of
edge gaps within the intervals and a periodic path in $G$ whose generated point set
has lower density below the critical value $\Omega/\pi$. Hence no uniform sampling
guarantee can hold for the entire interval-uncertain family. The critical case
$\overline h_G=1$ is not decided by the cycle mean alone.}
\end{corollary}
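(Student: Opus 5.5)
The plan is to reduce the interval-uncertain family to the exact finite-state estimate of Theorem~\ref{thm:finite_state_gap_prior}, part (i), by a monotonicity (domination) argument on the cycle-removal bound. Fix any bi-infinite admissible path $\gamma=(e_n)$ and any admissible gap choice $h_{e_n}^{(n)}\in[\underline h_{e_n},\overline h_{e_n}]$. The gaps may vary along the path even for repeated edges, so the set is not literally in some $\mathcal S_G$ with fixed weights. I will therefore not quote Theorem~\ref{thm:finite_state_gap_prior} as a black box. Instead I will rerun its block argument with $h_{e_k}^{(k)}\le\overline h_{e_k}$: every block of $N$ consecutive gaps has total length at most $\sum_{k}\overline h_{e_k}\le \overline h_G N+C_G'$, with $C_G'=(|V|-1)\max_e\overline h_e$. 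This follows from the same cycle decomposition applied to the graph with weights $\overline h_e$.

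From this block bound, exactly as in \eqref{eq:finite_state_count_lower}, I will get
\[
\#(\mu\cap[u,u+R])\ge R/\overline h_G-O_G(1),
\]
uniformly in the path, the gap choice and the window position. The $O_G(1)$ endpoint loss is controlled by $\max_e\overline h_e$. Hence $D^-(S)\ge \Omega/(\pi\overline h_G)>\Omega/\pi$ with a uniform margin. Uniform separation holds with constant $\Delta T\min_e\underline h_e>0$. Beurling's strict-density theorem in its uniform separated-family form, the same input used in part (ii) of Theorem~\ref{thm:finite_state_gap_prior}, then gives common frame bounds.

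The step I expect to be most delicate is justifying that the frame bounds depend only on the graph, the interval endpoints and the margin $1-\overline h_G$. I would handle it through the weak-limit formulation.

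1. The whole family is translation invariant and uniformly separated.
2. It is closed under weak limits up to the inclusion needed. A weak limit of such sets is again generated by an admissible path with gaps in the closed intervals, obtained by a diagonal compactness argument on a finite graph with compact gap intervals. Limits of $\Delta T\min\underline h_e$-separated sets keep their gaps in the limit, which is the point to check.
3. Every weak limit therefore has density above $\Omega/\pi$ and is a Bernstein uniqueness set.
4. The uniform-family Beurling theorem used in Theorem~\ref{thm:apriori_class_hull_mvt} then yields uniform lower bounds.
5. The upper bound comes from Plancherel--P\'olya with the separation constant.

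For the converse, suppose $\overline h_G>1$ and let $C_*$ attain the maximum in \eqref{eq:interval_worst_cycle_mean}. Choose $h_e=\overline h_e$ on every edge, which is admissible. Since $C_*$ lies in a strongly connected component supporting bi-infinite paths, the periodic path $\cdots C_*C_*\cdots$ is admissible. Its point set is periodic with $|C_*|$ points per period $\Delta T\sum_{e\in C_*}\overline h_e$, so its density is $(\Delta T\,\overline h_G)^{-1}<\Omega/\pi$. Landau's necessary condition then excludes sampling for this set, so no uniform guarantee holds for the family.

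For the critical case, I would remark that at $\overline h_G=1$ the same periodic set has density exactly $\Omega/\pi$. Whether it samples then depends on its microscopic geometry, as in Corollary~\ref{cor:finite_periodic_template_mvt}. This boundary case will be noted but not decided.
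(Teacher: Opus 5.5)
Your proposal is correct and follows the paper's proof essentially step for step: you dominate every gap by its upper endpoint so that the cycle-decomposition block bound of Theorem~\ref{thm:finite_state_gap_prior} holds with $\overline h_G$, you take the separation constant from the lower endpoints and invoke Beurling's strict-density theorem in uniform separated-family form, and for the converse you repeat a maximizing cycle at its upper endpoints and apply Landau. Your extra weak-limit compactness argument and your allowance for occurrence-dependent gaps are sound refinements of the uniformity that the paper simply takes from the uniform-family Beurling theorem.
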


\textit{Proof.}
For every compatible choice of gaps and every directed cycle $C$,
\[
    \frac1{|C|}\sum_{e\in C}h_e
    \le
    \frac1{|C|}\sum_{e\in C}\overline h_e
    \le
    \overline h_G.
\]
The cycle-decomposition argument in Theorem~\ref{thm:finite_state_gap_prior} therefore
applies uniformly with $\overline h_G$ in place of the exact maximum cycle mean. If
$\overline h_G<1$, every generated normalized point set has lower density at least
$1/\overline h_G>1$, uniformly over paths, translations, and admissible interval
choices. The positive lower endpoints give a common separation constant. Beurling's
strict-density theorem then gives uniform sampling bounds after scaling by
$\Delta T=\pi/\Omega$.

If $\overline h_G>1$, choose a maximizing cycle and set each edge on that cycle to
its upper endpoint. Repeating the cycle gives a periodic admissible path with average
normalized gap $\overline h_G>1$, hence physical lower density
$(\Delta T\overline h_G)^{-1}<\Omega/\pi$. Landau's necessary density theorem rules
out stable sampling for that point set, so the whole interval-uncertain family cannot
have a uniform lower frame bound. \hfill $\square$

\textbf{Interpretation.}
Theorem~\ref{thm:apriori_class_hull_mvt} is the closest exact a priori statement available inside the general nonuniform-sampling framework: the prior object is the class-level weak-limit hull. Corollary~\ref{cor:finite_periodic_template_mvt} shows when this exact condition collapses to finitely many algebraic density/rank checks, but its iff uses the finite quotient hull~\eqref{eq:finite_periodic_template_quotient_hull}; a mere covering library gives sufficiency, and a negative conclusion requires the bad template orbit to belong to the quotient hull. The scalar threshold density $\delta_V$ becomes an if-and-only-if threshold only in the final, more restrictive situation where each admissible weak-limit orbit has a calibrated harmonic-mean MT velocity, or in the finite-state noncritical prior of Corollary~\ref{cor:finite_state_threshold_density}. Without such class-level weak-limit, finite-orbit periodic-template, finite-state, or exact-density-calibration structure, density remains only necessary below the critical value and sufficient above it, with the critical boundary decided by microscopic geometry.

\subsection{What the Post-Realization Condition Implies for Static Finite-Energy MT}
\label{sec:finite_energy_exact_implication}
The post-realization condition immediately explains a fundamental limitation of the static-threshold model.

\begin{corollary}[Static finite-energy implication]
\label{cor:finite_energy_static_density_zero}
\textit{For every finite-energy $x\in PW_\Omega$ and every fixed $\Delta V>0$ under the recorded adjacent-transition convention, the retained static MT record has}
\begin{equation}
    D^-(\Lambda_{\Delta V}(x))=0.
    \label{eq:finite_energy_mvt_density_zero}
\end{equation}
\textit{Consequently, without external padding values, tail information, a persistent streaming model, or a finite-dimensional prior, such a record cannot globally sample the full infinite-dimensional space $PW_\Omega$.}
\end{corollary}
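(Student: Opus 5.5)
The plan is to show that the retained set $\Lambda_{\Delta V}(x)$ is in fact finite. Zero lower density then follows immediately, and the sampling obstruction follows from Landau's necessary condition, or equivalently from the weak-limit criterion of Theorem~\ref{thm:mvt_weak_limit_iff}.

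\textbf{Step 1: the tails are small.} Fix a finite-energy $x\in PW_\Omega$, which is not identically zero; if $x\equiv0$, no raw event is produced and the set is empty. By the Paley--Wiener theorem $\hat x\in L^1[-\Omega,\Omega]$, so the Riemann--Lebesgue lemma gives $x(t)\to0$ as $|t|\to\infty$. Choose $T_0$ with $|x(t)|<\Delta V/2$ for $|t|>T_0$.

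\textbf{Step 2: only finitely many events are retained.} On each tail $(T_0,\infty)$ and $(-\infty,-T_0)$, the only threshold level the waveform can meet is $0$. Every raw event there therefore carries the label $\ell_j=0$. Under the adjacent-transition map $\mathcal A_{\Delta V}$ of~\eqref{eq:retained_state_forward}, at most one of these tail events can be retained: the first one, at which the state enters $0$. All later tail events are same-state returns and are suppressed, because no adjacent label $\pm1$ is ever reached again. The backward recursion is handled symmetrically by the remark that the retained origin affects only a finite prefix.

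On the compact interval $[-T_0,T_0]$, analyticity of the nonzero function $x$ gives finitely many monotone branches and finitely many raw hits. Hence $\#\Lambda_{\Delta V}(x)<\infty$.

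\textbf{Step 3: density and the sampling obstruction.} Since the set is finite, for every $r$ there is a window $[t,t+r]$ containing no point, so $\inf_t n(t,r)=0$ and $D^-=0$. This proves~\eqref{eq:finite_energy_mvt_density_zero}.

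If the record were a sampling set for $PW_\Omega$, Landau's necessary condition would require $D^-\ge 2f_{max}>0$, which is a contradiction. Alternatively, translating the set to infinity gives the empty set as a weak limit in $W(\Lambda)$. The empty set is not a uniqueness set for $\mathcal B_\Omega$, since the constant function $1$ vanishes on it, so condition (iii) of Theorem~\ref{thm:mvt_weak_limit_iff} fails. A third, more elementary route: a finite set cannot even give injectivity, because there are nonzero elements of $PW_\Omega$ vanishing at finitely many prescribed points, for instance a sinc times a polynomial factor corrected to stay in $L^2$. Recovery therefore needs extra information, such as padding, tails, a persistent model, or a finite-dimensional prior.

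\textbf{Main obstacle.} The delicate point is Step 2: checking precisely that the retained-state recursion suppresses the infinitely many sign changes of the tail. This covers both the bi-infinite initialization and the case where the tail state is reached from the label $\pm1$ side. The zero-density conclusion needs no sharper bound than finiteness; it would survive even if $O(1)$ extra boundary events were retained.
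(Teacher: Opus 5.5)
Your proof is correct and follows the paper's argument, which is the global part of Theorem~\ref{thm:no_go}: Riemann--Lebesgue decay of the tails, at most one retained zero-level event per tail under the adjacent-transition map, finiteness on the compact part, and then $D^-=0$ with Landau's necessary condition. The only differences are that you get compact-part finiteness from analyticity (finitely many monotone branches and raw hits) instead of the paper's Bernstein separation bound $|t_{n+1}-t_n|\ge \Delta V/(\Omega\|x\|_\infty)$, and that you add the empty-set weak-limit and finite-rank alternatives; all of these are valid.
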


This corollary is included here to connect the post-realization sampling condition with the static finite-energy boundary. The full proof is the global part of Theorem~\ref{thm:no_go}: Riemann--Lebesgue tail decay removes informative tail events, the retained adjacent-transition convention suppresses repeated same-threshold returns in the final threshold cell, the separation estimate~\eqref{eq:exact_mvt_separation} makes the compact part finite, and Landau's necessary density condition then rules out sampling of $PW_\Omega$.

This conclusion does not invalidate the finite-record reconstruction pipeline. It identifies its exact logical status: finite MT data support recovery only after the target space has been restricted or after unmeasured tail information has been supplied.

\subsection{Shannon-Type Complete-Interpolation Subcase}
\label{sec:critical_complete_interp}
The weak-limit condition is the most general exact statement for possibly redundant sampling frames. A more explicit theorem is available if one asks for the closest analogue of the Shannon cardinal series: a non-redundant, critical-density subsequence with one sample per Nyquist degree of freedom and a Lagrange interpolation formula. This is a stronger structural requirement than stable sampling by the full MT set, but it gives the most literal MT analogue of the Nyquist--Shannon theorem.

Let $S=\{s_n\}_{n\in\mathbb{Z}}\subseteq\Lambda_{\Delta V}(x)$ be a strictly ordered, separated MT event subsequence, and normalize it to the $\pi$-bandlimited scale by
\[
    \mu_n=\frac{\Omega}{\pi}s_n.
\]
Let $G$ be the canonical product whose real zeros are $\{\mu_n\}$, with the standard primary factors needed for convergence. Define
\[
    W_G(u)=\frac{|G(u)|}{\mathrm{dist}(u,\{\mu_n\})}.
\]

\begin{theorem}[Shannon-type MT--Pavlov condition]
\label{thm:mvt_pavlov_condition}
\textit{Under the standard real separated generating-function hypotheses of Pavlov's theorem, the following statements are equivalent:}
\begin{enumerate}
    \item[\textup{(i)}] \textit{The MT subsequence $S$ gives Shannon-type stable exact recovery on $PW_\Omega$: the restriction map}
    \[
        T_S f=(f(s_n))_{n\in\mathbb{Z}}:
        PW_\Omega\to\ell^2(\mathbb{Z})
    \]
    \textit{is an isomorphism onto $\ell^2(\mathbb{Z})$.}
    \item[\textup{(ii)}] \textit{The normalized node sequence $\{\mu_n\}$ is a complete interpolating sequence for $PW_\pi$.}
    \item[\textup{(iii)}] \textit{The exponential system $\{e^{-j\mu_n u}\}_{n\in\mathbb{Z}}$ is a Riesz basis for $L^2[-\pi,\pi]$, equivalently $\{e^{-j\omega s_n}\}_{n\in\mathbb{Z}}$ is a Riesz basis for $L^2[-\Omega,\Omega]$.}
    \item[\textup{(iv)}] \textit{The canonical-product weight satisfies the Pavlov/Lyubarskii--Seip Muckenhoupt condition}
\begin{equation}
    \sup_I
    \left(\frac{1}{|I|}\int_I W_G(u)^2\,du\right)
    \left(\frac{1}{|I|}\int_I W_G(u)^{-2}\,du\right)
    <\infty
    \label{eq:muckenhoupt_a2_condition}
\end{equation}
    \textit{where the supremum is taken over all bounded real intervals $I$~\cite{LyubarskiiSeip1997}.}
\end{enumerate}
\textit{When these conditions hold, every $f\in PW_\Omega$ admits the Lagrange reconstruction}
\begin{equation}
    f(t)
    =
    \sum_{n\in\mathbb{Z}}
        f(s_n)
        \frac{G(\Omega t/\pi)}
             {G'(\mu_n)(\Omega t/\pi-\mu_n)}.
    \label{eq:mvt_shannon_pavlov_formula}
\end{equation}
\textit{For the generating MT waveform itself, the coefficients in~\eqref{eq:mvt_shannon_pavlov_formula} are the recorded threshold values $f(s_n)=x(s_n)=m_n\Delta V$.}
\end{theorem}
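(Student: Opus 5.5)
The plan is to reduce everything to the normalized space $PW_\pi$ by the dilation $t\mapsto \Omega t/\pi$. Then each equivalence is either a Plancherel identification or a citation of the Pavlov/Lyubarskii--Seip theorem.

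\textbf{Rescaling.} Define $(Uf)(u)=\sqrt{\pi/\Omega}\,f(\pi u/\Omega)$. This map is a unitary isomorphism $PW_\Omega\to PW_\pi$ and satisfies $(Uf)(\mu_n)=\sqrt{\pi/\Omega}\,f(s_n)$. Hence $T_S$ is an isomorphism onto $\ell^2(\mathbb Z)$ exactly when the restriction map $g\mapsto(g(\mu_n))$ is an isomorphism $PW_\pi\to\ell^2(\mathbb Z)$. The latter is the definition of $\{\mu_n\}$ being a complete interpolating sequence: it is simultaneously sampling (bounded below) and interpolating (onto). This gives (i)$\Leftrightarrow$(ii).

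\textbf{Plancherel step.} As in the proof of Theorem~\ref{thm:mvt_weak_limit_iff}, write $g(\mu_n)=(2\pi)^{-1}\langle \hat g, e^{-j\mu_n\cdot}\rangle_{L^2[-\pi,\pi]}$, using a conjugation convention that does not affect frame or Riesz properties. The restriction map is then the analysis operator of $\{e^{-j\mu_n u}\}$, up to the constant $\sqrt{2\pi}$.

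For a system in a Hilbert space, the analysis operator is an isomorphism onto $\ell^2$ if and only if the system is a Riesz basis. This follows because the synthesis operator is its adjoint: it is bounded and invertible precisely when the analysis operator is. This gives (ii)$\Leftrightarrow$(iii).

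The dilation $\omega=\Omega u/\pi$ maps $e^{-j\mu_n u}$ on $[-\pi,\pi]$ to a constant multiple of $e^{-j\omega s_n}$ on $[-\Omega,\Omega]$, since $\mu_n u=s_n\omega$. This gives the stated equivalent form of (iii).

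\textbf{Muckenhoupt step.} For (ii)$\Leftrightarrow$(iv) I would invoke the Pavlov theorem in the Lyubarskii--Seip formulation~\cite{LyubarskiiSeip1997}. It states that a separated real sequence is complete interpolating for $PW_\pi$ if and only if the generating function $G$ (the limit of the symmetric partial products, which is the standing hypothesis) is of exponential type $\pi$ and satisfies $|G(u+ j)|^2\in A_2(\mathbb R)$. Some care is needed here because the statement uses $W_G(u)=|G(u)|/\mathrm{dist}(u,\{\mu_n\})$ on the real line rather than $|G(u+j)|$.

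I would bridge the two weights using separation. Take $\delta=\inf|\mu_n-\mu_m|$. Applying the Hadamard factorization and Harnack/Poisson-type estimates on horizontal strips gives the comparison $|G(u+j)|\asymp W_G(u)$ uniformly in $u$, with constants depending only on $\delta$. The reason is that the only zero near $u$ contributes $|u-\mu_n|$ on the line and $\asymp 1$ at height one, while the remaining factors vary boundedly. Pointwise-comparable weights have the same $A_2$ status, so the two conditions agree.

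I expect this comparison to be the main obstacle, because it depends on the precise form of the canonical product hypotheses. If it cannot be made self-contained, I would cite it as part of the standard Lyubarskii--Seip real-line reformulation, which is what the phrase ``standard real separated generating-function hypotheses'' is meant to cover.

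\textbf{Lagrange formula.} Assume the equivalent conditions hold. For $g\in PW_\pi$, the functions $G(u)/(G'(\mu_n)(u-\mu_n))$ belong to $PW_\pi$ and satisfy the interpolation property $\delta_{nm}$ at the nodes. Their Fourier transforms form the biorthogonal system dual to the Riesz basis $\{e^{-j\mu_n u}\}$, so the expansion $g=\sum g(\mu_n)\,G/(G'(\mu_n)(\cdot-\mu_n))$ converges in $L^2$.

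Because point evaluation is bounded on $PW_\pi$, the expansion also converges uniformly. Transporting back through $U$, with $g(\mu_n)$ and $g(\Omega t/\pi)$ carrying the same constant $\sqrt{\pi/\Omega}$ so that it cancels, gives \eqref{eq:mvt_shannon_pavlov_formula}.

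Finally, for $f=x$ the retained-event definition gives $x(s_n)=m_n\Delta V$ directly, as in Corollary~\ref{cor:mvt_acquisition_recovery}.
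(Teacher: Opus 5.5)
Your proposal follows the paper's proof essentially step for step: you dilate to $PW_\pi$, identify the restriction map with the analysis operator of $\{e^{-j\mu_n u}\}$, cite Pavlov's theorem in the Lyubarskii--Seip form, and use the cardinal functions $G(u)/(G'(\mu_n)(u-\mu_n))$. The only extra ingredient is your bridge from $|G(u+j)|$ to $W_G$, which the paper avoids by citing the real-line weight directly. That bridge needs no Harnack estimate, but its constants are not controlled by separation alone. The product identity $|G(u+j)|^2=|G(u)|^2\prod_k\bigl(1+(u-\mu_k)^{-2}\bigr)$ gives $W_G(u)\asymp |G(u+j)|/\max\{1,\mathrm{dist}(u,\{\mu_n\})\}$, so you also need bounded gaps. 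These are automatic for (ii)$\Rightarrow$(iv), since a complete interpolating sequence is sampling and hence relatively dense; otherwise they belong to the standing hypotheses.
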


\textit{Proof.}
The dilation $g(u)=f(\pi u/\Omega)$ carries $PW_\Omega$ to $PW_\pi$ and maps the samples $f(s_n)$ to $g(\mu_n)$. Therefore $T_S$ is an isomorphism on $PW_\Omega$ exactly when the normalized restriction map $g\mapsto(g(\mu_n))$ is an isomorphism on $PW_\pi$, which is the definition of a complete interpolating sequence. By the inverse Fourier representation, the latter restriction map is the analysis operator associated with $\{e^{-j\mu_n u}\}$ on $L^2[-\pi,\pi]$; hence it is an isomorphism exactly when the exponential system is a Riesz basis. Pavlov's theorem in the Lyubarskii--Seip formulation gives the equivalence with~\eqref{eq:muckenhoupt_a2_condition}. Finally, complete interpolation gives the cardinal functions
\[
    \ell_n(u)=
    \frac{G(u)}{G'(\mu_n)(u-\mu_n)},
    \qquad
    \ell_n(\mu_m)=\delta_{nm},
\]
so $g(u)=\sum_n g(\mu_n)\ell_n(u)$. Substituting $u=\Omega t/\pi$ gives~\eqref{eq:mvt_shannon_pavlov_formula}. Appendix~\ref{app:pavlov_a2} expands each step, including the scaling constants and Shannon-grid reduction. \hfill $\square$

If $\mu_n=n+\alpha$, then one may take $G(u)=\sin\pi(u-\alpha)$. Equation~\eqref{eq:mvt_shannon_pavlov_formula} reduces to the shifted Shannon series
\begin{equation}
    f(t)
    =
    \sum_{n\in\mathbb{Z}}
        f\!\left(\tau+\frac{\pi n}{\Omega}\right)
        \mathrm{sinc}\!\left(
            \frac{\Omega(t-\tau)}{\pi}-n
        \right),
    \qquad
    \alpha=\frac{\Omega\tau}{\pi}.
    \label{eq:shifted_shannon_from_pavlov}
\end{equation}
Thus the canonical product $G$ is the nonuniform replacement for $\sin\pi u$, and~\eqref{eq:mvt_shannon_pavlov_formula} is the corresponding MT-generated generalized sinc formula.

The theorem also marks the boundary of explicitness. At the exact Nyquist density, a scalar equality $D^-(S)=\Omega/\pi$ is not sufficient; the decisive object is the full canonical-product geometry of the realized crossing subsequence. The complete-interpolation theorem is therefore more explicit than the weak-limit condition because it gives a reconstruction kernel, but it is not a finite-parameter design rule in $\Delta V$ alone.

\subsubsection{Periodic-Gap MT--Shannon Theorem}
\label{sec:periodic_gap_shannon}
There is, however, one important complete-interpolation subclass in which the canonical-product geometry collapses to a finite list of gaps. Assume that the normalized MT nodes have periodic gaps:
\begin{equation}
    \mu_{n+1}-\mu_n=h_{n\bmod q},
    \qquad
    h_r>0,\quad r=0,\ldots,q-1,
    \label{eq:periodic_gap_def}
\end{equation}
and define the period-block length
\begin{equation}
    H=\sum_{r=0}^{q-1}h_r.
    \label{eq:periodic_gap_H}
\end{equation}
Then $\mu_{n+q}=\mu_n+H$. This model should be read as an ideal persistent stream, a periodic steady-state stream, a periodic extension, or an extracted bi-infinite periodic subsequence. It does not contradict Subsection~\ref{sec:finite_energy_exact_implication}: a retained finite-energy static MT record still eventually stops producing informative tail events.

\begin{theorem}[Periodic-gap MT--Shannon condition]
\label{thm:periodic_gap_mvt_shannon_condition}
\textit{Under~\eqref{eq:periodic_gap_def}, the following statements are equivalent within the periodic-gap class:}
\begin{enumerate}
    \item[\textup{(i)}] \textit{$S$ gives Shannon-type stable exact recovery on $PW_\Omega$, i.e., $f\mapsto(f(s_n))$ is an isomorphism from $PW_\Omega$ onto $\ell^2(\mathbb{Z})$.}
    \item[\textup{(ii)}] \textit{$\{\mu_n\}$ is a complete interpolating sequence for $PW_\pi$.}
    \item[\textup{(iii)}] \textit{$\{e^{-j\mu_n u}\}_{n\in\mathbb{Z}}$ is a Riesz basis for $L^2[-\pi,\pi]$.}
    \item[\textup{(iv)}] \textit{The normalized period block is critical:}
    \begin{equation}
        H=q.
        \label{eq:periodic_mvt_shannon_condition}
    \end{equation}
\end{enumerate}
\textit{When these conditions hold, write}
\begin{equation}
    \beta_0=\mu_0,\qquad
    \beta_r=\mu_0+\sum_{\ell=0}^{r-1}h_\ell,
    \quad r=1,\ldots,q-1.
    \label{eq:periodic_beta_def}
\end{equation}
\textit{Then}
\begin{equation}
    \{\mu_n:n\in\mathbb{Z}\}
    =
    \bigcup_{r=0}^{q-1}(q\mathbb{Z}+\beta_r),
    \label{eq:periodic_union_nodes}
\end{equation}
\textit{with distinct residues $\beta_r\bmod q$. A generating function is}
\begin{equation}
    G_{\mathrm{per}}(u)
    =
    \prod_{r=0}^{q-1}
    \sin\!\left(\frac{\pi}{q}(u-\beta_r)\right),
    \label{eq:periodic_generating_function}
\end{equation}
\textit{and every $f\in PW_\Omega$ has the explicit expansion}
\begin{equation}
    f(t)
    =
    \sum_{n\in\mathbb{Z}}
    f(s_n)
    \frac{G_{\mathrm{per}}(\Omega t/\pi)}
         {G_{\mathrm{per}}'(\mu_n)(\Omega t/\pi-\mu_n)}.
    \label{eq:periodic_mvt_shannon_formula}
\end{equation}
\textit{For the generating ideal MT waveform, $f(s_n)=x(s_n)=m_n\Delta V$.}
\end{theorem}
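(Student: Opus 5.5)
The equivalences (i)$\Leftrightarrow$(ii)$\Leftrightarrow$(iii) require no new work. They are exactly the dilation and Plancherel steps of Theorem~\ref{thm:mvt_pavlov_condition}, which do not use the periodic structure. So the content of the theorem is (ii)$\Leftrightarrow$(iv), together with the explicit generating function. The plan is to reduce the periodic-gap set to a periodic template of the kind treated in Corollary~\ref{cor:finite_periodic_template_mvt}, and then to separate the sampling property from the interpolation (Riesz) property by a fiber-rank count.

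First we write the node set as a template. Since $\mu_{n+q}=\mu_n+H$, we have
\[
\{\mu_n\}=\bigcup_{r=0}^{q-1}(H\mathbb Z+\beta_r),
\]
with $\beta_r$ as in~\eqref{eq:periodic_beta_def}. The residues $\beta_r \bmod H$ are distinct because $0<\beta_r-\beta_0<H$ for $r\ge1$. This is a template with period $T=H$, $q$ points per period, and density $q/H$, in the $\pi$-bandlimited normalization where $\Omega/\pi=1$. We then use the Poisson-summation fibering~\eqref{eq:periodic_template_fiber} with $P=2\pi/H$. On almost every fiber $\lambda$, the sample map is represented by the $q\times|M(\lambda)|$ Vandermonde-type matrix~\eqref{eq:periodic_template_alias_matrix}. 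The number of active columns $|M(\lambda)|$ equals $\lfloor 2\pi/P\rfloor=\lfloor H\rfloor$ or $\lceil H\rceil$, depending on $\lambda$.

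For (ii)$\Rightarrow$(iv): complete interpolation requires both a sampling frame and a Riesz sequence. By Landau's necessary condition and its interpolation counterpart (the upper density of a Riesz sequence is at most the critical density), we get $q/H\ge1$ and $q/H\le1$, hence $H=q$. Alternatively, the fiber argument gives the same result directly: a nontrivial nullspace appears when $|M|>q$, and a rank deficiency in the rows (non-surjectivity onto $\ell^2$) appears when $|M|<q$ on a positive-measure set.

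For (iv)$\Rightarrow$(ii): when $H=q$ we have $P=2\pi/q$. Then $[-\pi,\pi]$ has length $2\pi=qP$, so for almost every $\lambda$ exactly $q$ consecutive indices are active. Each fiber matrix is therefore square, and equals (up to diagonal unimodular row factors) the Vandermonde matrix in the nodes $z_r=e^{iP\beta_r}$. These nodes are distinct because the $\beta_r$ are distinct mod $q$. Only finitely many active index sets occur, so the singular values are bounded above and below uniformly in $\lambda$. Integrating over fibers, the analysis operator is an isomorphism $L^2[-\pi,\pi]\to\ell^2$, which gives (iii), and hence (ii).

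It remains to verify the generating function. The function $G_{\mathrm{per}}$ in~\eqref{eq:periodic_generating_function} is a product of $q$ sines, each of exponential type $\pi/q$, so its type is $\pi$. Its zeros are simple and located exactly at $q\mathbb Z+\beta_r$, because the residues are distinct. It is bounded on $\mathbb R$ and bounded away from zero off neighborhoods of the nodes (periodicity plus simple zeros). Hence it differs from the canonical product $G$ by at most a factor $e^{cu}$ with $c$ real, which is forced to vanish by boundedness. The Lagrange formula~\eqref{eq:periodic_mvt_shannon_formula} then follows from~\eqref{eq:mvt_shannon_pavlov_formula} with this $G$. Here $\ell_n(\mu_m)=\delta_{nm}$ holds directly, and convergence follows from the Riesz-basis property via the dual basis.

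The main obstacle is justifying that $G_{\mathrm{per}}$ may replace the canonical product in Theorem~\ref{thm:mvt_pavlov_condition}, i.e.\ that the cardinal functions built from it lie in $PW_\pi$ and form the dual Riesz basis. As a fallback, I would instead check the $A_2$ condition~\eqref{eq:muckenhoupt_a2_condition} for $W=|G_{\mathrm{per}}|/\mathrm{dist}$. This weight is periodic, bounded, and bounded below, so it lies in $A_2$ trivially, and (iv) of Theorem~\ref{thm:mvt_pavlov_condition} applies.
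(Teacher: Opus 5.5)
Your proof of the equivalences is correct and is essentially the paper's. $H=q$ is forced by Landau's two density bounds applied to the uniform density $q/H$ (the paper's Step 2), and sufficiency comes from square Vandermonde fibers in the distinct nodes $e^{2\pi j\beta_r/q}$. The one variation is where you fiber. You fiber the sampling map by Poisson summation, as the paper does in its periodic-gap stable-sampling corollary. The appendix instead fibers the exponential sums, $\mathbf F(\theta)=V_{\mathrm{per}}D(\theta)\mathbf C(\theta)$, and then proves the Riesz-sequence bounds and completeness separately. These are adjoint forms of the same computation. Yours gives surjectivity onto $\ell^2$ at once, and your rank-count alternative for necessity avoids importing the interpolation half of Landau's theorem.

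The step that fails as written is identifying $G_{\mathrm{per}}$ with the canonical product by arguing that boundedness kills the exponential factor. A genus-one canonical product over these zeros is in general not bounded on $\mathbb R$. It equals $c\,e^{su}G_{\mathrm{per}}(u)$, where $s$ is the principal-value sum of $1/\mu_n$; for $q=1$ and $\mu_n=n+\alpha$ this gives $s=\pi\cot\pi\alpha\ne0$ unless $\alpha=1/2$. With that $G$, the Lagrange kernels become $e^{s(u-\mu_n)}\ell_n^{\mathrm{per}}(u)$, which are not in $PW_\pi$, so the expansion would be false.

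Your $A_2$ fallback is valid as a test of complete interpolation, since the weight is continuous, positive and $q$-periodic. But it does not by itself put the kernels in $PW_\pi$.

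Neither detour is needed. Since $|G_{\mathrm{per}}|\le1$ on $\mathbb R$, and $G_{\mathrm{per}}$ has exponential type $\pi$ with simple zeros exactly at the nodes, each $\ell_n^{\mathrm{per}}(u)=G_{\mathrm{per}}(u)/\bigl(G_{\mathrm{per}}'(\mu_n)(u-\mu_n)\bigr)$ is entire of type $\pi$. It is also $O\bigl((1+|u-\mu_n|)^{-1}\bigr)$ on $\mathbb R$, so it lies in $PW_\pi$ by Paley--Wiener. Combined with $\ell_n^{\mathrm{per}}(\mu_m)=\delta_{nm}$ and the isomorphism you already proved, $\ell_n^{\mathrm{per}}$ is the preimage of the $n$-th unit vector, and the expansion follows. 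This is the paper's Step 6, where the $PW_\pi$ membership is used but left implicit.
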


The same condition has a direct event-gap interpretation. Since $\Delta T=\pi/\Omega$,
\begin{equation}
    h_{n\bmod q}
    =
    \mu_{n+1}-\mu_n
    =
    \frac{s_{n+1}-s_n}{\Delta T}
    =
    \frac{\Omega(s_{n+1}-s_n)}{\pi}.
    \label{eq:periodic_gap_time_normalization}
\end{equation}
When $s_n$ and $s_{n+1}$ are adjacent retained MT crossings on one monotone branch, the mean-value theorem gives a point $\xi_n\in(s_n,s_{n+1})$ such that
\begin{equation}
    |x'(\xi_n)|
    =
    \frac{\Delta V}{s_{n+1}-s_n},
    \qquad
    h_{n\bmod q}
    =
    \frac{\Omega\Delta V}{\pi |x'(\xi_n)|}.
    \label{eq:periodic_mvt_gap_velocity}
\end{equation}
Thus the explicit critical condition~\eqref{eq:periodic_mvt_shannon_condition} is equivalently
\begin{equation}
    \sum_{r=0}^{q-1}(s_{kq+r+1}-s_{kq+r})
    =
    q\Delta T
    =
    \frac{q\pi}{\Omega},
    \label{eq:periodic_block_time_condition}
\end{equation}
or, on monotone adjacent-threshold segments,
\begin{equation}
    \sum_{r=0}^{q-1}
    \frac{\Delta V}{|x'(\xi_{k,r})|}
    =
    \frac{q\pi}{\Omega}.
    \label{eq:periodic_block_velocity_condition}
\end{equation}
For $q=1$, $H=1$ forces $\mu_n=\mu_0+n$ and~\eqref{eq:periodic_mvt_shannon_formula} reduces to the shifted Shannon formula~\eqref{eq:shifted_shannon_from_pavlov}. For $q>1$, the theorem is still fully explicit, but it is explicit because the entire microscopic pattern is known: the finite gap block repeats forever. Appendix~\ref{app:periodic_gap_shannon} gives the complete proof, including the finite Vandermonde/fiber Riesz-basis calculation.

\begin{corollary}[Periodic-gap stable sampling form]
\label{cor:periodic_gap_stable_sampling}
\textit{Under the periodic-gap hypothesis~\eqref{eq:periodic_gap_def}, let $g_r=s_{n+r+1}-s_{n+r}$ be the physical time gaps over one period and let}
\[
    T_{\mathrm{per}}=\sum_{r=0}^{q-1}g_r,
    \qquad
    H=\sum_{r=0}^{q-1}h_r=\frac{T_{\mathrm{per}}}{\Delta T}.
\]
\textit{Then the periodic fixed set $S=\{s_n\}$ is a stable sampling set for $PW_\Omega$ if and only if}
\begin{equation}
    \frac{q}{T_{\mathrm{per}}}
    \ge
    \frac{\Omega}{\pi}
    =
    2f_{max},
    \qquad\text{equivalently}\qquad
    \frac{q}{H}\ge1.
    \label{eq:periodic_stable_density_condition}
\end{equation}
\textit{Equality is exactly the complete-interpolation case $H=q$ in Theorem~\ref{thm:periodic_gap_mvt_shannon_condition}. Strict inequality gives a redundant stable sampling frame, while $q/T_{\mathrm{per}}<\Omega/\pi$ is excluded by Landau's necessary density theorem.}

\textit{If each period gap is an adjacent retained threshold-crossing gap with}
\begin{equation}
    g_r=\frac{\Delta V}{v_r},
    \qquad
    v_r>0,
    \quad r=0,\ldots,q-1,
    \label{eq:periodic_gap_velocity}
\end{equation}
\textit{where $v_r:=\Delta V/g_r$ is the gap-derived crossing-velocity parameter. Define the harmonic mean of these parameters}
\begin{equation}
    v_H
    =
    \frac{q}{\sum_{r=0}^{q-1}v_r^{-1}}.
    \label{eq:periodic_harmonic_velocity}
\end{equation}
\textit{Then~\eqref{eq:periodic_stable_density_condition} becomes}
\begin{equation}
    \delta_V=\frac{1}{\Delta V}
    \ge
    \frac{\Omega}{\pi v_H}
    =
    \frac{2f_{max}}{v_H}.
    \label{eq:periodic_threshold_density_condition}
\end{equation}
\end{corollary}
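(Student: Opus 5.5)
The plan is to reduce the statement to the single-orbit analysis already carried out in the proof of Corollary~\ref{cor:finite_periodic_template_mvt}. First we rewrite the periodic-gap set as a periodic template. From~\eqref{eq:periodic_gap_def} we have $s_{n+q}=s_n+T_{\mathrm{per}}$ with $T_{\mathrm{per}}=H\Delta T$. Hence
\[
S=\{kT_{\mathrm{per}}+a_r:k\in\mathbb Z,\ r=0,\ldots,q-1\},\qquad a_r=s_r-s_0 .
\]
Since all $g_r>0$ and $\sum_r g_r=T_{\mathrm{per}}$, the offsets satisfy $0=a_0<\cdots<a_{q-1}<T_{\mathrm{per}}$. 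This is exactly the form~\eqref{eq:periodic_template_family} with $q$ nodes per period $T_{\mathrm{per}}$, up to the harmless translation by $s_0$. Separation is automatic, with constant $\min_r g_r$.

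Next we apply the single-orbit fiber argument. The weak-limit hull of a periodic set consists of its translates only, so the quotient hull has one orbit. The Poisson-summation fiberization~\eqref{eq:periodic_template_fiber} with $P=2\pi/T_{\mathrm{per}}$ then gives two cases.
\begin{itemize}
\item If $q/T_{\mathrm{per}}\ge\Omega/\pi$, then $|M(\lambda)|\le q$ almost everywhere. Every active matrix is a row-rescaled rectangular Vandermonde matrix in the distinct nodes $e^{iPa_r}$; these are distinct because $0\le a_r<T_{\mathrm{per}}$. Finitely many active index patterns give a uniform lower singular-value bound, and integrating over fibers yields the frame inequality.
\item If $q/T_{\mathrm{per}}<\Omega/\pi$, failure follows either from the nullspace construction in that proof or, more simply, from Landau's necessary condition, since $D^-(S)=q/T_{\mathrm{per}}$ for a periodic set.
\end{itemize}
Dividing by $\Delta T$ converts the density condition to $q/H\ge1$.

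For the equality case we invoke Theorem~\ref{thm:periodic_gap_mvt_shannon_condition}: $H=q$ is equivalent to complete interpolation. For strict inequality, redundancy follows because a complete interpolating sequence must have density exactly $\Omega/\pi$ (Landau's density theorem for interpolation). Alternatively, in the fiber picture the matrices have more rows than columns on a positive-measure set, so the analysis map is not onto $\ell^2$.

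Finally, the velocity form is algebra, as in Corollary~\ref{cor:periodic_template_threshold_density}. Substituting $g_r=\Delta V/v_r$ gives $T_{\mathrm{per}}=\Delta V\sum_r v_r^{-1}=q\Delta V/v_H$, so $q/T_{\mathrm{per}}=v_H\delta_V$. Condition~\eqref{eq:periodic_stable_density_condition} then becomes~\eqref{eq:periodic_threshold_density_condition}.

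The main obstacle is the uniform lower bound over fibers in the sufficient direction. Two points need care. First, the active index set $M(\lambda)$ must be consecutive with cardinality at most $q$, which holds because the interval $[-\Omega,\Omega]$ has length $2\Omega\le qP$. Second, the Vandermonde nodes must be distinct. Once these hold, only finitely many matrices arise, and the bound follows. Everything else is inherited from the earlier results.
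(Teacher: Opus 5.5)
Your proposal is correct and follows essentially the paper's proof. Both use the same Poisson-summation fiberization into row-rescaled Vandermonde matrices with $|M(\lambda)|\le q$ almost everywhere for sufficiency, a failure argument when $q/T_{\mathrm{per}}<\Omega/\pi$ (you may invoke Landau instead of the paper's nullspace construction), the periodic-gap theorem for the equality case, and the same harmonic-mean algebra; the only cosmetic difference is that you work in physical units with period $T_{\mathrm{per}}$ rather than normalizing to $PW_\pi$ with period $H$.
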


\textit{Proof.}
It is enough to prove the normalized statement. The normalized periodic set can be written, after a translation, as
\[
    \begin{aligned}
    \mu=
    \{kH+\beta_r:k\in\mathbb Z,
    \ r=0,\ldots,q-1\},\\
    0\le \beta_0<\cdots<\beta_{q-1}<H.
    \end{aligned}
\]
where $H=T_{\mathrm{per}}/\Delta T$. Set $P=2\pi/H$. For $g\in PW_\pi$, Poisson summation on each coset gives, for almost every $\lambda\in[0,P)$,
\begin{equation}
    \sum_{k\in\mathbb Z} g(kH+\beta_r)e^{-ikH\lambda}
    =
    \frac{1}{H}
    \sum_{m\in M(\lambda)}
        \widehat g(\lambda+mP)e^{i(\lambda+mP)\beta_r},
    \label{eq:periodic_stable_fiber_poisson}
\end{equation}
where
\[
    M(\lambda)=\{m\in\mathbb Z:\lambda+mP\in[-\pi,\pi]\}.
\]
Thus the sampling operator fibers into the matrices
\begin{equation}
    A(\lambda)_{r,m}
    =
    e^{i(\lambda+mP)\beta_r},
    \qquad
    r=0,\ldots,q-1,
    \quad m\in M(\lambda).
    \label{eq:periodic_stable_alias_matrix}
\end{equation}

If $q/H\ge1$, equivalently $H\le q$, then $2\pi\le qP$. Hence $|M(\lambda)|\le q$ for almost every $\lambda$. The active indices in $M(\lambda)$ are consecutive. After removing the unit-modulus row factors $e^{i\lambda\beta_r}$, every active matrix is a rectangular Vandermonde matrix with nodes
\[
    z_r=e^{iP\beta_r},
    \qquad r=0,\ldots,q-1.
\]
These nodes are distinct because the residues $\beta_r\bmod H$ are distinct. Therefore every active submatrix has full column rank. Only finitely many active index sets occur as $\lambda$ varies, so the smallest nonzero singular value is bounded below uniformly. Integrating the resulting fiber inequalities gives the stable sampling frame inequality for $PW_\pi$, and scaling back gives stable sampling for $PW_\Omega$.

If $q/H<1$, equivalently $H>q$, then $2\pi>qP$. Therefore, on a subset of $[0,P)$ of positive measure, $|M(\lambda)|\ge q+1$. On a smaller positive-measure subset the active index set is constant. There the matrix $A(\lambda)$ has more columns than rows and hence a nontrivial nullspace. Choosing $\widehat g$ supported on that subset in a fixed null direction gives a nonzero $g\in PW_\pi$ whose samples on $\mu$ all vanish. Thus the set is not a sampling set. This proves $\mu$ samples $PW_\pi$ if and only if $q/H\ge1$, equivalently $q/T_{\mathrm{per}}\ge\Omega/\pi$ in physical units.

At equality $H=q$, the fiber matrices are square Vandermonde matrices almost everywhere; this is exactly the complete-interpolation case proved in Theorem~\ref{thm:periodic_gap_mvt_shannon_condition}. If $H<q$, the matrices are tall and the frame is redundant. Finally,~\eqref{eq:periodic_gap_velocity} gives
\[
    T_{\mathrm{per}}
    =
    \Delta V\sum_{r=0}^{q-1}v_r^{-1}
    =
    \Delta V\frac{q}{v_H},
\]
and substituting this identity into $q/T_{\mathrm{per}}\ge\Omega/\pi$ yields~\eqref{eq:periodic_threshold_density_condition}. \hfill $\square$

\subsection{Finite-Record Necessary-and-Sufficient Condition}
\label{sec:finite_exact_condition}
For the finite records actually produced by MT hardware, the exact condition is linear algebraic. Let
\[
    V_M=\mathrm{span}\{\phi_1,\ldots,\phi_M\}\subset PW_\Omega
\]
be the declared reconstruction space and let $\{t_n\}_{n=1}^{N}$ be the retained MT event times in the active window. Define
\begin{equation}
    (H_{\mathrm{fin}})_{n,k}=\phi_k(t_n),
    \qquad
    1\le n\le N,\quad 1\le k\le M.
    \label{eq:finite_exact_matrix}
\end{equation}
Then finite-record MT data determine every $p\in V_M$ stably and exactly if and only if
\begin{equation}
    \begin{aligned}
    \mathrm{rank}(H_{\mathrm{fin}})=M
    &\quad\Longleftrightarrow\quad
    H_{\mathrm{fin}}^*H_{\mathrm{fin}}\succ0\\
    &\quad\Longleftrightarrow\quad
    \sigma_{\min}(H_{\mathrm{fin}})>0.
    \end{aligned}
    \label{eq:finite_exact_rank_condition}
\end{equation}
If~\eqref{eq:finite_exact_rank_condition} holds, the finite frame constants are
\[
    A_M=\sigma_{\min}(H_{\mathrm{fin}})^2,\qquad
    B_M=\sigma_{\max}(H_{\mathrm{fin}})^2,
\]
and the coefficients of $p=\sum_{k=1}^M c_k\phi_k$ are recovered from $\mathbf{y}=H_{\mathrm{fin}}\mathbf{c}$ by $\mathbf{c}=H_{\mathrm{fin}}^\dagger\mathbf{y}$. If the rank condition fails, there exists a nonzero $\mathbf{c}_0\in\ker H_{\mathrm{fin}}$, so $p_0=\sum_k(c_0)_k\phi_k$ vanishes on all recorded event times. Then $p$ and $p+p_0$ generate identical finite sample values on the record and cannot be distinguished. This is the finite-dimensional counterpart of the weak-limit theorem and is the acceptance test used in Subsection~\ref{sec:finite_bridge} and Section~\ref{sec:reconstruction}.

\textbf{Interpretation.}
The condition $\sigma_{\min}(H_{\mathrm{fin}})>0$ is exact only after the finite model space $V_M$ has been declared. A finite active record may be perfectly well conditioned on $V_M$ and still say nothing universal about all of $PW_\Omega$ without a tail prior or additional model restriction.

\section{Rigorous Sufficient Conditions via Kadec's Theorem}
\label{sec:sufficient}
Before deriving sufficient certificates, we emphasize the boundary proved later in Theorem~\ref{thm:no_go}: for unrestricted finite-energy $PW_\Omega$ signals, a retained record generated by static, uniformly spaced thresholds has zero full-line lower density. Thus the Kadec and density conditions below are not universal prior-free full-line theorems. They are certificates for active windows, persistent/structured event models, tail-qualified records, or finite-dimensional reconstruction spaces.

The logic of this section proceeds on two complementary spatial scales. Macroscopically, each Nyquist-sized neighborhood of the time axis must receive at least one MT event. Microscopically, the received event must lie close enough to the corresponding Nyquist anchor to satisfy the perturbation bound in Kadec's theorem. The key departure from generic nonuniform sampling analysis is the \emph{structural regularity} of MT crossing sets. We first identify properties that arbitrary nonuniform point sets need not possess, and then use them to obtain tighter sufficient conditions and signal-instance frame bounds.

\textbf{Scope.} The sufficient conditions derived below apply over a finite active observation window $[0, T_{\text{obs}}]$ during which the signal maintains non-trivial amplitude, in the sense that $\|x\|_{L^\infty(\mathbb{R} \setminus [0, T_{\text{obs}}])} < \Delta V$. Because the MT set is signal-dependent, the theorem is used here locally and instance-wise: we seek conditions under which the realized crossing set contains a subsequence that can be paired with a reference Nyquist grid on the active interval.

This is an effective finite-record assumption, not a literal compact-support claim: it only requires the out-of-window amplitude to stay below one threshold step so that no additional informative crossing can occur there. It does not make the out-of-window energy observable; finite-record uniqueness still requires the tail prior or finite-dimensional model stated in Subsection~\ref{sec:finite_bridge}.

\subsection{Structural Regularity of MT Crossing Sets}
\label{sec:mvt_regularity}
Before deriving sufficient conditions, we establish four structural properties that distinguish MT crossing sets from arbitrary nonuniform sampling sequences. These properties are \emph{not} assumed; they are \emph{proved consequences} of the MT definition~\eqref{eq:mvt_set} together with the Paley-Wiener regularity of~$x$. They will be exploited throughout the remainder of this section and in Section~\ref{sec:density_analysis}.

\begin{proposition}[Local inter-crossing spacing]
\label{prop:local_intercrossing_spacing}
On a maximal monotone interval $I_k$ where $x$ is strictly monotone, let $t_j, t_{j+1} \in \Lambda_{\Delta V} \cap I_k$ be consecutive MT crossings. Then
\begin{equation}
    t_{j+1} - t_j = \frac{\Delta V}{|x'(\xi_j)|}
    \label{eq:local_spacing}
\end{equation}
for some $\xi_j \in (t_j, t_{j+1})$.
\end{proposition}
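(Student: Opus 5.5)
The plan is to reduce the statement to the mean value theorem once we know that consecutive retained crossings on a single monotone branch lie on adjacent threshold levels. First, fix the maximal monotone interval $I_k$ and suppose without loss of generality that $x$ is strictly increasing there; the decreasing case follows by replacing $x$ with $-x$, which maps the threshold set $\Delta V\cdot\mathbb Z$ to itself. By the definition of $\Lambda_{\Delta V}^{\mathrm{raw}}$, $x(t_j)=\ell\Delta V$ and $x(t_{j+1})=\ell'\Delta V$ for integers $\ell,\ell'$. Strict monotonicity gives $\ell'>\ell$, since each level is hit at most once on $I_k$.

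The second step, which I expect to be the main obstacle, is to show that $\ell'=\ell+1$. Here I will use both the intermediate value theorem and the retained adjacent-transition map~\eqref{eq:retained_state_forward}. On the open interval $(t_j,t_{j+1})\subset I_k$, continuity forces $x$ to pass through every level $m\Delta V$ with $\ell<m<\ell'$. Each such passage is a raw crossing interior to a monotone branch. On a strictly increasing branch the raw labels increase by exactly one at each successive raw hit, so there are no same-threshold returns for $\mathcal A_{\Delta V}$ to suppress. Hence every raw event on $I_k$ after the first retained one is retained: its label differs by exactly one from the previously retained label.

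One subtlety is that $t_j$ itself might have been a suppressed raw event. That cannot happen here, because $t_j$ is assumed to lie in $\Lambda_{\Delta V}$. The only other possible issue is the initialization at the first event of the branch, and that does not affect a pair of consecutive retained crossings both inside $I_k$. It follows that if $\ell'>\ell+1$, the raw hit at level $(\ell+1)\Delta V$ would be retained and would lie strictly between $t_j$ and $t_{j+1}$, contradicting consecutiveness. Therefore $x(t_{j+1})-x(t_j)=\Delta V$.

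Finally, $x\in PW_\Omega$ is differentiable, so the mean value theorem on $[t_j,t_{j+1}]$ gives some $\xi_j\in(t_j,t_{j+1})$ with $\Delta V = x'(\xi_j)(t_{j+1}-t_j)$. The left-hand side is nonzero, so $x'(\xi_j)\neq0$. Taking absolute values and dividing yields~\eqref{eq:local_spacing}. In the decreasing case the same computation holds with $x(t_j)-x(t_{j+1})=\Delta V$, which is why the formula carries $|x'(\xi_j)|$.
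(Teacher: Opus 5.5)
Your proof is correct and takes the paper's route: establish $|x(t_{j+1})-x(t_j)|=\Delta V$, then apply the mean value theorem. You derive that adjacency explicitly from the intermediate value theorem and the recursion~\eqref{eq:retained_state_forward}, where the paper simply reads it off the retained convention. Your reason for $x'(\xi_j)\neq0$ (namely $x'(\xi_j)(t_{j+1}-t_j)=\pm\Delta V\neq0$) is also sturdier than the paper's appeal to strict monotonicity, which alone does not rule out isolated zeros of $x'$.
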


\textit{Proof.} By the MT definition, $|x(t_{j+1})-x(t_j)| = \Delta V$. By the mean value theorem, there exists $\xi_j \in (t_j, t_{j+1})$ with $x(t_{j+1})-x(t_j) = x'(\xi_j)(t_{j+1}-t_j)$. Since $x$ is strictly monotone on $I_k$, $x'(\xi_j) \neq 0$, giving~\eqref{eq:local_spacing}. \hfill $\square$

The physical content of~\eqref{eq:local_spacing} is simple but consequential: the local crossing spacing is \emph{deterministically locked} to the reciprocal of the local signal velocity. Fast signal segments produce closely spaced crossings; slow segments produce widely spaced crossings; and the transition between the two regimes is governed by the smoothness of $x'$. This deterministic coupling is a structural feature that is not encoded in generic nonuniform sampling models.

\begin{proposition}[Smooth variation of inter-crossing spacing]
\label{prop:smooth_gap_variation}
On the same monotone branch, define the inverse function $\tau(v) = x^{-1}(v)$ (well-defined since $x$ is strictly monotone), and let $\sigma_k=\operatorname{sgn}(x')$ on $I_k$. To keep the time index increasing on both rising and falling branches, enumerate the crossed threshold levels in traversal order as $v_m=v_{m_0}+\sigma_k m\Delta V$ and set $t_m=\tau(v_m)$. The inter-crossing spacing admits the second-order expansion
\begin{equation}
    t_{m+1} - t_m = \frac{\Delta V}{|x'(t_m)|} - \frac{(\Delta V)^2}{2}\,\frac{x''(\tau(\eta_m))}{(x'(\tau(\eta_m)))^3} 
    \label{eq:spacing_expansion}
\end{equation}
for some $\eta_m$ between $v_m$ and $v_{m+1}$, and the variation between consecutive spacings satisfies
\begin{equation}
    \bigl|(t_{m+2}-t_{m+1})-(t_{m+1}-t_m)\bigr| \le \frac{(\Delta V)^2\,\|x''\|_\infty}{v_{\mathrm{loc}}^3}
    \label{eq:spacing_variation}
\end{equation}
where $v_{\mathrm{loc}} = \inf_{I_k}|x'|$ is the minimum velocity on the monotone branch.
\end{proposition}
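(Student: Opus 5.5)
The plan is to work entirely in the voltage variable through the inverse function $\tau=x^{-1}$ on the branch $I_k$. Since $x\in PW_\Omega$ is smooth and strictly monotone on $I_k$, and $x'$ does not vanish on the portion of $I_k$ used (guaranteed wherever $v_{\mathrm{loc}}>0$), the inverse function theorem gives that $\tau$ is $C^2$ with
\[
    \tau'(v)=\frac{1}{x'(\tau(v))},
    \qquad
    \tau''(v)=-\frac{x''(\tau(v))}{(x'(\tau(v)))^3}.
\]
The crossing times are $t_m=\tau(v_m)$ with $v_{m+1}-v_m=\sigma_k\Delta V$. Hence $t_{m+1}-t_m=\tau(v_m+\sigma_k\Delta V)-\tau(v_m)$, and this difference is positive by construction.

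For \eqref{eq:spacing_expansion}, I would apply Taylor's theorem with Lagrange remainder to $\tau$ at $v_m$ with increment $\sigma_k\Delta V$:
\[
    t_{m+1}-t_m=\sigma_k\Delta V\,\tau'(v_m)+\tfrac12(\Delta V)^2\tau''(\eta_m),
\]
for some $\eta_m$ between $v_m$ and $v_{m+1}$. The sign bookkeeping is the only delicate point. Because $\sigma_k=\operatorname{sgn}(x')$, the first term is $\Delta V/|x'(t_m)|$, and the second-order term is $\tfrac12(\Delta V)^2\tau''(\eta_m)$, which becomes the displayed expression after substituting $\tau''$. The even power $\sigma_k^2=1$ removes the orientation from the remainder, so no case split is needed. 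I would check the falling-branch case explicitly, since the cube $(x')^3$ is then negative, and confirm it agrees with the stated formula.

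For \eqref{eq:spacing_variation}, I would avoid differencing two copies of \eqref{eq:spacing_expansion}, since that route leaves a first-order term $\Delta V\,(1/|x'(t_{m+1})|-1/|x'(t_m)|)$ that must itself be re-expanded. Instead I would write the second difference directly as
\[
    D_m=\tau(v_m+2h)-2\tau(v_m+h)+\tau(v_m),
    \qquad h=\sigma_k\Delta V.
\]
The standard second-difference mean value theorem, applied to $g(s)=\tau(v_m+s+h)-\tau(v_m+s)$ and then to $\tau'$, gives $D_m=h^2\tau''(\zeta)$ for some $\zeta$ in the voltage range $[v_m,v_{m+2}]$. Therefore
\[
    |D_m|\le (\Delta V)^2\,\frac{|x''(\tau(\zeta))|}{|x'(\tau(\zeta))|^3}\le \frac{(\Delta V)^2\|x''\|_\infty}{v_{\mathrm{loc}}^3}.
\]
The second inequality uses that $\tau(\zeta)\in I_k$, so $|x'(\tau(\zeta))|\ge v_{\mathrm{loc}}$. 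If one prefers to cite Bernstein's inequality, $\|x''\|_\infty\le\Omega^2\|x\|_\infty$ is finite, which makes the bound meaningful.

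The main obstacle is not the calculus but the hypothesis $v_{\mathrm{loc}}>0$. On an open maximal monotone interval, $x'$ vanishes at the endpoints, so $\inf_{I_k}|x'|$ can be $0$, and then the bound is vacuous. The first thing I would try is to interpret $v_{\mathrm{loc}}$ as the infimum over the closed subinterval $[t_m,t_{m+2}]$, or over the segment of $I_k$ spanned by the crossings under consideration, where $|x'|>0$ by compactness and analyticity. The inequality then holds verbatim with that local infimum. I would also note that for three consecutive crossings on one branch the whole voltage interval $[v_m,v_{m+2}]$ lies in $x(I_k)$, so $\tau$ is defined throughout and the mean value steps are legitimate.
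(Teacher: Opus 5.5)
Your proposal is correct and takes essentially the paper's route: a Lagrange-remainder Taylor expansion of $\tau=x^{-1}$ with step $\sigma_k\Delta V$, followed by the second-difference mean value theorem $\tau(v_m+2h)-2\tau(v_m+h)+\tau(v_m)=h^2\tau''(\zeta)$, which you state more cleanly than the paper's informal appeal to ``the mean value theorem for finite differences.'' Your observation that $\inf_{I_k}|x'|=0$ on a maximal monotone branch is apt, but $|x'|>0$ on $[t_m,t_{m+2}]$ must be assumed rather than derived from compactness and analyticity, because a strictly monotone branch can contain a stationary inflection point where $x'=0$.
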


\textit{Proof.} Since $\tau'(v) = 1/x'(\tau(v))$ and $\tau''(v) = -x''(\tau(v))/(x'(\tau(v)))^3$, a Taylor expansion of $\tau$ around $v_m$ with step $\sigma_k\Delta V$ gives
\begin{multline*}
    t_{m+1} - t_m = \tau(v_m+\sigma_k\Delta V) - \tau(v_m) \\
    = \sigma_k\Delta V\,\tau'(v_m) + \frac{(\Delta V)^2}{2}\tau''(\eta_m)
\end{multline*}
for some $\eta_m$ between $v_m$ and $v_{m+1}$. Because $\sigma_k\tau'(v_m)=1/|x'(t_m)|$, this yields~\eqref{eq:spacing_expansion}. For the variation bound, apply the mean value theorem to the map $m \mapsto t_{m+1}-t_m = \sigma_k\Delta V\,\tau'(v_m) + \mathcal{O}(\Delta V^2)$. By the mean value theorem for finite differences,
\begin{align*}
    |(t_{m+2}-t_{m+1})-(t_{m+1}-t_m)| &= |\Delta V^2\,\tau''(\eta')| \\
    &\le \frac{(\Delta V)^2 \|x''\|_\infty}{v_{\mathrm{loc}}^3}
\end{align*}
since $|\tau''(v)| = |x''(\tau(v))|/|x'(\tau(v))|^3 \le \|x''\|_\infty/v_{\mathrm{loc}}^3$. \hfill $\square$

Proposition~\ref{prop:smooth_gap_variation} quantifies what an event-driven engineer observes empirically: on a steep signal segment, consecutive MT timestamps arrive at nearly uniform intervals, and the inter-arrival time drifts slowly as the slope changes. The drift rate is controlled by $\|x''\|_\infty/v_{\mathrm{loc}}^3$, which is small when the signal curvature is modest relative to the cube of the local velocity. This smooth modulation structure is absent from arbitrary nonuniform sampling sets, where consecutive inter-sample gaps can change by an arbitrary amount between any two indices.

\begin{proposition}[Asymptotic phase-aware retained-event pattern near extrema]
\label{prop:phase_aware_extremum_pattern}
Near a non-degenerate extremum $t^*$ with $x'(t^*)=0$, $x''(t^*)=\kappa \neq 0$, let $\sigma_v=\operatorname{sgn}(\kappa)$ be the direction in value space in which the waveform moves away from the extremum. Define the directional raw threshold-phase gap
\begin{equation}
\begin{aligned}
    \delta_0(t^*;\Delta V)
    &=
    \min_{m\in\mathbb{Z}}
    \bigl\{\sigma_v(m\Delta V-x(t^*)):\\
    &\hspace{2.7cm}
    \sigma_v(m\Delta V-x(t^*))>0\bigr\}.
\end{aligned}
    \label{eq:extremum_phase_gap}
\end{equation}
Thus $0<\delta_0\le\Delta V$; if $x(t^*)$ lies exactly on a threshold, the tangential contact at $t^*$ is not recorded under the adjacent-transition convention and $\delta_0=\Delta V$.

The first raw crossing after the turn may be a same-threshold return to the last retained level before the extremum, in which case the retained adjacent-transition map suppresses it. For each side $\sigma\in\{-1,+1\}$ of the extremum, define
\begin{equation}
    \delta_{\mathrm{ret}}^{(\sigma)}(t^*;\Delta V)
    \triangleq
    \delta_0(t^*;\Delta V)
    + \chi_{\mathrm{ret}}^{(\sigma)}(t^*)\,\Delta V,
    \qquad
    \chi_{\mathrm{ret}}^{(\sigma)}\in\{0,1\},
    \label{eq:retained_extremum_gap}
\end{equation}
where $\chi_{\mathrm{ret}}^{(\sigma)}=1$ exactly when the first raw crossing on that side is suppressed as a same-threshold return. Hence
\begin{equation}
    \delta_0
    \le
    \delta_{\mathrm{ret}}^{(\sigma)}
    \le
    \delta_0+\Delta V
    \le 2\Delta V.
    \label{eq:retained_gap_bounds}
\end{equation}
As $\delta_{\mathrm{ret}}^{(\sigma)}+j\Delta V\to0$ inside the isolated extremal neighborhood, the retained MT crossings on side $\sigma$ of $t^*$ satisfy, for $j=0,1,2,\ldots$,
\begin{equation}
    \begin{aligned}
    t_{j,\mathrm{ret}}^{(\sigma)} - t^*
    &=
    \sigma\!\sqrt{\frac{2(\delta_{\mathrm{ret}}^{(\sigma)}+j\Delta V)}{|\kappa|}}\\
    &\quad\times
    \left(1 + \mathcal{O}\!\left(
    \frac{\delta_{\mathrm{ret}}^{(\sigma)}+j\Delta V}{|\kappa|/\Omega^2}
    \right)\right),
    \end{aligned}
    \label{eq:extremum_pattern}
\end{equation}
where $\sigma=\pm1$ selects the side. With
$r_j^{(\sigma)}=\sqrt{\delta_{\mathrm{ret}}^{(\sigma)}+j\Delta V}$, the inter-crossing spacing near the extremum obeys
\begin{equation}
\begin{aligned}
    t_{j+1,\mathrm{ret}}^{(\sigma)} - t_{j,\mathrm{ret}}^{(\sigma)}
    &=
    \sqrt{\frac{2}{|\kappa|}}
    \left(r_{j+1}^{(\sigma)}-r_j^{(\sigma)}\right)\\
    &\quad\times
    \left(1+\mathcal{O}\!\left(\frac{1}{j+1}\right)\right),
\end{aligned}
    \label{eq:extremum_spacing_exact_phase}
\end{equation}
and, for $j\Delta V\gg\delta_{\mathrm{ret}}^{(\sigma)}$,
\begin{equation}
    \begin{aligned}
    t_{j+1,\mathrm{ret}}^{(\sigma)} - t_{j,\mathrm{ret}}^{(\sigma)}
    &=
    \frac{\Delta V}
    {\sqrt{2|\kappa|\,(\delta_{\mathrm{ret}}^{(\sigma)}+j\Delta V)}}\\
    &\quad\times
    \left(1+\mathcal{O}(1/j)\right).
    \end{aligned}
    \label{eq:extremum_spacing}
\end{equation}
\end{proposition}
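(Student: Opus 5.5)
The plan is a local Taylor analysis of $x$ at $t^*$, followed by inverting the local quadratic on each side. We then relabel the crossings according to the retained convention.

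\textbf{Step 1 (local model).} Write $x(t)=x(t^*)+\tfrac{\kappa}{2}(t-t^*)^2+R_3(t)$. Since $x\in PW_\Omega$ is entire, Bernstein's inequality gives $|R_3(t)|\le \tfrac16\|x'''\|_\infty|t-t^*|^3\le \tfrac16\Omega^3\|x\|_\infty|t-t^*|^3$. Set $u=t-t^*$ and $w=\sigma_v(x(t)-x(t^*))\ge0$. On an isolated extremal neighborhood where $x'$ has the single zero $t^*$, the map $u\mapsto w$ is strictly monotone on each side $u\gtrless0$. Inverting
\[
w=\tfrac{|\kappa|}{2}u^2\bigl(1+O(\Omega^2|u|\cdot\Omega\|x\|_\infty/|\kappa|)\bigr)
\]
yields $u=\sigma\sqrt{2w/|\kappa|}\,(1+\rho(w))$. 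The remainder $\rho$ should be controlled in the normalized variable $w/(|\kappa|/\Omega^2)$, which is the scale appearing in \eqref{eq:extremum_pattern}. I would normalize via the amplitude envelope, for instance using $|\kappa|\le\Omega^2\|x\|_\infty$, so that the stated relative error follows after absorbing constants. This inversion, with a remainder in exactly the stated normalized form, is the main technical step. I would first try the implicit function theorem applied to $\sqrt{w}$ as a smooth function of $u$, namely $\sqrt{w}=\sqrt{|\kappa|/2}\,u\,\sqrt{1+O(u)}$, which is analytic. The resulting error in $u$ is $O(u)$ relative, i.e.\ $O(\sqrt{w})$. I would need to check whether the claimed $O(w)$ relative error is actually attainable or whether the statement really intends $O(\sqrt w)$ in normalized units. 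If the latter, I would present it with the appropriate normalization caveat.

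\textbf{Step 2 (which levels are crossed).} On side $\sigma$, $w$ increases from $0$. The raw crossings therefore occur exactly at the values $w=\delta_0+j\Delta V$, $j\ge0$, by the definition \eqref{eq:extremum_phase_gap}. If $x(t^*)$ lies on a threshold, the level $w=0$ is only a tangential contact and is not on an open monotone branch, which forces $\delta_0=\Delta V$. Next apply the adjacent-transition map \eqref{eq:retained_state_forward}. The retained state entering the extremum is the last retained level. On the incoming side of the turn, the first raw crossing on the outgoing branch may carry the same label as that state; it is then suppressed, and $\chi_{\mathrm{ret}}=1$. All later crossings change the label by one and are retained, because the branch is monotone. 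This gives the retained value offsets $\delta_{\mathrm{ret}}^{(\sigma)}+j\Delta V$ and the bounds \eqref{eq:retained_gap_bounds}.

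\textbf{Step 3 (spacings).} Substitute these offsets into Step 1 to obtain \eqref{eq:extremum_pattern}. Take differences to get $\sqrt{2/|\kappa|}\,(r_{j+1}-r_j)$ plus the difference of the correction terms. Those corrections are smooth in $r$, with derivative bounded, so their difference is relatively $O(1/(j+1))$ compared with $r_{j+1}-r_j$, which gives \eqref{eq:extremum_spacing_exact_phase}. Finally, $r_{j+1}-r_j=\Delta V/(r_{j+1}+r_j)$. When $j\Delta V\gg\delta_{\mathrm{ret}}$, this equals $\Delta V/(2r_j)\,(1+O(1/j))$, which gives \eqref{eq:extremum_spacing}.
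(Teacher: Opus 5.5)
Your route is the paper's: Taylor expansion at $t^*$, raw levels at excursions $\delta_0+k\Delta V$, at most one same-threshold suppression per side (your Step~2 agrees with the paper's bookkeeping), inversion of the quadratic, then differencing. The gap is in the error terms, and your hesitation in Step~1 is justified. The relative error $O\bigl((\delta_{\mathrm{ret}}^{(\sigma)}+j\Delta V)\Omega^2/|\kappa|\bigr)$ in \eqref{eq:extremum_pattern} is not attainable in general. Take $s=|t-t^*|$ on side $\sigma$ and $w=\sigma_v(x(t)-x(t^*))=\frac{|\kappa|}{2}s^2+\frac{\sigma\sigma_v x'''(t^*)}{6}s^3+O(s^4)$. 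Inverting gives
\begin{equation*}
    s=\sqrt{\frac{2w}{|\kappa|}}\left(1-\frac{\sigma\sigma_v x'''(t^*)}{6|\kappa|}\sqrt{\frac{2w}{|\kappa|}}+O(w)\right),
\end{equation*}
so whenever $x'''(t^*)\neq0$ the relative correction has exact order $\sqrt{w}$. For example, the local model $w=s^2/2+s^3/6$ gives $s=\sqrt{2w}-w/3+\cdots$. Your remark that $|\kappa|\le\Omega^2\|x\|_\infty$ lets you absorb constants goes the wrong way. It only says the natural prefactor $\Omega^2\|x\|_\infty/|\kappa|$ is at least $1$, and no constant converts $\sqrt{w}$ into $w$. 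The paper's one-line proof (``Solving gives'') takes the same unjustified step from an $O((t-t^*)^3)$ remainder. What you can prove, and should state, is the factor $1+O\bigl(|x'''(t^*)|\sqrt{w}/|\kappa|^{3/2}\bigr)$. Bernstein's inequality bounds this by $1+O\bigl(\frac{\Omega^2\|x\|_\infty}{|\kappa|}\sqrt{w\Omega^2/|\kappa|}\bigr)$. The $O(w)$ form is correct only when $x'''(t^*)=0$.

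The same issue breaks Step~3. Write $|t-t^*|=\phi(r)$ with $r=\sqrt{w}$ and $\phi$ the analytic local inverse you propose. Then $\phi'(r)=\sqrt{2/|\kappa|}\,(1+O(r))$, and the mean value theorem yields $|t_{j+1,\mathrm{ret}}^{(\sigma)}-t_{j,\mathrm{ret}}^{(\sigma)}|=\sqrt{2/|\kappa|}\,(r_{j+1}-r_j)(1+O(r_{j+1}))$. Smoothness with a bounded derivative gives no $1/(j+1)$ decay, so your claim that the corrections differ by a relative $O(1/(j+1))$ does not follow. The stated factor can in fact fail. For $j\approx\Delta V^{-1/2}$ one still has $\delta_{\mathrm{ret}}^{(\sigma)}+j\Delta V\to0$. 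Yet in the model above the spacing correction is $-\Delta V/3$ against a leading term $\Delta V/(\sqrt{2}\,r_j)$, a relative deviation $\sqrt{2}\,r_j/3\approx\Delta V^{1/4}\gg1/j\approx\Delta V^{1/2}$. So \eqref{eq:extremum_spacing_exact_phase} and \eqref{eq:extremum_spacing} should carry the factors $1+O(r_{j+1})$ and $1+O(1/j)+O(r_j)$ respectively, with leading constants proportional to $|x'''(t^*)|/|\kappa|^{3/2}$. With those corrections your argument goes through. Your final algebra is fine and needs no hypothesis $j\Delta V\gg\delta_{\mathrm{ret}}^{(\sigma)}$. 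The identity $r_{j+1}-r_j=\Delta V/(r_{j+1}+r_j)$ together with $\Delta V/(\delta_{\mathrm{ret}}^{(\sigma)}+j\Delta V)\le1/j$ gives $\frac{\Delta V}{2r_j}(1+O(1/j))$ for every $j\ge1$.
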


\textit{Proof.} From the Taylor approximation
\[
    x(t) = x(t^*) + \frac{\kappa}{2}(t-t^*)^2 + \mathcal{O}((t-t^*)^3),
\]
the raw threshold levels encountered while moving away from the extremum are at signed value excursions $\delta_0+k\Delta V$. The adjacent-transition convention can suppress only the first such raw hit on a side, because after the next threshold rung is crossed the retained index differs from the last retained index before the turn. Thus the retained stream starts at the excursion $\delta_{\mathrm{ret}}^{(\sigma)}$ in~\eqref{eq:retained_extremum_gap} and then advances in steps of $\Delta V$. Hence the $j$-th retained adjacent-level crossing on a fixed side satisfies
\[
    \frac{|\kappa|}{2}(t_{j,\mathrm{ret}}^{(\sigma)}-t^*)^2
    =
    \delta_{\mathrm{ret}}^{(\sigma)}+j\Delta V
    +\mathcal{O}((t_{j,\mathrm{ret}}^{(\sigma)}-t^*)^3).
\]
Solving gives~\eqref{eq:extremum_pattern}. Subtracting the two adjacent square-root positions gives~\eqref{eq:extremum_spacing_exact_phase}; the large-$j$ expansion gives~\eqref{eq:extremum_spacing}. \hfill $\square$

Proposition~\ref{prop:phase_aware_extremum_pattern} reveals the characteristic square-root law of MT events near turning points, but it also shows that the first crossing is phase-dependent. Its leading distance is
\[
    h_{\mathrm{ret}}^{(\sigma)}(t^*)=
    \sqrt{2\delta_{\mathrm{ret}}^{(\sigma)}(t^*;\Delta V)/|\kappa|},
\]
not necessarily $\sqrt{2\Delta V/|\kappa|}$. For retained events the phase-uniform worst case is $\sqrt{4\Delta V/|\kappa|}$, reflecting the possibility that the nearest raw hit is suppressed as a same-threshold return. As $j$ increases, the spacing contracts and the local velocity $|x'(t_{j,\mathrm{ret}})| \approx \sqrt{2|\kappa|(\delta_{\mathrm{ret}}^{(\sigma)}+j\Delta V)}$ increases, recovering the monotone-segment regime of Proposition~\ref{prop:local_intercrossing_spacing} where spacing $\approx \Delta V/|x'|$. This retained-aware transition from the extremal regime to the monotone regime will be used below to derive both realized retained-phase and retained phase-uniform curvature certificates.

\begin{proposition}[Structural regularity under fixed retained topology]
\label{prop:structural_regularity_fixed_topology}
Fix a finite active interval and a retained MT record whose selected crossings are transverse: $|x'(t_n)|\ge v_0>0$ on the selected monotone branches. Assume further that the retained label sequence is fixed under the perturbations being considered: no tangential contact is created, no threshold event is created or deleted, and no same-threshold return changes its suppressed/retained status. Under these fixed-topology hypotheses:
\begin{enumerate}
    \item[(P1)] \textbf{Local event-time continuity.} Each retained crossing time is a locally continuous, indeed differentiable, functional of the waveform perturbation by the implicit function theorem applied to $x(t)=m_n\Delta V$.
    \item[(P2)] \textbf{Controlled gap variation.} On a monotone branch with $v_{\mathrm{loc}}>0$, consecutive retained gaps obey the inverse-function expansion of Proposition~\ref{prop:smooth_gap_variation}, with variation bounded as in~\eqref{eq:spacing_variation}.
    \item[(P3)] \textbf{Restricted Kadec perturbations.} For a fixed grid offset and retained topology, the perturbation sequence $\delta_n=s_n-(\tau_0+n\Delta T)$ is determined by the waveform through these local inverse maps; it is not an arbitrary bounded sequence chosen independently at each index.
\end{enumerate}
\end{proposition}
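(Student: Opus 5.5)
The plan is to prove (P1)--(P3) in order, each resting on the transversality hypothesis $|x'(t_n)|\ge v_0$ together with the fixed-topology hypothesis. Fixed topology freezes the combinatorial data: the index set, the labels $m_n$, the branch membership of each retained event, and the suppressed/retained status of every same-threshold return. Once these are frozen, each event time is determined by a single scalar equation, and the retained map $\mathcal A_{\Delta V}$ plays no further role.

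\textbf{(P1).} Consider a perturbation $x_\varepsilon=x+\varepsilon p$ with $p\in PW_\Omega$; more generally, work in a $C^1$ neighborhood of $x$ on the compact active interval. Set $\Phi_n(\varepsilon,t)=x_\varepsilon(t)-m_n\Delta V$. Then $\Phi_n(0,t_n)=0$ and $\partial_t\Phi_n(0,t_n)=x'(t_n)\ne0$, so the implicit function theorem gives a unique differentiable root $t_n(\varepsilon)$ near $t_n$, with
\[
    t_n'(0)=-\frac{p(t_n)}{x'(t_n)},\qquad |t_n'(0)|\le \frac{\|p\|_\infty}{v_0}.
\]
The fixed-topology hypothesis is exactly what identifies this local root with the $n$-th retained event of $x_\varepsilon$: it rules out a competing root being created nearby, a root being lost, or the retained map switching to a different raw hit. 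Because the active interval is finite, there are only finitely many events. A single $\varepsilon_0$ therefore works for all of them, using uniform $C^1$ control and the bound $\|p'\|_\infty\le\Omega\|p\|_\infty$ from Bernstein's inequality.

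\textbf{(P2).} This is immediate from Proposition~\ref{prop:smooth_gap_variation}. On a selected branch where $v_{\mathrm{loc}}>0$, the inverse function $\tau=x^{-1}$ exists and is $C^2$, and consecutive retained events are consecutive threshold levels $v_m,v_{m+1},v_{m+2}$. This holds because, under fixed topology, no suppressed return lies between them on a strictly monotone branch. The expansion \eqref{eq:spacing_expansion} and the bound \eqref{eq:spacing_variation} then apply verbatim. The same argument goes through for the perturbed waveform with $v_{\mathrm{loc}}$ replaced by $v_{\mathrm{loc}}-\varepsilon\|p'\|_\infty$.

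\textbf{(P3).} Here I write $s_n=\tau_{k(n)}(m_n\Delta V)$, where $\tau_{k(n)}$ is the local inverse on the branch containing $s_n$. Then
\[
    \delta_n=\tau_{k(n)}(m_n\Delta V)-\tau_0-n\Delta T
\]
is a deterministic function of $x$ and of the frozen data $(k(n),m_n,\tau_0)$. Two structural constraints follow. By (P2), within a branch the increments $\delta_{n+1}-\delta_n=\Delta V/|x'(\xi_n)|-\Delta T$ vary in $n$ by at most $(\Delta V)^2\|x''\|_\infty/v_{\mathrm{loc}}^3$. By (P1), $\delta_n$ moves with the waveform only at rate $\|p\|_\infty/v_0$. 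Consequently the sequence cannot be an arbitrary element of $\ell^\infty$.

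I expect (P3) to be the main obstacle, and the difficulty is one of precise formulation rather than proof. The statement ``not an arbitrary bounded sequence'' is qualitative. To make it rigorous I would exhibit a concrete constraint that generic bounded sequences violate, namely the second-difference bound from (P2) on each branch. I would then conclude that the admissible set of $(\delta_n)$ is a proper subset of the Kadec ball $\{\sup_n|\delta_n|<\Delta T/4\}$, parametrized by the waveform. A secondary subtlety sits in (P1): the root-identification step silently uses the fixed-topology hypothesis, and that dependence should be stated explicitly rather than derived.
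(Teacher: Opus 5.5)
Your (P1), (P2), and the representation $\delta_n=\tau_{k(n)}(m_n\Delta V)-\tau_0-n\Delta T$ in (P3) match what the paper asserts. The paper gives no separate proof; its justification is the inline appeal to the implicit function theorem, to Proposition~\ref{prop:smooth_gap_variation}, and to the local inverse maps. The problem is the extra structural constraint you add in (P3), which is also the constraint you plan to use to make ``not an arbitrary bounded sequence'' rigorous.

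You write $\delta_{n+1}-\delta_n=\Delta V/|x'(\xi_n)|-\Delta T$ and then invoke the second-difference bound \eqref{eq:spacing_variation}. That bound concerns \emph{consecutive retained events}, i.e.\ consecutive threshold levels on one branch. The Kadec-selected crossings $s_n,s_{n+1}$ are chosen one per centered window, so they lie about $\Delta T$ apart. They are separated by $|m_{n+1}-m_n|$ threshold steps, typically about $v\Delta T/\Delta V\gg1$ when $\Delta V$ is small. The mean value theorem gives $s_{n+1}-s_n=|m_{n+1}-m_n|\,\Delta V/|x'(\xi_n)|$, and the integer $|m_{n+1}-m_n|$ changes with $n$ as the grid phase drifts against the ladder. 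Each such change makes $\delta_{n+1}-\delta_n$ jump by roughly $\Delta V/|x'|$. That jump is first order in $\Delta V$, not of size $\mathcal{O}\bigl((\Delta V)^2\|x''\|_\infty/v_{\mathrm{loc}}^3\bigr)$.

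A locally linear branch $x(t)\approx vt$ makes the failure concrete. With nearest-threshold selection, $\delta_n=(\Delta V/v)\bigl(\operatorname{round}(vg_n/\Delta V)-vg_n/\Delta V\bigr)$. This is a bounded sawtooth, equidistributed in $[-\Delta V/(2v),\Delta V/(2v)]$ when $v\Delta T/\Delta V$ is irrational. Your constraint with $x''=0$ would instead force $\delta_n$ to be an arithmetic progression, hence unbounded unless constant. It holds only in the special regime of exactly one retained event per Nyquist interval. So the admissible set you propose to carve out of the Kadec ball by a second-difference bound does not contain the actual MT perturbation sequences, and the planned rigorization would fail.

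The constraint that does hold, and that the paper means, is pointwise: $x(g_n+\delta_n)=m_n\Delta V$, with $m_n$ fixed by the retained selection. To first order, $\delta_n\approx (m_n\Delta V-x(g_n))/x'(g_n)$ is the rounding residual of the grid value $x(g_n)$ transported through $\tau_{k(n)}$. For a fixed waveform this determines $\delta_n$ uniquely, which is all (P3) claims. Your (P1)-based rate bound $|\partial_\varepsilon \delta_n|\le\|p\|_\infty/v_0$ is a valid extra remark, but you should drop the second-difference constraint.
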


These are local regularity statements, not global exclusion theorems. At tangencies, near-quiescent returns, or event-topology changes, the retained record can jump: events may be created, deleted, or suppressed. The certificates below therefore always operate either after the retained record has been realized, or under explicit transverse/fixed-topology and retained-eligibility hypotheses.

\subsection{Velocity-Calibrated Kadec Perturbation Bound}
\label{sec:velocity_kadec}
In the MT geometry, the certification goal is to ensure that each Nyquist neighborhood contains at least one threshold crossing that can serve as a surrogate sample, since the Nyquist grid point itself is not directly sampled. Kadec's theorem permits this surrogate construction: a sample within the admissible safety radius $\Delta T/4$ suffices. Having established the structural regularity of MT crossings, we now show that the \emph{actual} perturbation from the Nyquist grid is controlled by $\Delta V$ and the local velocity, and is typically much smaller than the worst-case Kadec bound $\Delta T/4$.

Fix a grid offset $\tau_0 \in \mathbb{R}$ and define the centered half-Nyquist windows
\begin{equation}
    J_n(\tau_0) = \left[\tau_0 + n\Delta T - \frac{\Delta T}{4},\; \tau_0 + n\Delta T + \frac{\Delta T}{4}\right],
    \label{eq:centered_window}
\end{equation}
where $\Delta T = 1/(2f_{max})$ is the Nyquist interval. The task is to guarantee one crossing in every centered window and to bound how far that crossing lies from the window center.

\begin{lemma}[Retained MT perturbation bound on monotone segments]
\label{lem:retained_mvt_perturbation_monotone}
Let $J_n(\tau_0) \subseteq [0, T_{\text{obs}}]$ be a centered window that lies entirely inside a single maximal monotone interval $I_k$ of $x$, and suppose
\begin{equation}
    v_n \triangleq \inf_{t \in I_k \cap J_n} |x'(t)| > 0.
    \label{eq:local_velocity_def}
\end{equation}
Let $g_n=\tau_0+n\Delta T$. Define the retained-eligible threshold without assuming that a crossing already lies in $J_n(\tau_0)$. For each side $\sigma\in\{+,-\}$, where $+$ means searching to the right of $g_n$ and $-$ means searching to the left, initialize the retained state by the adjacent-transition record at the boundary of that one-sided search. Equivalently, each side uses its own one-sided finite-window map $\mathcal A_{\Delta V}^{m_{n,\sigma}^{\mathrm{bdry}}}$, with the search orientation inherited from the retained recursion on that side. Among all ladder levels $q=m\Delta V$ reachable from $x(g_n)$ along side $\sigma$ of the monotone branch and admissible for that side's boundary state, let $q_{n,\sigma}^{(\mathrm{ret})}$ be one with minimal voltage distance; if no such level exists before the side leaves the branch, set $\eta_{n,\sigma}^{(\mathrm{ret})}=+\infty$. Thus
\[
    \eta_{n,\sigma}^{(\mathrm{ret})}
    =
    |q_{n,\sigma}^{(\mathrm{ret})}-x(g_n)|
    \qquad(\sigma\in\{+,-\})
\]
whenever the admissible set is nonempty. Define the two-sided retained-eligible voltage distance by
\[
    \eta_n^{(\mathrm{ret})}
    =
    \min_{\sigma\in\{+,-\}}\eta_{n,\sigma}^{(\mathrm{ret})}.
\]
If $x(g_n)$ itself is an admissible retained threshold, then $\eta_n^{(\mathrm{ret})}=0$. This definition uses only the value geometry, the monotone branch, and the retained state supplied by $\mathcal A_{\Delta V}$; it does not presuppose that the selected threshold is crossed inside the centered window. Write
\begin{equation}
    \alpha_n^{(\mathrm{ret})}
    \triangleq
    \frac{\eta_n^{(\mathrm{ret})}}{\Delta V}.
    \label{eq:retained_alpha_def}
\end{equation}
The value $\alpha_n^{(\mathrm{ret})}$ is a \emph{realized} retained-topology quantity: it is computed only after a spacing $\Delta V$, grid offset $\tau_0$, and adjacent-transition convention have been fixed. In the exact-anchor case, a retained crossing may occur at $g_n$ itself, so $\alpha_n^{(\mathrm{ret})}=0$. To avoid turning this harmless boundary case into a singular design formula, all scalar threshold-spacing bounds below use the clipped envelope
\begin{equation}
    \bar{\alpha}_n^{(\mathrm{ret})}
    \triangleq
    \max\!\left\{\frac12,\alpha_n^{(\mathrm{ret})}\right\}.
    \label{eq:retained_alpha_clipped}
\end{equation}
This clipping is conservative: it leaves the actual perturbation estimate unchanged, but it prevents division by zero and reduces to the clean-topology factor $1/2$ whenever the nearest retained threshold is already available.
If
\begin{equation}
    \min_{\sigma\in\{+,-\}}\eta_{n,\sigma}^{(\mathrm{ret})}
    \le
    \frac{v_n\Delta T}{4},
    \label{eq:retained_velocity_condition}
\end{equation}
then there exists a retained crossing $s_n \in \Lambda_{\Delta V} \cap J_n(\tau_0)$ satisfying
\begin{equation}
    |s_n-g_n|
    \le
    \frac{\eta_n^{(\mathrm{ret})}}{v_n}
    =
    \frac{\alpha_n^{(\mathrm{ret})}\Delta V}{v_n}.
    \label{eq:retained_velocity_perturbation}
\end{equation}
In the clean fixed-topology case where the nearest raw threshold is retained, this reduces to
\begin{equation}
    |s_n - (\tau_0 + n\Delta T)| \le \frac{\Delta V}{2v_n}.
    \label{eq:velocity_perturbation}
\end{equation}
\end{lemma}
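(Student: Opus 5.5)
The plan is to reduce the lemma to the inverse-function geometry of a single strictly monotone branch and then apply the mean value theorem in the voltage variable. Since $J_n(\tau_0)\subseteq I_k$ and $x$ is strictly monotone there with $|x'|\ge v_n>0$ on $J_n$, the restriction $x|_{J_n}$ is a $C^1$ bijection onto an interval of voltages. The key quantitative fact I will establish first is that this image interval contains the voltage ball of radius $v_n\Delta T/4$ around $x(g_n)$ on each side. Indeed, for $t\in[g_n,g_n+\Delta T/4]$ the mean value theorem gives $|x(t)-x(g_n)|\ge v_n|t-g_n|$, so $x(g_n\pm\Delta T/4)$ lies at voltage distance at least $v_n\Delta T/4$ from $x(g_n)$, in the direction fixed by $\operatorname{sgn}(x')$ and the side $\sigma$.

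Next I would pick the side $\sigma^\ast$ attaining $\eta_n^{(\mathrm{ret})}=\min_\sigma\eta_{n,\sigma}^{(\mathrm{ret})}$. Hypothesis \eqref{eq:retained_velocity_condition} makes this quantity finite, so the level $q=q_{n,\sigma^\ast}^{(\mathrm{ret})}$ exists and satisfies $|q-x(g_n)|\le v_n\Delta T/4$. By the previous paragraph and the intermediate value theorem, there is a unique $s_n$ on side $\sigma^\ast$ of $g_n$, inside $J_n$, with $x(s_n)=q$. Applying the mean value theorem on $[g_n,s_n]$ (or $[s_n,g_n]$) then yields
\[
\eta_n^{(\mathrm{ret})}=|x(s_n)-x(g_n)|=|x'(\xi)|\,|s_n-g_n|\ge v_n|s_n-g_n|,
\]
which is \eqref{eq:retained_velocity_perturbation}. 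The case $\eta_n^{(\mathrm{ret})}=0$ is trivial with $s_n=g_n$. For the clean case \eqref{eq:velocity_perturbation}, I would observe that when the nearest raw threshold is retained, the nearest ladder level in either direction lies within $\Delta V/2$, because the two neighbouring rungs bracket $x(g_n)$. Hence $\alpha_n^{(\mathrm{ret})}\le 1/2$.

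The main obstacle is showing that the raw crossing $s_n$ actually belongs to $\Lambda_{\Delta V}=\mathcal A_{\Delta V}(\Lambda^{\mathrm{raw}}_{\Delta V})$, rather than being suppressed. Here I would use the definition of admissibility: $q$ was chosen admissible for the side's boundary state $m^{\mathrm{bdry}}_{n,\sigma^\ast}$ under the one-sided map $\mathcal A^{m^{\mathrm{bdry}}}_{\Delta V}$. On a strictly monotone branch, the raw labels encountered along the search orientation change by exactly one at each hit, and same-threshold returns cannot occur inside the branch. Consequently the recursion \eqref{eq:retained_state_forward} retains every raw hit after the first adjacent transition from the boundary state. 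Admissibility of $q$ is exactly the statement that its label is either that first transition or a later one, so $s_n$ is retained.

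I expect the delicate point to be the consistency between the one-sided boundary state and the globally initialized record. I would handle it by the remark after \eqref{eq:retained_state_forward}: a change of initialization affects only the first retained event, and admissibility was defined with respect to the state that the global recursion actually supplies at that boundary.
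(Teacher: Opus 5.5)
Your proposal is correct and follows essentially the paper's argument: the mean value theorem on the monotone window shows the admissible level is reached inside $J_n(\tau_0)$ within time distance $\eta_n^{(\mathrm{ret})}/v_n$, the intermediate value theorem produces $s_n$, and retention holds because the selected level was defined as admissible for the true boundary state. The paper simply says the hit is retained ``by construction''; your branch-monotonicity justification spells that step out a little more. One small wording fix: in the clean case it is the nearer of the two bracketing rungs, not the nearest rung ``in either direction,'' that lies within $\Delta V/2$.
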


\textit{Proof.} Choose a side $\sigma_*$ attaining $\eta_n^{(\mathrm{ret})}=\min_{\sigma}\eta_{n,\sigma}^{(\mathrm{ret})}$, and let $q_{\mathrm{ret}}=q_{n,\sigma_*}^{(\mathrm{ret})}$. If $x(g_n)=q_{\mathrm{ret}}$ and that threshold is admissible for the retained state on side $\sigma_*$, take $s_n=g_n$. Otherwise, since $x$ is strictly monotone on $I_k$, the side of $g_n$ on which the value $q_{\mathrm{ret}}$ can be reached is the selected side $\sigma_*$. For every point $u$ on that side inside $J_n$, the mean-value theorem and the definition of $v_n$ give
\[
    |x(u)-x(g_n)| \ge v_n |u-g_n|.
\]
At the point $u$ on the appropriate side with $|u-g_n|=\eta_n^{(\mathrm{ret})}/v_n$, the value of $x(u)$ has reached or passed $q_{\mathrm{ret}}$. Condition~\eqref{eq:retained_velocity_condition} keeps this point inside $J_n$. The intermediate value theorem therefore gives $s_n\in J_n$ with $x(s_n)=q_{\mathrm{ret}}$ and
\[
    |s_n-g_n|\le \frac{|x(g_n)-q_{\mathrm{ret}}|}{v_n}
    =
    \frac{\eta_n^{(\mathrm{ret})}}{v_n}.
\]
By construction, this threshold hit is retained by the one-sided state map and therefore by $\mathcal{A}_{\Delta V}$ on the realized record, so $s_n\in\Lambda_{\Delta V}$. If the nearest raw threshold is retained, then $\eta_n^{(\mathrm{ret})}\le\Delta V/2$, which recovers~\eqref{eq:velocity_perturbation}. Without this fixed-topology information, the nearest ladder level may be a suppressed same-threshold return. Taking the nearer of the two adjacent voltage directions and then moving one additional threshold step if suppression occurs gives the two-sided retained-safe envelope $\eta_n^{(\mathrm{ret})}\le3\Delta V/2$. This envelope is invoked only in the centered-window monotone setting together with the strict retained-safe velocity inequality, which guarantees that the required adjacent level is reached before leaving the window. If the branch terminates before that level is reachable, the realized one-sided distance is $+\infty$ and the local certificate simply fails. \hfill $\square$

The normalized perturbation ratio on monotone segments is therefore
\begin{equation}
    L_n^{(\mathrm{mono})}
    =
    \frac{|s_n-g_n|}{\Delta T}
    \le
    \frac{\alpha_n^{(\mathrm{ret})}\Delta V}{v_n\Delta T}
    =
    \frac{2\alpha_n^{(\mathrm{ret})}\Delta V f_{max}}{v_n}.
    \label{eq:normalized_perturbation}
\end{equation}
This result sharpens the generic Kadec picture in two ways:
\begin{enumerate}
    \item The perturbation ratio scales \emph{linearly} with $\Delta V$: finer threshold ladders produce smaller perturbations and hence better-conditioned reconstruction.
    \item The perturbation is inversely proportional to the local velocity $v_n$: fast signal segments contribute near-zero perturbation, concentrating the Kadec budget on the slowest segments.
\end{enumerate}
Neither of these refinements is available in the generic Kadec framework, which treats $L$ as a fixed worst-case constant.

The condition for retained-event Kadec compatibility on monotone segments is $L_n^{(\mathrm{mono})} < 1/4$, which gives
\begin{equation}
    \Delta V
    <
    \frac{v_n}{8\bar{\alpha}_n^{(\mathrm{ret})}f_{max}}.
    \label{eq:velocity_kadec_condition}
\end{equation}
Equation~\eqref{eq:velocity_kadec_condition} is a sufficient design form of the sharper realized inequality $2\alpha_n^{(\mathrm{ret})}\Delta V f_{max}/v_n<1/4$. In the clean fixed-topology case $\alpha_n^{(\mathrm{ret})}\le1/2$, clipping gives $\bar{\alpha}_n^{(\mathrm{ret})}=1/2$ and the familiar $\Delta V<v_n/(4f_{max})$. With no retained-topology information beyond the adjacent-transition rule, $\alpha_n^{(\mathrm{ret})}\le3/2$, giving the conservative retained-safe condition $\Delta V<v_n/(12f_{max})$.

The monotone-window part of the certificate is now fully mechanical. For a fixed grid offset $\tau_0$, define
\begin{equation}
    \begin{aligned}
    \mathcal{N}_{\mathrm{mono}}(\tau_0)
    =
    \{\,n:\;&J_n(\tau_0)\subseteq[0,T_{\text{obs}}],\\
    &J_n(\tau_0)\subset I_k\ \text{for some }k\,\}.
    \end{aligned}
    \label{eq:mono_window_set}
\end{equation}
For $n\in\mathcal{N}_{\mathrm{mono}}(\tau_0)$ set $v_n$ by~\eqref{eq:local_velocity_def}, and define
\begin{equation}
    \begin{aligned}
    v_{min}^{(\mathrm{mono})}(x;\tau_0)
    &\triangleq
    \min_{n\in\mathcal{N}_{\mathrm{mono}}(\tau_0)} v_n,\\
    \alpha_{\mathrm{mono}}^{(\mathrm{ret})}(x;\tau_0)
    &\triangleq
    \max_{n\in\mathcal{N}_{\mathrm{mono}}(\tau_0)}
    \alpha_n^{(\mathrm{ret})},\\
    \bar{\alpha}_{\mathrm{mono}}^{(\mathrm{ret})}(x;\tau_0)
    &\triangleq
    \max_{n\in\mathcal{N}_{\mathrm{mono}}(\tau_0)}
    \bar{\alpha}_n^{(\mathrm{ret})}.
    \end{aligned}
    \label{eq:global_velocity_condition}
\end{equation}
If $\mathcal{N}_{\mathrm{mono}}(\tau_0)$ is empty, the velocity branch is omitted. Otherwise all monotone windows are retained-Kadec compatible when
\begin{equation}
    \Delta V
    <
    \frac{v_{min}^{(\mathrm{mono})}(x;\tau_0)}
    {8\bar{\alpha}_{\mathrm{mono}}^{(\mathrm{ret})}(x;\tau_0)f_{max}}.
    \label{eq:global_retained_velocity_condition}
\end{equation}

\textbf{Remark.} The clean-topology version of~\eqref{eq:global_retained_velocity_condition} is the MT-specific counterpart of the range condition $R_x^{(c)}(\tau)>\Delta V$ used in the original centered-window framework. The retained-safe version is stronger because it provisions for one suppressed same-threshold return before the next retained adjacent-level transition. In either case, the certificate identifies a retained sample and bounds its displacement from the grid center, beyond registering that a raw threshold has been hit.
\subsection{Unified Velocity-Curvature Sufficient Condition}
\label{sec:unified_condition}
The retained velocity condition~\eqref{eq:global_retained_velocity_condition} governs centered windows that lie inside a monotone region. Near a non-degenerate extremum $t^*$ where $x'(t^*)=0$, the velocity $|x'|$ passes through zero, and the velocity condition necessarily fails in any window containing $t^*$. In this regime the local geometry is governed by second-order curvature and by the retained threshold phase in~\eqref{eq:retained_extremum_gap}. The relevant question is whether the retained directional phase gap is small enough for a retained crossing to occur before the trajectory leaves the centered window.

Let $\rho_-(t^*)$ and $\rho_+(t^*)$ denote the maximal left and right monotone radii around $t^*$ within the active observation interval: $x'$ has one fixed nonzero sign on $(t^*-\rho_-(t^*),t^*)$ and on $(t^*,t^*+\rho_+(t^*))$, respectively. These radii are clipped by the observation-window boundary and, in particular, by the distances from $t^*$ to the nearest neighboring critical points. Define the extremum isolation radius
\begin{equation}
    \rho(t^*) \triangleq
    \min\!\left\{\rho_-(t^*),\,\rho_+(t^*),\,\frac{\Delta T}{4}\right\}.
    \label{eq:extremum_isolation_radius}
\end{equation}
This extra radius is not cosmetic: if two extrema are closer than the first-crossing distance given by the quadratic law, the local monotone branch can terminate before an adjacent threshold is crossed. The curvature branch is therefore certified only on the radius $\rho(t^*)$, not automatically on the full Kadec half-window scale.

The asymptotic square-root law in Proposition~\ref{prop:phase_aware_extremum_pattern} explains the local scale of the first retained crossing. The certificate used below is the following non-asymptotic radius lemma.

\begin{lemma}[Retained crossing before the isolation radius]
\label{lem:retained_crossing_before_radius}
Let $t^*$ be a non-degenerate extremum, let $\sigma\in\{-1,+1\}$ be a monotone side whose branch is defined for distance at least $\rho(t^*)$, and set
\[
    M_3(t^*)=\sup_{|u-t^*|\le \rho(t^*)}|x'''(u)|
\]
and
\begin{equation}
    B_\rho(t^*)=
    \frac{1}{2}|x''(t^*)|\rho(t^*)^2
    -\frac{1}{6}M_3(t^*)\rho(t^*)^3 .
    \label{eq:curvature_budget}
\end{equation}
If $B_\rho(t^*)>0$ and
\[
    \delta_{\mathrm{ret}}^{(\sigma)}(t^*;\Delta V)<B_\rho(t^*),
\]
then the corresponding monotone side contains a retained adjacent-threshold crossing at distance strictly less than $\rho(t^*)$ from $t^*$.
\end{lemma}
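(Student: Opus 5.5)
We work on the chosen side $\sigma$ and use a third-order Taylor expansion of $x$ about $t^*$ with Lagrange remainder. For $0\le s\le\rho(t^*)$ set $u=t^*+\sigma s$. Since $x'(t^*)=0$ and $x''(t^*)=\kappa$,
\[
    x(u)-x(t^*)=\frac{\kappa}{2}s^2+\frac{x'''(\zeta)}{6}\sigma^3 s^3
\]
for some $\zeta$ between $t^*$ and $u$. Multiplying by $\sigma_v=\operatorname{sgn}(\kappa)$ and using $|x'''(\zeta)|\le M_3(t^*)$ gives the one-sided lower bound
\[
    \sigma_v\bigl(x(u)-x(t^*)\bigr)\ge \frac12|\kappa|s^2-\frac16 M_3(t^*)s^3.
\]
At $s=\rho(t^*)$ the right side is exactly $B_\rho(t^*)$, so the directional value excursion at the endpoint is at least $B_\rho(t^*)>0$.

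Next I locate the crossing. On the open interval between $t^*$ and $t^*+\sigma\rho(t^*)$, $x$ is strictly monotone because $\rho(t^*)\le\rho_\sigma(t^*)$. The directional excursion $e(s)=\sigma_v(x(t^*+\sigma s)-x(t^*))$ is therefore continuous and strictly increasing on $[0,\rho(t^*)]$, with $e(0)=0$ and $e(\rho(t^*))\ge B_\rho(t^*)$. The retained target level sits at directional excursion $\delta_{\mathrm{ret}}^{(\sigma)}$, which lies in $(0,B_\rho(t^*))$ by hypothesis and by $\delta_{\mathrm{ret}}^{(\sigma)}\ge\delta_0>0$ from \eqref{eq:retained_gap_bounds}. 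The intermediate value theorem then gives a unique $s_0\in(0,\rho(t^*))$ with $e(s_0)=\delta_{\mathrm{ret}}^{(\sigma)}$. Strictness $s_0<\rho(t^*)$ follows from strict monotonicity together with $\delta_{\mathrm{ret}}^{(\sigma)}<B_\rho(t^*)\le e(\rho(t^*))$. Because the branch is open and strictly monotone and $s_0>0$, the point $t^*+\sigma s_0$ is a raw crossing and not a tangential contact.

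The main obstacle is certifying that this raw crossing is \emph{retained}. The argument runs through the definition \eqref{eq:retained_extremum_gap} and the proof of Proposition~\ref{prop:phase_aware_extremum_pattern}. Moving away from $t^*$ on side $\sigma$, the raw thresholds are met in order at excursions $\delta_0,\delta_0+\Delta V,\dots$.
\begin{itemize}
\item If $\chi_{\mathrm{ret}}^{(\sigma)}=0$, the first hit, at excursion $\delta_0$, is retained.
\item If $\chi_{\mathrm{ret}}^{(\sigma)}=1$, that first hit is a suppressed same-threshold return. The next hit, at excursion $\delta_0+\Delta V$, is on an adjacent level different from the last retained label, so the recursion \eqref{eq:retained_state_forward} retains it.
\end{itemize}
In both cases the retained event on that side is the one at excursion $\delta_{\mathrm{ret}}^{(\sigma)}$, which is the point $t^*+\sigma s_0$ found above. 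One subtlety needs care: the hits at smaller excursions must occur inside the same monotone branch. They do, since $e$ increases through all smaller values before reaching $s_0$. Hence the retained state map processes these hits in the correct order and suppresses at most the one predicted by $\chi_{\mathrm{ret}}^{(\sigma)}$.
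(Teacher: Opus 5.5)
Your proposal is correct and follows the paper's argument. The Lagrange-remainder Taylor bound gives a directional excursion of at least $B_\rho(t^*)$ at $s=\rho(t^*)$; the intermediate value theorem then places the level at excursion $\delta_{\mathrm{ret}}^{(\sigma)}$ strictly inside the radius; and retention is read off from the definition of $\delta_{\mathrm{ret}}^{(\sigma)}$, which your strict-monotonicity remark and $\chi_{\mathrm{ret}}^{(\sigma)}$ case split simply spell out.

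One small caveat: your case $\chi_{\mathrm{ret}}^{(\sigma)}=1$ runs the recursion in the away-from-$t^*$ order, which for $\sigma=-1$ is backward in time. Under the forward recursion that case cannot occur on the left side. The hit closest to $t^*$ on a left branch carrying at least two raw hits is always retained, and the hypothesis $\delta_0+\Delta V<B_\rho(t^*)$ forces at least two hits.
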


\textit{Proof.} Let $\sigma_v=\operatorname{sgn}(x''(t^*))$ and write $t=t^*+\sigma s$, $0\le s\le\rho(t^*)$. Taylor's theorem gives
\[
    \sigma_v\bigl(x(t)-x(t^*)\bigr)
    \ge
    \frac{1}{2}|x''(t^*)|s^2-\frac{1}{6}M_3(t^*)s^3 .
\]
At $s=\rho(t^*)$ the right-hand side equals $B_\rho(t^*)$, which is larger than the retained directional phase gap. Since the signed excursion is continuous and starts at zero, the intermediate value theorem gives some $s<\rho(t^*)$ at which the waveform reaches the retained-eligible threshold on side $\sigma$. By the definition of $\delta_{\mathrm{ret}}^{(\sigma)}$, this threshold hit is retained by the adjacent-transition map. \hfill $\square$

In any centered window $J_n$ containing $t^*$, at least one side of the window has boundary distance at least $\Delta T/4$; since both monotone sides are guaranteed for distance $\rho(t^*)$, Lemma~\ref{lem:retained_crossing_before_radius} certifies the selected side whenever the retained phase gap lies below $B_\rho(t^*)$. The leading-order comparison behind this condition is
\begin{equation}
    \delta_{\mathrm{ret}}^{(\sigma)}(t^*;\Delta V)
    <
    \frac{1}{2}|\kappa|\,\rho(t^*)^2,
    \label{eq:curvature_leading}
\end{equation}
under the realized threshold phase, at leading order.
Including the third-order Taylor remainder explicitly, the realized retained-phase local sufficient condition is
\begin{equation}
    \delta_{\mathrm{ret}}^{(\sigma)}(t^*;\Delta V) <
    \frac{1}{2}|x''(t^*)|\rho(t^*)^2
    - \frac{1}{6}M_3(t^*)\rho(t^*)^3.
    \label{eq:curvature_exact}
\end{equation}
Since $\delta_{\mathrm{ret}}^{(\sigma)}(t^*;\Delta V)\le2\Delta V$, the retained phase-uniform sufficient condition
\begin{equation}
    2\Delta V <
    \frac{1}{2}|x''(t^*)|\rho(t^*)^2
    - \frac{1}{6}M_3(t^*)\rho(t^*)^3
    \label{eq:curvature_phase_uniform}
\end{equation}
implies~\eqref{eq:curvature_exact} for every threshold phase. When $\rho_-(t^*)$ and $\rho_+(t^*)$ are both at least $\Delta T/4$, and $M_3(t^*)$ is replaced by the global envelope $\|x'''\|_\infty$, the phase-uniform condition reduces to the simpler formula
$\Delta V<|x''(t^*)|/(256f_{max}^2)\bigl(1-\|x'''\|_\infty/(24f_{max}|x''(t^*)|)\bigr)$. If the right-hand side of~\eqref{eq:curvature_exact} is non-positive, the curvature branch supplies no certificate at that extremum.
The non-degeneracy assumption is physically reasonable for generic smooth waveforms, because flat higher-order stationary points are structurally unstable.

\textbf{Perturbation bound near extrema.} When either the realized retained-phase condition~\eqref{eq:curvature_exact} or the retained phase-uniform condition~\eqref{eq:curvature_phase_uniform} is satisfied, the crossing used for the Kadec subsequence must also account for the location of the extremum relative to the Nyquist anchor. Let $g_n=\tau_0+n\Delta T$, let $d_n=|t^*-g_n|$, and let $\varsigma_n=\operatorname{sgn}(g_n-t^*)$ denote the side pointing from the extremum toward the anchor. Let $h_n$ be the actual distance from $t^*$ to the first retained threshold crossing on that side. The realized certificate uses this actual $h_n$; the quadratic law gives only the asymptotic scale
\[
    h_n
    \sim
    \sqrt{\frac{2\delta_{\mathrm{ret}}^{(\varsigma_n)}(t^*;\Delta V)}{|\kappa|}}.
\]
The selected crossing then satisfies the exact geometric bound
\begin{equation}
    \begin{aligned}
        L_n^{(\mathrm{ext})}
        &= \frac{|s_n-g_n|}{\Delta T}
        \le \frac{|d_n-h_n|}{\Delta T}, \\
        h_n&\sim
        \sqrt{\frac{2\delta_{\mathrm{ret}}^{(\varsigma_n)}(t^*;\Delta V)}{|\kappa|}}.
    \end{aligned}
    \label{eq:extremum_perturbation}
\end{equation}
The curvature-isolation condition guarantees $h_n<\rho(t^*)\le\Delta T/4$, and the centered-window geometry gives $d_n\le\Delta T/4$, so the selected retained crossing lies inside the Kadec window on whichever monotone side has room inside $J_n$. Equation~\eqref{eq:extremum_perturbation} also shows two dependencies: retained threshold phase controls $h_n$, while grid offset controls $d_n$. The retained phase-uniform scaling is still $\mathcal{O}(\sqrt{\Delta V})$, but the constant is larger than in the raw-hit model because one same-threshold return may be suppressed. Without offset alignment, the extremal contribution to the perturbation budget is controlled by the residual offset $d_n/\Delta T$ rather than by threshold refinement alone.

\textbf{Design bound versus realized certificate.} The quantities $\alpha_n^{(\mathrm{ret})}$ and $\delta_{\mathrm{ret}}^{(\sigma)}(t^*;\Delta V)$ are not design constants: they change when the threshold spacing or ladder phase changes. We therefore use two distinct objects. The first is a phase-uniform, retained-safe \emph{design} bound that can be evaluated from the waveform geometry and the chosen grid offset without inspecting the retained topology at the deployed $\Delta V$. It budgets the worst retained monotone excursion factor $3/2$ and the phase-uniform extremal bound $\delta_{\mathrm{ret}}\le2\Delta V$; realized values of $\alpha_n^{(\mathrm{ret})}$ and $\delta_{\mathrm{ret}}$ are used only in post-realization verification. Let
\[
    \mathcal{E}_x=\{t^*\in[0,T_{\text{obs}}]\mid x'(t^*)=0\}.
\]
The retained phase-uniform curvature margin is
\begin{align*}
    \Delta V_{\mathrm{curv,ret}}^*
    \triangleq
    \frac{1}{2}\inf_{t^* \in \mathcal{E}_x}
    &\left[
    \frac{1}{2}|x''(t^*)|\rho(t^*)^2
    -\frac{1}{6}M_3(t^*)\rho(t^*)^3
    \right].
\end{align*}
If there are no extrema in the active interval, the infimum over extrema is omitted; if the displayed curvature margin is non-positive at some extremum, the retained phase-uniform curvature branch supplies no positive certificate there. Combining this curvature branch with the retained-safe monotone factor $3/2$ gives the non-circular scalar design inequality
\begin{equation}
    \boxed{\begin{aligned}
    \Delta V
    &< \Delta V_{\mathrm{Kadec,ret}}^*(x;\tau_0)\\
    &\triangleq
    \min\!\left(
    \frac{v_{min}^{(\mathrm{mono})}(x;\tau_0)}
    {12f_{max}},
    \Delta V_{\mathrm{curv,ret}}^*\right).
    \end{aligned}}
    \label{eq:unified_condition}
\end{equation}
with the velocity term omitted if $\mathcal{N}_{\mathrm{mono}}(\tau_0)$ is empty. The clean-topology design margin
\begin{equation}
    \Delta V_{\mathrm{clean}}^*(x;\tau_0)
    \triangleq
    \min\!\left(
    \frac{v_{min}^{(\mathrm{mono})}(x;\tau_0)}
    {4f_{max}},
    \Delta V_{\mathrm{curv,clean}}^*\right)
    \label{eq:clean_design_margin}
\end{equation}
where $\Delta V_{\mathrm{curv,clean}}^*$ is the same curvature infimum as $\Delta V_{\mathrm{curv,ret}}^*$ but without the outer prefactor $1/2$. This clean margin~\eqref{eq:clean_design_margin} omits the retained-suppression factors and is valid only after verifying that the selected nearest crossings are retained. The simpler retained-safe expression is $\kappa_{min}/(256f_{max}^2)$, up to the same third-order correction, when every relevant extremum is isolated by at least $\Delta T/4$ on both monotone sides.

The second object is a fixed-$\Delta V$ post-realization check, using the local curvature budget $B_\rho(t^*)$ from~\eqref{eq:curvature_budget}.
For a deployed spacing $\Delta V$ and grid offset $\tau_0$, define $\mathsf{Cert}_{\mathrm{Kadec}}(x,\Delta V,\tau_0)$ to hold when the extracted retained subsequence satisfies
\begin{equation}
    \begin{cases}
    \displaystyle
    \frac{2\alpha_n^{(\mathrm{ret})}\Delta V f_{max}}{v_n}<\frac14,
    & n\in\mathcal{N}_{\mathrm{mono}}(\tau_0),\\[2ex]
    \displaystyle
    \delta_{\mathrm{ret}}^{(\varsigma_n)}(t^*;\Delta V)
    < B_\rho(t^*),
    & J_n(\tau_0)\ \text{contains }t^*,\\[2ex]
    \displaystyle
    \frac{|d_n-h_n|}{\Delta T}<\frac14,
    & J_n(\tau_0)\ \text{contains }t^*,
    \end{cases}
    \label{eq:realized_kadec_certificate}
\end{equation}
and the selected threshold hits are retained by $\mathcal{A}_{\Delta V}$. This predicate is often much less conservative than~\eqref{eq:unified_condition}, but it is no longer a spacing formula independent of the deployed ladder. Thus the superscript-$*$ quantities in Table~\ref{tab:threshold_symbol_taxonomy} are design-time scalar margins, while $\mathsf{Cert}_{\mathrm{Kadec}}(x,\Delta V,\tau_0)$ is the fixed-$\Delta V$ post-realization verification object. If $\tau_0$ is fixed before acquisition,~\eqref{eq:unified_condition} is an acquisition-side design guarantee for that offset. If $\tau_0$ is optimized after seeing the event record, $\mathsf{Cert}_{\mathrm{Kadec}}$ is a post-realization diagnostic certificate, not a hardware acquisition guarantee.

The design bound~\eqref{eq:unified_condition} and the realized predicate~\eqref{eq:realized_kadec_certificate} encode the same two physical mechanisms. Moving segments are governed by velocity, including possible same-threshold suppression. Turning points are governed by curvature, because motion stalls and event generation must be sustained by the local second-order shape.

As long as the retained topology is fixed, the transition between the two regimes is smooth and quantitative. Near an extremum $t^*$, the local velocity grows as $|x'(t)| \approx |\kappa|\,|t-t^*|$. The retained-safe velocity condition $3\Delta V f_{max}/|x'|<1/4$ takes over at distance $|t-t^*| \gtrsim 12f_{max}\Delta V/|\kappa|$ from the extremum. Closer to the extremum, the first retained crossings are governed by the retained phase-aware square-root law~\eqref{eq:extremum_pattern}. Farther away, the monotone spacing approximation $\Delta V/|x'|$ becomes the correct local scale. The retained phase-uniform curvature branch is conservative because it provisions for the worst possible value of $\delta_{\mathrm{ret}}$.

\textbf{Sliding-window no-gap certificate.} The Kadec extraction above is tied to a fixed grid offset $\tau_0$. A static-threshold no-gap prior asks a related question: can an arbitrary silent interval of length $\Delta T/2$ persist under the deployed spacing? Let
\[
    J(g)=\left[g-\frac{\Delta T}{4},\,g+\frac{\Delta T}{4}\right].
\]
For this purpose define the monotone sliding-window velocity
\begin{equation}
    v_{min}^{(\mathrm{slide})}
    \triangleq
    \inf_{\substack{J(g)\subseteq[0,T_{\text{obs}}]\\ J(g)\ \mathrm{monotone}}}
    \inf_{t\in J(g)} |x'(t)|,
    \label{eq:sliding_velocity_margin}
\end{equation}
and the retained sliding no-gap margin
\begin{equation}
    \Delta V_{\mathrm{gap}}^*(x)
    \triangleq
    \min\!\left(\frac{v_{min}^{(\mathrm{slide})}}{8f_{max}},\,
    \Delta V_{\mathrm{curv,ret}}^*\right).
    \label{eq:sliding_gap_margin}
\end{equation}
If the set of monotone sliding windows in~\eqref{eq:sliding_velocity_margin} is empty, we set $v_{min}^{(\mathrm{slide})}=+\infty$ and omit the velocity branch. The no-gap certificate is then supplied only by the curvature branch. If both branches are absent or non-positive, $\Delta V_{\mathrm{gap}}^*(x)$ is not a positive certificate. If $\Delta V<\Delta V_{\mathrm{gap}}^*(x)$, every interval of length $\Delta T/2$ contained in the active window contains at least one retained adjacent-transition MT crossing. Such an interval is either monotone or contains a non-degenerate extremum. In the monotone case, its range is at least $v_{min}^{(\mathrm{slide})}\Delta T/2>2\Delta V$, so it crosses at least two threshold rungs and at most one can be suppressed as a same-threshold return. In the extremal case, the retained phase-uniform curvature condition~\eqref{eq:curvature_phase_uniform} forces a retained adjacent-level crossing within the isolated monotone side of that extremum. Thus a silent interval of duration $\Delta T/2$ contradicts the retained-event certificate. When both the sliding no-gap property and the fixed-grid Kadec extraction are needed, we use the combined margin
\begin{equation}
    \Delta V_{\mathrm{fg}}^*(x;\tau_0)
    \triangleq
    \min\!\left(\Delta V_{\mathrm{gap}}^*(x),\,
    \Delta V_{\mathrm{Kadec,ret}}^*(x;\tau_0)\right).
    \label{eq:combined_fixed_grid_margin}
\end{equation}

\begin{table}[!t]
\caption{Threshold-spacing symbols and their status.}
\label{tab:threshold_symbol_taxonomy}
\centering
\footnotesize
\setlength{\tabcolsep}{3pt}
\begin{tabular}{p{0.31\columnwidth}p{0.25\columnwidth}p{0.31\columnwidth}}
\hline
Symbol & Depends on deployed $\Delta V$? & Role \\
\hline
$\Delta V_{\mathrm{Kadec,ret}}^*(x;\tau_0)$ & No & Phase-uniform retained-safe fixed-grid design bound; used only with strict $\Delta V<\Delta V_{\mathrm{Kadec,ret}}^*$. \\
$\mathsf{Cert}_{\mathrm{Kadec}}(x,\Delta V,\tau_0)$ & Yes & Post-realization retained-subsequence verification. \\
$\Delta V_{\mathrm{clean}}^*(x;\tau_0)$ & Only through topology check & Conditional clean-topology normalization/design bound. \\
$\Delta V_{\mathrm{gap}}^*(x)$ & No & Sliding-window no-gap watchdog margin; used with strict spacing below the margin. \\
$\Delta V_{\mathrm{fg}}^*(x;\tau_0)$ & No & Combined static-threshold no-gap and Kadec design margin. \\
\hline
\end{tabular}
\end{table}

\textbf{Remark (incompleteness and the no-go boundary).} The unified condition~\eqref{eq:unified_condition} can fail when neither term provides a positive lower bound. If the signal contains a nearly monotone segment with $|x'| \ll 12f_{max}\Delta V$ and no extremum, the retained-safe velocity branch has no margin and no curvature rescue is available. Such ``near-quiescent monotone'' intervals arise whenever the signal's activity is concentrated well below $f_{max}$. This gap reflects an intrinsic limitation of the static-threshold model: the no-go theorem of Section~\ref{sec:no_go} shows that, for any fixed $\Delta V > 0$, the signal class $PW_\Omega$ contains waveforms with arbitrarily long near-quiescent stretches that are incompatible with fixed local threshold certificates.

\subsection{Signal-Instance Frame Bounds}
\label{sec:signal_instance_frame}
In the generic Kadec framework, the frame bounds~\eqref{eq:padded_frame} depend on the worst-case normalized perturbation $L = \sup_n|\delta_n|/\Delta T$, which is bounded by a constant $< 1/4$. For MT crossing sets, the regularity results of Subsections~\ref{sec:mvt_regularity}--\ref{sec:velocity_kadec} show that the actual perturbation is controlled by $\Delta V$ and the signal dynamics. The resulting frame bounds are therefore \emph{signal-instance} bounds: they improve continuously as the threshold spacing is refined.

Let $\tau_0$ be a grid offset for which~\eqref{eq:unified_condition} holds on every active centered window, and define the overall worst-case retained MT perturbation ratio
\begin{equation}
    L_{MT} = \max\!\left(\sup_{\substack{n \in \mathcal{I} \\ J_n \text{ mono.}}} \frac{\alpha_n^{(\mathrm{ret})}\Delta V}{v_n \Delta T},\;
    \sup_{\substack{n \in \mathcal{I} \\ J_n \ni t^*}} \frac{|d_n-h_n|}{\Delta T}\right)
    \label{eq:L_mvt}
\end{equation}
where $v_n$ is the local velocity on monotone windows, $\alpha_n^{(\mathrm{ret})}$ is the retained-eligibility factor from~\eqref{eq:retained_alpha_def}, and $d_n$ and $h_n$ are the extremum-grid offset and retained first-crossing distance from~\eqref{eq:extremum_perturbation}. By construction, $L_{MT} < 1/4$ whenever~\eqref{eq:unified_condition} holds with a nonzero centered-window margin. With $L_{MT}$ defined by~\eqref{eq:L_mvt}, the signal-instance frame bounds are
\begin{align}
    A_{MT} &= \frac{(1-\theta(L_{MT}))^2}{\Delta T}, \notag\\
    B_{MT} &= \frac{(1+\theta(L_{MT}))^2}{\Delta T},
    \label{eq:signal_instance_frame_bounds}
\end{align}
with $\theta(L) = 1-\cos(\pi L)+\sin(\pi L)$. The auxiliary MT condition number associated with these padded frame bounds is $\kappa_{MT}=B_{MT}/A_{MT}$.

The key point is the \emph{scaling behavior}. On a signal with minimum velocity $v_{min}$, retained monotone factor $\alpha_{\mathrm{ret}}$, and a positive retained curvature margin $\Delta V_{\mathrm{curv,ret}}^*$:
\begin{itemize}
    \item If the velocity term dominates: $L_{MT} \le 2\alpha_{\mathrm{ret}}\Delta V f_{max}/v_{min}$, and for $\Delta V \ll v_{min}/(8\alpha_{\mathrm{ret}}f_{max})$ we have $L_{MT} \ll 1/4$. Then $\theta(L_{MT}) \approx \pi L_{MT} = 2\pi\alpha_{\mathrm{ret}}\Delta V f_{max}/v_{min}$, and the condition number
    \begin{equation}
        \begin{aligned}
        \kappa_{MT}
        &=
        \frac{B_{MT}}{A_{MT}}
        =
        \left(\frac{1+\theta}{1-\theta}\right)^2\\
        &\approx
        1 + 4\theta
        \approx
        1 + \frac{8\pi\alpha_{\mathrm{ret}}\Delta V f_{max}}{v_{min}} .
        \end{aligned}
        \label{eq:condition_number_scaling}
    \end{equation}
    approaches $1$ linearly as $\Delta V \to 0$.
    \item If the curvature term dominates and the grid offset is chosen so that the extremum offsets satisfy $d_n=\mathcal{O}(\sqrt{\Delta V})$, then $L_{MT}=\mathcal{O}(\sqrt{\Delta V})$ and the condition number approaches $1$ as $\sqrt{\Delta V} \to 0$. Without this alignment, the limiting condition number is set by the residual offsets $d_n/\Delta T$ even though Kadec compatibility remains certified.
\end{itemize}

\textbf{Physical interpretation.} Inequality~\eqref{eq:condition_number_scaling} says that the reconstruction conditioning improves continuously as the threshold ladder is refined. A designer who doubles the number of comparator levels (halving $\Delta V$) roughly halves the excess condition number $B/A - 1$. This quantitative design guidance is not available from a generic Kadec analysis, which only certifies ``$L<1/4$ or not'' without revealing how far inside the bound the MT crossing set actually sits.

\subsection{From Finite Kadec Geometry to Finite-Record Reconstruction}
\label{sec:finite_bridge}

The classical Kadec theorem~\cite{Kadec1964,Avdonin1995} (Section~\ref{sec:kadec}, Appendix~\ref{app:constraints}) yields a Riesz basis~\cite{Christensen2003} for $PW_\Omega$ from a \emph{bi-infinite} perturbed sequence $\{t_n\}_{n \in \mathbb{Z}}$. The unified sufficient condition above produces only a \emph{finite} Kadec-compatible subsequence $\{s_n\}_{n \in \mathcal{I}}$ on the active interval, where $\mathcal{I} = \{n : J_n(\tau_0) \subseteq [0, T_{\text{obs}}]\}$ and $|\mathcal{I}| \approx 2f_{max}T_{\text{obs}}$. A direct invocation of Kadec's theorem on the finite MT record is therefore impossible: the MT hardware has not measured the values of $x$ outside the observation window. The padded frame inequality is never an acquisition theorem for the finite MT device unless the outside grid values, or equivalent tail/collar bounds, are supplied externally. We separate two distinct statements below. The first is an auxiliary padding certificate that quantifies what would be true with an external tail prior. The second is the actual finite-record reconstruction statement used by the numerical algorithms.

\textbf{Step 1: Auxiliary bi-infinite padding.} Define the padded sequence $\{\tilde{s}_n\}_{n \in \mathbb{Z}}$ by
\begin{equation}
    \tilde{s}_n = \begin{cases}
        s_n                & \text{if } n \in \mathcal{I}, \\
        \tau_0 + n\Delta T & \text{otherwise}.
    \end{cases}
    \label{eq:padded_sequence}
\end{equation}
For $n \in \mathcal{I}$, the inside-window points satisfy $|\tilde{s}_n - (\tau_0+n\Delta T)| \le L_{MT}\,\Delta T < \Delta T/4$ by the retained velocity-calibrated bound~\eqref{eq:retained_velocity_perturbation} or the extremum bound~\eqref{eq:extremum_perturbation}. For $n \notin \mathcal{I}$, the outside-window points sit exactly on the Nyquist grid with zero perturbation. Hence $\sup_n |\tilde{s}_n - (\tau_0+n\Delta T)| \le L_{MT}\,\Delta T < \Delta T/4$, and the classical Kadec theorem applies to the full bi-infinite sequence $\{\tilde{s}_n\}_{n \in \mathbb{Z}}$, yielding the sampling inequality
\begin{equation}
    A_{MT}\|x\|_{L^2}^2 \le \sum_{n \in \mathbb{Z}} |x(\tilde{s}_n)|^2 \le B_{MT}\|x\|_{L^2}^2,
    \label{eq:padded_frame}
\end{equation}
with the signal-instance bounds $A_{MT}$, $B_{MT}$ from~\eqref{eq:signal_instance_frame_bounds}.

\textbf{Step 2: Isolate the unmeasured tail and boundary-collar terms.} Split the sum in \eqref{eq:padded_frame}:
\begin{equation}
    \sum_{n \in \mathbb{Z}} |x(\tilde{s}_n)|^2
    = \underbrace{\sum_{n \in \mathcal{I}} |x(s_n)|^2}_{S_{\mathrm{in}}}
    + \underbrace{\sum_{n \notin \mathcal{I}} |x(\tau_0 + n\Delta T)|^2}_{S_{\mathrm{out}}}.
    \label{eq:split_sum}
\end{equation}
The inside sum $S_{\mathrm{in}}$ consists of known MT data (each $x(s_n) = m_n\Delta V$). The outside sum $S_{\mathrm{out}}$ involves Nyquist-grid evaluations of $x$ not supplied by the finite MT record. Some of these grid centers may still lie in the boundary collar of $[0,T_{\text{obs}}]$, because $\mathcal I$ is defined by full centered-window inclusion rather than by center inclusion. Set
\begin{equation}
    S_{\mathrm{bdry}}
    =
    \sum_{\substack{n\notin\mathcal I\\
    \tau_0+n\Delta T\in[0,T_{\text{obs}}]}}
    |x(\tau_0+n\Delta T)|^2.
    \label{eq:boundary_collar_samples}
\end{equation}
For the remaining unmeasured grid centers, the localized Plancherel--P\'olya inequality (Appendix~\ref{app:pp_local}) controls samples by a collar-enlarged tail energy:
\begin{equation}
    \begin{aligned}
    S_{\mathrm{out}}
    &\le
    C_{\Delta T,\Omega}\, E_{\mathrm{tail}}^{\mathrm{collar}}
    +S_{\mathrm{bdry}},\\
    E_{\mathrm{tail}}^{\mathrm{collar}}
    &\triangleq
    \|x\|_{L^2((\mathbb{R}\setminus[0,T_{\text{obs}}])
    +[-r_{\mathrm{loc}},r_{\mathrm{loc}}])}^2,
    \end{aligned}
    \label{eq:tail_bound}
\end{equation}
where $r_{\mathrm{loc}}$ is the localization radius in Appendix~\ref{app:pp_local} and $C_{\Delta T,\Omega}$ is a positive constant depending only on the separation and bandwidth.

\textbf{Step 3: Tail-conditional lower certificate.} From the lower bound in \eqref{eq:padded_frame} and the tail estimate \eqref{eq:tail_bound}:
\begin{equation}
    S_{\mathrm{in}} \ge A_{MT}\|x\|_{L^2}^2
    - C_{\Delta T,\Omega}\, E_{\mathrm{tail}}^{\mathrm{collar}}
    - S_{\mathrm{bdry}}.
    \label{eq:effective_lower}
\end{equation}
This is not a frame inequality for the finite MT record on all of $PW_\Omega$. It becomes a positive lower certificate only after an external modeling assumption supplies usable bounds on both the collar-enlarged tail and the finite boundary-collar samples, for example $C_{\Delta T,\Omega}E_{\mathrm{tail}}^{\mathrm{collar}}+S_{\mathrm{bdry}}\le \alpha A_{MT}\|x\|_{L^2}^2$ with some $\alpha<1$. Without such a prior, two bandlimited functions that agree with the recorded finite crossing values can differ by a component concentrated outside the active window or in the unrecorded boundary collar, and the finite MT record alone cannot identify that component.

\textbf{Step 4: Actual finite-dimensional reconstruction guarantee.} In practice the reconstruction stage operates on a finite model space, such as a truncated sinc basis, a trigonometric-polynomial space, or a PSWF space localized to the active window. Let
\[
    V_M=\operatorname{span}\{\phi_1,\ldots,\phi_M\}\subset PW_\Omega
\]
with an orthonormal basis in the reconstruction norm, and form the sampling matrix
\[
    H_{\mathcal{I},M}=(\phi_m(s_n))_{n\in\mathcal{I},\,1\le m\le M}.
\]
\begingroup
\setlength{\abovedisplayskip}{6pt}
\setlength{\belowdisplayskip}{6pt}
\setlength{\abovedisplayshortskip}{0pt}
\setlength{\belowdisplayshortskip}{6pt}
\setlength{\parskip}{0pt}
If $\sigma_{\min}(H_{\mathcal{I},M})>0$, then the finite MT data give the genuine finite-record frame inequality
\begin{equation}
    \begin{aligned}
    \sigma_{\min}^2(H_{\mathcal{I},M})\,\|p\|_{L^2}^2
    &\le \sum_{n\in\mathcal{I}} |p(s_n)|^2 \\
    &\le \|H_{\mathcal{I},M}\|_2^2\,\|p\|_{L^2}^2,
    \qquad p\in V_M.
    \end{aligned}
    \label{eq:finite_matrix_frame}
\end{equation}
Consequently, least-squares or pseudo-inverse reconstruction on $V_M$ is stable with condition number
\[
    \kappa_M=\|H_{\mathcal{I},M}\|_2/\sigma_{\min}(H_{\mathcal{I},M}).
\]
For a general $x=p+r$ with $p=P_{V_M}x$, the reconstruction error contains three separate terms:
\begin{equation}
    \begin{aligned}
    \|x-x_{\mathrm{rec}}\|_{L^2}
    \le{}& \|r\|_{L^2}
    + \sigma_{\min}(H_{\mathcal{I},M})^{-1} \\
    &\times
      \bigl(\|\epsilon\|_{\ell^2(\mathcal{I})}
      +\|(r(s_n))_{n\in\mathcal{I}}\|_{\ell^2}\bigr),
    \end{aligned}
    \label{eq:finite_model_error}
\end{equation}
where $x_{\mathrm{rec}}$ denotes the finite-record reconstructed signal and $\epsilon$ denotes sample noise or hardware-induced value error. Thus the finite-data theorem is an approximation theorem on a declared finite model class, not a uniqueness theorem for arbitrary global elements of $PW_\Omega$.

\textbf{Interpretation.} The Kadec-compatible geometry remains useful because it gives a computable diagnostic for when $H_{\mathcal{I},M}$ should be well conditioned and how conditioning changes as $\Delta V$ is refined. The mathematically valid finite-record claim, however, is the matrix inequality~\eqref{eq:finite_matrix_frame} plus the model/tail error bound~\eqref{eq:finite_model_error}. Exact infinite-dimensional recovery is recovered only in an idealized limiting regime in which the active window expands, the tail term vanishes, and the finite reconstruction spaces exhaust $PW_\Omega$.

\endgroup

\medskip
\noindent
\rule{\columnwidth}{0.4pt}
\begingroup
\small
\noindent\textbf{Summary of Local Sufficient Conditions for Stable MT Reconstruction.}

\medskip
\textit{Modeling assumptions.}
\begin{enumerate}
    \renewcommand{\labelenumi}{(A\arabic{enumi})}
    \item $x \in PW_\Omega$ with exact Fourier support in $[-\Omega,\Omega]$; approximate bandlimitation is handled by adding an out-of-band residual term.
    \item Static uniform threshold ladder with spacing $\Delta V > 0$.
    \item Ideal timing (zero jitter, infinite TDC resolution).
    \item Monotone-crossing event model (retained adjacent-transition comparator output).
    \item Non-degenerate extrema: $x'(t^*) = 0 \implies x''(t^*) \neq 0$ on $[0, T_{\text{obs}}]$.
    \item Finite active window: $\|x\|_{L^\infty(\mathbb{R} \setminus [0, T_{\text{obs}}])} < \Delta V$ prevents unmodeled outside-window events, but does not control outside-window $L^2$ energy.
    \item Either an external tail/collar prior $C_{\Delta T,\Omega}\, E_{\mathrm{tail}}^{\mathrm{collar}}+S_{\mathrm{bdry}} < A_{MT}\|x\|_{L^2}^2$ or an explicitly chosen finite reconstruction space $V_M$ with $\sigma_{\min}(H_{\mathcal{I},M})>0$.
\end{enumerate}

\medskip
\textit{MT crossing regularity (Propositions 1--4).}
The realized retained crossing set $\Lambda_{\Delta V}(x)$ has deterministic regularity absent from generic nonuniform sets under the stated event topology: local spacing locked to $\Delta V/|x'|$ (Proposition~\ref{prop:local_intercrossing_spacing}), smooth spacing variation (Proposition~\ref{prop:smooth_gap_variation}), retained phase-aware $\sqrt{m}$-law near extrema (Proposition~\ref{prop:phase_aware_extremum_pattern}), and fixed-topology event-time regularity (Proposition~\ref{prop:structural_regularity_fixed_topology}).

\medskip
\textit{Unified sufficient condition.}
$\Delta V < \Delta V_{\mathrm{Kadec,ret}}^*(x;\tau_0)$ from~\eqref{eq:unified_condition} is the phase-uniform retained-safe design version; for a deployed ladder one may instead verify the fixed-$\Delta V$ predicate $\mathsf{Cert}_{\mathrm{Kadec}}(x,\Delta V,\tau_0)$ in~\eqref{eq:realized_kadec_certificate}, using the actual retained phases, clipped exact-anchor convention, third-order Taylor remainder, and extremum-isolation radius~\eqref{eq:extremum_isolation_radius}.

\medskip
\textit{Signal-instance reconstruction guarantee.}
Under (A1)--(A7): the auxiliary padded sequence satisfies the Kadec frame inequality~\eqref{eq:padded_frame}; the finite MT record itself supports either the tail-conditional certificate~\eqref{eq:effective_lower} or the finite-dimensional matrix frame~\eqref{eq:finite_matrix_frame}. The sufficient bound implies an improving conditioning trend as the threshold ladder is refined, with the rate quantified by~\eqref{eq:condition_number_scaling}.

\medskip
\textit{Scope boundary.}
These conditions are \emph{signal-instance} statements. The no-go theorem of Section~\ref{sec:no_go} proves that no signal-independent guarantee is possible for static MT.
\par
\endgroup
\noindent\rule{\columnwidth}{0.4pt}
\section{Asymptotic Density Analysis on Long Active Windows}
\label{sec:density_analysis}
While the Kadec conditions of Section~\ref{sec:sufficient} provide rigorous local sufficient criteria, they are per-window statements. We now derive complementary long-window design inequalities by comparing the retained MT event rate against the critical Nyquist density from Landau--Beurling theory. A key additional object is the \emph{pointwise threshold-traversal envelope} $\rho_{MT}(t) = |x'(t)|/\Delta V$. This Lipschitz, signal-determined envelope is not itself the retained counting measure: endpoint truncation, floor effects, and same-threshold suppression all create discrete losses. The main theorem of this section (Theorem~\ref{thm:kadec_density_convergence}) then shows that, on monotone segments, the Kadec path (local displacement control) and the retained-density path (long-window event rate) share conservative design scales. This connection is specific to MT-generated sets and is generally unavailable for arbitrary nonuniform sampling sets.

\textbf{Scope of analysis.} Since signals of practical interest have finite effective duration, the analysis in this section is restricted to a long active observation window $[0, T_{\text{obs}}]$ with $T_{\text{obs}} \gg \Delta T$, where $\|x\|_{L^\infty(\mathbb{R} \setminus [0, T_{\text{obs}}])} < \Delta V$. The Landau--Beurling theorem is used here only as an \emph{asymptotic benchmark} for the retained event density observed on long sub-intervals fully contained in the active window.

Throughout this section, the signal instance $x$ is fixed. The quantity $n_{\mathrm{ret}}(t,r)$ denotes the number of retained adjacent-transition events in $[t,t+r]$, while $N_{\mathrm{mono}}(t,r)$, $\bar{\nu}(t,r)$, and $\nu_{min}$ are computed from the same waveform on the same active window.

\subsection{Pointwise Threshold-Traversal Envelope}
\label{sec:density_profile}
The structural regularity of MT crossing sets (Propositions~\ref{prop:local_intercrossing_spacing}--\ref{prop:structural_regularity_fixed_topology}) implies that the monotone-branch crossing rate has an explicitly computable envelope. Because of the absolute value in $|x'|$, this envelope is Lipschitz rather than necessarily differentiable at extrema.

\begin{definition}[MT threshold-traversal envelope]
The \emph{MT threshold-traversal envelope} of the signal $x$ with threshold spacing $\Delta V$ is
\begin{equation}
    \begin{aligned}
    \rho_{MT}(t) &= \frac{|x'(t)|}{\Delta V},\\
    \bar{\rho}(t,r)
    &=
    \frac{1}{r}\int_t^{t+r}\rho_{MT}(\tau)\,d\tau,
    \qquad t \in [0, T_{\text{obs}}].
    \end{aligned}
    \label{eq:density_profile}
\end{equation}
\end{definition}

The physical interpretation is immediate: $\rho_{MT}(t)$ is the continuum rate at which the waveform traverses threshold intervals per unit time at position $t$. On a monotone segment near $t$, the local raw crossing spacing is $\Delta V/|x'(t)| = 1/\rho_{MT}(t)$ (Proposition~\ref{prop:local_intercrossing_spacing}). It is only an envelope for retained event counts; the retained adjacent-transition convention can remove a same-threshold return, as in Experiment~7.

\begin{proposition}[Lipschitz regularity of the traversal envelope]
\label{prop:traversal_envelope_lipschitz}
For $x \in PW_\Omega$, the traversal envelope $\rho_{MT}$ is Lipschitz continuous with constant controlled by the bandwidth, amplitude envelope, and threshold spacing:
\begin{multline}
    |\rho_{MT}(t_1) - \rho_{MT}(t_2)| \le \frac{\|x''\|_\infty}{\Delta V}|t_1{-}t_2| \\
    \le \frac{\Omega^2\|x\|_\infty}{\Delta V}|t_1{-}t_2|
    \label{eq:density_lipschitz}
\end{multline}
for all $t_1, t_2 \in [0, T_{\text{obs}}]$, where the second inequality follows from Bernstein's inequality.
\end{proposition}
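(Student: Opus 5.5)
The plan is to get the Lipschitz bound from the reverse triangle inequality and the mean value theorem applied to $x'$, and then invoke Bernstein's inequality for the second constant. First I record the regularity being used. Since $x\in PW_\Omega$, the Paley--Wiener theorem makes $x$ the restriction of an entire function of exponential type at most $\Omega$. Hence $x\in C^\infty(\mathbb{R})$, and in particular $x'$ is continuously differentiable on $[0,T_{\text{obs}}]$. Bernstein's inequality, recalled in the preliminaries, gives $\|x''\|_\infty\le\Omega^2\|x\|_\infty<\infty$, so the constant in the first inequality of~\eqref{eq:density_lipschitz} is finite.

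For the first inequality, fix $t_1,t_2\in[0,T_{\text{obs}}]$. By the reverse triangle inequality,
\[
\bigl||x'(t_1)|-|x'(t_2)|\bigr|\le|x'(t_1)-x'(t_2)|.
\]
The mean value theorem applied to $x'$ on the segment between $t_1$ and $t_2$ gives $|x'(t_1)-x'(t_2)|=|x''(\xi)|\,|t_1-t_2|\le\|x''\|_\infty|t_1-t_2|$. Dividing by the fixed constant $\Delta V>0$ yields the first inequality of~\eqref{eq:density_lipschitz}.

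This route passes through $|x'|$ only via the reverse triangle inequality. It therefore needs no case split at extrema, where $|x'|$ may have corners. That is why the statement claims Lipschitz continuity rather than differentiability.

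The second inequality follows at once from the Bernstein estimate with $k=2$, namely $\|x''\|_\infty\le\Omega^2\|x\|_\infty$. I would note in passing that the bound holds on all of $\mathbb{R}$, not only on the window. I would also note that $\bar\rho(t,r)$ inherits the same Lipschitz constant in $t$, since averaging over a window of fixed length $r$ preserves it.

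There is no real obstacle in this argument. The only point needing care is stating the Bernstein inequality for the sup norm on the real line for real-valued $x$. That is the classical form already quoted in the preliminaries, and it applies because $\|x\|_\infty<\infty$ for $x\in PW_\Omega$.
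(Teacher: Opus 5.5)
Your proposal is correct and follows the same route as the paper: the paper's proof uses that $|x'|$ is Lipschitz with constant $\|x''\|_\infty$ by the mean value theorem applied to $x'$, divides by $\Delta V$, and gets the second inequality from Bernstein's inequality. Your explicit reverse-triangle-inequality step just makes visible the passage from $x'$ to $|x'|$ that the paper leaves implicit.
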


\textit{Proof.} Since $|x'|$ is Lipschitz with constant $\|x''\|_\infty$ (by the mean value theorem applied to $x'$), the claim follows from dividing by $\Delta V > 0$. \hfill $\square$

\begin{corollary}[Windowed density variation bound]
\label{cor:windowed_density_variation}
On any sub-interval of length $r$:
\begin{equation}
    \sup_{[t,t+r]} \rho_{MT} - \inf_{[t,t+r]} \rho_{MT} \le \frac{\Omega^2\|x\|_\infty}{\Delta V}\,r.
    \label{eq:density_variation}
\end{equation}
\end{corollary}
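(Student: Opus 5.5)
The plan is to read the corollary as a direct consequence of Proposition~\ref{prop:traversal_envelope_lipschitz}, applied to one pair of points inside the window. Fix a sub-interval $[t,t+r]\subseteq[0,T_{\text{obs}}]$.

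\textbf{Step 1 (attainment of the extremes).} Since $x\in PW_\Omega$ is smooth, $\rho_{MT}=|x'|/\Delta V$ is continuous on the compact interval $[t,t+r]$. Hence there exist points $t_{\max},t_{\min}\in[t,t+r]$ with
\[
\rho_{MT}(t_{\max})=\sup_{[t,t+r]}\rho_{MT},
\qquad
\rho_{MT}(t_{\min})=\inf_{[t,t+r]}\rho_{MT}.
\]
Continuity alone suffices here, because the supremum and infimum over a compact set are attained.

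\textbf{Step 2 (apply the Lipschitz bound).} Insert the pair $t_1=t_{\max}$, $t_2=t_{\min}$ into~\eqref{eq:density_lipschitz}. This gives
\[
\sup_{[t,t+r]}\rho_{MT}-\inf_{[t,t+r]}\rho_{MT}
\le\frac{\Omega^2\|x\|_\infty}{\Delta V}\,|t_{\max}-t_{\min}|.
\]

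\textbf{Step 3 (bound the distance).} Both points lie in the same interval of length $r$, so $|t_{\max}-t_{\min}|\le r$. Substituting this into Step~2 yields~\eqref{eq:density_variation}.

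The only point needing care is the hypothesis of Proposition~\ref{prop:traversal_envelope_lipschitz}, which is stated for $t_1,t_2\in[0,T_{\text{obs}}]$. The corollary should therefore be read for sub-intervals of the active window. Alternatively, the underlying argument never uses the window: $|x'|$ is globally Lipschitz with constant $\|x''\|_\infty\le\Omega^2\|x\|_\infty$ by Bernstein's inequality. So the bound in fact holds for any sub-interval of $\mathbb{R}$. No other obstacle is expected.
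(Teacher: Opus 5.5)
Your proof is correct and matches the paper's intended argument: the paper gives no separate proof and treats the corollary as an immediate consequence of the Lipschitz bound~\eqref{eq:density_lipschitz}, applied to two points of a window whose distance is at most $r$. Your remark that the bound in fact holds on any sub-interval of $\mathbb{R}$, because $|x'|$ is globally Lipschitz with constant $\|x''\|_\infty\le\Omega^2\|x\|_\infty$, is also correct; the attainment step is harmless but not needed, since you can take the supremum over $t_1$ and the infimum over $t_2$ directly.
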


This bound should not be read as a substitute for a sliding-window lower-density certificate. A high average value of $\rho_{MT}$ on a long interval can still coexist with a local valley near an extremum or near-quiescent segment; the excess density elsewhere compensates for the deficit. Lipschitz regularity only limits how abruptly such a valley can begin or end. Consequently, all density design rules below use an explicit infimum over windows, such as $\nu_{min}$ or the extremum-density certificate, rather than inferring local admissibility from a single high window average.

\subsection{Separation Condition via Bernstein's Inequality}
Beurling's sufficient condition requires the realized sampling set to be uniformly discrete (i.e., $\inf_{m \neq n}|t_m - t_n| > 0$). The retained static MT sequence automatically satisfies this property. Let $(t_k,m_k)$ and $(t_{k+1},m_{k+1})$ be consecutive retained events. The adjacent-transition convention gives $m_{k+1}\neq m_k$; by continuity, intermediate threshold hits would have been recorded before any non-adjacent jump, so in particular $|x(t_{k+1})-x(t_k)|\ge\Delta V$. The ordinary mean value theorem on $[t_k,t_{k+1}]$ gives a point $\xi_k$ such that
\[
    |x(t_{k+1})-x(t_k)|=|x'(\xi_k)|\,|t_{k+1}-t_k|.
\]
Since $|x'(\xi_k)| \le \|x'\|_\infty$, the minimum inter-sample time is bounded below. Applying Bernstein's inequality $\|x'\|_\infty \le \Omega \|x\|_\infty = 2\pi f_{max} \|x\|_\infty$:
\begin{equation}
    \delta_{\mathrm{sep}} = \inf_k |t_{k+1} - t_k| \ge \frac{\Delta V}{\|x'\|_\infty} \ge \frac{\Delta V}{2\pi f_{max} \|x\|_\infty}
    \label{eq:sep_bound}
\end{equation}
The realized MT sequence is therefore automatically separated. Finite bandwidth implies finite slew rate, and finite slew rate combined with nonzero threshold spacing forbids arbitrarily tight clusters of events.

\subsection{Structure-Adapted Density Bounds}
\label{sec:structure_density}
The physical link between bandwidth, voltage, and event count is provided by total variation. Each retained MT event is preceded by a threshold-step traversal of size $\Delta V$, but not every raw traversal is retained: the adjacent-transition convention can suppress the first same-threshold return on a monotone segment. The retained count on $[t,t+r]$ is therefore controlled by the vertical distance traveled minus explicit endpoint, floor, and suppression losses.

Let $N_{\mathrm{supp}}(t,r)$ denote the number of monotone pieces in $[t,t+r]$ whose first raw threshold hit is suppressed as a same-threshold return by $\mathcal{A}_{\Delta V}$. On a strictly monotone piece, this can happen at most once, because after the first retained adjacent-level transition the threshold indices continue monotonically. Hence
\begin{equation}
    0\le N_{\mathrm{supp}}(t,r)\le N_{\mathrm{mono}}(t,r).
    \label{eq:supp_count_bound}
\end{equation}
The retained crossing count $n_{\mathrm{ret}}(t,r)$ within $[t,t+r] \subseteq [0,T_{\text{obs}}]$ satisfies:
\begin{align}
    \text{Upper:}\quad n_{\mathrm{ret}}(t, r)
    &\le \frac{1}{\Delta V} \int_t^{t+r} |x'(\tau)| \,d\tau + 2 \label{eq:cross_upper}\\[4pt]
    \text{Lower:}\quad n_{\mathrm{ret}}(t, r)
    &\ge \frac{1}{\Delta V} \int_t^{t+r} |x'(\tau)| \,d\tau \nonumber\\
    &\quad - N_{\mathrm{mono}}(t,r)-N_{\mathrm{supp}}(t,r). \label{eq:cross_lower}
\end{align}
The upper bound is a retained-count statement. If $u_1<\cdots<u_N$ are retained events in the window, then consecutive retained labels differ by one ladder step, so
\[
    \int_{u_i}^{u_{i+1}}|x'(\tau)|\,d\tau\ge\Delta V,
    \qquad i=1,\ldots,N-1.
\]
Hence $(N-1)\Delta V\le\mathrm{TV}(x;[t,t+r])$; the displayed $+2$ is a harmless boundary cushion for open/closed endpoint conventions and subsequence extraction. This argument is not valid for raw crossings, which can repeatedly return to the same threshold with arbitrarily small vertical excursions. In the lower estimate, the usual floor loss removes one count per monotone piece; the same-threshold suppression loss then removes any first raw hit suppressed by the retained-event convention. Without a separate suppression certificate, the retained-safe replacement is the conservative bound $n_{\mathrm{ret}}\ge \mathrm{TV}/\Delta V-2N_{\mathrm{mono}}$.

\textbf{Signal-specific extremum density.} Bandwidth alone does not give a uniform finite-window bound of the form $N_{\mathrm{mono}}(t,r)\le 2f_{max}r+\mathcal{O}(1)$ for all $x\in PW_\Omega$. Superoscillatory bandlimited functions can realize arbitrarily dense prescribed local oscillations on a finite interval, at the price of large dynamic range or energy concentration elsewhere~\cite{Kempf2000}. Therefore the density theory below is stated with an explicit signal-instance extremum-density certificate. For a fixed signal instance, define
\begin{equation}
    \eta_x = \sup_{\substack{[t,t+r] \subseteq [0,T_{\text{obs}}] \\ r \ge \Delta T}} \frac{Z_x(t,r)}{r}
    \label{eq:extremum_density}
\end{equation}
where $Z_x(t,r)$ is the number of zeros of $x'$ in $[t,t+r]$. On the finite active window this quantity is finite for every nonzero analytic signal instance with isolated critical points, but it is not bounded above by a bandwidth-only constant. Equivalently, for a signal class $\mathcal{S}$ one may assume the explicit anti-superoscillation condition
\begin{equation}
    Z_x(t,r)\le \eta_0 r+C_0,\qquad x\in\mathcal{S},
    \label{eq:zero_density_assumption}
\end{equation}
with constants $\eta_0,C_0$ supplied by the model, by a dynamic-range/energy-concentration certificate, or by direct offline verification of the signal family. Under this assumption $N_{\mathrm{mono}}(t,r)\le \eta_0 r+\mathcal{O}(1)$; for a single verified instance one may set $\eta_0=\eta_x$.

The retained model also needs a suppression-density certificate. For a fixed instance define
\begin{equation}
    \eta_{\mathrm{supp},x}
    =
    \sup_{\substack{[t,t+r] \subseteq [0,T_{\text{obs}}] \\ r \ge \Delta T}}
    \frac{N_{\mathrm{supp}}(t,r)}{r},
    \label{eq:suppression_density}
\end{equation}
and for a signal class assume $N_{\mathrm{supp}}(t,r)\le \eta_{\mathrm{supp},0}r+C_{\mathrm{supp}}$. Since~\eqref{eq:supp_count_bound} gives $\eta_{\mathrm{supp},0}\le\eta_0$ after adjusting endpoint constants, one always has a retained-safe fallback with $\eta_0+\eta_{\mathrm{supp},0}\le2\eta_0$.

\textbf{Design status of $\eta_{\mathrm{supp},0}$.} A class-level value of $\eta_{\mathrm{supp},0}$ is an acquisition-side design parameter only when it is supplied by an analytic topology prior, a certified waveform library, or an offline verification of the admissible signal class before deployment. If the suppression count is computed after observing a particular record, the resulting $\eta_{\mathrm{supp},x}$ is a post-realization diagnostic quantity. In the absence of such prior topology information, the design rule must use the fallback $\eta_{\mathrm{supp},0}\le\eta_0$ rather than treating a smaller suppression density as a free margin.

Using the $\rho_{MT}$ and $\bar{\rho}$ notation from~\eqref{eq:density_profile}, the retained event-rate bounds become
\begin{multline}
    \bar{\rho}(t,r)
    - \eta_x-\eta_{\mathrm{supp},x} - \mathcal{O}(1/r) \le \frac{n_{\mathrm{ret}}(t,r)}{r} \\
    \le \bar{\rho}(t,r) + \frac{2}{r}.
    \label{eq:density_window_improved}
\end{multline}

Define the \emph{minimum mean absolute velocity} as before:
\begin{align}
    \nu_{min} &= \inf_{\substack{[t, t+r] \subseteq [0, T_{\text{obs}}] \\ r \ge 1/(2f_{max})}} \bar{\nu}(t,r), \notag\\
    \bar{\nu}(t,r) &= \frac{1}{r}\int_t^{t+r}|x'(\tau)|\,d\tau.
    \label{eq:velocity}
\end{align}
Under the explicit extremum-density certificate~\eqref{eq:zero_density_assumption}, the lower envelope of the retained event density satisfies the \emph{structure-calibrated} estimate
\begin{equation}
    \frac{\nu_{min}}{\Delta V} - \eta_0-\eta_{\mathrm{supp},0}
    \;\lesssim\; \inf \frac{n_{\mathrm{ret}}(t,r)}{r}
    \;\lesssim\; \frac{\nu_{min}}{\Delta V},
    \label{eq:density_bounds_improved}
\end{equation}
where $\eta_0,\eta_{\mathrm{supp},0}$ may be replaced by the verified instance values $\eta_x,\eta_{\mathrm{supp},x}$. Compared with a model that pessimistically provisions for frequent turning points and every possible suppressed return, the structure-calibrated bound is tighter when the certified $\eta_0$ and $\eta_{\mathrm{supp},0}$ are small. For example, a non-superoscillatory slowly oscillating signal with effective oscillation scale $f_0 \ll f_{max}$ typically has $\eta_x \approx 2f_0$; this is an additional signal-instance observation, not a consequence of the nominal bandwidth alone.

\textbf{Finite-window retained-density certificate.} The asymptotic estimate~\eqref{eq:density_bounds_improved} is not, by itself, a finite-window theorem. To make a statement over all certified windows of length at least $R\ge\Delta T$, define
\begin{equation}
    \nu_{min}^{(R)}
    =
    \inf_{\substack{[t,t+r]\subseteq[0,T_{\textup{obs}}]\\ r\ge R}}
    \bar{\nu}(t,r).
    \label{eq:finite_R_velocity}
\end{equation}
Assume nonnegative finite-window constants $C_{\mathrm{mono}}$ and $C_{\mathrm{supp}}$ such that
\begin{align}
    N_{\mathrm{mono}}(t,r)
    &\le \eta_0 r+C_{\mathrm{mono}}, \nonumber\\
    N_{\mathrm{supp}}(t,r)
    &\le \eta_{\mathrm{supp},0}r+C_{\mathrm{supp}}
    \label{eq:finite_R_count_constants}
\end{align}
on every certified window with $r\ge R$. Then~\eqref{eq:cross_lower} gives the explicit finite-window lower bound
\begin{align}
    \frac{n_{\mathrm{ret}}(t,r)}{r}
    &\ge
    \frac{\nu_{min}^{(R)}}{\Delta V}
    -\eta_0-\eta_{\mathrm{supp},0}
    -\frac{C_{\mathrm{mono}}+C_{\mathrm{supp}}}{R}, \nonumber\\
    &\hspace{0.7in} r\ge R.
    \label{eq:finite_R_density_bound}
\end{align}
Consequently, the retained density exceeds the Nyquist benchmark on every certified window of length at least $R$ whenever
\begin{equation}
    \Delta V
    <
    \frac{\nu_{min}^{(R)}}
    {2f_{max}+\eta_0+\eta_{\mathrm{supp},0}
    +(C_{\mathrm{mono}}+C_{\mathrm{supp}})/R}.
    \label{eq:finite_R_density_design}
\end{equation}
Equation~\eqref{eq:finite_R_density_design} is the finite-window statement when a suppression-topology certificate is available before acquisition. If no such acquisition-side certificate is available, the only topology-free retained-safe replacement is~\eqref{eq:supp_count_bound}: take $\eta_{\mathrm{supp},0}=\eta_0$ and $C_{\mathrm{supp}}=C_{\mathrm{mono}}$. This gives the explicit fallback
\begin{equation}
    \Delta V
    <
    \frac{\nu_{min}^{(R)}}
    {2f_{max}+2\eta_0+2C_{\mathrm{mono}}/R}.
    \label{eq:finite_R_retained_safe_fallback}
\end{equation}
This fallback is generally more conservative, but it is the finite-window acquisition-side rule that remains valid without a library-level or analytic bound on same-threshold suppression. The simpler formulas below are long-window design scales obtained after the additive constants become negligible.

\begin{corollary}[Class-level a priori retained-density certificate]
Let $\mathcal{S}\subset PW_\Omega$ be a restricted signal class on the active window $[0,T_{\textup{obs}}]$, and fix a window scale $R\ge\Delta T$. Suppose there are constants
\[
    \nu_{\mathcal{S}}^{(R)}>0,\qquad
    \eta_0,\eta_{\mathrm{supp},0}\ge0,\qquad
    C_{\mathrm{mono}},C_{\mathrm{supp}}\ge0
\]
such that, for every $x\in\mathcal{S}$ and every $[t,t+r]\subseteq[0,T_{\textup{obs}}]$ with $r\ge R$,
\begin{align*}
    \bar{\nu}_x(t,r)&=\frac{1}{r}\int_t^{t+r}|x'(\tau)|\,d\tau
        \ge \nu_{\mathcal{S}}^{(R)},\\
    N_{\mathrm{mono}}(t,r)&\le \eta_0 r+C_{\mathrm{mono}},\\
    N_{\mathrm{supp}}(t,r)&\le \eta_{\mathrm{supp},0}r+C_{\mathrm{supp}}.
\end{align*}
The first condition is the class-level exclusion of near-quiescent windows at scale $R$; the last two conditions are the class-level extremum-density and retained-topology certificates. If
\begin{equation}
    \Delta V
    <
    \frac{\nu_{\mathcal{S}}^{(R)}}
    {2f_{max}+\eta_0+\eta_{\mathrm{supp},0}
    +(C_{\mathrm{mono}}+C_{\mathrm{supp}})/R},
    \label{eq:class_level_density_certificate}
\end{equation}
then the same static, uniformly spaced threshold set gives, for every $x\in\mathcal{S}$,
\begin{equation}
    \frac{n_{\mathrm{ret}}(t,r)}{r}>2f_{max},
    \qquad
    [t,t+r]\subseteq[0,T_{\textup{obs}}],\ r\ge R.
    \label{eq:class_level_density_result}
\end{equation}
Thus~\eqref{eq:class_level_density_certificate} is an a priori sufficient retained-density certificate for the whole class $\mathcal{S}$. If no independent suppression-topology certificate is available before acquisition, the retained-safe sufficient replacement is obtained by setting $\eta_{\mathrm{supp},0}=\eta_0$ and $C_{\mathrm{supp}}=C_{\mathrm{mono}}$.
\end{corollary}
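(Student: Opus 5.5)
The plan is to apply the finite-window lower bound \eqref{eq:finite_R_density_bound} to each $x\in\mathcal S$, one signal at a time, and then use the fact that every constant in it is uniform over the class. Fix $x\in\mathcal S$ and a window $[t,t+r]\subseteq[0,T_{\textup{obs}}]$ with $r\ge R$.

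First, start from the retained lower count \eqref{eq:cross_lower}:
\[
    n_{\mathrm{ret}}(t,r)\ge \frac{1}{\Delta V}\int_t^{t+r}|x'(\tau)|\,d\tau-N_{\mathrm{mono}}(t,r)-N_{\mathrm{supp}}(t,r).
\]
This inequality is a property of the single waveform $x$ and its retained adjacent-transition record, so it holds for each class member. Dividing by $r$, the first term equals $\bar\nu_x(t,r)/\Delta V$. The class hypothesis then bounds it below by $\nu_{\mathcal S}^{(R)}/\Delta V$.

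Second, insert the class-level count certificates
\[
    N_{\mathrm{mono}}\le\eta_0r+C_{\mathrm{mono}},\qquad
    N_{\mathrm{supp}}\le\eta_{\mathrm{supp},0}r+C_{\mathrm{supp}}.
\]
Since $r\ge R$, the additive constants contribute at most $(C_{\mathrm{mono}}+C_{\mathrm{supp}})/R$ after division by $r$. This gives, uniformly in $x\in\mathcal S$ and in the window,
\[
    \frac{n_{\mathrm{ret}}(t,r)}{r}\ge \frac{\nu_{\mathcal S}^{(R)}}{\Delta V}-\eta_0-\eta_{\mathrm{supp},0}-\frac{C_{\mathrm{mono}}+C_{\mathrm{supp}}}{R}.
\]

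Third, rearrange \eqref{eq:class_level_density_certificate}. Because the denominator is positive, the condition is equivalent to
\[
    \frac{\nu_{\mathcal S}^{(R)}}{\Delta V}>2f_{max}+\eta_0+\eta_{\mathrm{supp},0}+\frac{C_{\mathrm{mono}}+C_{\mathrm{supp}}}{R}.
\]
Substituting this into the previous display gives the strict inequality \eqref{eq:class_level_density_result}. The threshold set is the same for every $x$, because $\Delta V$ was chosen from class constants only. This is exactly what makes the certificate a priori rather than post-realization.

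For the retained-safe replacement, I would invoke \eqref{eq:supp_count_bound}, which gives $N_{\mathrm{supp}}\le N_{\mathrm{mono}}\le \eta_0r+C_{\mathrm{mono}}$. This shows that the choices $\eta_{\mathrm{supp},0}=\eta_0$ and $C_{\mathrm{supp}}=C_{\mathrm{mono}}$ are always valid. The step needing most care is confirming that \eqref{eq:cross_lower} holds on every window, including the endpoint and floor losses, for windows that cut monotone pieces. I would check that each partial monotone piece loses at most one count to the floor and at most one to suppression, which is what \eqref{eq:cross_lower} already encodes. Everything else is arithmetic.
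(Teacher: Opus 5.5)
Your proposal is correct and follows the paper's proof. The paper substitutes the class-uniform constants, $\nu_{min}^{(R)}\ge\nu_{\mathcal S}^{(R)}$ together with the count certificates of \eqref{eq:finite_R_count_constants}, into the already-derived bound \eqref{eq:finite_R_density_bound}, and takes the retained-safe replacement from \eqref{eq:supp_count_bound}; you instead re-derive that bound inline from \eqref{eq:cross_lower} and spell out the rearrangement via the positive denominator, which the paper leaves implicit.
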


\textit{Proof.}
The hypotheses give $\nu_{min}^{(R)}\ge\nu_{\mathcal{S}}^{(R)}$ for every $x\in\mathcal{S}$ and supply the uniform count constants in~\eqref{eq:finite_R_count_constants}. Substituting these class-level constants into~\eqref{eq:finite_R_density_bound} shows that~\eqref{eq:class_level_density_certificate} forces the retained event rate to exceed $2f_{max}$ on every certified window. The retained-safe replacement follows from~\eqref{eq:supp_count_bound}. \hfill $\square$

\textbf{Status of the corollary.}
This is a class-level sufficient certificate, not a density-only necessary-and-sufficient theorem. A record outside the displayed spacing bound may still pass a sharper post-realization weak-limit, Kadec, or finite-matrix test because of favorable threshold phase or redundant microscopic geometry. Conversely, without the uniform lower-velocity and topology assumptions on $\mathcal{S}$, the no-go construction in Section~\ref{sec:no_go} can create near-quiescent windows for which no fixed positive $\Delta V$ satisfies a prior-free retained-density guarantee.

\subsection{Consistency of the Kadec and Density Paths}
\label{sec:convergence}
We now state the precise relation between the local Kadec path and the long-window density path on monotone segments. The two criteria are not equivalent in general: a particular threshold phase or a favorable waveform shape may satisfy one test even when a conservative velocity bound is violated. What can be proved without additional microscopic assumptions is a common sufficient design scale. On monotone branches, the same velocity-to-threshold ratio that controls local Kadec displacement also forces the long-window retained event density above the Nyquist benchmark.

\begin{theorem}[Clean and retained-safe Kadec-density consistency]
\label{thm:kadec_density_convergence}
\textit{Let $x \in PW_\Omega$ be fixed, and let $[a,b] \subseteq [0, T_{\textup{obs}}]$ be a sub-interval contained in a single maximal monotone branch of $x$ with $v_{\mathrm{loc}} = \inf_{t \in [a,b]}|x'(t)| > 0$. Then the following two common sufficient regimes hold.}
\begin{enumerate}
    \item[\textup{(C)}] \textit{\textbf{Clean retained topology.} Suppose the nearest raw threshold hit selected in each centered window is retained, and suppose $N_{\mathrm{supp}}(t,r)=0$ on the density sub-intervals under consideration. If}
\begin{equation}
    \Delta V < \frac{v_{\mathrm{loc}}}{4f_{max}},
    \label{eq:kadec_density_clean_sufficient}
\end{equation}
\textit{then every centered half-Nyquist window $J_n(\tau_0)$ contained in $[a,b]$ contains a retained MT crossing at distance $<\Delta T/4$ from its center, and $n_{\mathrm{ret}}(t,r)/r>2f_{max}$ for all $[t,t+r]\subseteq[a,b]$ with $r\ge\Delta T$.}
    \item[\textup{(R)}] \textit{\textbf{Retained-safe topology.} Without clean-topology information, assume only the adjacent-transition convention. If}
\begin{equation}
    \Delta V < \frac{v_{\mathrm{loc}}}{12f_{max}},
    \label{eq:kadec_density_retained_sufficient}
\end{equation}
\textit{then the same two conclusions hold for the retained event count $n_{\mathrm{ret}}$.}
\end{enumerate}
\textit{Conversely, neither Kadec compatibility nor retained Nyquist-rate density alone implies either sufficient inequality without additional assumptions on the threshold phase and the full inter-crossing gap sequence.}
\end{theorem}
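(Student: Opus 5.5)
The plan is to derive both regimes (C) and (R) from the two earlier ingredients already set up for monotone branches. The local half uses Lemma~\ref{lem:retained_mvt_perturbation_monotone}, and the density half uses the retained count lower bound~\eqref{eq:cross_lower}. The converse is handled by explicit phase-aligned examples. Throughout, every centered window $J_n(\tau_0)\subseteq[a,b]$ lies inside the single monotone branch, so $v_n\ge v_{\mathrm{loc}}$ for each such window. Also, any subinterval $[t,t+r]\subseteq[a,b]$ satisfies $N_{\mathrm{mono}}(t,r)=1$.

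\textbf{Kadec half.} In case (C) the nearest raw threshold is retained, so $\eta_n^{(\mathrm{ret})}\le\Delta V/2$. Hypothesis~\eqref{eq:kadec_density_clean_sufficient} then gives
\[
\eta_n^{(\mathrm{ret})}\le\frac{\Delta V}{2}<\frac{v_{\mathrm{loc}}}{8f_{max}}=\frac{v_{\mathrm{loc}}\Delta T}{4}\le\frac{v_n\Delta T}{4}.
\]
This verifies the admissibility condition~\eqref{eq:retained_velocity_condition}. The lemma then yields a retained $s_n\in J_n(\tau_0)$ with
\[
|s_n-g_n|\le\frac{\Delta V}{2v_n}<\frac{1}{8f_{max}}=\frac{\Delta T}{4}.
\]
In case (R) I use the retained-safe envelope $\eta_n^{(\mathrm{ret})}\le 3\Delta V/2$ established in the lemma's proof. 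Hypothesis~\eqref{eq:kadec_density_retained_sufficient} gives $3\Delta V/2<v_{\mathrm{loc}}/(8f_{max})$, so the same two inequalities follow with the factor $3/2$ in place of $1/2$. The point needing care is that the one-sided search reaches the required level before leaving $J_n$. This follows because the range of $x$ over the half-window toward the selected side is at least $v_n\Delta T/4>3\Delta V/2$. That range crosses the nearest rung and, if it is suppressed, the next one.

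\textbf{Density half.} On a single monotone branch, $\int_t^{t+r}|x'|\ge v_{\mathrm{loc}}r$, and~\eqref{eq:cross_lower} gives $n_{\mathrm{ret}}\ge v_{\mathrm{loc}}r/\Delta V-1-N_{\mathrm{supp}}$. On a strictly monotone piece, at most one first hit is suppressed, so $N_{\mathrm{supp}}\le1$; in case (C), $N_{\mathrm{supp}}=0$ by hypothesis. Using $r\ge\Delta T$, so that $2f_{max}r\ge1$, the two cases are as follows.
\begin{itemize}
\item Case (C): $n_{\mathrm{ret}}>4f_{max}r-1\ge 2f_{max}r$.
\item Case (R): $n_{\mathrm{ret}}>12f_{max}r-2\ge 2f_{max}r$.
\end{itemize}
The inequalities are strict because the hypotheses on $\Delta V$ are strict. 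I would double-check the floor bookkeeping in~\eqref{eq:cross_lower} directly: the number of rungs strictly crossed is at least $\lfloor v_{\mathrm{loc}}r/\Delta V\rfloor$, which confirms the loss is at most one count plus the suppression.

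\textbf{Converse.} I would use locally near-linear segments with slope $\approx v$, realized in $PW_\Omega$ by a low-frequency sinusoid $A\sin(\omega_0 t)$ near a zero crossing, where the curvature is small.
\begin{itemize}
\item Kadec without the inequality: align the ladder phase and take $\Delta V\approx v\Delta T=v/(2f_{max})$. The crossings then sit essentially on the grid points, so $L\approx0$ and the set is Kadec compatible, yet~\eqref{eq:kadec_density_clean_sufficient} fails by a factor of $2$.
\item Density without the inequality: take $\Delta V$ slightly below $v/(2f_{max})$. The retained density then exceeds $2f_{max}$ while both inequalities fail.
\end{itemize}
By Propositions~\ref{prop:smooth_gap_variation} and~\ref{prop:structural_regularity_fixed_topology}, the perturbation from exact linearity moves crossings continuously and by amounts controlled by $\|x''\|_\infty/v^3$. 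Choosing $\omega_0$ small keeps the strict margins intact.

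I expect the main obstacle to be this converse step. It requires controlling the curvature error uniformly over the whole window so that the phase-aligned crossings stay within the Kadec margin, while keeping the retained topology fixed.
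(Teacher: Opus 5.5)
\paragraph{Main regimes.} Your proofs of (C) and (R) follow the paper's argument. For placement you use Lemma~\ref{lem:retained_mvt_perturbation_monotone} with $\alpha_n^{(\mathrm{ret})}\le 1/2$ or $3/2$. For the count you use \eqref{eq:cross_lower} with $N_{\mathrm{mono}}=1$ and $N_{\mathrm{supp}}=0$ or $\le 1$, then divide by $r\ge\Delta T$. You also check the lemma's admissibility condition \eqref{eq:retained_velocity_condition} explicitly, which the paper leaves implicit; that is a small improvement.

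\paragraph{The converse.} Here you take a different route. The paper argues only qualitatively. First, a favorable threshold phase can put crossings on the Nyquist anchors even when $v_{\mathrm{loc}}$ violates the bound. Second, a high average velocity over a long interval can keep the density above Nyquist while one short local gap breaks Kadec placement. Under their topology hypotheses the inequalities force placement, so that gap refutes them by contraposition. Your phase-aligned near-linear segment with gaps $\approx\Delta T$ is a concrete version of the first remark and works. One technical fix is needed: $A\sin(\omega_0 t)\notin L^2(\mathbb R)$, so it is not in $PW_\Omega$. Multiply it by a slowly varying bandlimited envelope such as $\mathrm{sinc}^2(\epsilon t)$ with $\omega_0+2\pi\epsilon\le\Omega$. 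The curvature control you flag is then routine on a fixed finite window.

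\paragraph{The density example.} This one fails under the theorem's own notion of density. Take near-uniform gaps $g$ just below $\Delta T$. A window $[t,t+\Delta T]$ that starts just after a crossing contains only one crossing, so $n_{\mathrm{ret}}(t,r)/r=2f_{max}$. The strict finite-window conclusion for all $r\ge\Delta T$ therefore fails. More generally, for near-uniform gaps that conclusion forces $g\lesssim\Delta T/2$, which is essentially $\Delta V\le v/(4f_{max})$. Your example is valid only for long-window density, which is how the paper's remark reads it. To refute the inequalities against the finite-window notion you need a different construction:
\begin{itemize}
\item For (R), gaps well inside $(\Delta T/6,\Delta T/2)$ suffice, for example $g\approx\Delta T/3$.
\item For (C), use the paper's mechanism: a localized dip of $|x'|$ that makes $v_{\mathrm{loc}}$ small, while the surrounding gaps stay short enough that every window of length $\Delta T$ still contains two retained crossings.
\end{itemize}
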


\textit{Proof.}

\textit{Clean branch.} In the clean retained topology, $\alpha_n^{(\mathrm{ret})}\le1/2$ and Lemma~\ref{lem:retained_mvt_perturbation_monotone} gives a retained crossing in each centered window with displacement at most $\Delta V/(2v_{\mathrm{loc}})<\Delta T/4$. For density, the monotone branch has $N_{\mathrm{mono}}(t,r)=1$ and $N_{\mathrm{supp}}(t,r)=0$, so~\eqref{eq:cross_lower} gives
\[
    n_{\mathrm{ret}}(t,r)\ge \frac{\mathrm{TV}(x;[t,t+r])}{\Delta V}-1
    \ge \frac{v_{\mathrm{loc}}r}{\Delta V}-1.
\]
Dividing by $r\ge\Delta T$ yields
\begin{multline*}
    \frac{n_{\mathrm{ret}}(t,r)}{r}
    \ge \frac{v_{\mathrm{loc}}}{\Delta V}-\frac{1}{r}
    > 4f_{max}-2f_{max}
    =2f_{max}.
\end{multline*}

\textit{Retained-safe branch.} Under only the adjacent-transition convention, $\alpha_n^{(\mathrm{ret})}\le3/2$, so $\Delta V<v_{\mathrm{loc}}/(12f_{max})$ gives the retained Kadec displacement bound $(3/2)\Delta V/v_{\mathrm{loc}}<\Delta T/4$. For density, a monotone sub-interval can lose one floor unit and at most one same-threshold return, hence
\[
    n_{\mathrm{ret}}(t,r)\ge \frac{v_{\mathrm{loc}}r}{\Delta V}-2.
\]
Since $r\ge\Delta T$, we have $2/r\le4f_{max}$, and therefore
\[
    \frac{n_{\mathrm{ret}}(t,r)}{r}
    \ge \frac{v_{\mathrm{loc}}}{\Delta V}-\frac{2}{r}
    > 12f_{max}-4f_{max}
    >2f_{max}.
\]
The final statement follows from the fact that Kadec compatibility and retained density depend on the realized threshold phase and the complete local gap pattern, whereas~\eqref{eq:kadec_density_clean_sufficient} and~\eqref{eq:kadec_density_retained_sufficient} use only the worst-case lower velocity $v_{\mathrm{loc}}$ and topology class. A favorable phase may place crossings near the Nyquist anchors even when the lower-velocity sufficient bound is exceeded, and a high average velocity over a long interval may keep the density above Nyquist while a short local gap violates Kadec placement. Thus the displayed inequalities are conservative common certificates rather than necessary conditions. \hfill $\square$

\textbf{Discussion.} For an arbitrary nonuniform sampling set $\Lambda$, local Kadec placement and long-window density are logically independent: a set can satisfy one without the other. A high-density set with one isolated gap violates Kadec placement while still satisfying density on sufficiently long windows. Conversely, a set that is well distributed over a short active region need not satisfy a global density condition. MT does not remove this independence completely, but it links both criteria to the same physical quantity: local signal velocity. The clean scale~\eqref{eq:kadec_density_clean_sufficient} and retained-safe scale~\eqref{eq:kadec_density_retained_sufficient} are therefore common sufficient guardrails, not sharp boundaries. The retained-safe scale is smaller because it budgets both larger Kadec displacement and one possible same-threshold suppression loss per monotone piece.

This consistency accounts for the empirical observation that density-based and displacement-based design rules for MT often give thresholds of the same order. The agreement is structural at the level of scaling, while exact admissibility remains governed by the microscopic realized crossing geometry.

\subsection{A Priori Strict-Density Certificate}
\label{sec:apriori_strict_density_certificate}
The finite-window certificates above are sufficient design rules. A scalar density statement becomes a necessary-and-sufficient theorem only after the retained MT time-event density has been calibrated a priori and the critical Landau boundary has been excluded. This subsection records that exact noncritical conversion. It should not be read as a global theorem for all finite-energy static MT records; Theorem~\ref{thm:no_go} rules that out. The appropriate settings are persistent streams, periodic extensions, finite active records equipped with padding or tail priors, and any MT-admissible class for which the exact density calibration below is part of the model.

\begin{definition}[MT strict-density admissibility]
\label{def:mvt_strict_density_admissibility}
Fix $\Delta V>0$ and set $\delta_V=1/\Delta V$. A retained MT instance is \emph{strict-density admissible at bandwidth $\Omega$} if its retained time set $\Lambda_{\Delta V}(x)$ is separated, if
\begin{equation}
    \nu_x^-
    =
    \liminf_{r\to\infty}\inf_{t\in\mathbb{R}}
    \frac{1}{r}\int_t^{t+r}|x'(\tau)|\,d\tau
    >0,
    \label{eq:mvt_nu_minus_def}
\end{equation}
and if a nonnegative retained-loss calibration functional $L_x(t,r)$ is supplied a priori. This object is not required to have a canonical pointwise formula such as $\delta_V\int_t^{t+r}|x'(\tau)|\,d\tau-n_{\mathrm{ret}}(t,r)$, because endpoint phases and finite-window initialization can create $O(1)$ fluctuations of either sign in that raw difference. Instead, $L_x(t,r)$ is part of the adopted retained-event model calibration: it may absorb endpoint floors, monotone-piece boundary floors, same-threshold suppression, finite-window initialization corrections, and other convention-dependent losses, as long as the asymptotic identity~\eqref{eq:exact_density_calibration} below has been proved for the retained map under consideration. Its lower-density loss rate is
\begin{equation}
    \ell_x
    =
    \limsup_{r\to\infty}\sup_{t\in\mathbb{R}}
    \frac{L_x(t,r)}{r}.
    \label{eq:retained_loss_density_def}
\end{equation}
Finite endpoint losses vanish asymptotically unless they occur with positive long-window frequency, for example through a positive density of monotone-piece boundaries or repeated retained-topology suppressions. The instance is strict-density admissible if, in addition, the exact density calibration
\begin{equation}
    D^-(\Lambda_{\Delta V}(x))
    =
    \nu_x^-\delta_V-\ell_x
    \label{eq:exact_density_calibration}
\end{equation}
holds, with $D^-$ as defined in~\eqref{eq:density_def}. The instance is \emph{noncritical} if
\begin{equation}
    \nu_x^-\delta_V-\ell_x\ne\frac{\Omega}{\pi}.
    \label{eq:noncritical_density}
\end{equation}
\end{definition}

\textbf{Logical status.}
This theorem is not a prior-free density theorem. It is a conversion result: once a model has exactly calibrated retained threshold traversal and retained losses into the actual lower time-event density, Landau necessity and Beurling strict-density sufficiency turn the noncritical scalar comparison into an iff. Without that calibration, scalar threshold density remains only a proxy for the retained event geometry.

\begin{theorem}[Noncritical calibrated-density conversion lemma]
\label{thm:apriori_strict_density_mvt}
\textit{Let $x$ be a noncritical strict-density-admissible retained MT instance in the sense of Definition~\ref{def:mvt_strict_density_admissibility}. Then the fixed set $\Lambda_{\Delta V}(x)$ is a stable sampling set for $PW_\Omega$, equivalently the retained labeled record gives stable recovery from its exact point values under Corollary~\ref{cor:mvt_acquisition_recovery}, if and only if}
\begin{equation}
    \delta_V
    >
    \frac{\Omega/\pi+\ell_x}{\nu_x^-}
    =
    \frac{2f_{max}+\ell_x}{\nu_x^-}.
    \label{eq:strict_density_threshold}
\end{equation}
\end{theorem}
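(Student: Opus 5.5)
The plan is to reduce the statement to the two classical density theorems recalled in Section~\ref{sec:preliminaries}, using the exact calibration~\eqref{eq:exact_density_calibration} as the only bridge between threshold density and time-event density. Write $D:=D^-(\Lambda_{\Delta V}(x))=\nu_x^-\delta_V-\ell_x$. Since $\nu_x^->0$ by~\eqref{eq:mvt_nu_minus_def}, the inequality~\eqref{eq:strict_density_threshold} is equivalent, after multiplying through by $\nu_x^-$ and rearranging, to $D>\Omega/\pi=2f_{max}$. The entire theorem therefore reduces to this claim: for the fixed separated set $\Lambda_{\Delta V}(x)$, stable sampling of $PW_\Omega$ holds if and only if $D>\Omega/\pi$. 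The equivalence with stable recovery from labeled values then follows from Corollary~\ref{cor:mvt_acquisition_recovery} and Theorem~\ref{thm:mvt_weak_limit_iff}(i).

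\emph{Sufficiency.} Assume $D>\Omega/\pi$. Separation is part of Definition~\ref{def:mvt_strict_density_admissibility}; for an MT-generated instance it is also supplied by~\eqref{eq:sep_bound}. Beurling's strict-density sufficient condition then applies directly and gives the frame inequality~\eqref{eq:frame}. The upper frame bound also follows independently from separation through Plancherel--P\'olya.

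\emph{Necessity.} Assume $\Lambda_{\Delta V}(x)$ is a stable sampling set. Landau's necessary condition gives $D\ge\Omega/\pi$. The noncriticality hypothesis~\eqref{eq:noncritical_density} states exactly that $D\neq\Omega/\pi$, so $D>\Omega/\pi$, which is~\eqref{eq:strict_density_threshold} after undoing the rearrangement.

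The only step that needs care, and the one I expect to be the real obstacle, is the use of Landau's theorem at the boundary. Landau gives only the non-strict inequality, and the critical case $D=\Omega/\pi$ genuinely cannot be decided by density alone: the Shannon grid samples, while the Shannon grid with one point removed does not. That is precisely why the noncritical hypothesis is imposed, and I would add a brief remark that without it the statement fails in both directions. I would also check that the $D^-$ appearing in the calibration identity is the full-line lower Beurling density~\eqref{eq:density_def}, not a windowed version. The density theorems require the bi-infinite notion, which is why the statement applies to persistent or extended geometries rather than finite-energy static records, where Theorem~\ref{thm:no_go} gives $D=0$.
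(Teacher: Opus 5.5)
Your proposal is correct and follows the paper's proof essentially step for step. The exact calibration \eqref{eq:exact_density_calibration}, together with $\nu_x^->0$, turns \eqref{eq:strict_density_threshold} into $D^-(\Lambda_{\Delta V}(x))>\Omega/\pi$; Beurling's strict-density theorem, using the separation built into Definition~\ref{def:mvt_strict_density_admissibility}, gives sufficiency; Landau's bound plus the noncritical hypothesis gives necessity; and the recovery clause comes from Corollary~\ref{cor:mvt_acquisition_recovery}.

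One small correction to your closing aside. At critical density the strict-inequality statement fails only in the ``only if'' direction, because the Shannon grid samples. The one-point-deleted grid shows that a non-strict version would instead fail in the ``if'' direction. So the accurate claim is that no density-only threshold decides the critical case, rather than that the statement fails in both directions.
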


\textit{Proof.}
Let $\Lambda=\Lambda_{\Delta V}(x)$. If $\Lambda$ is a stable sampling set for $PW_\Omega$, Landau's necessary density theorem gives
\[
    D^-(\Lambda)\ge\frac{\Omega}{\pi}.
\]
Using the exact calibration~\eqref{eq:exact_density_calibration},
\[
    \nu_x^-\delta_V-\ell_x\ge\frac{\Omega}{\pi}.
\]
The noncritical hypothesis~\eqref{eq:noncritical_density} excludes equality, hence
\[
    \nu_x^-\delta_V-\ell_x>\frac{\Omega}{\pi}.
\]
Since $\nu_x^->0$, division by $\nu_x^-$ gives~\eqref{eq:strict_density_threshold}.

Conversely, assume~\eqref{eq:strict_density_threshold}. Multiplying by $\nu_x^-$ and subtracting $\ell_x$ gives
\[
    \nu_x^-\delta_V-\ell_x>\frac{\Omega}{\pi}.
\]
By~\eqref{eq:exact_density_calibration}, this is exactly
\[
    D^-(\Lambda)>\frac{\Omega}{\pi}.
\]
Strict-density admissibility includes separation of $\Lambda$, so Beurling's strict-density theorem applies and gives the sampling frame inequality on $PW_\Omega$. The retained labels satisfy $x(t_n)=m_n\Delta V$, so the acquisition recovery statement follows from Corollary~\ref{cor:mvt_acquisition_recovery}. \hfill $\square$

\textbf{Remark (why the theorem is strict).}
The noncritical assumption cannot be removed from a density-only statement. At the boundary $D^-(\Lambda)=\Omega/\pi$, stable sampling depends on fixed-set geometry beyond scalar density: the Shannon grid samples at critical density, whereas deleting one grid point leaves the lower density unchanged but destroys uniqueness. The boundary case must therefore be certified by the weak-limit condition of Section~\ref{sec:weak_limit_condition}, by a Pavlov/Lyubarskii--Seip complete-interpolation test, or by another structure-specific certificate such as the periodic-gap theorem below.

\subsection{Conditional Design Inequalities}
\label{sec:design_ineq}
Using the finite-window certificate~\eqref{eq:finite_R_density_design}, the structure-calibrated asymptotic bound~\eqref{eq:density_bounds_improved}, and the Kadec-density consistency result above, we obtain scalar design rules only after the extremum-density certificate~\eqref{eq:zero_density_assumption} has been supplied. Thus the density rules below are not universal consequences of $x\in PW_\Omega$; they are conditional rules for non-superoscillatory signal instances or classes with certified extremum and suppression densities.

\textbf{Finite-window sufficient condition.} When the intended claim is that every certified window with $r\ge R$ has retained density above Nyquist, the operative design rule is~\eqref{eq:finite_R_density_design}. It includes the endpoint, floor, and suppression constants through $(C_{\mathrm{mono}}+C_{\mathrm{supp}})/R$ and should be used for finite active records. If the acquisition design does not include an independent suppression-topology prior, the appropriate finite-record rule is the retained-safe fallback~\eqref{eq:finite_R_retained_safe_fallback}, not the more optimistic formula with a smaller post-realization value of $\eta_{\mathrm{supp},0}$.

\textbf{Long-window sufficient scale.} In the regime where $R$ is large enough that the finite-window additive constants are negligible, imposing $\nu_{min}/\Delta V - \eta_0-\eta_{\mathrm{supp},0} > 2f_{max}$ gives the asymptotic design scale
\begin{equation}
    \Delta V < \frac{\nu_{min}}{2f_{max} + \eta_0+\eta_{\mathrm{supp},0}}.
    \label{eq:strict_suff_improved}
\end{equation}
For a single verified signal instance, $\eta_0$ and $\eta_{\mathrm{supp},0}$ may be replaced by $\eta_x$ and $\eta_{\mathrm{supp},x}$; in that case the formula is a post-realization long-window diagnostic unless those instance quantities were certified before acquisition. If no acquisition-side suppression-density certificate is available, the finite-window fallback is~\eqref{eq:finite_R_retained_safe_fallback}; after taking the long-window limit in which $2C_{\mathrm{mono}}/R$ is negligible, it reduces to
\begin{equation}
    \Delta V < \frac{\nu_{min}}{2f_{max} + 2\eta_0}.
    \label{eq:strict_suff_retained_safe}
\end{equation}
The clean-topology formula $\Delta V<\nu_{min}/(2f_{max}+\eta_0)$ is recovered only when $N_{\mathrm{supp}}(t,r)=0$ on the certified windows. If an independent model justifies $\eta_0\approx 2f_0$ for an effective oscillation scale $f_0$ and no cleaner suppression information is available, the retained-safe long-window scale becomes $\Delta V<\nu_{min}/(2f_{max}+4f_0)$. Without such certificates, none of these formulas should be read as a prior-free ladder-design rule.

\textbf{Leading-density benchmark.} In a certified high-density regime ($\nu_{min}/\Delta V \gg \eta_0+\eta_{\mathrm{supp},0}$), the turning-point and suppression corrections are lower order and the leading traversal-density comparison becomes
\begin{equation}
    \Delta V \lesssim \frac{\nu_{min}}{2f_{max}}.
    \label{eq:ultimate_ns}
\end{equation}
This is a benchmark for the leading event-rate scale, not an asymptotically sharp theorem over the whole Paley-Wiener class. Superoscillatory or high-turning-density instances can violate the assumptions under which the lower-order term is negligible.

\textbf{Condition-number-aware design.} Beyond ``reconstructable or not,'' the signal-instance frame bounds~\eqref{eq:signal_instance_frame_bounds}--\eqref{eq:condition_number_scaling} provide a continuous figure of merit. A designer can target a specific condition number $\kappa_{\text{target}}$ by choosing
\begin{equation}
    \Delta V \le \frac{v_{min}(\kappa_{\text{target}}^{1/2}-1)}
    {2\pi\alpha_{\mathrm{ret}} f_{max}(\kappa_{\text{target}}^{1/2}+1)}
    \approx
    \frac{v_{min}(\kappa_{\text{target}}-1)}
    {8\pi\alpha_{\mathrm{ret}} f_{max}\kappa_{\text{target}}}
    \label{eq:condition_design}
\end{equation}
for $\kappa_{\text{target}}$ close to $1$. (This follows from $L = 2\alpha_{\mathrm{ret}}\Delta V f_{max}/v_{min}$ and $\theta \approx \pi L$, with the MT condition number $\kappa_{MT} = ((1+\theta)/(1-\theta))^2$.) This converts the abstract sufficiency question into a quantitative signal-to-noise engineering trade-off: a target reconstruction SNR determines the required condition number, which in turn determines the minimum number of comparator levels through~\eqref{eq:condition_design}. Setting $\alpha_{\mathrm{ret}}=1/2$ recovers the clean fixed-topology formula; setting $\alpha_{\mathrm{ret}}=3/2$ gives the retained-safe design.

\textbf{Physical interpretation.} The finite-window rules~\eqref{eq:finite_R_density_design} and~\eqref{eq:finite_R_retained_safe_fallback}, the long-window scale~\eqref{eq:strict_suff_improved}, and the conditioning rule~\eqref{eq:condition_design} all impose the same physical demand: the ladder must be fine enough that even the slowest useful segment of motion triggers retained events at roughly the Nyquist rate. When a uniform statement over $r\ge R$ is needed, the finite-window losses must remain explicit. Compared with a generic density argument, the MT regularity analysis quantifies \emph{how much faster than Nyquist} the event rate must be to achieve a specified reconstruction quality. The velocity-calibrated perturbation bound then bridges the gap between ``just enough crossings'' and ``well-conditioned crossings,'' a refinement unavailable without the structural regularity of the crossing set.

\section{Boundary of the Static Finite-Energy Problem}
\label{sec:no_go}
The no-go theorem is not a weakness of the static-threshold theory; it is the boundary condition that prevents overclaiming. The positive results above are exact or sufficient only after a retained event geometry, a class hull, a structured prior, a tail prior, or a finite-dimensional model has been specified. We now prove that the unrestricted static finite-energy retained problem has no universal prior-free full-line solution.

\subsection{Signal Classes and Definitions}

\begin{definition}[Near-quiescent interval]
\label{def:near_quiescent_interval}
An interval $[a,b] \subseteq [0, T_{\text{obs}}]$ with $b - a \ge \Delta T/2$ is called \emph{$\epsilon$-quiescent} for $x \in PW_\Omega$ if
\begin{equation}
    \int_a^b |x'(t)|\,dt < \epsilon.
    \label{eq:near_quiescent}
\end{equation}
The condition~\eqref{eq:near_quiescent} measures total voltage travel on the interval. When $\epsilon = 0$ the interval is called \emph{exactly quiescent}; for a nonzero $x \in PW_\Omega$ this cannot occur by the identity theorem for analytic functions. The mathematically relevant obstruction is the existence of \emph{arbitrarily} near-quiescent intervals ($\epsilon \to 0^+$), on which motion is present but too weak to sustain a robust stream of threshold events.
\end{definition}

\begin{definition}[Universal static MT guarantee]
\label{def:universal_static_mvt_guarantee}
A threshold spacing $\Delta V > 0$ provides a \emph{universal static MT guarantee} on a signal class $\mathcal{X} \subseteq PW_\Omega$ if, for every $x \in \mathcal{X}$, the crossing set $\Lambda_{\Delta V}(x)$ is a stable sampling set for $PW_\Omega$, i.e., there exist $0 < A \le B < \infty$ (possibly depending on $x$) such that
\begin{equation}
    A\|f\|_{L^2}^2 \le \sum_{t_n \in \Lambda_{\Delta V}(x)} |f(t_n)|^2 \le B\|f\|_{L^2}^2
    \label{eq:universal_frame}
\end{equation}
for all $f \in PW_\Omega$.
\end{definition}

\subsection{The No-Go Theorem}

\begin{theorem}[Static finite-energy obstruction and local certificate failure]
\label{thm:no_go}
\textit{Fix $\Delta V>0$ and the recorded adjacent-transition convention.}
\begin{enumerate}
    \item[\textup{(i)}] \textit{For every finite-energy $x\in PW_\Omega$, the retained record generated by static, uniformly spaced thresholds has lower Beurling density zero. Hence the retained static MT record, without tail samples, external padding values, or a finite-dimensional prior, is not a sampling set for the full space $PW_\Omega$. In particular, no fixed $\Delta V>0$ provides a universal static MT guarantee on $PW_\Omega$ in the sense of Definition~\ref{def:universal_static_mvt_guarantee}.}
    \item[\textup{(ii)}] \textit{For any compact interval $I=[a,b]$ with $b-a\ge \Delta T/2$ and any amplitude level $A>0$, there exists $x_{I,A}\in PW_\Omega$ with $\|x_{I,A}\|_\infty=A$ whose retained MT crossing set contains no points in $I$. Thus a static threshold set can fail any local Kadec-window or sliding no-gap certificate on a prescribed interval even when the same waveform has arbitrarily large amplitude elsewhere.}
\end{enumerate}
\end{theorem}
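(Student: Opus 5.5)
For part~(i) I would show that the retained time set $\Lambda_{\Delta V}(x)$ is in fact finite, and then conclude with Landau's necessary condition.

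\textbf{Step 1: tail decay.} Since $\hat x\in L^1$ (it is in $L^2$ with compact support), the Riemann--Lebesgue lemma gives $x(t)\to0$ as $|t|\to\infty$. Hence there is $T_0$ such that $|x(t)|<\Delta V$ for all $|t|\ge T_0$. On this tail the waveform stays strictly inside the open cell $(-\Delta V,\Delta V)$, so the only ladder level it can reach is $0$.

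\textbf{Step 2: at most one tail event per side.} I expect this to be the main obstacle. The issue is that the raw crossings of level $0$ can be infinite in number, for example when $x$ oscillates through zero forever, like $\mathrm{sinc}$. I would handle it with the recursion \eqref{eq:retained_state_forward}. Once a retained event at level $0$ occurs in the right tail, every later raw event also has label $0$. It therefore never satisfies $|\ell_j-\ell_{j_k}|=1$, so the recursion stops and no further retained event appears.

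If instead the last retained state entering the tail is $\pm1$, at most one adjacent transition to $0$ is retained. The same argument applies to the left tail, run with the backward recursion from a fixed origin. So each tail contributes at most one retained event.

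\textbf{Step 3: finitely many events on the compact part.} On $[-T_0,T_0]$, the separation bound \eqref{eq:exact_mvt_separation} gives gaps at least $\Delta V/(\Omega\|x\|_\infty)$ between consecutive retained events. Hence the number of events there is at most $2T_0\Omega\|x\|_\infty/\Delta V+1$. (If $x\equiv0$, the record is empty.)

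\textbf{Step 4: conclusion of (i).} A finite set has $n(t,r)=0$ for windows far away, so $D^-(\Lambda_{\Delta V}(x))=0<2f_{max}$. By Landau's necessity, $\Lambda_{\Delta V}(x)$ is not a sampling set. More directly, a finite set cannot be a uniqueness set: there is a nonzero $f\in PW_\Omega$ vanishing on it, namely a sinc multiplied by a polynomial of the right type, or any nonzero element of the orthogonal complement of finitely many reproducing kernels. Either way the lower frame bound fails, which gives the non-universality statement of Definition~\ref{def:universal_static_mvt_guarantee}.

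\textbf{Part (ii): construction.} I would build a waveform that is large away from $I$ but moves by less than one threshold step over $I$. Let $c=(a+b)/2$, $\ell=b-a$, and $u(t)=A\,\mathrm{sinc}(\Omega(t-c)/(\pi L))^2$. This is bandlimited to $[-\Omega,\Omega]$ (squaring doubles the type $\Omega/L$, which stays within $\Omega$ for $L\ge2$, say), and it has $\|u\|_\infty=u(c)=A$. Taking $L$ large flattens $u$ on $I$:
\[
\sup_I|u-A|\le \|u''\|_\infty\,\ell^2/8\lesssim A\,(\Omega/L)^2\ell^2 .
\]

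\textbf{Part (ii): avoiding events.} A small variation alone is not enough, because the ladder phase matters. I would fix the phase by shifting the level. Choose $L$ so that the oscillation of $u$ on $I$ is below $\Delta V/4$. If $A$ lies within $\Delta V/4$ of a multiple of $\Delta V$, replace $u$ by $x=\beta u$ with $\beta\in(0,1]$ and then rescale the peak value back to $A$. A cleaner way to fix the phase is to add a second, far-translated sinc bump $\gamma\,\mathrm{sinc}(\cdot-T)$ with $T\gg\ell$. This moves the values on $I$ into the open cell interior while its sup stays below $A$, so that $\|x\|_\infty=A$ is still attained at $c$.

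Once $x(I)$ lies in an open cell $(m\Delta V,(m+1)\Delta V)$, there are no raw crossings in $I$ at all, so the retained set misses $I$. If $A$ itself is a multiple of $\Delta V$, the peak at $c$ is a tangential contact, which is excluded by convention, and the fallback $x(I)\subset[m\Delta V-\Delta V/4,m\Delta V]$ with the only contact at $c$ still produces no raw events.

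The bookkeeping to check is that the perturbation keeps the global maximum equal to $A$. I would arrange this by scaling the main bump last, so that $\|x\|_\infty$ is exactly $A$.
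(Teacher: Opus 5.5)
Your part (i) is essentially the paper's proof. Riemann--Lebesgue confines the tails to $(-\Delta V,\Delta V)$, the adjacent-transition recursion retains at most one zero-level event per tail, the separation bound makes the compact part finite, and Landau finishes. One small correction in the Step~4 aside: a single sinc times a nonconstant polynomial is not in $L^2$. You would need a higher power of a narrower sinc, but your orthogonal-complement argument already suffices there.

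Part (ii) uses a different construction from the paper, and as written its phase-fixing step does not work.
\begin{itemize}
\item Replacing $u$ by $\beta u$ and then rescaling the peak back to $A$ simply returns $u$.
\item The far-translated bump $\gamma\,\mathrm{sinc}(\cdot-T)$ with $\gamma<A$ adds only an $O(\gamma/|T-c|)$ perturbation on $I$. That perturbation changes sign across $I$ and also moves the value at $c$. You give no estimate showing it clears the offending level without creating new threshold contacts.
\item Rescaling the main bump afterwards to restore $\|x\|_\infty=A$ moves the values on $I$ again, so the bookkeeping is circular.
\end{itemize}

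The missing observation is that no phase correction is needed, because $L$ may be chosen after $A$ and $\Delta V$.
\begin{itemize}
\item Let $m$ be the integer with $m\Delta V<A\le(m+1)\Delta V$, and take $L$ so large that $\sup_I|u-A|<A-m\Delta V$. Then $u(I)\subseteq(m\Delta V,(m+1)\Delta V]$.
\item The only ladder level in this range is $(m+1)\Delta V$. It is reached only when $A=(m+1)\Delta V$, and then only at $c$.
\item At $c$ the function has a strict maximum with $u'(c)=0$, so $c$ is not a raw crossing. Hence $\Lambda_{\Delta V}(u)\cap I=\emptyset$.
\end{itemize}

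With that repair your route is valid, and it shows slightly more than required, since the full amplitude $A$ is attained inside $I$. The paper sidesteps the ladder phase entirely. It translates the full-bandwidth Fej\'er bump $A\,\mathrm{sinc}^2(\Omega(t-R)/(2\pi))$ far from $I$. On $I$ the waveform is then nonnegative and below $\Delta V$, and its only threshold contacts are the double zeros of $\mathrm{sinc}^2$, which are tangential minima.
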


\textit{Proof.}
Statement~(i) is the global finite-energy obstruction. Since $x\in PW_\Omega$, its Fourier transform is supported in $[-\Omega,\Omega]$ and belongs to $L^2[-\Omega,\Omega]$. By Cauchy--Schwarz, $\hat{x}\in L^1[-\Omega,\Omega]$, and the inverse Fourier representation together with the Riemann--Lebesgue lemma gives
\[
    x(t)\to0,\qquad |t|\to\infty.
\]
Choose $T>0$ so large that $|x(t)|<\Delta V$ whenever $|t|>T$. Outside $[-T,T]$, the waveform cannot cross any nonzero threshold $m\Delta V$, $m\ne0$. It may cross the zero threshold, so the retained state has to be checked. On the right tail $(T,\infty)$ all raw labels, if any, are equal to $0$. The first raw zero after $T$ may be retained if the previous retained label is different from $0$ or if the tail is being initialized as a one-sided record. After that possible event, the retained state is $0$, and every later raw zero is a same-threshold return. Thus the right tail contributes at most one retained event. The same argument applied backward in time gives the left tail bound. Hence only finitely many tail events can be retained outside the compact interval.

It remains to rule out infinitely many retained events inside $[-T,T]$. If $x\equiv0$, there are no retained events. Otherwise $\|x\|_\infty>0$. For two consecutive retained adjacent-transition times $t_n<t_{n+1}$, the retained threshold labels differ by one, so
\[
    |x(t_{n+1})-x(t_n)|=\Delta V.
\]
The mean value theorem and Bernstein's inequality give
\[
    \Delta V
    \le \|x'\|_\infty |t_{n+1}-t_n|
    \le \Omega\|x\|_\infty |t_{n+1}-t_n|,
\]
and therefore
\[
    |t_{n+1}-t_n|
    \ge
    \frac{\Delta V}{\Omega\|x\|_\infty}>0.
\]
A compact interval contains only finitely many points with this separation. Thus the entire retained static MT set is finite, and in particular
\[
    D^-(\Lambda_{\Delta V}(x))=0.
\]
Landau's necessary condition requires $D^-(\Lambda)\ge\Omega/\pi=2f_{max}$ for any sampling set of $PW_\Omega$, proving the global obstruction and excluding any universal fixed-$\Delta V$ finite-energy theorem over the full space.

For statement~(ii), fix $I=[a,b]$, choose $\epsilon\in(0,\Delta V)$, and let $A>0$. Choose a shift $R$ so far from $I$ that, uniformly for $t\in I$,
\[
    A\,\mathrm{sinc}^2\!\left(\frac{\Omega(t-R)}{2\pi}\right)
    <\frac{\epsilon}{4},
\]
and
\[
    A\left|\frac{d}{dt}
    \mathrm{sinc}^2\!\left(\frac{\Omega(t-R)}{2\pi}\right)\right|
    <\frac{\epsilon}{4(b-a)}.
\]
This is possible because the shifted Fej\'er bump and its derivative converge uniformly to zero on the fixed compact interval $I$ as $|R|\to\infty$. Define
\[
    x_{I,A}(t)=A\,\mathrm{sinc}^2\!\left(\frac{\Omega(t-R)}{2\pi}\right).
\]
Then $x_{I,A}\in PW_\Omega$ and $\|x_{I,A}\|_\infty=A$, but on $I$ the signal remains in the single threshold cell $[0,\Delta V)$ and has total variation less than $\epsilon/4<\Delta V$. Possible contacts with the zero level are tangential local minima and are not retained adjacent-transition crossings. Hence $\Lambda_{\Delta V}(x_{I,A})\cap I=\emptyset$. Since $I$ contains a centered Kadec window whenever $b-a\ge \Delta T/2$ after a suitable grid offset, the event record can miss an entire Nyquist neighborhood even though the same bandlimited signal may have arbitrarily large amplitude outside $I$. \hfill $\square$

\textbf{Remark (Subthreshold non-injectivity).}
The preceding argument is a post-realization sampling obstruction. Under the same retained topology, the nonlinear finite-energy acquisition map also fails to be globally injective unless a subthreshold exclusion prior is imposed. Choose $\epsilon\in(0,\Delta V)$ and set
\begin{equation}
    x_{\mathrm{bump}}(t) = \frac{\epsilon}{4}\,\mathrm{sinc}^2\!\left(\frac{\Omega\,(t - T_{\text{obs}}/2)}{2\pi}\right).
    \label{eq:nogo_construction}
\end{equation}
The square of $\mathrm{sinc}(\Omega u/(2\pi))$ has Fourier transform equal to a triangle supported on $[-\Omega,\Omega]$ (the Fej\'er kernel), hence $x_{\mathrm{bump}} \in PW_\Omega$. Moreover $x_{\mathrm{bump}}(t) \ge 0$ with maximum $x_{\mathrm{bump}}(T_{\text{obs}}/2) = \epsilon/4$, so
\[
    0 \le x_{\mathrm{bump}}(t) \le \frac{\epsilon}{4} < \frac{\Delta V}{4} < \Delta V.
\]
The adjacent threshold levels around the range of $x_{\mathrm{bump}}$ are $m=0$ (at the value $0$) and $m=1$ (at the value $\Delta V$). Since $\|x_{\mathrm{bump}}\|_\infty < \Delta V$, the upper level is never reached. The lower level $0$ is touched only at the isolated zeros $t = T_{\text{obs}}/2 + 2k\pi/\Omega$ ($k \neq 0$). At each such point, $x_{\mathrm{bump}}$ attains a local minimum and then returns to positive values without descending to the next lower threshold $-\Delta V$. By the monotone-crossing convention of~\eqref{eq:mvt_set}, which excludes same-threshold return crossings, these tangential zeros are not recorded events. Hence $\Lambda_{\Delta V}(x_{\mathrm{bump}}) = \emptyset$. The zero signal has the same empty retained record, so exact reconstruction over the whole finite-energy class is impossible even before stability is considered.

\textbf{Remark 1 (What the local construction does and does not prove).}
The local construction is a certificate-level obstruction: it shows that a prescribed Nyquist neighborhood may receive no event. A single finite empty window does not, by itself, prove that an otherwise redundant bi-infinite point set fails to be a frame. To turn local silence into global sampling failure one needs the finite-energy tail argument in statement~(i), or an infinite sequence of such gaps causing the relevant lower-density or relative-density condition to fail. The no-go mechanism therefore depends on the absence of any positive lower bound on short-window voltage excursion, not on amplitude scaling alone.

\textbf{Remark 2 (Density perspective).}
The density analysis is fully consistent with this theorem. On a near-quiescent interval of duration at least one Nyquist period, the mean absolute velocity becomes arbitrarily small, hence $\nu_{min} \approx 0$ and the long-window design rule~\eqref{eq:ultimate_ns} becomes impossible to satisfy with any fixed positive $\Delta V$.

\textbf{Remark 3 (Physical interpretation).}
The obstruction arises because sufficiently weak motion is observationally indistinguishable from no motion for a static threshold set, even though exact constancy is nongeneric. If the waveform remains within a narrow voltage strip for long enough, the comparator bank has no new threshold to trigger. Static MT sampling therefore requires auxiliary prior information, padding, periodic extension, finite-window restrictions, or a restricted signal class with a strictly positive lower bound on short-window variation. In practice, many relevant signals---audio, electrocardiograms, radar returns with inter-pulse gaps---may contain such low-activity regions, which helps delineate the regime in which static MT sampling can be certified.

\section{Conditional Hardware Perturbation Bounds}
\label{sec:hardware_error}
The preceding analysis assumes ideal timing, exact threshold levels, and a noiseless input. Real MT hardware introduces three classes of small perturbations: (i)~timing jitter from finite TDC resolution, (ii)~threshold imprecision from comparator offset and DAC nonlinearity, and (iii)~additive input noise. This section develops a conditional error theory for the regime in which those impairments perturb already-existing crossings without changing the event topology. The resulting inequalities specify how much Kadec margin is consumed by timing and voltage errors. They do not, by themselves, model spurious crossings, missed crossings, or threshold-index errors; those topology-changing events are separated explicitly in Subsection~\ref{sec:event_topology}.

\subsection{Impairment Model}
\label{sec:impairment_model}
We model the three impairments as deterministic worst-case perturbations (stochastic models can be substituted for average-case analysis).

\textbf{Timing jitter.} The TDC reports each crossing time with bounded error:
\begin{equation}
    \hat{t}_k = t_k + \epsilon_k^{(j)}, \qquad |\epsilon_k^{(j)}| \le \sigma_j,
    \label{eq:jitter_model}
\end{equation}
where $\sigma_j > 0$ is the worst-case TDC jitter (half the least significant bit of the TDC).

\textbf{Input noise.} The comparator bank sees $\tilde{x}(t) = x(t) + w(t)$ instead of the clean signal $x$, where $\|w\|_\infty \le \sigma_n$. This amplitude bound alone gives only a leading displacement scale; it does not guarantee that the event topology persists. The rigorous correspondence statement below uses the stronger local regularity hypothesis $e\in C^1$ with a small derivative on each certified crossing interval. In the event-correspondence regime, the noise waveform perturbs the time of a clean crossing satisfying $x(t_k) = \hat{V}_{m_k}$ to a nearby noisy crossing $\tilde{t}_k$ satisfying $x(\tilde{t}_k) + w(\tilde{t}_k) = \hat{V}_{m_k}$, where $\hat{V}_{m_k}$ is the threshold level reported to the reconstruction algorithm after calibration. By the implicit function theorem (on a monotone segment with $|x'| \ge v > 0$ and sufficiently small perturbation):
\begin{equation}
    |\tilde{t}_k - t_k| \le \frac{\sigma_n}{v_k} + \mathcal{O}\!\left(\frac{\|x''\|_\infty\sigma_n^2}{v_k^3}\right),
    \label{eq:noise_time_shift}
\end{equation}
where $v_k = |x'(\zeta_k)|$ for some $\zeta_k$ between $t_k$ and $\tilde{t}_k$.

\textbf{Threshold imprecision and calibration.} Each physical comparator threshold deviates from its nominal level:
\begin{equation}
    V_m^{(\mathrm{actual})} = m\Delta V + \bar{q}_m + r_m,
    \qquad |r_m| \le \sigma_q,
    \label{eq:threshold_imprecision}
\end{equation}
where $\bar{q}_m$ is the calibrated offset reported by a per-threshold calibration table and $r_m$ is the unreported residual offset. The bound $\sigma_q$ accounts for residual DAC integral nonlinearity, comparator offset voltage, and reference drift after calibration. In an uncalibrated implementation one sets $\bar{q}_m=0$ and interprets $r_m$ as the full threshold offset; the same formula applies, but $\sigma_q$ must then cover the full offset range rather than the post-calibration residual.

The observed MT data is therefore a set of timestamps and threshold indices $\{(\hat{t}_k,m_k)\}$ together with an optional calibrated threshold table $\{\bar{q}_m\}$. The value passed to reconstruction is
\begin{equation}
    \hat{y}_k = \hat{V}_{m_k}
    \triangleq m_k\Delta V+\bar{q}_{m_k}.
    \label{eq:reported_threshold_value}
\end{equation}
The residual $r_{m_k}$ is not reported event-by-event; it enters only through the error budget.

\textbf{Event-correspondence assumption.} The perturbative theory below assumes that, after any analog hysteresis or digital retained-transition filtering, the noisy event stream can be matched one-to-one with the clean selected subsequence:
\begin{enumerate}
    \item[\textup{(H1)}] each clean selected crossing $s_n$ has exactly one reported crossing $\hat{s}_n$ in a neighborhood of radius smaller than the available Kadec margin;
    \item[\textup{(H2)}] no additional reported crossing inside the active windows is used by the reconstruction algorithm;
    \item[\textup{(H3)}] the threshold index attached to $\hat{s}_n$ is the same index as the clean crossing, and the reconstruction algorithm uses the calibrated table value $\bar{q}_m$ when such a table is available.
\end{enumerate}
These hypotheses are checkable engineering conditions, not consequences of the bounded-amplitude noise model. They hold, for example, when comparator hysteresis, blanking time, and threshold-index consistency filters remove chatter. They also require threshold offsets to be either calibrated or bounded as residuals, and the clean crossing slope to be large enough that a unique implicit root persists.

\begin{proposition}[A sufficient isolation condition for event correspondence]
\label{prop:event_correspondence_isolation}
\textit{Let $s_n$ be a clean retained crossing of threshold $V_{m_n}$. Suppose that
there exist disjoint intervals $I_n=[s_n-r_n,s_n+r_n]$ such that, on each $I_n$,
$x(t)-V_{m_n}$ is strictly monotone and}
\begin{equation}
    |x'(t)|\ge v_n>0,
    \qquad t\in I_n .
    \label{eq:local_crossing_slope_margin}
\end{equation}
\textit{Assume further that outside the union of these intervals the clean waveform
has vertical separation}
\begin{equation}
    \operatorname{dist}\bigl(x(t),\Delta V\mathbb Z\bigr)
    \ge m_{\mathrm{out}}>0.
    \label{eq:outside_threshold_margin}
\end{equation}
\textit{Let the observed waveform be $x(t)+e(t)$ and let the residual threshold error
be bounded by $|\theta_m|\le \sigma_q$. Suppose that $e$ is continuously
differentiable on each $I_n$ and}
\[
    \|e\|_{\infty}+\sigma_q
    <
    \frac12 m_{\mathrm{out}},
    \qquad
    \|e\|_{\infty}+\sigma_q
    <
    \frac12 v_n r_n,
\]
\textit{and}
\begin{equation}
    \|e'\|_{L^\infty(I_n)}
    \le
    \frac12 v_n,
    \qquad \text{for all }n .
    \label{eq:noise_derivative_margin}
\end{equation}
\textit{Then each clean selected crossing persists as a unique perturbed crossing in
$I_n$, no additional threshold crossing is created outside the union of the $I_n$,
and}
\begin{equation}
    |\tilde s_n-s_n|
    \le
    \frac{\|e\|_{\infty}+\sigma_q}
    {v_n-\|e'\|_{L^\infty(I_n)}}
    \le
    \frac{2(\|e\|_{\infty}+\sigma_q)}{v_n}.
    \label{eq:isolated_crossing_shift_bound}
\end{equation}
\end{proposition}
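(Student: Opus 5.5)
The plan is to work with the perturbed crossing function $\phi_n(t)=x(t)+e(t)-V_{m_n}-\theta_{m_n}$ on each $I_n$. Here $V_{m_n}+\theta_{m_n}$ is the actual physical threshold level seen by the comparator. The proof has three steps: existence, uniqueness, and non-creation.

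\textbf{Existence and uniqueness inside $I_n$.} Write $\epsilon=\|e\|_\infty+\sigma_q$. Since $x-V_{m_n}$ is strictly monotone on $I_n$ and $|x'|\ge v_n$, the mean value theorem gives $|x(s_n\pm r_n)-V_{m_n}|\ge v_nr_n$, with opposite signs at the two endpoints. The hypothesis $\epsilon<\tfrac12 v_nr_n$ then shows that $\phi_n$ keeps these opposite signs at $s_n\pm r_n$, so the intermediate value theorem gives a root $\tilde s_n\in I_n$.

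For uniqueness, $x'$ has constant sign on $I_n$ (strict monotonicity together with $|x'|\ge v_n>0$ and continuity). Together with \eqref{eq:noise_derivative_margin}, this gives $|\phi_n'|\ge v_n-\|e'\|_{L^\infty(I_n)}\ge \tfrac12 v_n>0$ with a fixed sign. Hence $\phi_n$ is strictly monotone on $I_n$ and the root is unique.

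\textbf{Shift bound.} Apply the mean value theorem to $\phi_n$ between $s_n$ and $\tilde s_n$:
\[
|\phi_n(s_n)|=|\phi_n'(\zeta)|\,|\tilde s_n-s_n|\ge (v_n-\|e'\|_{L^\infty(I_n)})|\tilde s_n-s_n|.
\]
Since $x(s_n)=V_{m_n}$, we have $|\phi_n(s_n)|=|e(s_n)-\theta_{m_n}|\le\epsilon$. This yields the first inequality in \eqref{eq:isolated_crossing_shift_bound}, and the derivative margin gives the factor $2/v_n$.

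One point needs care. On $I_n$, other ladder levels could also be perturbed into crossings. I would rule this out using monotonicity and the fact that $x(I_n)$ has length at most $2r_n\|x'\|_\infty$. This does not follow from the stated hypotheses alone, so I will read the claim as concerning the selected threshold on $I_n$, and handle other levels only through the outside margin where it applies. The same issue arises at the comparator, where each threshold is a separate channel, so the statement concerns the level $m_n$.

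\textbf{No new crossings outside $\bigcup_n I_n$.} Take any $t$ outside the union and any level $m$. The physical threshold satisfies $|V_m^{(\mathrm{actual})}-m\Delta V|\le\sigma_q$ after calibration; the calibrated offset $\bar q_m$ is absorbed into the reported level. Then \eqref{eq:outside_threshold_margin} gives
\[
|x(t)+e(t)-V_m^{(\mathrm{actual})}|\ge m_{\mathrm{out}}-\|e\|_\infty-\sigma_q>\tfrac12 m_{\mathrm{out}}>0,
\]
so no crossing of any level occurs there.

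\textbf{Main obstacle.} I expect the hardest step to be the consistency of the threshold-offset model. Specifically, the outside margin is stated relative to the nominal lattice $\Delta V\mathbb Z$, while the perturbed levels include the residuals $\theta_m$. Bounding the residuals by $\sigma_q$ uniformly in $m$ makes both the outside-margin step and the endpoint-sign step go through with the combined budget $\|e\|_\infty+\sigma_q$. Continuity of $\phi_n$ on the closed intervals and disjointness of the $I_n$ then finish the argument: the perturbed crossings are in one-to-one correspondence with the clean ones, which gives (H1)–(H2), and they carry the same index, which gives (H3).
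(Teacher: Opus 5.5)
Your argument is correct and is essentially the paper's proof: the derivative margin makes $x+e-V_{m_n}-\theta_{m_n}$ strictly monotone on $I_n$, the $\tfrac12 v_nr_n$ budget preserves the endpoint signs so the intermediate value theorem gives existence and monotonicity gives uniqueness, a mean-value (inverse-function) estimate gives the shift bound, and the $\tfrac12 m_{\mathrm{out}}$ budget rules out crossings outside $\bigcup_n I_n$. The caveats you raise are also left tacit in the paper, which treats only level $m_n$ inside $I_n$ and implicitly measures $m_{\mathrm{out}}$ against the same clean levels to which $\theta_m$ is the residual; note that your remark $|V_m^{(\mathrm{actual})}-m\Delta V|\le\sigma_q$ holds only if the calibrated offsets $\bar q_m$ vanish or the margin is taken relative to $m\Delta V+\bar q_m$.
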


\textit{Proof.}
On $I_n$, the derivative of the perturbed crossing function
$x(t)+e(t)-V_{m_n}-\theta_{m_n}$ has the same sign as $x'(t)$, because
$\|e'\|_{L^\infty(I_n)}\le v_n/2$. Hence the perturbed crossing function is strictly
monotone on $I_n$. At the endpoints $s_n\pm r_n$, the clean vertical distance from
the threshold is at least $v_n r_n$. The perturbation magnitude is less than
$v_n r_n/2$, so the endpoint signs remain opposite. The intermediate value theorem
gives existence of one perturbed crossing, and strict monotonicity gives uniqueness.
The inverse-function estimate yields the first inequality in
\eqref{eq:isolated_crossing_shift_bound}; the second follows from
\eqref{eq:noise_derivative_margin}. Outside the union of the $I_n$, the clean
vertical threshold distance is at least $m_{\mathrm{out}}$, while the perturbation is
less than $m_{\mathrm{out}}/2$, so no additional threshold crossing can occur there.
\hfill $\square$

Without such an isolation or topology certificate, spurious, missed, or mislabeled
crossings are not small perturbations of the sampling set; they are changes of the
event topology and are outside the conditional perturbation theory of this paper.

\subsection{Topology-Changing Noise Events}
\label{sec:event_topology}
When (H1)--(H3) fail, the sampling set itself changes. A useful non-perturbative model decomposes the reported event set into matched, missed, and spurious components:
\begin{align*}
    \widehat{\Lambda}
    &=
    \mathcal{M}(\Lambda_{\Delta V})\;\dot{\cup}\;\Lambda_{\mathrm{spur}},\\
    \Lambda_{\mathrm{miss}}
    &=
    \Lambda_{\Delta V}\setminus
    \mathcal{M}^{-1}(\mathcal{M}(\Lambda_{\Delta V})).
\end{align*}
Here $\mathcal{M}$ is a partial matching from clean to reported crossings, $\Lambda_{\mathrm{spur}}$ are noise-driven events with no clean counterpart, and $\Lambda_{\mathrm{miss}}$ are clean events that were not reported. Stable reconstruction then requires a different certificate: after filtering or robust fitting, the retained matched set must still contain a Kadec-compatible subsequence, and the reconstruction algorithm must tolerate outliers. A deterministic sufficient condition is:
\begin{equation}
    \begin{aligned}
    &\exists\,\mathcal{J}\subseteq \mathcal{M}(\Lambda_{\Delta V})
    \ \text{such that}\\
    &\sup_{n\in\mathcal{J}}
    \frac{|\hat{s}_n-g_n|}{\Delta T}<\frac14,\\
    &|\Lambda_{\mathrm{spur}}\cap J_n|=0 .
    \end{aligned}
    \label{eq:event_topology_certificate}
\end{equation}
for the windows used in the Kadec extraction. If spurious events remain, the problem becomes an outlier-robust nonuniform sampling problem rather than a small-perturbation problem; one must use rejection rules based on threshold-index monotonicity, refractory/blanking intervals, hysteresis, or robust estimators. For stochastic front-end noise, the rate of such events should be modeled separately, for example through Rice crossing-rate formulae for Gaussian processes~\cite{Rice1944,Rice1945}. The rest of this section assumes the event-correspondence regime and therefore does not claim to bound topology-changing failures.

\subsection{Jitter- and Noise-Tolerant Kadec Condition}
\label{sec:robust_kadec}
The retained velocity-calibrated Kadec bound (Section~\ref{sec:velocity_kadec}) established that the perturbation from the Nyquist grid on monotone segments satisfies $|s_n - g_n| \le \alpha_n^{(\mathrm{ret})}\Delta V/v_n$. Each hardware impairment adds to this perturbation.

\begin{proposition}[Impairment-augmented perturbation bound under event correspondence]
\label{prop:impairment_augmented_perturbation}
On a monotone segment with local velocity $v_n > 0$, the total perturbation of the observed crossing from the Nyquist grid satisfies
\begin{equation}
    \begin{aligned}
    |\hat{s}_n - g_n|
    &\le
    \underbrace{\frac{\alpha_n^{(\mathrm{ret})}\Delta V}{v_n}}_{\text{retained MT geometry}}
    + \underbrace{\frac{\sigma_n}{v_n}}_{\text{input noise}}\\
    &\quad
    + \underbrace{\frac{\sigma_q}{v_n}}_{\text{threshold}}
    + \underbrace{\sigma_j}_{\text{TDC jitter}}.
    \end{aligned}
    \label{eq:total_perturbation}
\end{equation}
\noindent
up to second-order implicit-shift terms of size $\mathcal{O}(\|x''\|_\infty(\sigma_n+\sigma_q)^2/v_n^3)$.
\end{proposition}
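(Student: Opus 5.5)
The plan is a triangle-inequality decomposition. Under the event-correspondence hypotheses (H1)--(H3), write the observed crossing as
\[
    \hat s_n=\tilde s_n+\epsilon_n^{(j)}.
\]
Here $\tilde s_n$ is the unique crossing of the perturbed waveform $x+w$ with the actual threshold $V^{(\mathrm{actual})}_{m_n}=\hat V_{m_n}+r_{m_n}$. Its clean counterpart $s_n$ is the retained crossing of $x$ with the reported level $\hat V_{m_n}$, which Lemma~\ref{lem:retained_mvt_perturbation_monotone} selects in $J_n(\tau_0)$. (When $\bar q_m=0$, this level is just $m_n\Delta V$, and the calibration table only relabels the target level.) Then
\[
    |\hat s_n-g_n|\le |s_n-g_n|+|\tilde s_n-s_n|+|\epsilon_n^{(j)}|.
\]
The first term is bounded by~\eqref{eq:retained_velocity_perturbation}, giving $\alpha_n^{(\mathrm{ret})}\Delta V/v_n$. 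The last term is at most $\sigma_j$ by~\eqref{eq:jitter_model}. So everything reduces to bounding the implicit crossing shift $|\tilde s_n-s_n|$.

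For that shift I will define $\Phi(t,\lambda)=x(t)+\lambda\bigl(w(t)-r_{m_n}\bigr)-\hat V_{m_n}$ for $\lambda\in[0,1]$. This interpolates between the clean and perturbed crossing equations. Correspondence (H1) guarantees that a root branch $t(\lambda)$ exists and stays in the monotone neighborhood where $|x'|\ge v_n$. For a rigorous version I would invoke Proposition~\ref{prop:event_correspondence_isolation}, which supplies the needed isolation. The mean value theorem applied to $x$ between $s_n$ and $\tilde s_n$ gives
\[
    x'(\zeta_n)(\tilde s_n-s_n)=r_{m_n}-w(\tilde s_n).
\]
Hence $|\tilde s_n-s_n|\le(\sigma_n+\sigma_q)/|x'(\zeta_n)|$. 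Because $\zeta_n$ lies in the monotone segment, this is at most $(\sigma_n+\sigma_q)/v_n$ to first order. That produces the noise and threshold terms of~\eqref{eq:total_perturbation}, and they are additive because they enter the same scalar equation.

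The main obstacle is the second-order remainder. It comes from the fact that $v_n$ is defined as the infimum over $J_n$, while $\zeta_n$ may sit slightly outside $J_n$ once the root moves. I will handle this by expanding $\tau=x^{-1}$ to second order on the branch, exactly as in Proposition~\ref{prop:smooth_gap_variation}. That proposition gives $\tilde s_n-s_n=\tau'(\hat V)\delta+\tfrac12\tau''(\eta)\delta^2$ with $|\delta|\le\sigma_n+\sigma_q$. Since $|\tau''|\le\|x''\|_\infty/v^3$, this yields the stated $\mathcal O\bigl(\|x''\|_\infty(\sigma_n+\sigma_q)^2/v_n^3\bigr)$ correction.

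There is one complication with this expansion. It requires $w$ to be evaluated at the perturbed root, so strictly the level offset $\delta$ depends on $\tilde s_n$ itself. I will absorb this dependence by treating $w(\tilde s_n)$ as a fixed number bounded by $\sigma_n$, which is legitimate once the root is known to exist by correspondence. If $w$ is only bounded and not $C^1$, existence comes from (H1) rather than from the implicit function theorem, and the bound still follows from the mean value identity above.

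Summing the four contributions then gives~\eqref{eq:total_perturbation}.
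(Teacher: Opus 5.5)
Your proposal is correct and follows essentially the same route as the paper. You split $|\hat s_n-g_n|$ by the triangle inequality into the retained-geometry bound of Lemma~\ref{lem:retained_mvt_perturbation_monotone}, an implicit crossing shift of size $(\sigma_n+\sigma_q)/v_n$, and additive jitter $\sigma_j$. Your version is somewhat more careful than the paper's sketch: you freeze $w(\tilde s_n)-r_{m_n}$ as one level offset in a single mean-value identity on $x$, so $w$ need not be differentiable, and you make the second-order remainder explicit through the inverse-function expansion.
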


\textit{Proof.} The noise-shifted crossing adds $\sigma_n/v_n$ (Eq.~\eqref{eq:noise_time_shift}). The unreported threshold residual $r_m$ shifts the physical crossing level away from the reported calibrated level by $r_m$, which on a segment with velocity $v_n$ translates to a time shift of $|r_m|/v_n \le \sigma_q/v_n$. The TDC jitter adds $\sigma_j$ directly. These perturbations accumulate (worst case, same sign). \hfill $\square$

For a fully deterministic margin check, introduce a certified remainder bound
\[
    R_{2,n}\ge
    C_n^{(2)}\frac{\|x''\|_{L^\infty(I_n)}(\sigma_n+\sigma_q)^2}{v_n^3},
\]
with $C_n^{(2)}$ supplied by the local implicit-function smallness estimate. The leading-order formula below is valid when $R_{2,n}$ is either negligible or explicitly subtracted from the Kadec margin. With this understood, the condition $|\hat{s}_n - g_n| < \Delta T/4$ becomes:
\begin{equation}
    \frac{\alpha_n^{(\mathrm{ret})}\Delta V+\sigma_n+\sigma_q}{v_n} + \sigma_j+R_{2,n} < \frac{\Delta T}{4} = \frac{1}{8f_{max}}.
    \label{eq:robust_kadec}
\end{equation}

Solving the exact inequality with $\alpha_n^{(\mathrm{ret})}>0$ gives the sharp fixed-topology value. For a denominator-safe design rule that also covers exact-anchor windows, we replace $\alpha_n^{(\mathrm{ret})}$ by the clipped envelope $\bar{\alpha}_n^{(\mathrm{ret})}$ from~\eqref{eq:retained_alpha_clipped}:
\begin{equation}
    \boxed{\begin{aligned}
    \Delta V
    &<
    \frac{1}{\bar{\alpha}_n^{(\mathrm{ret})}}
    \left(\frac{v_n}{8f_{max}} - v_n\sigma_j - v_nR_{2,n} - \sigma_n - \sigma_q\right)\\
    &\triangleq \Delta V_{\mathrm{hw}}^*(v_n,\bar{\alpha}_n^{(\mathrm{ret})},R_{2,n}).
    \end{aligned}}
    \label{eq:hw_design}
\end{equation}

Within the event-correspondence regime, this is the \emph{hardware design inequality}: the admissible threshold spacing is the retained-geometry limit $v_n/(8\bar{\alpha}_n^{(\mathrm{ret})}f_{max})$ minus the hardware and implicit-shift penalty terms. It reduces to the earlier clean-topology expression when $\bar{\alpha}_n^{(\mathrm{ret})}=1/2$ and to the retained-safe expression when $\bar{\alpha}_n^{(\mathrm{ret})}=3/2$. If a window is exactly anchored, the realized MT geometry term in~\eqref{eq:robust_kadec} is zero, but the design rule still budgets it as a clean-topology window. The inequality has a positive right-hand side (i.e., the perturbative Kadec certificate remains possible) if and only if
\begin{equation}
    \sigma_j+R_{2,n} < \frac{1}{8f_{max}} - \frac{\sigma_n + \sigma_q}{v_n}.
    \label{eq:feasibility}
\end{equation}

\textbf{Physical interpretation.} Each impairment ``eats into'' the Kadec safety margin:
\begin{itemize}
    \item \textbf{TDC jitter} $\sigma_j$ directly reduces the tolerable perturbation budget, regardless of signal velocity. A TDC with resolution $\sigma_j = \Delta T/8$ consumes half the entire Kadec margin.
    \item \textbf{Input noise} $\sigma_n$ induces a velocity-dependent time shift $\sigma_n/v_n$: slow signals suffer more because noise-induced timing errors are amplified by $1/v_n$.
    \item \textbf{Threshold imprecision} $\sigma_q$ acts identically to input noise at the comparator level: a threshold offset of $\sigma_q$ is indistinguishable from an input DC offset of the same magnitude.
    \item \textbf{Implicit-shift remainder} $R_{2,n}$ is not an independent leading-order impairment; it is the certified higher-order time-shift budget left by the local implicit-function estimate.
\end{itemize}

\subsection{Noise-Aware Frame Bounds and Condition Number}
\label{sec:noisy_frame}
Substituting the impaired perturbation~\eqref{eq:total_perturbation} into the signal-instance Kadec geometry gives a local normalized perturbation for each selected window. We separate the leading-order diagnostic from the deterministic certificate. The leading-order local quantity is
\begin{equation}
    \begin{aligned}
    L_{\mathrm{hw},n}^{(1)}
    \triangleq
    \frac{(2\alpha_n^{(\mathrm{ret})}\Delta V
    + 2\sigma_n + 2\sigma_q)f_{max}}{v_n}\\
    &\quad + 2f_{max}\sigma_j .
    \end{aligned}
    \label{eq:L_hw_local_leading}
\end{equation}
The deterministic local perturbation used for certified inequalities is
\begin{equation}
    \frac{|\hat{s}_n-g_n|}{\Delta T}
    \le
    L_{\mathrm{hw},n}^{\mathrm{det}}
    \triangleq
    L_{\mathrm{hw},n}^{(1)}+2f_{max}R_{2,n}.
    \label{eq:L_hw_local}
\end{equation}
The global leading-order and deterministic perturbation constants over the selected active windows are
\begin{equation}
    \begin{aligned}
    L_{\mathrm{hw}}^{(1)}
    \triangleq
    \sup_{n\in\mathcal{I}} L_{\mathrm{hw},n}^{(1)}
    &\le
    \frac{(2\alpha_{\mathrm{hw}}^{(\mathrm{ret})}\Delta V
    + 2\sigma_n + 2\sigma_q)f_{max}}{v_{min}}\\
    &\quad + 2f_{max}\sigma_j ,
    \end{aligned}
    \label{eq:L_hw_leading}
\end{equation}
and
\begin{equation}
    \begin{aligned}
    L_{\mathrm{hw}}^{\mathrm{det}}
    &\triangleq
    \sup_{n\in\mathcal I} L_{\mathrm{hw},n}^{\mathrm{det}}
    \le
    L_{\mathrm{hw}}^{(1)}
    +2f_{max}R_{2,\mathrm{hw}},\\
    R_{2,\mathrm{hw}}
    &\triangleq
    \sup_{n\in\mathcal I}R_{2,n}.
    \end{aligned}
    \label{eq:L_hw}
\end{equation}
where $v_{min}=\inf_{n\in\mathcal{I}}v_n$ on the selected event-correspondence windows and $\alpha_{\mathrm{hw}}^{(\mathrm{ret})}=\sup_{n\in\mathcal{I}}\alpha_n^{(\mathrm{ret})}$. For a denominator-safe design envelope, use $\bar{\alpha}_{\mathrm{hw}}^{(\mathrm{ret})}=\sup_{n\in\mathcal{I}}\bar{\alpha}_n^{(\mathrm{ret})}$ in place of $\alpha_{\mathrm{hw}}^{(\mathrm{ret})}$. Only when $L_{\mathrm{hw}}^{\mathrm{det}}<1/4$ does the auxiliary Kadec-padded hardware sequence have the impaired frame bounds
\begin{equation}
    A_{\mathrm{hw}}^{\mathrm{det}} =
    \frac{(1-\theta(L_{\mathrm{hw}}^{\mathrm{det}}))^2}{\Delta T},
    \qquad
    B_{\mathrm{hw}}^{\mathrm{det}} =
    \frac{(1+\theta(L_{\mathrm{hw}}^{\mathrm{det}}))^2}{\Delta T},
    \label{eq:hw_frame_bounds}
\end{equation}
with $\theta(L) = 1 - \cos(\pi L) + \sin(\pi L)$ as before.

These constants are available only under the deterministic condition $L_{\mathrm{hw}}^{\mathrm{det}}<1/4$. They have the same status as the Nyquist-padded certificate in Subsection~\ref{sec:finite_bridge}: they are infinite-sequence frame constants and become reconstruction constants only under an external tail prior. For finite records, the corresponding stability constant is instead the smallest singular value of the hardware-perturbed sampling matrix defined in Theorem~\ref{thm:error_budget}. The auxiliary hardware-aware condition number is
\begin{equation}
    \kappa_{\mathrm{hw}}^{\mathrm{det}}
    =
    \frac{B_{\mathrm{hw}}^{\mathrm{det}}}{A_{\mathrm{hw}}^{\mathrm{det}}}
    =
    \left(\frac{1+\theta(L_{\mathrm{hw}}^{\mathrm{det}})}
    {1-\theta(L_{\mathrm{hw}}^{\mathrm{det}})}\right)^2
    \approx 1 + 4\pi L_{\mathrm{hw}}^{\mathrm{det}}
    \label{eq:hw_condition}
\end{equation}
for $L_{\mathrm{hw}}^{\mathrm{det}} \ll 1/4$.

The leading-order quantity $L_{\mathrm{hw}}^{(1)}$ remains useful for diagnostic plots and first-order sensitivity calculations. Certified frame statements use $L_{\mathrm{hw}}^{\mathrm{det}}$, which adds the implicit-shift remainder. For instance, improving TDC resolution from $\sigma_j$ to $\sigma_j/2$ reduces $L_{\mathrm{hw}}^{(1)}$ by $f_{max}\sigma_j$ and reduces the leading part of $\kappa_{\mathrm{hw}}^{\mathrm{det}}-1$ by approximately $4\pi f_{max}\sigma_j$, while the certified value also depends on the separately bounded $R_{2,\mathrm{hw}}$.

\begin{proposition}[Geometry-margin reconstruction error under matched timing perturbations]
\label{prop:geometry_margin_reconstruction_error}
\textit{Let $S=\{s_n\}$ be a clean selected MT subsequence paired with grid anchors
$g_n=\tau_0+n\Delta T$ and suppose}
\[
    |s_n-g_n|
    \le
    \left(\frac14-\rho\right)\Delta T,
    \qquad
    0<\rho<\frac14 .
\]
\textit{Assume event correspondence and let $\widehat S=\{\hat s_n\}$ satisfy}
\[
    \varepsilon_t
    \triangleq
    \sup_n|\hat s_n-s_n|
    <
    \rho\Delta T .
\]
\textit{Then $\widehat S$ remains Kadec-compatible. If $\widehat S$ has lower frame
bound $A_{\widehat S}$ and the reconstruction data are
$\hat y_n=x(\hat s_n)+e_n$, then the canonical frame reconstruction from
$(\hat s_n,\hat y_n)$ obeys}
\begin{equation}
    \|x-\widehat x\|_{L^2}
    \le
    A_{\widehat S}^{-1/2}
    \|(e_n)\|_{\ell^2}.
    \label{eq:hardware_reconstruction_error}
\end{equation}
\end{proposition}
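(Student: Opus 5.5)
The proof has two independent parts, one geometric and one operator-theoretic.

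\textbf{Kadec compatibility.} This is a triangle-inequality argument on the same grid anchors $g_n=\tau_0+n\Delta T$. For every $n$,
\[
    |\hat s_n-g_n|\le|\hat s_n-s_n|+|s_n-g_n|\le\varepsilon_t+\Bigl(\tfrac14-\rho\Bigr)\Delta T.
\]
Set $\rho'\triangleq\rho-\varepsilon_t/\Delta T>0$. Then $\sup_n|\hat s_n-g_n|\le(\tfrac14-\rho')\Delta T<\Delta T/4$. Event correspondence (H1)--(H3) is what keeps the indexing intact: each $\hat s_n$ is paired with the same anchor $g_n$, and no reported point is added or deleted. Kadec's $1/4$ theorem (Section~\ref{sec:kadec}, Appendix~\ref{app:constraints}) then says $\widehat S$ is a complete interpolating sequence for $PW_\Omega$. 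For a bi-infinite $S$ this gives a frame directly. For a finite selected subsequence, the same argument applies to the padded sequence of Subsection~\ref{sec:finite_bridge}, and one quotes the Kadec bounds with $L=\tfrac14-\rho'$, i.e.\ $A_{\widehat S}\ge(1-\theta(\tfrac14-\rho'))^2/\Delta T>0$.

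\textbf{Error bound.} Let $S_{\widehat S}f=(f(\hat s_n))_n$ be the sampling operator. By the frame inequality~\eqref{eq:frame} it is bounded and bounded below on $PW_\Omega$: $\|S_{\widehat S}f\|_{\ell^2}^2\ge A_{\widehat S}\|f\|^2$. Write the canonical frame reconstruction as $\widehat x=S_{\widehat S}^\dagger\hat y$, where $S^\dagger=(S^*S)^{-1}S^*$ is the Moore--Penrose left inverse. Since $\hat y=S_{\widehat S}x+e$ and $S^\dagger S=I$ on $PW_\Omega$, we get
\[
    x-\widehat x=-S^\dagger e.
\]
It remains to bound $\|S^\dagger\|$. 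We have $\|S^\dagger\|^2=\|(S^*S)^{-1}\|$, and the lower frame bound gives $S^*S\succeq A_{\widehat S}I$. Hence $\|S^\dagger\|\le A_{\widehat S}^{-1/2}$, which yields~\eqref{eq:hardware_reconstruction_error}. The bound holds for arbitrary $e\in\ell^2$, so it requires no assumption on the structure of the noise.

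\textbf{Main obstacle.} The operator step is routine. The delicate point is justifying that the perturbed set inherits a genuine frame bound, rather than just the hypothesis ``$\widehat S$ has lower frame bound $A_{\widehat S}$.'' Because the statement assumes $A_{\widehat S}$ explicitly, I would only remark that Kadec supplies a positive value. I would also stress that the conclusion is conditional on event correspondence, since the pairing $\hat s_n\leftrightarrow g_n$ is used in the triangle inequality.
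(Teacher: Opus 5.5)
Your proof is correct and follows the paper's own route: the triangle inequality against the same anchors $g_n$, then Kadec's $1/4$ theorem, then the error bound through the canonical dual frame, whose synthesis operator is exactly your $S^\dagger=(S^*S)^{-1}S^*$ with $\|S^\dagger\|^2=\|(S^*S)^{-1}\|\le A_{\widehat S}^{-1}$ (your explicit uniform margin $\rho'=\rho-\varepsilon_t/\Delta T>0$ slightly sharpens the paper's pointwise strict inequality). One caveat on your side remark: for a finite selected record, $(1-\theta(\tfrac14-\rho'))^2/\Delta T$ is a lower frame bound for the Nyquist-padded sequence of Subsection~\ref{sec:finite_bridge}, not for the finite set $\widehat S$ itself, which has no positive lower frame bound on $PW_\Omega$; the proposition should therefore be read in the bi-infinite or tail-prior setting.
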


\textit{Proof.}
The triangle inequality gives
\[
    |\hat s_n-g_n|
    \le
    |s_n-g_n|+|\hat s_n-s_n|
    <
    \Delta T/4,
\]
so Kadec's theorem gives a sampling frame for $\widehat S$ with lower bound
$A_{\widehat S}>0$. The reconstruction error is the synthesis of the sample-error
sequence through the canonical dual frame of $\widehat S$, whose operator norm is at
most $A_{\widehat S}^{-1/2}$. This proves~\eqref{eq:hardware_reconstruction_error}.
\hfill $\square$

The timing budget in Proposition~\ref{prop:geometry_margin_reconstruction_error} is
obtained from the hardware parameters by the leading-order bound
\begin{equation}
    \varepsilon_t
    \le
    \sigma_j
    +
    \sup_n\frac{\sigma_n+\sigma_q}{v_n}
    +
    \text{higher-order curvature terms}.
    \label{eq:hardware_timing_budget}
\end{equation}
Thus the fixed-topology hardware design rule is
\begin{equation}
    \sigma_j
    +
    \sup_n\frac{\sigma_n+\sigma_q}{v_n}
    +
    \text{higher-order terms}
    <
    \rho\Delta T.
    \label{eq:hardware_margin_rule}
\end{equation}

\subsection{Reconstruction Error Budget}
\label{sec:error_budget}
We now derive the end-to-end reconstruction error from the impaired sample data $\{(\hat{t}_k, m_k, \bar{q}_{m_k})\}$, with the convention $\bar{q}_{m_k}=0$ in the uncalibrated mode.

The reconstruction algorithm receives sample pairs $(\hat{t}_k, \hat{y}_k)$ where $\hat{y}_k = m_k\Delta V + \bar{q}_{m_k}$ is the reported crossing level from~\eqref{eq:reported_threshold_value}. Let $r_{m_k}=V_{m_k}^{(\mathrm{actual})}-\hat{y}_k$ denote the unreported residual threshold offset. At the observed position $\hat{t}_k$, the true signal value is
\begin{align}
    x(\hat{t}_k) &= x(t_k) + x'(t_k)(\hat{t}_k - t_k) + \mathcal{O}((\hat{t}_k-t_k)^2) \notag \\
    &= \hat{y}_k + r_{m_k}
    + x'(t_k)(\hat{t}_k - t_k) + \mathcal{O}((\hat{t}_k-t_k)^2).
    \label{eq:value_at_observed}
\end{align}
Hence the per-sample value error is
\begin{equation}
    e_k \triangleq \hat{y}_k - x(\hat{t}_k)
    = -r_{m_k} - x'(t_k)(\hat{t}_k - t_k)
    + \mathcal{O}((\hat{t}_k-t_k)^2).
    \label{eq:sample_error}
\end{equation}
If $|\hat t_k-t_k|\le\varepsilon_{t,k}$ and $\|x''\|_{L^\infty([t_k,\hat t_k])}\le M_{2,k}$, Taylor's theorem gives the explicit remainder bound
\begin{equation}
    |R_{v,k}|
    \le
    \frac12 M_{2,k}\varepsilon_{t,k}^2,
    \qquad
    e_k=-r_{m_k}-x'(t_k)(\hat t_k-t_k)+R_{v,k}.
    \label{eq:sample_error_remainder_bound}
\end{equation}
On the selected event-correspondence windows, write the deterministic remainder budget as
\begin{equation}
    E_R
    \triangleq
    \left(\sum_{k\in\mathcal I} R_{v,k}^2\right)^{1/2}.
    \label{eq:sample_remainder_l2}
\end{equation}

\begin{theorem}[Tail-prior and finite-dimensional error bounds under event correspondence]
\label{thm:error_budget}
\textit{Assume (H1)--(H3) and let $\mathbf{e}=(e_k)_{k\in\mathcal{I}}$ be the sample-value error vector from~\eqref{eq:sample_error}.}

\textit{(i) Tail-prior version. Suppose the matched hardware sequence satisfies the deterministic global Kadec condition $L_{\mathrm{hw}}^{\mathrm{det}}<1/4$, so that the auxiliary impaired frame lower bound $A_{\mathrm{hw}}^{\mathrm{det}}>0$ from~\eqref{eq:hw_frame_bounds} is available. If an external tail prior supplies the same finite-window qualification as in Subsection~\ref{sec:finite_bridge}, then the tail-conditional reconstruction $x_{\mathrm{tail}}$ obeys}
\begin{equation}
    \|x - x_{\mathrm{tail}}\|_{L^2}
    \le
    \frac{\|\mathbf{e}\|_{\ell^2(\mathcal{I})}}{\sqrt{A_{\mathrm{hw}}^{\mathrm{det}}}}
    + \mathcal{O}\!\left((E_{\mathrm{tail}}^{\mathrm{collar}})^{1/2}\right)
    + \mathcal{O}(S_{\mathrm{bdry}}^{1/2}).
    \label{eq:recon_error}
\end{equation}

\textit{(ii) Finite-dimensional version. Let $V_M=\operatorname{span}\{\phi_1,\ldots,\phi_M\}\subset PW_\Omega$ be the declared reconstruction space from Subsection~\ref{sec:finite_bridge}. Define the hardware-perturbed sampling matrix}
\begin{equation}
    H_{\mathrm{hw},M}
    =
    \bigl(\phi_m(\hat{s}_n)\bigr)_{n\in\mathcal{I},\,1\le m\le M}.
    \label{eq:finite_hw_matrix}
\end{equation}
\textit{If $\sigma_{\min}(H_{\mathrm{hw},M})>0$, then for $x=p+r$ with $p=P_{V_M}x$, least-squares reconstruction $\hat{p}$ on $V_M$ satisfies}
\begin{equation}
    \begin{aligned}
    \|x-\hat{p}\|_{L^2}
    \le{}& \|r\|_{L^2}
    +\sigma_{\min}(H_{\mathrm{hw},M})^{-1} \\
    &\times
    \left(\|\mathbf{e}\|_{\ell^2(\mathcal{I})}
    + \|(r(\hat{s}_n))_{n\in\mathcal{I}}\|_{\ell^2}\right).
    \end{aligned}
    \label{eq:finite_hw_recon_error}
\end{equation}
\textit{In both versions, the per-sample error $e_k$ from~\eqref{eq:sample_error} satisfies}
\begin{equation}
    \begin{aligned}
    |e_k| \le{}& \sigma_q
    + \|x'\|_\infty(\sigma_j + \sigma_n/v_k + \sigma_q/v_k) \\
    &+\frac12 M_{2,k}(\sigma_j+\sigma_n/v_k+\sigma_q/v_k)^2,
    \end{aligned}
    \label{eq:per_sample_bound}
\end{equation}
\textit{On a monotone segment with $N_{\mathrm{in}}$ samples and minimum velocity $v_{min}$:}
\begin{multline}
    \sqrt{\sum_k |e_k|^2} \le \sqrt{N_{\mathrm{in}}}\bigl(\sigma_q + 2\pi f_{max}\|x\|_\infty \sigma_j \\
    + \tfrac{2\pi f_{max}\|x\|_\infty}{v_{min}}(\sigma_n + \sigma_q)\bigr)
    +E_R.
    \label{eq:total_sample_error}
\end{multline}
The first term is the leading-order contribution; $E_R$ is the deterministic $\ell^2$ budget for the quadratic Taylor remainders.
\end{theorem}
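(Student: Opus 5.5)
The plan is to treat the statement as three linked estimates: a per-sample value-error bound, a finite-dimensional least-squares bound, and a tail-conditional frame bound. The per-sample bound \eqref{eq:per_sample_bound} comes first, since both versions use it.

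\textbf{Per-sample bound.} Start from the decomposition \eqref{eq:sample_error_remainder_bound}, $e_k=-r_{m_k}-x'(t_k)(\hat t_k-t_k)+R_{v,k}$. Then:
\begin{itemize}
\item Under (H1)--(H3) the index $m_k$ is correct, so $|r_{m_k}|\le\sigma_q$ by \eqref{eq:threshold_imprecision}.
\item The timing displacement $|\hat t_k-t_k|$ splits into the TDC error $\sigma_j$ from \eqref{eq:jitter_model}, the noise-induced implicit shift $\sigma_n/v_k$ from \eqref{eq:noise_time_shift}, and the residual-threshold shift $\sigma_q/v_k$. These are the same three contributions used in Proposition~\ref{prop:impairment_augmented_perturbation}, taken at worst-case equal sign.
\end{itemize}
Setting $\varepsilon_{t,k}=\sigma_j+(\sigma_n+\sigma_q)/v_k$, bound $|x'(t_k)|\le\|x'\|_\infty$ and use the Taylor remainder bound $\tfrac12M_{2,k}\varepsilon_{t,k}^2$; this gives \eqref{eq:per_sample_bound}. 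One caveat: the implicit-shift term in \eqref{eq:noise_time_shift} carries its own $O(\sigma_n^2)$ correction. I would either absorb it into $R_{v,k}$ or invoke Proposition~\ref{prop:event_correspondence_isolation} to get a fully deterministic shift.

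\textbf{Aggregate bound.} For \eqref{eq:total_sample_error}, apply Bernstein's inequality $\|x'\|_\infty\le 2\pi f_{max}\|x\|_\infty$ and replace $v_k$ by $v_{min}$. The leading part of the error vector then has every entry bounded by the same constant, so its $\ell^2$ norm is at most $\sqrt{N_{\mathrm{in}}}$ times that constant. Adding the remainder vector and applying Minkowski's inequality contributes the separate $E_R$ term.

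\textbf{Finite-dimensional version (ii).} Let $\mathbf c$ be the coefficient vector of $p$ in the orthonormal basis $\{\phi_m\}$, so $\|p\|_{L^2}=\|\mathbf c\|$. The data vector is then
\[
\hat{\mathbf y}=H_{\mathrm{hw},M}\mathbf c+(r(\hat s_n))_n+\mathbf e',
\]
where $\mathbf e'$ equals $\mathbf e$ up to sign convention. Since $\sigma_{\min}(H_{\mathrm{hw},M})>0$, the pseudoinverse is a left inverse, and
\[
\|\hat{\mathbf c}-\mathbf c\|\le\sigma_{\min}^{-1}\bigl(\|\mathbf e\|+\|(r(\hat s_n))\|\bigr).
\]
The triangle inequality $\|x-\hat p\|\le\|r\|+\|p-\hat p\|$ then gives \eqref{eq:finite_hw_recon_error}. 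This mirrors \eqref{eq:finite_model_error} with the clean matrix replaced by the hardware-perturbed one.

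\textbf{Tail-prior version (i).} This is where I expect the main obstacle, because the statement leaves the meaning of ``tail-conditional reconstruction'' somewhat open. I would define $x_{\mathrm{tail}}$ as the canonical dual-frame reconstruction on the padded impaired sequence $\{\hat s_n\}_{n\in\mathcal I}\cup\{g_n\}_{n\notin\mathcal I}$. This sequence is Kadec-compatible because $L^{\mathrm{det}}_{\mathrm{hw}}<1/4$, so it has lower frame bound $A_{\mathrm{hw}}^{\mathrm{det}}$. The reconstruction uses the measured values $\hat y_n$ inside the window and prior-supplied (for instance zero) values outside. The total data error then splits into two parts:
\begin{itemize}
\item the inside error $\mathbf e$;
\item the outside mismatch, whose squared $\ell^2$ norm is $S_{\mathrm{out}}$.
\end{itemize}
By \eqref{eq:tail_bound}, $S_{\mathrm{out}}\le C_{\Delta T,\Omega}E_{\mathrm{tail}}^{\mathrm{collar}}+S_{\mathrm{bdry}}$. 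As in Proposition~\ref{prop:geometry_margin_reconstruction_error}, the dual frame synthesis has norm at most $A^{-1/2}$. Combining this with $\sqrt{a+b}\le\sqrt a+\sqrt b$ yields \eqref{eq:recon_error}, with the $\mathcal O$ constants equal to $(C_{\Delta T,\Omega}/A_{\mathrm{hw}}^{\mathrm{det}})^{1/2}$ and $(A_{\mathrm{hw}}^{\mathrm{det}})^{-1/2}$.
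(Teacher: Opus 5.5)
Your proposal is correct and follows the paper's proof step for step. It uses the same split of $e_k$ into threshold residual, first-order timing term and Taylor remainder, with the conservative shift $\sigma_j+(\sigma_n+\sigma_q)/v_k$; Bernstein plus $v_k\ge v_{min}$ and the $\ell^2$ triangle inequality for the aggregate bound; the $1/\sigma_{\min}(H_{\mathrm{hw},M})$ pseudo-inverse bound for (ii); and the $1/\sqrt{A_{\mathrm{hw}}^{\mathrm{det}}}$ frame pseudo-inverse bound plus the padding/collar tail estimate for (i).

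Your zero-padded dual-frame reading of (i), with $S_{\mathrm{out}}$ controlled by \eqref{eq:tail_bound}, is a more explicit rendering of the paper's appeal to \eqref{eq:effective_lower}. The $\mathcal{O}(\sigma_n^2)$ caveat you raise disappears if $v_k$ is taken, as in \eqref{eq:noise_time_shift}, to be the speed at the mean-value point, since the mean value theorem then gives $|\tilde t_k-t_k|\le\sigma_n/v_k$ with no correction term.
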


\textit{Proof.} From~\eqref{eq:sample_error_remainder_bound},
$e_k$ is the sum of the reported threshold residual, the first-order timing term, and the Taylor remainder $R_{v,k}$. The time perturbation relative to the nominal selected crossing is bounded conservatively by $|\hat{t}_k - t_k| \le \sigma_j + \sigma_n/v_k + \sigma_q/v_k$: the last term accounts for the displacement of the physical crossing induced by the unreported threshold residual. Using $|x'(t_k)| \le \|x'\|_\infty \le 2\pi f_{max}\|x\|_\infty$ (Bernstein) for the jitter term, and bounding the velocity-dependent terms by replacing $v_k \ge v_{min}$, gives the displayed leading per-sample bound. Applying the $\ell^2$ triangle inequality to the leading error vector and the remainder vector, whose norm is $E_R$ by~\eqref{eq:sample_remainder_l2}, yields~\eqref{eq:total_sample_error}.

For (i), $L_{\mathrm{hw}}^{\mathrm{det}}<1/4$ gives the auxiliary frame lower bound $A_{\mathrm{hw}}^{\mathrm{det}}$ for the padded impaired sequence. The pseudo-inverse of a frame with lower bound $A_{\mathrm{hw}}^{\mathrm{det}}$ has operator norm $1/\sqrt{A_{\mathrm{hw}}^{\mathrm{det}}}$, and the finite-window tail contribution is exactly the external qualification already isolated in~\eqref{eq:effective_lower}. This gives~\eqref{eq:recon_error}. For (ii), the finite record is the linear system $H_{\mathrm{hw},M}\mathbf{a}=\mathbf{y}+\mathbf{e}$ on $V_M$. If $\sigma_{\min}(H_{\mathrm{hw},M})>0$, the least-squares pseudo-inverse has norm $1/\sigma_{\min}(H_{\mathrm{hw},M})$, giving the sample-error term in~\eqref{eq:finite_hw_recon_error}. The two residual terms are the model truncation error $\|r\|_{L^2}$ and the mismatch between the true signal and the model samples at the observed hardware times. \hfill $\square$

\begin{proposition}[Local finite-matrix perturbation certificate]
\label{prop:finite_matrix_perturb}
\textit{Let $H_M=(\phi_m(s_n))_{n\in\mathcal{I},\,1\le m\le M}$ be the clean finite sampling matrix on $V_M$, assume $H_M$ has full column rank, and let $H_{\mathrm{hw},M}=H_M+\delta H_M$ be the matrix built from the reported event times. Suppose $p=\sum_m a_m\phi_m\in V_M$ and the reported data, expressed relative to the clean finite model, satisfy}
\[
    \hat{\mathbf{y}} = H_M\mathbf{a}+\boldsymbol{\eta},
\]
\textit{where $\boldsymbol{\eta}$ collects threshold-value residuals, input-noise value residuals, and any retained model-sample mismatch. The reconstruction algorithm solves the perturbed least-squares system using $H_{\mathrm{hw},M}$. If}
\begin{equation}
    \mu_M \triangleq \|H_M^\dagger\|_2\,\|\delta H_M\|_2 < 1,
    \label{eq:matrix_perturb_condition}
\end{equation}
\textit{then the least-squares coefficient estimate $\hat{\mathbf{a}}=H_{\mathrm{hw},M}^\dagger\hat{\mathbf{y}}$ satisfies the local deterministic bound}
\begin{equation}
    \|\hat{\mathbf{a}}-\mathbf{a}\|_2
    \le
    \frac{\|H_M^\dagger\|_2}{1-\mu_M}
    \left(
        \|\boldsymbol{\eta}\|_2+\|\delta H_M\|_2\|\mathbf{a}\|_2
    \right).
    \label{eq:matrix_perturb_bound}
\end{equation}
\end{proposition}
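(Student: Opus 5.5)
The plan is a standard least-squares perturbation argument in which $H_{\mathrm{hw},M}$ is viewed as a small perturbation of the full-column-rank matrix $H_M$. First I establish that $H_{\mathrm{hw},M}$ also has full column rank. Weyl's inequality for singular values gives $\sigma_{\min}(H_{\mathrm{hw},M})\ge\sigma_{\min}(H_M)-\|\delta H_M\|_2$. Since $\|H_M^\dagger\|_2=1/\sigma_{\min}(H_M)$, the hypothesis~\eqref{eq:matrix_perturb_condition} can be rewritten as $\|\delta H_M\|_2<\sigma_{\min}(H_M)$. It follows that
\[
\sigma_{\min}(H_{\mathrm{hw},M})\ge\sigma_{\min}(H_M)(1-\mu_M)>0 .
\]
Equivalently,
\[
\|H_{\mathrm{hw},M}^\dagger\|_2\le \frac{\|H_M^\dagger\|_2}{1-\mu_M}.
\]
Full column rank also gives the exact left-inverse identity $H_{\mathrm{hw},M}^\dagger H_{\mathrm{hw},M}=I_M$.

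Second, I rewrite the data in terms of the perturbed matrix. From $\hat{\mathbf y}=H_M\mathbf a+\boldsymbol\eta$ and $H_M=H_{\mathrm{hw},M}-\delta H_M$ we get
\[
\hat{\mathbf y}=H_{\mathrm{hw},M}\mathbf a+\boldsymbol\eta-\delta H_M\mathbf a .
\]
Applying $H_{\mathrm{hw},M}^\dagger$ and using the left-inverse identity gives
\[
\hat{\mathbf a}-\mathbf a=H_{\mathrm{hw},M}^\dagger\bigl(\boldsymbol\eta-\delta H_M\mathbf a\bigr).
\]

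Third, I take norms. Submultiplicativity and the triangle inequality give
\[
\|\hat{\mathbf a}-\mathbf a\|_2\le\|H_{\mathrm{hw},M}^\dagger\|_2\bigl(\|\boldsymbol\eta\|_2+\|\delta H_M\|_2\|\mathbf a\|_2\bigr).
\]
Inserting the bound on $\|H_{\mathrm{hw},M}^\dagger\|_2$ from the first step yields exactly~\eqref{eq:matrix_perturb_bound}.

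The only step needing care is the first one, because it requires the perturbed matrix to keep full column rank. Without full rank, $H_{\mathrm{hw},M}^\dagger H_{\mathrm{hw},M}$ would be an orthogonal projector onto a proper subspace rather than the identity. The condition $\mu_M<1$ is what rules this out. If one prefers to avoid citing Weyl's inequality, the same bound follows from the direct estimate
\[
\|H_{\mathrm{hw},M}\mathbf c\|_2\ge\|H_M\mathbf c\|_2-\|\delta H_M\mathbf c\|_2\ge\bigl(\sigma_{\min}(H_M)-\|\delta H_M\|_2\bigr)\|\mathbf c\|_2 ,
\]
valid for every coefficient vector $\mathbf c$. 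No truncation or tail terms are involved, because the statement is purely finite-dimensional on $V_M$.
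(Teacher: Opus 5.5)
Your proposal is correct and follows the paper's proof essentially step for step. The paper also uses Weyl's inequality to get $\sigma_{\min}(H_{\mathrm{hw},M})\ge(1-\mu_M)/\|H_M^\dagger\|_2$, then rewrites $\hat{\mathbf y}=H_{\mathrm{hw},M}\mathbf a+(\boldsymbol\eta-\delta H_M\mathbf a)$, applies the left-inverse identity, and takes norms; your direct lower bound on $\|H_{\mathrm{hw},M}\mathbf c\|_2$ is simply an elementary substitute for citing Weyl.
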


\textit{Proof.}
By Weyl's singular-value perturbation inequality,
\[
    \sigma_{\min}(H_{\mathrm{hw},M})
    \ge
    \sigma_{\min}(H_M)-\|\delta H_M\|_2
    =
    \frac{1-\mu_M}{\|H_M^\dagger\|_2}.
\]
Thus $H_{\mathrm{hw},M}$ has full column rank and
\[
    \|H_{\mathrm{hw},M}^\dagger\|_2
    =
    \frac{1}{\sigma_{\min}(H_{\mathrm{hw},M})}
    \le
    \frac{\|H_M^\dagger\|_2}{1-\mu_M}.
\]
Since $\hat{\mathbf{y}}=H_M\mathbf{a}+\boldsymbol{\eta}
=H_{\mathrm{hw},M}\mathbf{a}+(\boldsymbol{\eta}-\delta H_M\mathbf{a})$ and $H_{\mathrm{hw},M}^\dagger H_{\mathrm{hw},M}=I$ on the full-column-rank model space,
\[
    \hat{\mathbf{a}}-\mathbf{a}
    =
    H_{\mathrm{hw},M}^\dagger
    (\boldsymbol{\eta}-\delta H_M\mathbf{a}).
\]
Taking norms and using the preceding bound on $\|H_{\mathrm{hw},M}^\dagger\|_2$ gives~\eqref{eq:matrix_perturb_bound}. \hfill $\square$

Proposition~\ref{prop:finite_matrix_perturb} is often much tighter than the Kadec-style worst-case constants because it uses the actual finite matrix and the observed perturbation geometry. It is not a replacement for the global event-geometry certificate: it is a finite-record acceptance test once a candidate event set and model dimension have been chosen.

\subsection{Explicit Hardware Specifications from Target SNR}
\label{sec:hw_specs}
The error budget~\eqref{eq:recon_error}--\eqref{eq:total_sample_error} enables a direct translation from a reconstruction performance target to hardware specifications, but the target norm must match the reconstruction model. Define the full-signal reconstruction signal-to-noise ratio:
\begin{equation}
    \mathrm{SNR}_{\mathrm{recon}} = \frac{\|x\|_{L^2}^2}{\|x - x_{\mathrm{rec}}\|_{L^2}^2}.
    \label{eq:snr_def}
\end{equation}
For finite-dimensional least squares on $V_M$, the hardware-only part is instead naturally measured by the model-space ratio
\begin{equation}
    \mathrm{SNR}_{M}
    =
    \frac{\|p\|_{L^2}^2}{\|p-\hat{p}\|_{L^2}^2},
    \qquad p=P_{V_M}x.
    \label{eq:model_snr_def}
\end{equation}
The two coincide only when the model residual is negligible or is explicitly budgeted.

Define
\[
    C=2\pi f_{max}\|x\|_\infty,
    \qquad
    \beta = \frac{C}{v_{min}},
\]
where $\beta$ is the velocity amplification factor governing the noise sensitivity of slow signal segments.

\begin{corollary}[Deterministic hardware specification rules with leading-order split]
\label{cor:hw_specs}
Under the event-correspondence assumptions, let $S_{\mathrm{tar}}$ denote the target SNR in linear scale and use the deterministic equal worst-case split in which the three leading-order contributions---jitter, input noise, and threshold residual---share the perturbative error amplitude after the Taylor remainder budget $E_R$ has been reserved.

\textit{(a) Tail-prior full-signal SNR.} For the tail-prior version of Theorem~\ref{thm:error_budget}, let $E_{\mathrm{tc}}\ge0$ denote the externally assigned tail/collar reconstruction-error budget from~\eqref{eq:recon_error}; set $E_{\mathrm{tc}}=0$ only when those terms are handled outside the displayed hardware budget. Define
\begin{equation}
    \Gamma_A^{\mathrm{det}}
    =
    \sqrt{A_{\mathrm{hw}}^{\mathrm{det}}}
    \left(\frac{\|x\|_{L^2}}{\sqrt{S_{\mathrm{tar}}}}-E_{\mathrm{tc}}\right)
    -E_R,
    \qquad
    \gamma_A^{\mathrm{det}}=
    \frac{\Gamma_A^{\mathrm{det}}}{3\sqrt{N_{\mathrm{in}}}}.
    \label{eq:gamma_A_det}
\end{equation}
If $\Gamma_A^{\mathrm{det}}>0$, the following simultaneous sufficient conditions imply $\mathrm{SNR}_{\mathrm{recon}}\ge S_{\mathrm{tar}}$:
\begin{align}
    \text{TDC:} &\quad \sigma_j \le \frac{\gamma_A^{\mathrm{det}}}{C}, \label{eq:tdc_spec}\\[4pt]
    \text{Noise:} &\quad \sigma_n \le \frac{\gamma_A^{\mathrm{det}}}{\beta}, \label{eq:noise_spec}\\[4pt]
    \text{DAC:} &\quad \sigma_q \le \frac{\gamma_A^{\mathrm{det}}}{1+\beta}. \label{eq:dac_spec}
\end{align}
If the hardware perturbation is kept in the stronger interior Kadec regime
\begin{equation}
    \theta(L_{\mathrm{hw}}^{\mathrm{det}})\le 1-\frac{1}{\sqrt{2}},
    \label{eq:strong_hw_margin}
\end{equation}
then $A_{\mathrm{hw}}^{\mathrm{det}}\ge 1/(2\Delta T)$. A simpler explicit sufficient split is obtained by replacing $\gamma_A^{\mathrm{det}}$ with
\begin{equation}
    \gamma_{A,\mathrm{simp}}^{\mathrm{det}}
    =
    \frac{1}{3\sqrt{N_{\mathrm{in}}}}
    \left(
    \frac{\|x\|_{L^2}/\sqrt{S_{\mathrm{tar}}}-E_{\mathrm{tc}}}{\sqrt{2\Delta T}}
    -E_R
    \right),
    \label{eq:gamma_A_simple_det}
\end{equation}
provided the right-hand side is positive:
\begin{align}
    \sigma_j &\le \frac{\gamma_{A,\mathrm{simp}}^{\mathrm{det}}}{C}, \label{eq:tdc_spec_simple}\\
    \sigma_n &\le \frac{\gamma_{A,\mathrm{simp}}^{\mathrm{det}}}{\beta}, \label{eq:noise_spec_simple}\\
    \sigma_q &\le \frac{\gamma_{A,\mathrm{simp}}^{\mathrm{det}}}{1+\beta}. \label{eq:dac_spec_simple}
\end{align}
The simplified version is not implied by the bare deterministic Kadec condition $L_{\mathrm{hw}}^{\mathrm{det}}<1/4$; near the boundary $A_{\mathrm{hw}}^{\mathrm{det}}$ tends to zero and the actual-$A_{\mathrm{hw}}^{\mathrm{det}}$ formulas above must be used.
Because $A_{\mathrm{hw}}^{\mathrm{det}}$ itself depends on the proposed impairment levels through $L_{\mathrm{hw}}^{\mathrm{det}}$, the primary tail-prior specifications are implicit design inequalities: a candidate hardware budget must first be inserted into~\eqref{eq:L_hw} and~\eqref{eq:hw_frame_bounds}, and then checked against~\eqref{eq:tdc_spec}--\eqref{eq:dac_spec}. The stronger-margin simplification~\eqref{eq:strong_hw_margin} is a convenient explicit sufficient regime.

\textit{(b) Finite-dimensional model-space SNR.} For reconstruction on $V_M$, let $s_M=\sigma_{\min}(H_{\mathrm{hw},M})$. If the declared model residual is excluded from the hardware budget, the following simultaneous sufficient conditions imply $\mathrm{SNR}_{M}\ge S_{\mathrm{tar}}$:
\begin{equation}
    \Gamma_M^{\mathrm{det}}
    =
    \frac{s_M\|p\|_{L^2}}{\sqrt{S_{\mathrm{tar}}}}
    -E_R,
    \qquad
    \gamma_M^{\mathrm{det}}
    =
    \frac{\Gamma_M^{\mathrm{det}}}{3\sqrt{N_{\mathrm{in}}}}.
    \label{eq:gamma_M_det}
\end{equation}
Assume $\Gamma_M^{\mathrm{det}}>0$. Then
\begin{align}
    \text{TDC:} &\quad
    \sigma_j \le
    \frac{\gamma_M^{\mathrm{det}}}{C},
    \label{eq:tdc_spec_finite}\\[4pt]
    \text{Noise:} &\quad
    \sigma_n \le
    \frac{\gamma_M^{\mathrm{det}}}{\beta},
    \label{eq:noise_spec_finite}\\[4pt]
    \text{DAC:} &\quad
    \sigma_q \le
    \frac{\gamma_M^{\mathrm{det}}}{1+\beta}.
    \label{eq:dac_spec_finite}
\end{align}
These are model-space hardware specifications; the corresponding full-signal SNR must also account for the residual component $r$ in $x=p+r$.

\textit{(c) Finite-dimensional full-signal SNR with residual budget.} Define the residual allowance
\begin{equation}
    E_M
    \triangleq
    \|r\|_{L^2}
    +
    s_M^{-1}\|(r(\hat{s}_n))_{n\in\mathcal{I}}\|_{\ell^2}.
    \label{eq:model_residual_budget}
\end{equation}
If $E_M < \|x\|_{L^2}/\sqrt{S_{\mathrm{tar}}}$, then the same finite-dimensional derivation gives a full-signal guarantee by replacing $\|p\|_{L^2}/\sqrt{S_{\mathrm{tar}}}$ with the remaining amplitude budget
\begin{equation}
    R_M(S_{\mathrm{tar}})
    =
    \frac{\|x\|_{L^2}}{\sqrt{S_{\mathrm{tar}}}} - E_M .
    \label{eq:full_signal_remaining_budget}
\end{equation}
Equivalently, in~\eqref{eq:gamma_M_det} replace $s_M\|p\|_{L^2}/\sqrt{S_{\mathrm{tar}}}$ by $s_M R_M(S_{\mathrm{tar}})$ to obtain sufficient full-signal hardware budgets. If $s_MR_M(S_{\mathrm{tar}})-E_R\le0$, no choice of TDC, front-end-noise, or threshold-residual budget can certify $\mathrm{SNR}_{\mathrm{recon}}\ge S_{\mathrm{tar}}$ within the declared reconstruction space; the model space, observation window, or deterministic remainder budget must be changed.

\textit{Derivation.} The deterministic sample-error norm~\eqref{eq:total_sample_error} consists of the three leading additive components plus the reserved Taylor remainder budget:
\begin{align*}
    &C\sigma_j,\qquad \beta\sigma_n,\qquad (1+\beta)\sigma_q,\\
    &E_R.
\end{align*}
For the tail-prior case, the amplification factor is $1/\sqrt{A_{\mathrm{hw}}^{\mathrm{det}}}$. The available sample-error budget is therefore $\sqrt{A_{\mathrm{hw}}^{\mathrm{det}}}(\|x\|_{L^2}/\sqrt{S_{\mathrm{tar}}}-E_{\mathrm{tc}})$, and $E_R$ must be reserved before the three leading components are split equally. This gives~\eqref{eq:gamma_A_det}--\eqref{eq:dac_spec}. The simplified rules~\eqref{eq:tdc_spec_simple}--\eqref{eq:dac_spec_simple} follow only after imposing~\eqref{eq:strong_hw_margin}, which gives $A_{\mathrm{hw}}^{\mathrm{det}}\ge1/(2\Delta T)$. For model-space finite reconstruction, the amplification factor is $1/s_M$ and the target norm is $\|p\|_{L^2}$, giving~\eqref{eq:gamma_M_det}--\eqref{eq:dac_spec_finite}. For full-signal finite reconstruction, Theorem~\ref{thm:error_budget} separates the error into the hardware-amplified sample error and the residual allowance $E_M$; the hardware part must fit inside the remaining budget~\eqref{eq:full_signal_remaining_budget} after $E_R$ is reserved. In the high-amplification regime $\beta\gg 1$, the DAC bound reduces asymptotically to the same scale as the noise bound. \hfill $\square$
\end{corollary}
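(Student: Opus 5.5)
The corollary follows by inverting the sample-error bound of Theorem~\ref{thm:error_budget}: each SNR target becomes an allowed $\ell^2$ sample-error amplitude, and that amplitude is then split equally among the three hardware terms. For part~(a) I would start from \eqref{eq:recon_error}, which bounds $\|x-x_{\mathrm{tail}}\|_{L^2}$ by $\|\mathbf e\|_{\ell^2}/\sqrt{A_{\mathrm{hw}}^{\mathrm{det}}}$ plus the tail and collar terms. I would absorb those two terms into the assigned budget $E_{\mathrm{tc}}$. Since $\mathrm{SNR}_{\mathrm{recon}}\ge S_{\mathrm{tar}}$ is equivalent to $\|x-x_{\mathrm{tail}}\|_{L^2}\le \|x\|_{L^2}/\sqrt{S_{\mathrm{tar}}}$, it therefore suffices to show
\[
    \|\mathbf e\|_{\ell^2(\mathcal I)}\le \sqrt{A_{\mathrm{hw}}^{\mathrm{det}}}\left(\frac{\|x\|_{L^2}}{\sqrt{S_{\mathrm{tar}}}}-E_{\mathrm{tc}}\right).
\]

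Next I would invoke \eqref{eq:total_sample_error}. Using $C=2\pi f_{max}\|x\|_\infty$ and $\beta=C/v_{min}$, it gives
\[
    \|\mathbf e\|_{\ell^2}\le\sqrt{N_{\mathrm{in}}}\bigl(C\sigma_j+\beta\sigma_n+(1+\beta)\sigma_q\bigr)+E_R.
\]
After reserving $E_R$, the remaining amplitude is $\Gamma_A^{\mathrm{det}}$, which must be positive. Requiring each of the three leading terms to be at most $\Gamma_A^{\mathrm{det}}/3$ means $\sqrt{N_{\mathrm{in}}}\,C\sigma_j\le\Gamma_A^{\mathrm{det}}/3$, and likewise for the noise and DAC terms. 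These three inequalities are exactly \eqref{eq:tdc_spec}--\eqref{eq:dac_spec}. I would also note, as the statement does, that $A_{\mathrm{hw}}^{\mathrm{det}}$ depends on $(\sigma_j,\sigma_n,\sigma_q)$ through $L_{\mathrm{hw}}^{\mathrm{det}}$. The conditions are therefore implicit, checked for a candidate budget once $L_{\mathrm{hw}}^{\mathrm{det}}<1/4$ has been verified.

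For the simplified form, I would show that \eqref{eq:strong_hw_margin} gives $(1-\theta)^2\ge 1/2$. Inserting this into \eqref{eq:hw_frame_bounds} yields $A_{\mathrm{hw}}^{\mathrm{det}}\ge 1/(2\Delta T)$, so $\sqrt{A_{\mathrm{hw}}^{\mathrm{det}}}\ge 1/\sqrt{2\Delta T}$. This makes $\gamma_{A,\mathrm{simp}}^{\mathrm{det}}\le\gamma_A^{\mathrm{det}}$, because $\Gamma_A^{\mathrm{det}}$ increases in $\sqrt{A}$ when the bracket is positive. The explicit conditions then imply the implicit ones. One normalization needs care here: \eqref{eq:gamma_A_simple_det} divides by $\sqrt{2\Delta T}$ where the bound gives a factor $1/\sqrt{2\Delta T}$. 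I would verify this carefully, and I expect it to be the main consistency check in the argument.

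Parts~(b) and~(c) repeat this argument using \eqref{eq:finite_hw_recon_error}. For (b) I would drop the residual $r$, so the hardware contribution to $\|p-\hat p\|_{L^2}$ is at most $s_M^{-1}\|\mathbf e\|_{\ell^2}$. The requirement $\|p-\hat p\|_{L^2}\le\|p\|_{L^2}/\sqrt{S_{\mathrm{tar}}}$ then becomes $\|\mathbf e\|\le s_M\|p\|_{L^2}/\sqrt{S_{\mathrm{tar}}}$, and the same reserve-and-split step gives \eqref{eq:gamma_M_det}--\eqref{eq:dac_spec_finite}. For (c), \eqref{eq:finite_hw_recon_error} shows that the full error is at most $E_M+s_M^{-1}\|\mathbf e\|$. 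Subtracting $E_M$ from the target amplitude leaves $R_M(S_{\mathrm{tar}})$, and the split proceeds as before. If $s_MR_M-E_R\le0$, no nonnegative choice of $(\sigma_j,\sigma_n,\sigma_q)$ passes this sufficient test, which is the stated impossibility within the certificate. Finally, for $\beta\gg1$ we have $1+\beta\sim\beta$, so the DAC and noise bounds coincide asymptotically.
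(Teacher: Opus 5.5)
Your proposal is correct and is essentially the paper's own derivation. The paper likewise converts the SNR target into an admissible $\ell^2$ sample-error amplitude: it subtracts $E_{\mathrm{tc}}$ or $E_M$ where applicable, multiplies by $\sqrt{A_{\mathrm{hw}}^{\mathrm{det}}}$ or $s_M$, reserves $E_R$, and splits the remainder equally among $C\sigma_j$, $\beta\sigma_n$ and $(1+\beta)\sigma_q$ from \eqref{eq:total_sample_error}. The normalization you flagged is not a discrepancy: dividing the bracket by $\sqrt{2\Delta T}$ is exactly multiplying it by the lower bound $1/\sqrt{2\Delta T}\le\sqrt{A_{\mathrm{hw}}^{\mathrm{det}}}$, so $\gamma_{A,\mathrm{simp}}^{\mathrm{det}}\le\gamma_A^{\mathrm{det}}$ follows at once (the bracket is nonnegative whenever $\gamma_{A,\mathrm{simp}}^{\mathrm{det}}>0$).
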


\textbf{Remark (Cross-domain tradeoffs).} The tail-prior specifications~\eqref{eq:tdc_spec}--\eqref{eq:dac_spec} and their finite-dimensional model-space counterparts~\eqref{eq:tdc_spec_finite}--\eqref{eq:dac_spec_finite} exhibit a fundamental \emph{time-voltage duality}: timing precision ($\sigma_j$) and voltage precision ($\sigma_n$, $\sigma_q$) are independently constrained, and meeting only one is insufficient. A TDC with femtosecond resolution paired with a noisy front-end performs no better than one limited by its noise floor. In the leading-order part of the deterministic budget, this coupling is more explicit for event-driven MT than for conventional uniform-rate ADC models, where timing (clock jitter) and voltage (quantization noise) are often treated separately. For finite-dimensional full-signal claims, this tradeoff applies only after the residual allowance~\eqref{eq:model_residual_budget} and the Taylor remainder budget~\eqref{eq:sample_remainder_l2} are small enough to leave a positive hardware budget.

\subsection{Comparison with Ideal Bounds}
Let $R_{2,\mathrm{hw}}=\sup_{n\in\mathcal I}R_{2,n}$ on the selected event-correspondence windows. The hardware design inequality~\eqref{eq:hw_design} makes the \emph{penalty of non-ideal hardware} explicit and separable:
\begin{equation}
    \begin{aligned}
    \Delta V_{\mathrm{hw}}^*
    =
    \frac{1}{\bar{\alpha}_{\mathrm{hw}}^{(\mathrm{ret})}}
    \left(
    \underbrace{\frac{v_{min}}{8f_{max}}}_{\text{Kadec margin}}
    - \underbrace{v_{min}\sigma_j}_{\text{jitter penalty}}
    - \underbrace{v_{min}R_{2,\mathrm{hw}}}_{\text{implicit-shift remainder}}
    \right.\\
    \left.
    - \underbrace{\sigma_n}_{\text{noise penalty}}
    - \underbrace{\sigma_q}_{\text{DAC penalty}}
    \right).
    \end{aligned}
    \label{eq:penalty_decomposition}
\end{equation}
Each budget-consuming term has clear units and clear physical origin. The prefactor converts the available timing margin into an admissible threshold spacing under the certified retained topology: $\bar{\alpha}_{\mathrm{hw}}^{(\mathrm{ret})}=1/2$ gives the clean fixed-topology limit, while $\bar{\alpha}_{\mathrm{hw}}^{(\mathrm{ret})}=3/2$ gives the retained-safe limit. The jitter, input-noise, and threshold-residual terms enter at leading order, while $R_{2,\mathrm{hw}}$ collects certified higher-order implicit-shift remainders. The decomposition has three immediate engineering consequences:
\begin{enumerate}
    \item \textbf{Bottleneck identification}: the largest budget-consuming term identifies the dominant limitation.
    \item \textbf{Marginal return}: improving any single leading-order parameter beyond the point where its penalty falls below the others yields diminishing returns, unless it also reduces the certified remainder.
    \item \textbf{Co-design}: the threshold spacing $\Delta V$ must be jointly optimized with the hardware impairments and retained topology---choosing $\Delta V$ close to the clean-topology limit without verifying event correspondence can leave no retained-safe design margin.
\end{enumerate}

\medskip
\noindent
\rule{\columnwidth}{0.4pt}
\begingroup
\small
\noindent\textbf{Summary: Conditional Hardware Perturbation Budget.}

\medskip
\textit{Impairment model:} TDC jitter $\sigma_j$, input noise $\sigma_n$, threshold imprecision $\sigma_q$, implicit-shift remainder $R_{2,\mathrm{hw}}$, plus the event-correspondence assumptions (H1)--(H3).

\medskip
\textit{Design inequality:} $\Delta V < (\bar{\alpha}_{\mathrm{hw}}^{(\mathrm{ret})})^{-1}\!\left(v_{min}/(8f_{max})-v_{min}\sigma_j-v_{min}R_{2,\mathrm{hw}}-\sigma_n-\sigma_q\right)$.

\medskip
\textit{Tail-prior reconstruction error:} with
$C_x:=\Omega\|x\|_\infty$,
\begin{multline*}
    \|x-x_{\mathrm{rec}}\| \le
    \frac{1}{\sqrt{A_{\mathrm{hw}}^{\mathrm{det}}}}
    \Bigl[
    \sqrt{N_{\mathrm{in}}}\sigma_q
    + \sqrt{N_{\mathrm{in}}}C_x\sigma_j\\
    + \sqrt{N_{\mathrm{in}}}\frac{C_x}{v_{min}}(\sigma_n{+}\sigma_q)
    +E_R\Bigr]
    + \mathcal{O}\!\left((E_{\mathrm{tail}}^{\mathrm{collar}})^{1/2}\right)\\
    + \mathcal{O}(S_{\mathrm{bdry}}^{1/2}).
\end{multline*}
Finite-record model-space reconstruction uses the matrix amplifier $1/\sigma_{\min}(H_{\mathrm{hw},M})$ instead of the tail-prior frame amplifier. Full-signal finite-record claims additionally require the residual allowance $E_M$ in~\eqref{eq:model_residual_budget} and the Taylor remainder budget $E_R$ to leave a positive remaining hardware budget.

\medskip
\textit{Practical consequence:} Given a target $\mathrm{SNR}$ and system bandwidth $f_{max}$, Eqs.~\eqref{eq:gamma_A_det}--\eqref{eq:dac_spec} prescribe tail-prior deterministic perturbative budgets after reserving $E_R$, while Eqs.~\eqref{eq:gamma_M_det}--\eqref{eq:dac_spec_finite} give finite-dimensional model-space budgets controlled by the actual matrix conditioning. Spurious, missed, or mislabeled events require the separate topology certificate~\eqref{eq:event_topology_certificate}.
\par
\endgroup
\noindent\rule{\columnwidth}{0.4pt}

\section{Signal Reconstruction from Nonuniform Samples}
\label{sec:reconstruction}
Although MT acquisition is a nonlinear map from signal to event record, the reconstruction stage is linear once the event record has been observed. This asymmetry is central: we do not need to invert the nonlinear acquisition, we only need to invert a linear sampling operator whose domain is the realized point set $\Lambda$ and whose codomain is $\ell^2(\Lambda)$. The three algorithms described below differ in the stability certificate they require. Dual-frame and infinite-dimensional projection methods use a genuine sampling-frame lower bound; finite-record pseudo-inverse reconstruction uses the smallest singular value of the realized sampling matrix; a finite implementation of an iterative method inherits whichever of these two certificates is actually being used.

The preceding sections establish conditions under which a fixed realized set $\Lambda = \Lambda_{\Delta V}(x)$ contains a well-placed subsequence and, after the finite-record qualifications of Subsection~\ref{sec:finite_bridge}, supports stable reconstruction on a declared model space or under an external tail prior. Once the crossing times are fixed, the reconstruction stage becomes a linear inverse problem in the sample values $\{x(t_n)\}$. We now discuss three constructive methods of increasing computational practicality.

For MT, the sample values are not arbitrary amplitudes; they are known threshold levels $y_n = m_n\Delta V$ attached to the observed crossing times $t_n$. Once the pair $(t_n,y_n)$ is frozen, the nonlinearity of the acquisition stage disappears and reconstruction reduces to a linear problem over the bandlimited model space.

The assumption that $y_n$ is known is physically mild: in hardware, the comparator channel that fired identifies the threshold index $m_n$ directly. Without those threshold labels, one would be facing a different inverse problem than the one analyzed in this paper.

\subsection{Dual Frame Reconstruction}
Given that $\Lambda$ satisfies the frame condition \eqref{eq:frame}, define the reproducing kernel functions $k_n(t) = K(t, t_n)$, where
\begin{equation}
    K(t, s) = \frac{\sin[\Omega(t - s)]}{\pi(t - s)}
    \label{eq:repro_kernel}
\end{equation}
is the reproducing kernel of $PW_\Omega$ (see Appendix~\ref{app:trace} for derivation). By the reproducing property, $x(t_n) = \langle x, k_n \rangle_{L^2}$ for all $x \in PW_\Omega$, where $\langle f, g \rangle_{L^2} = \int_{-\infty}^{\infty} f(t)\,\overline{g(t)}\,dt$ denotes the $L^2$ inner product. The frame condition \eqref{eq:frame} then states precisely that $\{k_n\}$ forms a frame for $PW_\Omega$, and the associated frame operator $\mathfrak{S}_{\Lambda}: PW_\Omega \to PW_\Omega$, defined by
\begin{equation}
    (\mathfrak{S}_{\Lambda}f)(t) = \sum_n f(t_n)\, K(t, t_n)
    \label{eq:frame_operator}
\end{equation}
is bounded and invertible with $AI \preceq \mathfrak{S}_{\Lambda} \preceq BI$. The unique dual frame $\{\tilde{k}_n = \mathfrak{S}_{\Lambda}^{-1}k_n\}$ yields the exact reconstruction formula:
\begin{equation}
    x(t) = \sum_{n \in \mathbb{Z}} x(t_n) \, \tilde{k}_n(t)
    \label{eq:dual_frame}
\end{equation}

The kernel $K(t,s)$ plays the role of the bandlimited point-spread function: it is the waveform enforced by a unit sample constraint at time $s$. Dual-frame reconstruction therefore says that the full signal can be synthesized by superposing these bandlimited response shapes with coefficients given by the observed threshold data.
While mathematically exact, computing the dual frame requires inverting the infinite-dimensional frame operator $\mathfrak{S}_{\Lambda}$, which generally admits no closed-form expression for arbitrary nonuniform point sets. This full-line dual-frame formula is therefore primarily theoretical unless a tail prior or persistent extension supplies the missing outside-window values. When $\Lambda$ satisfies Beurling's sufficiency condition, the external strict-density theorem recorded in Appendix~\ref{app:beurling_strict_density_input} supplies a bounded left inverse abstractly; in practical finite records, the operative test is instead the pseudo-inverse matrix conditioning below.

\subsection{Matrix Pseudo-Inverse Method}
For practical computation over a finite observation window, $x(t)$ can be approximated as a finite superposition of Nyquist-grid sinc functions:
\begin{equation}
    x(t) \approx \sum_{m=1}^{M} a_m \, \mathrm{sinc}\!\left(\frac{t - m\Delta T}{\Delta T}\right)
    \label{eq:sinc_approx}
\end{equation}
where $\Delta T = 1/(2f_{max})$ is the Nyquist interval (as defined in Section~\ref{sec:kadec}) and $M$ is the number of Nyquist-grid points within the observation window. Given $N$ nonuniform samples $y_n = x(t_n)$ (with $N > M$ for overdetermination), substitution yields a linear system $\mathbf{y} = \mathbf{H}\mathbf{a}$, where $H_{n,m} = \mathrm{sinc}((t_n - m\Delta T)/\Delta T)$. The coefficient vector is recovered via the Moore-Penrose pseudo-inverse:
\begin{equation}
    \mathbf{a} = \mathbf{H}^{\dagger}\mathbf{y} = (\mathbf{H}^* \mathbf{H})^{-1} \mathbf{H}^* \mathbf{y}
    \label{eq:pseudoinverse}
\end{equation}
where $\mathbf{H}^*$ denotes the conjugate transpose (equal to $\mathbf{H}^T$ for real-valued data). For a finite MT record the computable stability certificate is
\begin{equation}
    s_{\mathcal{I},M}
    =
    \sigma_{\min}(\mathbf{H}_{\mathcal{I},M})>0,
    \label{eq:finite_recon_smin}
\end{equation}
not the auxiliary infinite-sequence frame constant by itself. The reconstruction error caused by sample perturbations is amplified by at most $1/s_{\mathcal{I},M}$ on the declared model space. A certified implementation should therefore use an online acceptance rule such as
\begin{equation}
    \begin{aligned}
    &N_{\mathcal{I}}\ge M,\qquad
    s_{\mathcal{I},M}\ge s_{\mathrm{floor}},\\
    &\text{or equivalently}\qquad
    \kappa(\mathbf{H}_{\mathcal{I},M})\le\kappa_{\max},
    \end{aligned}
    \label{eq:finite_acceptance}
\end{equation}
for design thresholds $s_{\mathrm{floor}}>0$ or $\kappa_{\max}<\infty$. If this test fails, the finite record is not certified at the requested model dimension; the system must extend the observation window, lower $M$, refine the threshold ladder to obtain more events, or output an uncertified reconstruction flag. When the realized finite matrix is a well-conditioned finite section of a tail-qualified frame, the frame bounds help predict $s_{\mathcal{I},M}$, but the finite-record stability constant remains the measured singular value in~\eqref{eq:finite_recon_smin}. Once $\mathbf{a}$ is recovered, the signal is equivalently represented on the uniform Nyquist grid. The computational complexity is $O(M^3)$, which limits this approach to short signal segments.

Numerically, this method replaces irregular threshold-crossing times by coefficients on a regular Shannon grid. One solves for the unique bandlimited sinc superposition whose graph passes through the observed crossing constraints in the least-squares sense.

\subsection{Iterative Projection Method}
For longer signals or real-time applications, iterative algorithms based on alternating projections onto convex sets (POCS) avoid explicit matrix inversion. Starting from an initial estimate $x_0(t)$ constructed by interpolating the nonuniform samples, the iterative update rule is:
\begin{equation}
    x_{k+1}(t) = x_k(t) + \lambda \cdot P_\Omega\!\left[\sum_n \bigl(y_n - x_k(t_n)\bigr)\,\delta(t - t_n)\right]
    \label{eq:iterative}
\end{equation}
where $P_\Omega g = \mathcal{F}^{-1}(\mathbf{1}_{[-\Omega,\Omega]}\hat{g})$ denotes the ideal low-pass projection onto $PW_\Omega$, $\delta(\cdot)$ denotes the Dirac delta distribution, and $\lambda \in (0, 2/B)$ is a relaxation parameter, with $B$ the frame upper bound.

This iterative scheme compares the current estimate with the observed threshold data, injects the residual at the sample locations as Dirac impulses, and projects the correction back into $PW_\Omega$.

At each iteration, the algorithm computes the residual error at the sampling points, distributes these errors as impulses, projects them back into $PW_\Omega$, and adds the correction to the current estimate. In a discrete implementation, this projection is realized numerically by FFT-based convolution with the ideal low-pass kernel. When the infinite sampling set satisfies the frame condition, the contraction mapping principle guarantees exponential convergence (see Appendix~\ref{app:iterative} for the full derivation):
\begin{equation}
    \|x_k - x\|_{L^2} \le \rho^k \|x_0 - x\|_{L^2}
    \label{eq:iterative_convergence}
\end{equation}
where $\rho = \max(|1 - \lambda A|, |1 - \lambda B|) < 1$. The optimal relaxation parameter $\lambda^* = 2/(A+B)$ minimizes the convergence rate~\cite{Feichtinger1992,Grochenig1992} to $\rho^* = (B-A)/(B+A)$. This choice balances the slowest and fastest observable sampling modes, so the convergence speed is governed by the conditioning ratio $B/A$: near-tight sampling geometries converge rapidly, whereas poorly conditioned geometries converge more slowly. This method requires only FFT operations and pointwise arithmetic per iteration, yielding $O(M \log M)$ per-iteration complexity. It is well suited to parallel implementation on FPGA and DSP platforms.

For a finite-window discretization of the same iteration, $A$ and $B$ should be replaced by the extremal singular-value squares of the realized finite matrix (or by validated bounds on them). Thus the same online rejection rule~\eqref{eq:finite_acceptance} applies before claiming finite-record stability.

\section{Numerical Experiments}
\label{sec:numerical}
The preceding sections develop the MT sampling theory in closed form. We now present seven computational experiments that diagnose the principal certificates on concrete finite signal instances. These experiments are not a broad empirical validation of the infinite-dimensional theory: most panels use a deliberately simple two-tone waveform so that the event geometry, threshold phase, and finite-matrix conditioning can be inspected directly. Unless otherwise noted, the experiments use the test signal
\begin{equation}
    x(t) = \sin(2\pi\!\cdot\!0.5\,t) + 0.5\,\sin(2\pi\!\cdot\!0.8\,t),
    \label{eq:test_signal}
\end{equation}
with effective bandwidth $f_{max} = 1$~Hz ($\Omega = 2\pi$~rad/s), giving the Nyquist interval $\Delta T = 0.5$~s. The signal-instance parameters computed on the active window $[0, T_{\text{obs}}]$ with $T_{\text{obs}} = 6$--$10$~s are: minimum monotone-window velocity $v_{min}^{(\mathrm{mono})} = 0.185$, minimum extremal curvature $\kappa_{min} = 8.48$, minimum extremum-isolation radius $\rho_{min}=\min_{t^*\in\mathcal{E}_x}\rho(t^*)=0.125=\Delta T/4$, and numerical third-derivative envelope $\|x'''\|_\infty \approx 94.51$. The clean fixed-topology margins are $\Delta V_{\mathrm{vel,clean}}^*=0.046$ and $\Delta V_{\mathrm{curv,clean}}^*\approx0.0355$, so the plotted clean-topology normalization is $\Delta V_{\mathrm{clean}}^*\approx0.0355$. The retained-safe margins are $\Delta V_{\mathrm{vel,ret}}^*=v_{min}^{(\mathrm{mono})}/(12f_{max})\approx0.0154$ and $\Delta V_{\mathrm{curv,ret}}^*\approx0.0178$, giving $\Delta V_{\mathrm{Kadec,ret}}^*\approx0.0154$. The leading-order clean curvature value $\kappa_{min}/(128f_{max}^2)\approx0.066$ is reported only as a non-remainder-corrected, non-isolation-stressed reference.

The role of the grid offset $\tau_0$ is reported explicitly. When $\tau_0$ is fixed before acquisition, the scalar margins in Table~\ref{tab:threshold_symbol_taxonomy} can be read as design guarantees for that offset. When $\tau_0$ is optimized offline after the event record has been generated, the result is a post-realization search for a Kadec-compatible retained subsequence and should not be interpreted as a guarantee available to the hardware before acquisition.

\begin{table*}[!t]
\caption{Numerical experiments and the certificate layer they test.}
\label{tab:numerical_layers}
\centering
\footnotesize
\begin{tabular}{|p{0.13\textwidth}|p{0.29\textwidth}|p{0.49\textwidth}|}
\hline
Experiment & Certificate layer & Interpretation \\
\hline
1--2 & Post-realization retained Kadec geometry & Offline $\tau_0$ search and measured $L_{MT}$; not a scalar design guarantee at the plotted coarse spacings. \\
\hline
3 & Long-window density diagnostic & Checks the total-variation proxy on the tested windows only; it does not replace the sliding infimum. \\
\hline
4 & Finite-record matrix reconstruction & Tests $\sigma_{\min}(H_{\mathcal{I},M})$ and ideal threshold labels on a declared finite model space. \\
\hline
5 & Static no-go/certificate failure & Illustrates empty or locally silent retained records, not a redundant-frame impossibility from one finite gap alone. \\
\hline
6 & Conditional hardware perturbation & Applies only under checked clean/noisy event correspondence; plotted constants are finite-instance diagnostics. \\
\hline
7 & Retained-suppression stress test & Isolates the same-threshold return mechanism that forces the retained-safe $3/2$ voltage-excursion factor. \\
\hline
\end{tabular}
\end{table*}

\subsection{Experiment~1: MT Crossing Set and Kadec Extraction}
\label{sec:exp1}
Fig.~\ref{fig:crossings}(a) shows the signal~\eqref{eq:test_signal} with threshold spacing $\Delta V = 0.25$ and the realized retained crossing set $\Lambda_{\Delta V}$. The retained event density is visibly nonuniform: steep segments generate closely spaced events, while near extrema the retained crossings thin out. Fig.~\ref{fig:crossings}(b) shows a Kadec-compatible retained subsequence extracted after optimizing the grid offset $\tau_0$ offline. Even at $\Delta V/\Delta V_{\mathrm{clean}}^* \approx 7.0$ (and $\Delta V/\Delta V_{\mathrm{Kadec,ret}}^*\approx16.2$), the extraction yields $L_{MT} = 0.075 < 1/4$. This does not make $\Delta V=0.25$ a certified design value; it is a post-realization retained-set measurement showing that the sufficient conditions are conservative on this particular phase and offset. The retained-phase diagnostic confirms this distinction: at $\Delta V=0.25$, the extrema have $\delta_0\in[0.0022,0.239]$ and outgoing retained gaps $\delta_{\mathrm{ret}}\in[0.252,0.489]$, far exceeding the local curvature margin $0.0355$; the scalar design certificate fails, but the realized retained extraction still happens to pass.

\begin{figure*}[!t]
    \centering
    \includegraphics[width=\textwidth]{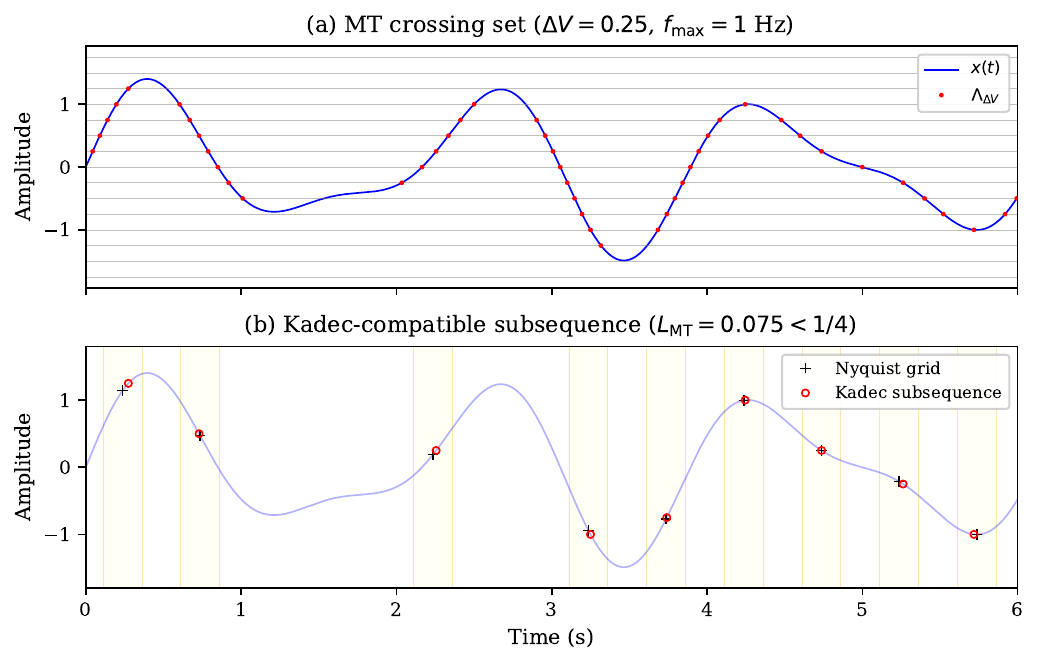}
    \caption{(a)~Signal~\eqref{eq:test_signal} with threshold ladder $\Delta V = 0.25$ and crossing set $\Lambda_{\Delta V}$ (red dots). (b)~Kadec-compatible subsequence (red circles) extracted from $\Lambda_{\Delta V}$, with an offline optimal Nyquist grid (black crosses) and centered windows $J_n(\tau_0)$ (shaded). The measured perturbation is $L_{MT} = 0.075$ on this finite instance.}
    \label{fig:crossings}
\end{figure*}

\subsection{Experiment~2: Kadec Perturbation Ratio Scaling}
\label{sec:exp2}
Fig.~\ref{fig:perturbation} plots the worst-case normalized perturbation $L_{MT}$ as a function of $\Delta V/\Delta V_{\mathrm{clean}}^*$ over the tested sweep range, together with the clean-topology theoretical upper bound $\Delta V \cdot f_{max}/v_{min}^{(\mathrm{mono})}$ from~\eqref{eq:normalized_perturbation} with $\alpha_{\mathrm{ret}}=1/2$. The retained-safe bound is three times larger and is not the curve plotted in this figure. Two observations are consistent with the local theory on this range:
\begin{enumerate}
    \item For all $\Delta V < \Delta V_{\mathrm{clean}}^*$, the numerical retained-set $L_{MT}$ remains below $1/4$, consistent with the clean fixed-topology sufficient condition of Section~\ref{sec:sufficient}.
    \item The numerical $L_{MT}$ is substantially smaller than the upper bound, consistent with the fact that the bound is a worst-case estimate over all windows, whereas the optimal grid offset $\tau_0$ avoids the worst window placement. The gap between the numerical curve and the bound quantifies the conservatism of the velocity-based Kadec condition.
\end{enumerate}

\begin{figure}[!t]
    \centering
    \includegraphics[width=\columnwidth]{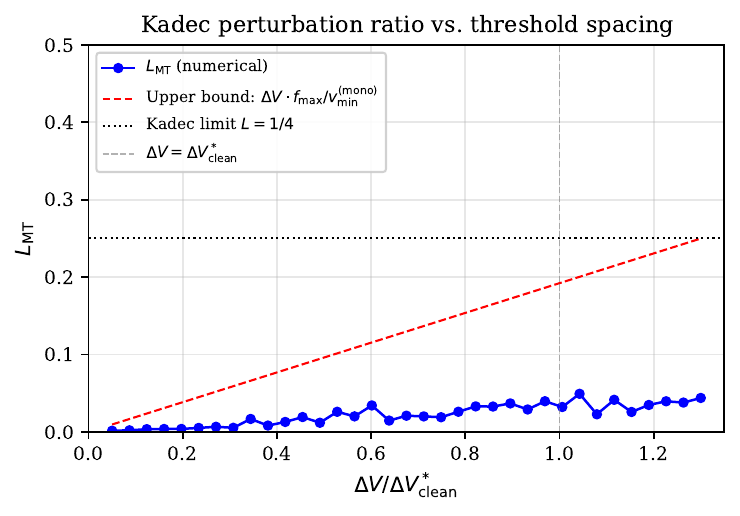}
    \caption{Worst-case Kadec perturbation ratio $L_{MT}$ versus threshold spacing, normalized by the clean-topology threshold $\Delta V_{\mathrm{clean}}^*$. Blue: numerical retained-set extraction with optimal grid offset. Red dashed: clean-topology upper bound $\Delta V f_{max}/v_{min}^{(\mathrm{mono})}$; the retained-safe upper bound would be three times larger. The numerical perturbation stays below the Kadec limit $L=1/4$ (dotted line) on this finite sweep; no wide-range safety claim is made beyond the tested spacings.}
    \label{fig:perturbation}
\end{figure}

\subsection{Experiment~3: Crossing Density Evaluation}
\label{sec:exp3}
Fig.~\ref{fig:density} compares the measured retained windowed event density $n_{\mathrm{ret}}(t,r)/r$ (blue) with the total-variation proxy $\bar{\nu}(t,r)/\Delta V$ (red dashed), using $\Delta V = 0.15$ and a sliding window of length $r = 2$~s. For this non-superoscillatory synthetic signal, the extremum count and same-threshold suppression count can be verified directly from the known waveform, so the conditional density bounds~\eqref{eq:density_window_improved} apply with instance-specific $\eta_x$ and $\eta_{\mathrm{supp},x}$. The measured retained density exceeds the Nyquist benchmark $2f_{max} = 2$~events/s on the plotted windows, reflecting the high-velocity character of the test signal after the retained-event losses are included. This figure should not be read as evidence that high average density on one window protects every subwindow; that logical step is supplied only by the explicit sliding infimum/certificate assumptions in Section~\ref{sec:density_analysis}.

\begin{figure}[!t]
    \centering
    \includegraphics[width=\columnwidth]{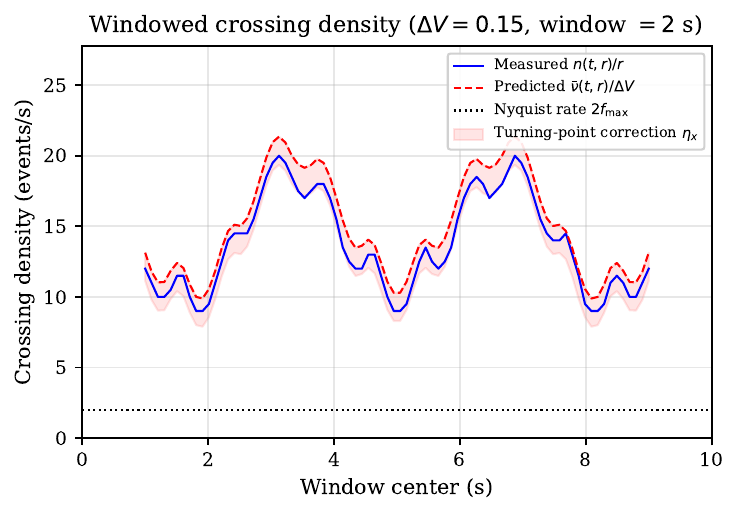}
    \caption{Windowed retained event density for $\Delta V = 0.15$ and window length $r = 2$~s. The measured retained event rate (blue) is compared with the total-variation proxy $\bar{\nu}(t,r)/\Delta V$ (red dashed) and the Nyquist benchmark $2f_{max}$ (dotted).}
    \label{fig:density}
\end{figure}

\subsection{Experiment~4: Pseudo-Inverse Reconstruction}
\label{sec:exp4}
Fig.~\ref{fig:recon}(a) plots the normalized root-mean-square error (NRMSE) of the pseudo-inverse reconstruction~\eqref{eq:pseudoinverse} as a function of $\Delta V/\Delta V_{\mathrm{clean}}^*$, using both nominal MT threshold labels $y_n = m_n\Delta V$ (blue squares) and exact signal values $x(t_n)$ at the computed crossing times (red triangles). The finite model is a truncated sinc space with $M=13$ Nyquist basis functions over $T_{\mathrm{obs}}=6$~s; crossings are fitted on the interior interval $t_n\in[\Delta T,T_{\mathrm{obs}}-\Delta T]$, and the displayed NRMSE is evaluated on $[1,T_{\mathrm{obs}}-1]$ to reduce boundary dominance. The two curves overlap because, in the ideal crossing model, the threshold label is exactly the signal value at the event time. This is therefore an ideal label-chain test, not a noisy hardware-chain reconstruction.

The residual error of approximately $3\%$ is attributable primarily to the finite observation window and sinc truncation effects. Across the plotted sweep, the finite matrix remains full rank in the usable range, but it is not a near-tight frame: representative singular values are $\sigma_{\min}(\mathbf{H})\approx3.3\times10^{-2}$ and $\mathrm{cond}(\mathbf{H})\approx1.4\times10^2$ at $\Delta V=0.10$ (about $N=106$ interior retained crossings), while $\sigma_{\min}(\mathbf{H})$ decreases to the $10^{-2}$ scale at coarser spacings. Thus the experiment verifies a finite-matrix acceptance pipeline for this instance; it does not establish sharp infinite-dimensional constants.

Fig.~\ref{fig:recon}(b) shows an example reconstructed waveform at $\Delta V = 0.10$ ($\Delta V/\Delta V_{\mathrm{clean}}^* \approx 2.8$, $N = 106$ retained crossings). The reconstruction closely tracks the original signal, with the largest discrepancies occurring near the observation-window boundaries where the sinc basis functions are truncated.

\begin{figure*}[!t]
    \centering
    \includegraphics[width=\textwidth]{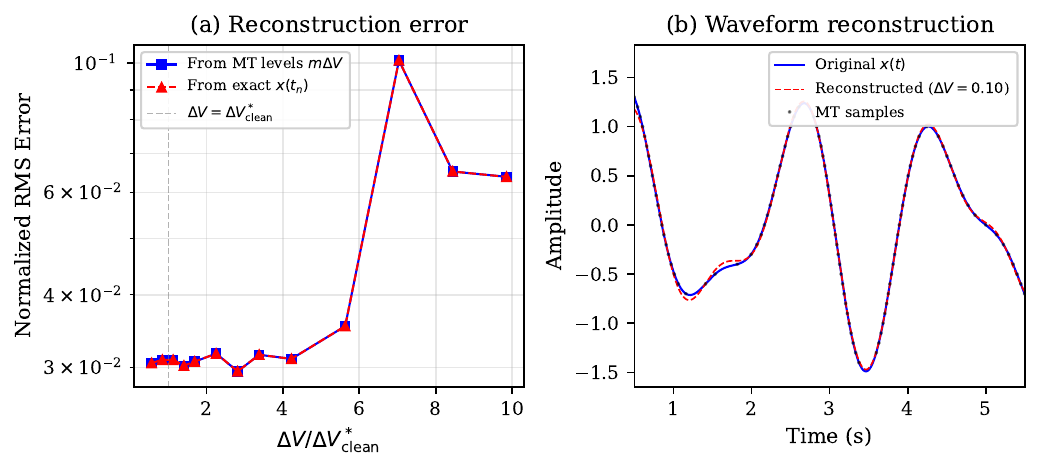}
    \caption{(a)~Finite sinc-model reconstruction NRMSE versus $\Delta V/\Delta V_{\mathrm{clean}}^*$ from ideal MT threshold labels (blue) and from exact crossing-time values (red). (b)~Example reconstructed waveform at $\Delta V = 0.10$ (about 106 retained crossing samples over 6~s).}
    \label{fig:recon}
\end{figure*}

\subsection{Experiment~5: No-Go Theorem Illustration}
\label{sec:exp5}
Fig.~\ref{fig:nogo}(a) illustrates the elementary empty-record example in Theorem~\ref{thm:no_go}: the Fej\'er bump~\eqref{eq:nogo_construction} with $\epsilon = 0.8\Delta V$ and $\Delta V = 0.30$ yields $\|x_{\mathrm{bump}}\|_\infty = 0.060 < \Delta V$, so the waveform remains inside a single threshold cell and the retained adjacent-transition record is empty. Fig.~\ref{fig:nogo}(b) illustrates the localized certificate failure: a beat-frequency signal $x(t) = \sin(2\pi \cdot 0.45\,t) + \sin(2\pi \cdot 0.55\,t)$ produces near-quiescent intervals near the beat nulls, where the total variation over a Nyquist-length window drops below $\Delta V$ and the retained crossing stream is interrupted. Fig.~\ref{fig:nogo}(c) shows the corresponding local retained event density dipping below the Nyquist rate $2f_{max}$ near the null. This local dip demonstrates failure of the plotted Kadec/density certificates on that interval; by itself it is not a proof that an otherwise redundant infinite sampling set fails globally.

\begin{figure*}[!t]
    \centering
    \includegraphics[width=\textwidth]{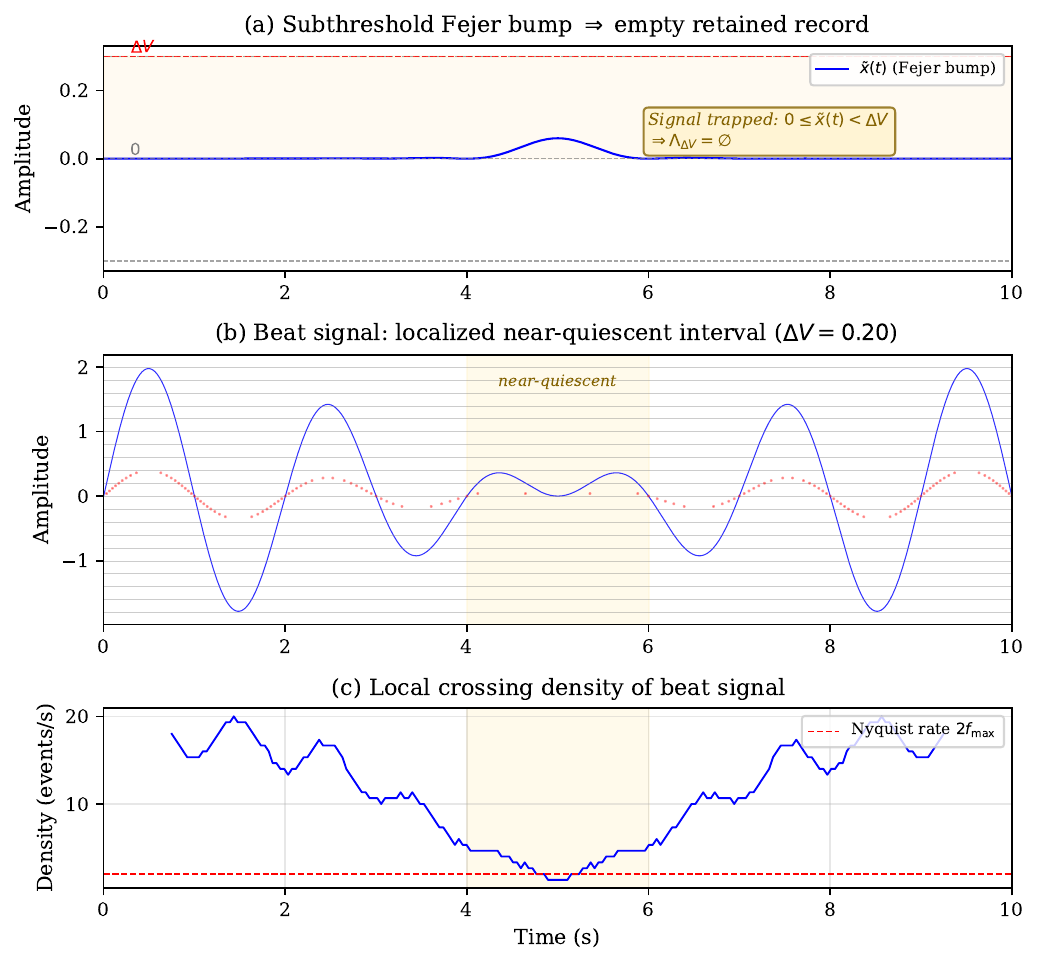}
    \caption{No-go theorem illustration. (a)~Fej\'er bump with $\|x_{\mathrm{bump}}\|_\infty < \Delta V$: the retained adjacent-transition record is empty. (b)~Beat signal with a near-quiescent interval near the beat null. (c)~Local retained event density of the beat signal: the density dips below the Nyquist rate $2f_{max}$ in the near-quiescent region, illustrating local certificate failure.}
    \label{fig:nogo}
\end{figure*}

\subsection{Experiment~6: Hardware Error Theory Evaluation}
\label{sec:exp6}
This experiment tests the conditional hardware perturbation bounds of Section~\ref{sec:hardware_error}. Using the test signal with $\Delta V = 0.02$ ($\Delta V/\Delta V_{\mathrm{clean}}^* \approx 0.56$, but slightly above the retained-safe scalar design margin $\Delta V_{\mathrm{Kadec,ret}}^*\approx0.0154$; $N = 2319$ retained crossings over $T_{\mathrm{obs}} = 20$~s), we inject controlled levels of TDC jitter $\sigma_j$, input noise $\sigma_n$, and threshold residual imprecision $\sigma_q$, varying each independently while preserving the clean/noisy event correspondence assumed in Section~\ref{sec:hardware_error}. The retained-phase diagnostic at this spacing gives $\delta_{\mathrm{ret}}\in[0.022,0.038]$, close to the curvature margin $0.0355$, so the experiment is interpreted through the measured retained Kadec geometry rather than the scalar retained-safe sufficient margin. For the threshold-residual branch, each residual $r_m$ both shifts the physical crossing time by the first-order rule $\delta t\approx r_m/x'(t_n)$ and leaves a value residual because reconstruction still receives only the reported calibrated level.

Fig.~\ref{fig:hwerror}(a) plots the measured normalized Kadec perturbation against the leading-order diagnostic $L_{\mathrm{hw}}^{(1)}$ versus TDC jitter $\sigma_j$, together with the clean-topology theoretical upper bound from~\eqref{eq:L_hw_leading}: $L_{\mathrm{hw}}^{(1)} \le \Delta V f_{max}/v_{min} + 2f_{max}\sigma_j$ obtained by setting $\bar{\alpha}_{\mathrm{hw}}^{(\mathrm{ret})}=1/2$ and $\sigma_n=\sigma_q=0$. The deterministic certificate would replace this diagnostic by $L_{\mathrm{hw}}^{\mathrm{det}}$ and add the certified $2f_{max}R_{2,\mathrm{hw}}$ term. The retained-safe scalar bound would replace the first term by $3\Delta V f_{max}/v_{min}$. The measured perturbation grows linearly with $\sigma_j$ and remains well below the clean-topology bound at all tested values, consistent with Proposition~\ref{prop:impairment_augmented_perturbation} under the checked event-correspondence regime. The conservatism reflects the worst-case nature of the $v_{min}$ denominator: the optimal grid offset $\tau_0$ avoids the near-extremal windows where $|x'|$ is close to $v_{min}$. The measured perturbation remains below the Kadec limit $L = 1/4$ out to $\sigma_j \approx 0.10$~s, which should be read as an empirical finite-instance margin rather than a retained-safe deterministic design guarantee.

Fig.~\ref{fig:hwerror}(b) plots the clean-topology hardware-adjusted admissible spacing $\Delta V_{\mathrm{hw}}^*$ from~\eqref{eq:hw_design} with $\bar{\alpha}_{\mathrm{hw}}^{(\mathrm{ret})}=1/2$ as a function of $\sigma_j$ (blue), $\sigma_n$ (red), and $\sigma_q$ (magenta). Under the numerical normalization used here, the amplitude-domain penalties have steeper slopes than the jitter penalty after conversion through $v_{min}\approx0.185<1$; this is consistent with~\eqref{eq:hw_design}. The retained-safe plot would have the same zero-crossing feasibility condition but a three-times smaller admissible spacing scale.

Fig.~\ref{fig:hwerror}(c) is consistent with the leading-order finite-dimensional prediction of Theorem~\ref{thm:error_budget}: over the tested range, the reconstruction NRMSE depends approximately \emph{linearly} on each impairment parameter. The plotted NRMSE is a finite-window full-waveform error; its zero-impairment baseline $2.0 \times 10^{-3}$ is the model/window residual, while the fitted slopes isolate the additional hardware perturbation around that baseline. Thus this panel should be read as a model-space perturbation experiment superposed on a small fixed residual, not as a standalone full-signal SNR certificate. The finite sampling matrix uses $M=33$ sinc basis functions and $N_{\mathrm{in}}=1929$ interior retained crossings, with $\sigma_{\min}(H_{\mathrm{hw},M})\approx 3.636$ at the baseline geometry, satisfying the finite-record acceptance criterion~\eqref{eq:finite_acceptance} for any floor below this value. The fitted empirical slopes are $C_j = 0.44$, $C_n = 0.05$, and $C_q = 0.08$. The finite-dimensional Theorem~\ref{thm:error_budget} worst-case constants are $C_j^{\mathrm{bound}}=35.28$, $C_n^{\mathrm{bound}}=160.95$, and $C_q^{\mathrm{bound}}=164.72$, which are about $80.5$, $2938$, and $2025$ times larger than the empirical slopes, respectively. This gap is expected because the bound substitutes worst-case quantities and matrix conditioning for the more benign local statistics of this signal instance; the experiment is therefore evidence for first-order scaling, not for sharpness or engineering optimality of the constants. We also evaluated the finite-matrix perturbation certificate from Proposition~\ref{prop:finite_matrix_perturb}: over the tested range $\sigma\le0.05$, the maximum values of $\mu_M=\|H_M^\dagger\|_2\|\delta H_M\|_2$ were $0.401$ for jitter, $0.301$ for input-noise timing shifts, and $0.261$ for threshold residuals. Hence the sufficient condition $\mu_M<1$ is satisfied in this experiment, confirming that the tighter finite-record certificate is applicable on this perturbation range even though the plotted constants remain the conservative Theorem~\ref{thm:error_budget} worst-case constants.

\begin{figure*}[!t]
    \centering
    \includegraphics[width=\textwidth]{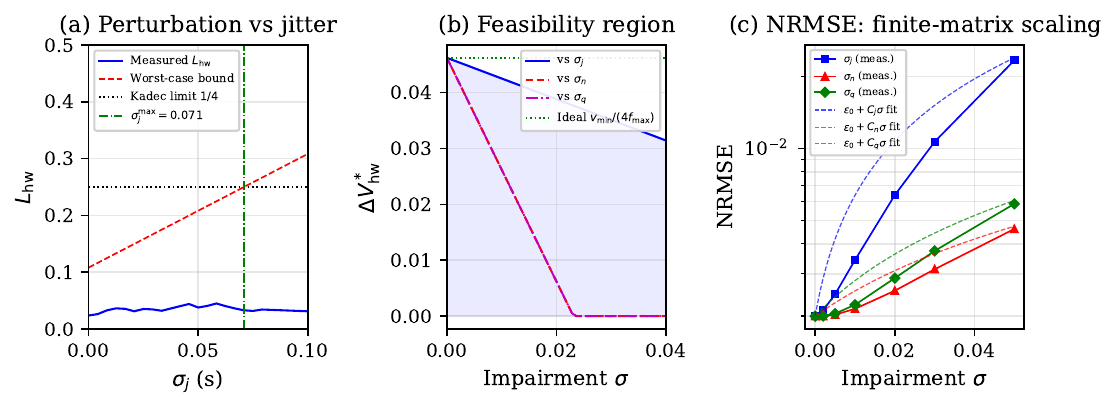}
    \caption{Hardware error theory evaluation ($\Delta V = 0.02$, $N = 2319$ retained crossings). (a)~Global Kadec perturbation versus TDC jitter: the measured perturbation (blue) stays well below the leading-order worst-case upper bound $L_{\mathrm{hw}}^{(1)}$ (red dashed). (b)~Hardware-adjusted admissible spacing $\Delta V_{\mathrm{hw}}^*$ versus impairment level, including the residual threshold branch $\sigma_q$. (c)~Finite-dimensional reconstruction NRMSE versus impairment level: measured values (markers) and linear fits (dashed), with threshold residuals modeled as both timing shifts and reported-value residuals.}
    \label{fig:hwerror}
\end{figure*}

\subsection{Experiment~7: Same-Threshold Suppression Stress Test}
\label{sec:exp7}
The preceding experiments mostly operate in clean retained topologies. This final experiment isolates the mechanism that makes the retained-safe factor $3/2$ in Section~\ref{sec:velocity_kadec} substantively necessary. Consider the bandlimited waveform
\[
    x_{\mathrm{sup}}(t)=-0.25+1.2\sin(2\pi\!\cdot\!0.25\,t),
\]
with $\Delta V=1$. Its local maximum is $0.95$, so after the upward crossing of the zero threshold the trajectory does not reach the $+1$ threshold. On the subsequent descent, the zero crossing is a same-threshold return and is suppressed by the recorded adjacent-transition convention; the next retained adjacent-level event is the crossing of the $-1$ threshold.

Fig.~\ref{fig:retained_suppression} chooses a Nyquist anchor $g_n$ on this descending branch with $x_{\mathrm{sup}}(g_n)=\Delta V/2$. The nearest raw threshold is then the zero level, at voltage distance $\Delta V/2$, but that hit is not retained. The nearest retained threshold is the $-1$ level, at voltage distance $3\Delta V/2$. Thus a clean-topology bound using only $\Delta V/2$ would certify a sample that the recorded MT stream does not contain, while the retained-safe factor $3/2$ exactly covers this local topology. This is a local event-geometry stress test, not a reconstruction experiment; its purpose is to validate the retained-event model used in the sufficient condition.

\begin{figure}[!t]
    \centering
    \includegraphics[width=\columnwidth]{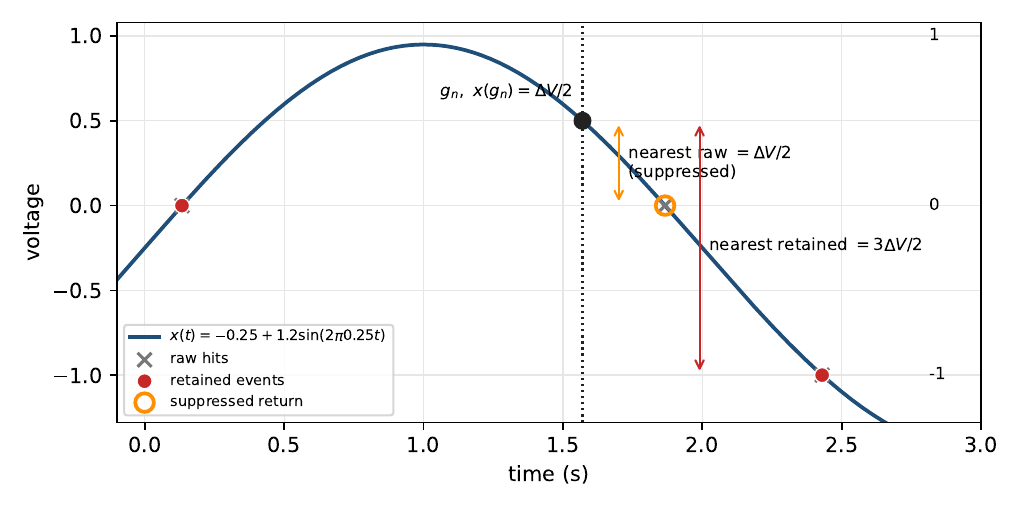}
    \caption{Same-threshold suppression stress test for $x_{\mathrm{sup}}(t)=-0.25+1.2\sin(2\pi\!\cdot\!0.25t)$ and $\Delta V=1$. The nearest raw hit to the anchor has voltage excursion $\Delta V/2$ but is a suppressed return to the zero threshold. The nearest retained event is the adjacent $-1$ crossing, requiring voltage excursion $3\Delta V/2$.}
    \label{fig:retained_suppression}
\end{figure}

\section{Discussion: Meaning of the Post-Realization Characterization}
\label{sec:discussion}

Section~\ref{sec:exact_iff} fills the conceptual gap left by the one-sided Kadec and density results only after the event times have been realized: necessary-and-sufficient conditions do exist for the fixed point set $\Lambda_{\Delta V}(x)$. For a single realized set, the broadest one is the Beurling weak-limit condition. For an a priori signal class, Theorem~\ref{thm:apriori_class_hull_mvt} gives the corresponding class-level iff: the prior object is the full MT weak-limit hull, not the bare threshold density $1/\Delta V$. The closest analogue of the classical Shannon formula is the complete-interpolation subcase in Subsection~\ref{sec:critical_complete_interp}, where a selected MT-generated subsequence is encoded by a canonical product and reconstructed by Lagrange kernels. Inside the periodic-gap subclass, that canonical-product condition becomes the finite-block equality $H=q$ together with the explicit product~\eqref{eq:periodic_generating_function}; if one asks only for stable sampling, the same periodic geometry gives the density rule $q/H\ge1$, with strict inequality corresponding to redundant sampling. Corollary~\ref{cor:finite_periodic_template_mvt} is the computable finite-orbit version of the class-hull criterion: finitely many periodic weak-limit orbits reduce the iff to finitely many density/rank checks, and Corollary~\ref{cor:periodic_template_threshold_density} turns scalar threshold density into an iff only after finite-orbit velocity calibration. Corollary~\ref{cor:finite_state_threshold_density} gives the analogous noncritical criterion for a finite-state graph gap prior through the maximum cycle mean. Section~\ref{sec:apriori_strict_density_certificate} gives a different scalar iff: it is valid only after exact calibration has converted threshold density and retained losses into the actual lower time-event density, and only away from the critical boundary. The class weak-limit hull theorem is the most general a priori necessary-and-sufficient condition available within the nonuniform-sampling framework. Scalar density conditions become exact only after additional structure is imposed: exact retained-density calibration away from the critical boundary, finitely many periodic weak-limit orbits, or a finite-state gap prior. For unrestricted finite-energy static MT, the correct general statement remains the negative one given by Theorem~\ref{thm:no_go}. Outside calibrated or structured subclasses, neither post-realization nor class-level admissibility is a scalar inequality in $\Delta V$, $f_{max}$, and a few signal-instance parameters.

\subsection{Why the Post-Realization Condition Is Not a Design Formula}

The equivalence
\begin{align*}
    &\Lambda_{\Delta V}(x)\text{ samples }PW_\Omega \\
    &\qquad\Longleftrightarrow\quad
    \Gamma\text{ is a uniqueness set for }\mathcal{B}_\Omega\\
    &\qquad\qquad\text{for all }\Gamma\in W(\Lambda_{\Delta V}(x))
\end{align*}
is fully necessary and sufficient for the fixed point set. Its cost is that it retains the complete microscopic geometry of that event set. Every translate weak limit must be checked, including limits that arise by looking far out in time or by centering on rare local configurations. This is exactly what is needed at critical density, where two point sets may have the same density, separation, and maximum gap behavior over many windows, yet differ in sampling status. What the equivalence does not decide is whether the nonlinear map $x\mapsto E_{\Delta V}(x)$ is stable under perturbations that can change event topology.

The removed Shannon-grid example illustrates why this issue cannot generally be avoided. The set $\{n\Delta T:n\in\mathbb{Z}\}$ samples $PW_\Omega$, while deleting one point leaves the lower Beurling density unchanged but removes uniqueness through the corresponding sinc function. Thus no finite list of macroscopic descriptors can decide the boundary case. Away from the boundary, density can decide stability only after the retained time-event density itself has been exactly calibrated, as in Theorem~\ref{thm:apriori_strict_density_mvt}. Pavlov/Lyubarskii--Seip theory makes the critical point more positively: in complete-interpolation regimes the fixed-set condition is expressible through a Muckenhoupt $A_2$ condition on a canonical product, i.e., through an object encoding all zeros of the sequence, not through density alone~\cite{LyubarskiiSeip1997}.

\subsection{What the Shannon-Type Formula Adds}

The weak-limit condition decides whether a separated MT set is a stable sampling frame. The Shannon-type theorem answers a narrower but more explicit question: when can a selected MT event subsequence play the role of the Nyquist grid itself? The answer is complete interpolation. In that regime the sine function in the classical Shannon theorem is replaced by the canonical product $G$ generated by the normalized MT event times, and the sinc kernel is replaced by
\[
    \frac{G(\Omega t/\pi)}
         {G'(\mu_n)(\Omega t/\pi-\mu_n)}.
\]
This gives both a reconstruction formula and an exact criterion in the stated complete-interpolation setting: if the Pavlov/Lyubarskii--Seip $A_2$ condition fails, then the desired non-redundant Shannon-type MT expansion fails.

The price is that complete interpolation is stricter than sampling. A redundant MT record may be an excellent frame without containing a distinguished complete interpolating subsequence that the hardware or algorithm has selected. Conversely, if such a subsequence is available, the remaining MT samples are redundant information that may improve finite-window conditioning but are not needed for exact recovery in the ideal infinite theorem.

\subsection{What the Periodic-Gap Formula Adds}

The periodic-gap results isolate the part of the Shannon-type theory that can be made as explicit as the classical Nyquist--Shannon theorem. Once the normalized gap word
\[
    (h_0,\ldots,h_{q-1})
\]
repeats forever, the full weak-limit hull and the Pavlov $A_2$ weight no longer need to be tested directly. The density of the normalized set is exactly $q/H$. For stable sampling, the finite fiber/Vandermonde calculation gives the exact periodic rule $q/H\ge1$, equivalently $q/T_{\mathrm{per}}\ge\Omega/\pi=2f_{max}$. The strict case $q/H>1$ is a redundant sampling frame. It supplies more samples than the critical number of degrees of freedom per period, so it need not be a complete-interpolation sequence or have a Shannon-type bijective cardinal formula.

The complete-interpolation theorem is the critical subcase. Complete interpolation is simultaneously sampling and interpolation, so Landau's two density inequalities force $q/H=1$, hence $H=q$. Conversely, if $H=q$, the nodes are a union of $q$ arithmetic progressions $q\mathbb{Z}+\beta_r$ with distinct residues, and the finite Vandermonde matrix of those residues gives a Riesz-basis proof. The generating function is then the finite sine-type product~\eqref{eq:periodic_generating_function}, and the reconstruction kernels are the corresponding finite-product Lagrange kernels.

The periodic-gap theorem provides a structured MT analogue of Shannon-type reconstruction within a repeated finite-gap regime. It is still not a universal threshold-spacing rule: the hypothesis is the entire normalized event-gap pattern, not merely a Nyquist average event rate. Under that hypothesis, complete interpolation is expressed by the transparent finite equality
\[
    \sum_{r=0}^{q-1}h_r=q,
\]
or equivalently one period of $q$ MT events spans exactly $q$ Nyquist intervals. Stable sampling allows the period to be no longer than this critical span, and the adjacent-threshold velocity form~\eqref{eq:periodic_threshold_density_condition} shows that the scalar threshold density is meaningful only after it is scaled by the harmonic mean velocity over the repeated gap word. Thus the theorem shows both sides of the explicitness boundary: a finite condition is possible when the microscopic geometry is finite-state and repeated, while unstructured critical MT sequences still require the canonical-product or weak-limit geometry.

\subsection{Why Static Finite-Energy MT Still Fails Globally}

The weak-limit theorem also sharpens the no-go message. For finite-energy $x\in PW_\Omega$, the retained static MT event stream eventually stops producing informative adjacent-threshold transitions, so the retained recorded set has $D^-=0$ as in~\eqref{eq:finite_energy_mvt_density_zero}. Consequently it cannot satisfy the Landau necessary condition and cannot satisfy the weak-limit uniqueness condition. In this precise sense, the global infinite-dimensional problem has a negative answer for static MT sampling under the finite-energy model unless additional information is supplied.

The Nyquist-padded construction in Subsection~\ref{sec:finite_bridge} should therefore be read as an auxiliary geometric certificate, not as an acquisition claim. Padding creates a bi-infinite Kadec-compatible set whose outside values would make a frame theorem applicable, but those outside values are not measured by the MT hardware. The actual finite-record reconstruction claim passes through either a tail prior or the finite-dimensional rank condition~\eqref{eq:finite_exact_rank_condition}.

\subsection{Role of the Computable Certificates}

The Kadec, density, structured-prior, and hardware sections occupy the gap between exact theory and usable MT certificates. The weak-limit condition tells us what must be true for a general sampling frame. The MT--Pavlov theorem tells us what must be true for a non-redundant Shannon-type cardinal formula. The periodic-gap theorem shows that this formula can become a finite-block MT--Shannon rule when the realized normalized gaps repeat exactly, while its stable-sampling corollary permits redundant periodic frames above the critical density. The Kadec condition supplies a local, easily checked sufficient certificate. The density analysis supplies finite-window retained-density rules and their long-window design scales once extremum-density and suppression-density certificates are available; when an exact retained-density calibration is part of the model and the density is noncritical, Theorem~\ref{thm:apriori_strict_density_mvt} upgrades that density comparison to an iff. The hardware section then tracks how close the realized event geometry remains to the certified ideal geometry under timing, noise, and threshold perturbations.

Within the retained-event point-sampling formulation studied here, the MT problem is resolved at three levels. For a realized record, the exact object is the translate weak-limit hull of its event times. For a signal class, the exact a priori object is the class weak-limit hull. For computable structured priors, periodic-template orbits and finite-state gap graphs reduce the weak-limit condition to finite algebraic or graph-theoretic checks away from the critical boundary. Scalar threshold-density rules are exact only after this additional calibration or structure has converted voltage traversal into retained time-event density. Without such prior information, the finite-energy static no-go theorem is the correct universal statement.

This layered structure is especially important because MT crossing sets are signal-generated. The signal determines $\Lambda_{\Delta V}(x)$, but the sampling property of $\Lambda_{\Delta V}(x)$ is tested against every $f\in PW_\Omega$, not merely against the generating waveform $x$. A finite-dimensional implementation resolves this circularity by declaring a model space $V_M$ and testing the realized matrix $H_{\mathcal{I},M}$ directly. In that finite setting, the necessary-and-sufficient condition is no longer implicit: $\sigma_{\min}(H_{\mathcal{I},M})>0$.

\subsection{Scope-Preserving Extensions}

The results above resolve the static-threshold retained-event point-sampling problem at the level of exact geometry, structured priors, and the static finite-energy boundary. The following extensions would sharpen application-specific priors but are not needed for the main theorem. One can study the weak-limit hull of MT-generated sets more directly, because MT events are not arbitrary separated sequences: on monotone segments their gaps are controlled by $\Delta V/|x'|$, and near non-degenerate extrema they follow the phase-aware square-root law of Proposition~\ref{prop:phase_aware_extremum_pattern}. Application-specific pulse libraries, periodic templates, finite-state gap graphs, and interval-uncertain graph priors are natural places to look for sharper certificates.

\section{Limitations and Future Directions}
\label{sec:limitations}

The main limitation is the deliberate restriction to static, uniformly spaced thresholds. The theory is developed under an ideal bandlimited model, a retained adjacent-transition convention, and deterministic finite-record certificates. Time-varying or adaptive threshold sets would require update latency, calibration dynamics, and closed-loop stability assumptions, and are outside the theory developed here. The present paper answers the corresponding static-threshold point-sampling question within that scope: once the retained event set or its admissible class is specified, nonuniform-sampling theory gives the exact geometry needed for stable reconstruction.

The hardware perturbation results are conditional. They apply when clean and noisy records remain in event correspondence, or when an isolation certificate such as Proposition~\ref{prop:event_correspondence_isolation} is available. Stochastic topology-changing effects, including spurious, missed, or mislabeled crossings, are outside these small-perturbation bounds and require a separate event-process and robust-reconstruction theory.

The ideal $PW_\Omega$ model also needs application-level validation. Real detector pulses are only effectively bandlimited after analog shaping, and the effective bandwidth $f_{max}$ must be defined conservatively. Experimental validation should compare static MT records against high-resolution reference digitizers, verify the retained topology, and measure the realized finite-matrix conditioning and hardware timing budget.

These limitations delimit the scope of the static-threshold retained-event theory rather than leaving its central sampling question unresolved.

\section{Conclusion}
This paper gives a nonuniform-sampling theory for retained Multi-Threshold data generated by static, uniformly spaced thresholds. The central conclusion is geometric: the threshold set first produces a retained event set, and stable reconstruction is determined by the nonuniform sampling geometry of that set.

For a realized retained time set, Beurling weak limits give the exact necessary-and-sufficient sampling criterion for $PW_\Omega$. For a signal class, the corresponding exact a priori object is the class weak-limit hull. These two results identify the general solution of the problem inside classical nonuniform sampling theory.

The remaining results explain when this abstract criterion becomes computable. Periodic-template hulls, periodic-gap streams, finite-state gap priors, and noncritical calibrated density reduce the weak-limit criterion to finite rank, density, or graph-theoretic tests under explicit structure. Kadec, retained-density, finite-matrix, and hardware perturbation bounds provide practical sufficient certificates for active records and implementations.

The theory also proves its own boundary. In the unrestricted static finite-energy model, retained MT records generated by static, uniformly spaced thresholds have zero full-line lower density and cannot sample the whole space $PW_\Omega$ without additional information. Finite records therefore require a tail prior or a declared finite-dimensional reconstruction space. Within this retained-event point-sampling formulation, the MT recovery question is resolved in the following sense: exact recovery is governed by event geometry, computable guarantees require explicit structure or finite-model assumptions, and the prior-free finite-energy full-line problem is impossible.

\appendices
\numberwithin{equation}{section}
\renewcommand{\theequation}{\thesection.\arabic{equation}}
\renewcommand{\[}{\begin{equation}}
\renewcommand{\]}{\end{equation}}

\section{Landau Necessity and Beurling Sufficiency Inputs}
\label{app:landau}
This appendix records the density inputs used in Section~\ref{sec:preliminaries}. The necessary density bound is derived from Landau's concentration argument. The strict-density sufficient bound is used as a classical external harmonic-analysis theorem; we record only the normalization and frame-bound transfer needed in this paper. Throughout the Landau subsection, $K(t,s)$ denotes the reproducing kernel of $PW_\Omega$ introduced later in Appendix~\ref{app:trace}.

\subsection{Landau Necessity: Every Sampling Set Must Satisfy $D^-(\Lambda) \ge \Omega/\pi$}
Assume that $\Lambda$ is a sampling set for $PW_\Omega$, so that the frame lower bound
\begin{equation}
    A\|f\|_{L^2}^2 \le \sum_{\lambda \in \Lambda} |f(\lambda)|^2
    \label{eq:landau_lower_appendix}
\end{equation}
holds for every $f \in PW_\Omega$.

\textbf{Step 1: The concentration operator and its high-energy subspaces.} For $a>0$, define the time-truncation operator $D_a f = f\,\mathbf{1}_{[-a,a]}$ and the bandlimiting projector $B_\Omega f = \mathcal{F}^{-1}(\mathbf{1}_{[-\Omega,\Omega]}\hat{f})$, where $\mathbf{1}_E$ denotes the indicator function of the set $E$. Their composition
\begin{equation}
    T_a = B_\Omega D_a B_\Omega
\end{equation}
acts on $PW_\Omega$ and is compact, self-adjoint, and positive. Indeed, on $PW_\Omega$ it is an integral operator with kernel $K(t,s)\mathbf{1}_{[-a,a]}(s)$, which is square-integrable on $\mathbb{R}\times[-a,a]$; hence $T_a$ is Hilbert-Schmidt and therefore compact. If $\{\phi_k^{(a)}\}_{k \ge 1}$ is an orthonormal eigenbasis with eigenvalues $1 \ge \mu_1^{(a)} \ge \mu_2^{(a)} \ge \cdots \ge 0$, then
\begin{equation}
    \langle T_a f, f \rangle_{L^2} = \int_{-a}^{a} |f(t)|^2\,dt
    \label{eq:landau_energy_identity}
\end{equation}
for every $f \in PW_\Omega$. In particular, the trace of $T_a$ (derived in Appendix~\ref{app:trace}) is
\begin{equation}
    \mathrm{Tr}(T_a) = \sum_{k \ge 1} \mu_k^{(a)} = \frac{2\Omega a}{\pi}.
    \label{eq:trace}
\end{equation}

Fix any $\alpha \in (0,1)$ and define the high-energy subspace
\begin{equation}
    E_a(\alpha) = \mathrm{span}\{\phi_k^{(a)} : \mu_k^{(a)} \ge \alpha\}.
\end{equation}
By the classical prolate-spheroidal concentration theory~\cite{Slepian1961,Slepian1978} and the Widom--Landau eigenvalue asymptotics~\cite{Widom1964,LandauWidom1980} (see Appendix~\ref{app:prolate} for the precise statement), for every fixed $\alpha \in (0,1)$,
\begin{equation}
    \dim E_a(\alpha) = \frac{2\Omega a}{\pi} + o(a), \qquad a \to \infty.
    \label{eq:landau_dim_asymp}
\end{equation}
Moreover, if $f \in E_a(\alpha)$, then all eigenvalues contributing to $f$ are at least $\alpha$, so by \eqref{eq:landau_energy_identity}
\begin{align}
    \int_{|t|>a} |f(t)|^2\,dt
     & = \|f\|_{L^2}^2 - \int_{-a}^{a} |f(t)|^2\,dt     \\
     & = \|f\|_{L^2}^2 - \langle T_a f, f \rangle_{L^2} \\
     & \le (1-\alpha)\|f\|_{L^2}^2.
    \label{eq:landau_leakage}
\end{align}
Thus functions in $E_a(\alpha)$ are quantitatively concentrated on $[-a,a]$.

\textbf{Step 2: Sampling sets have bounded local multiplicity.} To use \eqref{eq:landau_leakage} at the level of samples, we need a localization estimate for point evaluations. This requires a finite overlap bound on the sample set. Let
\begin{equation}
    \kappa_x(t) = \frac{K(t,x)}{\sqrt{K(x,x)}} = \sqrt{\frac{\pi}{\Omega}}\,K(t,x),
\end{equation}
so that $\kappa_x \in PW_\Omega$, $\|\kappa_x\|_{L^2}=1$, and $\kappa_x(x)=\sqrt{\Omega/\pi}$. Because $K(t,x)=\sin[\Omega(t-x)]/[\pi(t-x)]$ depends only on $t-x$ and is continuous, there exist constants $r_{\mathrm{loc}}>0$ and $c_0>0$ such that
\begin{equation}
    |\kappa_x(t)| \ge c_0 \qquad \text{whenever } |t-x| \le r_{\mathrm{loc}},
    \label{eq:landau_kernel_lower}
\end{equation}
uniformly in $x \in \mathbb{R}$. Applying the upper frame bound in \eqref{eq:frame} to $\kappa_x$ gives
\begin{equation}
    c_0^2\,\#\bigl(\Lambda \cap [x-r_{\mathrm{loc}}, x+r_{\mathrm{loc}}]\bigr)
    \le \sum_{\lambda \in \Lambda} |\kappa_x(\lambda)|^2
    \le B.
\end{equation}
Hence
\begin{equation}
    N_\Lambda \triangleq \sup_{x \in \mathbb{R}} \#\bigl(\Lambda \cap [x-r_{\mathrm{loc}}, x+r_{\mathrm{loc}}]\bigr) < \infty.
    \label{eq:landau_local_multiplicity}
\end{equation}
That is, $\Lambda$ is relatively separated. The standard localized Plancherel-P\'olya inequality for relatively separated sets~\cite{Pogany2023} then implies the existence of a constant $C_{\Lambda,\Omega}>0$ such that for every measurable set $E \subseteq \mathbb{R}$ and every $f \in PW_\Omega$,
\begin{equation}
    \sum_{\lambda \in \Lambda \cap E} |f(\lambda)|^2
    \le C_{\Lambda,\Omega} \int_{E + [-r_{\mathrm{loc}},r_{\mathrm{loc}}]} |f(t)|^2\,dt.
    \label{eq:landau_local_pp}
\end{equation}

\textbf{Step 3: Contradiction if the lower density is below critical.} Assume for contradiction that
\begin{equation}
    D^-(\Lambda) < \frac{\Omega}{\pi}.
\end{equation}
Choose $\varepsilon>0$ such that $D^-(\Lambda) \le \Omega/\pi - 3\varepsilon$. By definition of $D^-$, there exist sequences $R_j \to \infty$ and $a_j \in \mathbb{R}$ such that
\begin{equation}
    N_j \triangleq \#\bigl(\Lambda \cap [a_j-R_j, a_j+R_j]\bigr)
    \le \frac{2\Omega R_j}{\pi} - 2\varepsilon R_j
    \label{eq:landau_low_density_interval}
\end{equation}
for all sufficiently large $j$.

Translation preserves both the space $PW_\Omega$ and the frame bounds: if $\Lambda_j = \Lambda - a_j$, then $\Lambda_j$ is also a sampling set for $PW_\Omega$ with the same constants $A$ and $B$. Therefore we may replace $\Lambda$ by $\Lambda_j$ and work with the centered interval $[-R_j,R_j]$.

Now choose $\alpha \in (0,1)$ so close to $1$ that
\begin{equation}
    C_{\Lambda,\Omega}(1-\alpha) < \frac{A}{2}.
    \label{eq:landau_alpha_choice}
\end{equation}
Set $a_j' = R_j - r_{\mathrm{loc}}$. By \eqref{eq:landau_dim_asymp},
\begin{equation}
    \dim E_{a_j'}(\alpha) = \frac{2\Omega(R_j-r_{\mathrm{loc}})}{\pi} + o(R_j)
    = \frac{2\Omega R_j}{\pi} + o(R_j).
\end{equation}
Combining this with \eqref{eq:landau_low_density_interval}, we obtain for all sufficiently large $j$ (see Appendix~\ref{app:landau_step3_detail} for the explicit counting step),
\begin{equation}
    \dim E_{a_j'}(\alpha) > N_j.
    \label{eq:landau_dimension_excess}
\end{equation}
Therefore there exists a nonzero function $f_j \in E_{a_j'}(\alpha)$ that vanishes at every sample point of $\Lambda_j$ inside $[-R_j,R_j]$.

Applying the lower frame bound \eqref{eq:landau_lower_appendix} to $f_j$ and using the vanishing just obtained yields
\begin{equation}
    A\|f_j\|_{L^2}^2
    \le \sum_{\lambda \in \Lambda_j} |f_j(\lambda)|^2
    = \sum_{\lambda \in \Lambda_j \setminus [-R_j,R_j]} |f_j(\lambda)|^2.
    \label{eq:landau_lower_on_outside}
\end{equation}
Now apply the localized Plancherel-P\'olya estimate \eqref{eq:landau_local_pp} with $E = \mathbb{R} \setminus [-R_j,R_j]$:
\begin{equation}
    \sum_{\lambda \in \Lambda_j \setminus [-R_j,R_j]} |f_j(\lambda)|^2
    \le C_{\Lambda,\Omega} \int_{|t|>R_j-r_{\mathrm{loc}}} |f_j(t)|^2\,dt.
    \label{eq:landau_outside_samples_bound}
\end{equation}
Since $f_j \in E_{R_j-r_{\mathrm{loc}}}(\alpha)$, the leakage estimate \eqref{eq:landau_leakage} gives
\begin{equation}
    \int_{|t|>R_j-r_{\mathrm{loc}}} |f_j(t)|^2\,dt
    \le (1-\alpha)\|f_j\|_{L^2}^2.
    \label{eq:landau_outside_energy_bound}
\end{equation}
Combining \eqref{eq:landau_lower_on_outside}--\eqref{eq:landau_outside_energy_bound} with \eqref{eq:landau_alpha_choice}, we obtain
\begin{equation}
    A\|f_j\|_{L^2}^2
    \le C_{\Lambda,\Omega}(1-\alpha)\|f_j\|_{L^2}^2
    < \frac{A}{2}\|f_j\|_{L^2}^2,
\end{equation}
which is impossible because $f_j \neq 0$. This contradiction proves that every sampling set must satisfy
\begin{equation}
    D^-(\Lambda) \ge \frac{\Omega}{\pi} = 2f_{max}.
\end{equation}

\subsection{Beurling Strict-Density Sufficiency as an External Input}
\label{app:beurling_strict_density_input}

The sufficient direction used in the main text is the classical Beurling
strict-density theorem~\cite{beurling1966local,Seip2004,GrochenigRomeroStoeckler2018}.
We do not reprove the balayage theorem here; it is used as an external
harmonic-analysis input. What is needed in the present paper is the normalization
from general bandwidth $\Omega$ to $PW_\pi$ and the transfer from the classical
statement to the sampling-frame inequality.

\begin{theorem}[Beurling strict-density theorem]
\label{thm:beurling_strict_density_external}
Let $\Lambda\subset\mathbb R$ be uniformly discrete. If
\[
    D^-(\Lambda)>\frac{\Omega}{\pi},
\]
then $\Lambda$ is a sampling set for $PW_\Omega$; that is, there exist
$0<A\le B<\infty$ such that
\[
    A\|f\|_{L^2}^2
    \le
    \sum_{\lambda\in\Lambda}|f(\lambda)|^2
    \le
    B\|f\|_{L^2}^2,
    \qquad
    f\in PW_\Omega .
\]
\end{theorem}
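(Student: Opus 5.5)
The plan is to reduce the statement to Beurling's weak-limit theorem, which Theorem~\ref{thm:mvt_weak_limit_iff} already uses as an external input. The paper treats the strict-density theorem as external, so this is an explicit route through that criterion rather than a new balayage proof.

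\textbf{Step 1: upper bound.} The upper frame bound follows from the Plancherel--P\'olya inequality for the separated set $\Lambda$. Separation gives bounded local multiplicity, and the localized estimate~\eqref{eq:landau_local_pp} with $E=\mathbb R$ then yields $B<\infty$. After this, only the lower bound remains.

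\textbf{Step 2: a strict density margin.} By the weak-limit criterion, the lower bound holds if every $\Gamma\in W(\Lambda)$ is a uniqueness set for $\mathcal B_\Omega$. I first pass the strict density margin to weak limits. Fix $\varepsilon>0$ with $D^-(\Lambda)>\Omega/\pi+2\varepsilon$. There is then $R_0$ such that every interval of length $r\ge R_0$ contains at least $(\Omega/\pi+\varepsilon)r$ points of $\Lambda$.

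\textbf{Step 3: density of weak limits.} Uniform separation with constant $\delta_{\mathrm{sep}}$ ensures that point counts on intervals are stable under weak convergence. If $\Lambda-a_j\xrightarrow{w}\Gamma$, then for any interval $I$ the count $\#(\Gamma\cap I)$ equals the count of $\Lambda-a_j$ on $I$, up to boundary effects. These effects are bounded by two points once $\varepsilon<\delta_{\mathrm{sep}}/2$. Hence $\Gamma$ is separated and satisfies $D^-(\Gamma)\ge\Omega/\pi+\varepsilon$.

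\textbf{Step 4: uniqueness for $\mathcal B_\Omega$.} It remains to show that a separated set of density strictly above $\Omega/\pi$ is a uniqueness set for $\mathcal B_\Omega$. Suppose $F\in\mathcal B_\Omega$ is nonzero and vanishes on $\Gamma$. Dividing $F$ by a factor such as $(z-z_0)^k$ at a zero does not change its type, so we may work with $F$ directly. Jensen's formula, or the classical zero-counting theorem for bounded functions of exponential type $\Omega$ on $\mathbb R$, bounds the zero count of $F$ on long real intervals. Concretely, the number of real zeros in $[t,t+r]$ is at most $(\Omega/\pi)r+o(r)$, uniformly in $t$. This uniformity follows from Levinson--Cartwright theory, or from applying Jensen's formula to $F(z+t)$ on disks of radius comparable to $r$ together with the bound $|F(x+iy)|\le\|F\|_\infty e^{\Omega|y|}$. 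This count contradicts $\#(\Gamma\cap[t,t+r])\ge(\Omega/\pi+\varepsilon)r$, so $F\equiv0$.

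\textbf{Step 5: conclusion.} Applying the weak-limit theorem then gives the lower frame bound $A>0$.

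The main obstacle is the uniform real-zero counting bound in Step 4. A naive Jensen estimate on a disk of radius $r$ gives a constant worse than $\Omega/\pi$. The sharp constant requires either the Poisson--Jensen formula in the upper and lower half-planes separately, or Cartwright's theorem that $F$ has zero density exactly $\Omega/\pi$ in the indicator sense. Obtaining that sharp constant uniformly in translations, which weak limits demand, is where I expect the delicate work to lie. I would first try the half-plane Carleman formula with the Phragm\'en--Lindel\"of bound $|F(x+iy)|\le\|F\|_\infty e^{\Omega|y|}$.
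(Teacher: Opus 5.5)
Your route differs from the paper's. The paper does not prove the theorem: it cites Beurling's theorem as an external input and only checks the rescaling $\widetilde\Lambda=(\Omega/\pi)\Lambda$ and the Plancherel--P\'olya upper bound. You instead derive the result from the sufficiency half of Theorem~\ref{thm:mvt_weak_limit_iff}, plus density inheritance and a uniqueness lemma. That is essentially Beurling's own architecture. It shows where the strict inequality enters, but it leans on a black box at least as deep as the statement itself. Steps 1--3 are essentially fine (two minor points at the end).

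\textbf{The gap is in Step 4.} The lemma you propose to prove there is false. It says every nonzero $F\in\mathcal B_\Omega$ has at most $(\Omega/\pi)r+o(r)$ real zeros in $[t,t+r]$, uniformly in $t$. Superoscillation, which the paper itself invokes in Section~\ref{sec:structure_density}, gives counterexamples:
\begin{itemize}
\item Choose $L_k\to\infty$. Since restrictions of $PW_\Omega$ are dense in $C[0,L_k]$, pick real $g_k\in PW_\Omega$ with $|g_k(x)-\sin((\Omega+\pi\varepsilon)x)|<1/2$ on $[0,L_k]$.
\item Then $g_k$ has about $(\Omega/\pi+\varepsilon)L_k$ sign changes there, realized at points where $|g_k|>1/2$.
\item Set $F=\sum_k c_kg_k(\cdot-N_k)$ with $\sum_k c_k(\|g_k\|_2+\|g_k\|_\infty)<\infty$. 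Choose the shifts $N_k$ inductively so large that the sum of all other terms stays below $c_k/4$ on $[N_k,N_k+L_k]$. This is possible because every $g_j$ tends to $0$ at $\pm\infty$.
\item The sign pattern survives on each interval, so $F$ is a nonzero element of $PW_\Omega\subset\mathcal B_\Omega$ with upper uniform zero density at least $\Omega/\pi+\varepsilon$.
\end{itemize}
Neither suggested justification could give the uniform bound anyway. Cartwright--Levinson theory counts zeros in discs about a fixed centre. Jensen's formula at centre $t$ carries the term $\log(\|F\|_\infty/|F(t)|)$, which is unbounded in $t$. So the ``main obstacle'' you identify is an obstacle to proving a false statement.

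\textbf{The fix.} Uniformity in translations is needed only on the point-set side, and Step 3 already supplies it. Fix one $t_0\in\mathbb R$ with $F(t_0)\ne0$, and let $n_{t_0}(s)$ count the zeros of $F$ in $|z-t_0|<s$. Use Jensen's formula with the pointwise Phragm\'en--Lindel\"of bound $\log|F(t_0+Re^{j\theta})|\le\log\|F\|_\infty+\Omega R|\sin\theta|$:
\begin{multline*}
\int_0^R\frac{n_{t_0}(s)}{s}\,ds
=\frac{1}{2\pi}\int_0^{2\pi}\log\left|F(t_0+Re^{j\theta})\right|d\theta\\
-\log|F(t_0)|
\le\frac{2\Omega R}{\pi}+\log\frac{\|F\|_\infty}{|F(t_0)|},
\end{multline*}
since $\frac{1}{2\pi}\int_0^{2\pi}|\sin\theta|\,d\theta=\frac{2}{\pi}$. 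Step 3 gives $n_{t_0}(s)\ge 2(\Omega/\pi+\varepsilon)s-2$ for $s\ge R_0$. Hence $\int_{R_0}^R n_{t_0}(s)\,ds/s\ge 2(\Omega/\pi+\varepsilon)(R-R_0)-2\log(R/R_0)$. Comparing the two bounds gives $2\varepsilon R=O(\log R)$, a contradiction.

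The averaged angular bound already gives the sharp constant. The factor $\pi/2$ is lost only if you replace the circle average by $\log M(R)\le\Omega R+\log\|F\|_\infty$. No half-plane Carleman formula is needed.

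\textbf{Two minor points.}
\begin{itemize}
\item In Step 1, cite \eqref{eq:pp_local_separated} directly. The estimate \eqref{eq:landau_local_pp} was derived in Appendix~\ref{app:landau} from the upper frame bound of a set assumed to be sampling.
\item Invoke only the implication (iii)$\Rightarrow$(i) of Theorem~\ref{thm:mvt_weak_limit_iff}. The converse fails at critical density: $\frac{\pi}{\Omega}\mathbb Z$ samples $PW_\Omega$, yet $\sin(\Omega z)\in\mathcal B_\Omega$ vanishes on it, and likewise on all its weak limits.
\end{itemize}
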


\textit{Normalization.}
Let $c_\Omega=\Omega/\pi$ and define
\[
    \widetilde\Lambda=c_\Omega\Lambda.
\]
If $f\in PW_\Omega$ and
\[
    g(u)=f(u/c_\Omega)=f(\pi u/\Omega),
\]
then $g\in PW_\pi$,
\[
    \|g\|_{L^2}^2=c_\Omega\|f\|_{L^2}^2,
\]
and
\[
    g(c_\Omega\lambda)=f(\lambda).
\]
Moreover,
\[
    D^-(\widetilde\Lambda)
    =
    \frac{\pi}{\Omega}D^-(\Lambda)>1.
\]
Thus the normalized Beurling theorem for $PW_\pi$ applies to
$\widetilde\Lambda$. Scaling back gives the displayed frame inequality on
$PW_\Omega$. The upper frame bound follows from the Plancherel--P\'olya inequality
for uniformly discrete sets.

\section{Details for the Weak-Limit MT Characterization}
\label{app:exact_iff}
This appendix expands the reductions used in Section~\ref{sec:exact_iff}. The only non-elementary input is Beurling's weak-limit sampling theorem for separated sets, stated below in the precise form needed here. All remaining steps are bookkeeping: scaling the Paley-Wiener convention, translating MT labels into point samples, and reducing finite-record recovery to matrix rank.
The roadmap is as follows: Appendix~\ref{app:weak_limit_theorem} states the external weak-limit theorem, Appendix~\ref{app:mvt_to_sampling_operator} converts retained MT labels into ordinary point samples, Appendix~\ref{app:fourier_frame_equiv} checks the Fourier-frame normalization, Appendix~\ref{app:exact_mvt_sep} proves the MT separation estimate, Appendix~\ref{app:finite_energy_density_zero} records the finite-energy density obstruction, and Appendix~\ref{app:finite_rank_equiv} gives the finite-dimensional rank equivalence.

\subsection{External Weak-Limit Theorem}
\label{app:weak_limit_theorem}
For a separated set $\Lambda\subset\mathbb{R}$, let $W(\Lambda)$ denote the set of all translate weak limits as defined in Section~\ref{sec:weak_limit_condition}. Beurling's theorem states:
\begin{align}
    \Lambda \text{ samples } PW_\Omega
    &\Longleftrightarrow
    \Gamma \text{ is a uniqueness set}\\
    &\quad\text{for }\mathcal{B}_\Omega
    \text{ for every }\Gamma\in W(\Lambda).
\end{align}
This theorem is a genuine infinite-dimensional sampling theorem~\cite{beurling1966local,Seip2004,GrochenigRomeroStoeckler2018}. It is stronger than the density pair used in Appendix~\ref{app:landau}: Landau's theorem gives a necessary lower-density inequality, Beurling's strict-density theorem gives a convenient sufficient condition, while the weak-limit theorem decides the critical and noncritical cases by keeping the full local limiting geometry.

The class-hull theorem uses the corresponding uniform-family form. If $\mathcal F$ is a uniformly separated family of point sets, then $\mathcal F$ has common sampling frame bounds for $PW_\Omega$ if and only if every translate weak limit of every sequence $\Lambda_j\in\mathcal F$ is a uniqueness set for $\mathcal B_\Omega$. This is also used as an external Beurling theorem; the present paper applies it to the family $\{\Lambda_{\Delta V}(x):x\in\mathcal X\}$ after the uniform-separation hypothesis has been verified.

The normalization in many references uses $PW_\pi$ or the Fourier support interval $[-1/2,1/2]$. Under the convention of this paper,
\[
    \mathrm{supp}\,\hat f\subseteq[-\Omega,\Omega],
    \qquad
    \Omega=2\pi f_{max},
\]
the critical density is
\[
    \frac{\Omega}{\pi}=2f_{max}.
\]
The scaling map $\widetilde{\Lambda}=(\Omega/\pi)\Lambda$ carries $PW_\Omega$ to $PW_\pi$ by $g(u)=f(\pi u/\Omega)$; hence a statement proved at bandwidth $\pi$ transfers to the present convention with the density threshold multiplied by $\Omega/\pi$.

\subsection{From MT Events to a Sampling Operator}
\label{app:mvt_to_sampling_operator}
Let the ideal MT record be
\[
    \{(t_n,m_n)\}_{n\in\mathcal{N}},
    \qquad
    x(t_n)=m_n\Delta V.
\]
The acquisition map $x\mapsto\Lambda_{\Delta V}(x)$ is nonlinear because the event times depend on $x$. However, once the event set is realized, the reconstruction stage sees the linear point-evaluation operator
\[
    S_\Lambda f = (f(t_n))_{n\in\mathcal{N}}.
\]
If $S_\Lambda$ satisfies the frame inequality
\[
    A\|f\|_{L^2}^2
    \le
    \|S_\Lambda f\|_{\ell^2}^2
    \le
    B\|f\|_{L^2}^2,
    \qquad f\in PW_\Omega,
\]
then $S_\Lambda$ is bounded below and has a bounded inverse on its range. Conversely, if stable recovery from the samples is possible by a bounded reconstruction operator $R$ satisfying $R S_\Lambda f=f$, then
\[
    \|f\|_{L^2}
    =
    \|R S_\Lambda f\|_{L^2}
    \le
    \|R\|\,\|S_\Lambda f\|_{\ell^2},
\]
which gives the lower frame bound $A=\|R\|^{-2}$. The upper bound is the ordinary Plancherel--P\'olya boundedness of point evaluations on separated sets. Thus stable recoverability and the sampling frame inequality are equivalent.

\subsection{Equivalence with a Fourier Frame}
\label{app:fourier_frame_equiv}
Let
\[
    h(\omega)=\frac{1}{\sqrt{2\pi}}\hat f(\omega),
    \qquad \omega\in[-\Omega,\Omega].
\]
By Plancherel, $\|h\|_{L^2[-\Omega,\Omega]}=\|f\|_{L^2(\mathbb{R})}$. For every sample point $t_n$,
\begin{align}
    f(t_n)
    &=
    \frac{1}{2\pi}
    \int_{-\Omega}^{\Omega}
        \hat f(\omega)e^{j\omega t_n}\,d\omega  \\
    &=
    \frac{1}{\sqrt{2\pi}}
    \int_{-\Omega}^{\Omega}
        h(\omega)e^{j\omega t_n}\,d\omega \\
    &=
    \frac{1}{\sqrt{2\pi}}
    \langle h,e^{-j\omega t_n}\rangle_{L^2[-\Omega,\Omega]},
\end{align}
where the inner product is linear in its first argument. Therefore
\[
    \sum_n |f(t_n)|^2
    =
    \frac{1}{2\pi}
    \sum_n
    |\langle h,e^{-j\omega t_n}\rangle|^2.
\]
The sampling inequality for $f$ is thus equivalent, after multiplying the frame constants by $2\pi$, to the Fourier-frame inequality for the exponential system $\{e^{-j\omega t_n}\}$ on $L^2[-\Omega,\Omega]$.

\subsection{MT Separation}
\label{app:exact_mvt_sep}
Assume consecutive retained MT events are adjacent threshold transitions, as enforced by the retained adjacent-transition rule in~\eqref{eq:mvt_set}. Then
\[
    |x(t_{n+1})-x(t_n)|=\Delta V.
\]
By the mean value theorem there exists $\xi_n\in(t_n,t_{n+1})$ such that
\[
    \Delta V
    =
    |x'(\xi_n)|\,|t_{n+1}-t_n|.
\]
Since $x\in PW_\Omega$ extends to an entire function of type at most $\Omega$, Bernstein's inequality gives
\[
    |x'(\xi_n)|\le\|x'\|_\infty\le\Omega\|x\|_\infty.
\]
Hence
\[
    |t_{n+1}-t_n|
    \ge
    \frac{\Delta V}{\Omega\|x\|_\infty}.
\]
This separation is the hypothesis needed to invoke the weak-limit theorem and the Plancherel--P\'olya upper sampling bound. If the event topology changes because of noise-induced spurious events or mislabeled thresholds, this deterministic separation statement must be replaced by the topology certificate of Section~\ref{sec:event_topology}.

\subsection{Why Finite-Energy Static MT Has Zero Global Density}
\label{app:finite_energy_density_zero}
This appendix records the analytic tail input used in Theorem~\ref{thm:no_go}. Let $x\in PW_\Omega$. Since $\hat{x}\in L^2[-\Omega,\Omega]$ and the support interval has finite measure,
\[
    \|\hat{x}\|_{L^1[-\Omega,\Omega]}
    \le
    (2\Omega)^{1/2}\|\hat{x}\|_{L^2[-\Omega,\Omega]}<\infty.
\]
The inverse Fourier integral is therefore the Fourier transform of an $L^1$ function, so the Riemann--Lebesgue lemma gives $x(t)\to0$ as $|t|\to\infty$. Given $\Delta V>0$, choose $T$ such that $|x(t)|<\Delta V$ outside $[-T,T]$.

Outside this interval, nonzero thresholds cannot be crossed. The retained adjacent-transition convention suppresses repeated same-threshold returns once the waveform remains in the single cell $(-\Delta V,\Delta V)$, and the compact part is controlled by the separation estimate of Appendix~\ref{app:exact_mvt_sep}. The resulting density-zero conclusion~\eqref{eq:finite_energy_mvt_density_zero}, together with the Landau exclusion, is proved in Theorem~\ref{thm:no_go}.

\subsection{Finite-Dimensional Rank Equivalence}
\label{app:finite_rank_equiv}
Let $V_M=\mathrm{span}\{\phi_1,\ldots,\phi_M\}$ and write
\[
    p(t)=\sum_{k=1}^M c_k\phi_k(t).
\]
The finite MT samples are
\[
    y_n=p(t_n)=\sum_{k=1}^M (H_{\mathrm{fin}})_{n,k}c_k,
    \qquad
    (H_{\mathrm{fin}})_{n,k}=\phi_k(t_n),
\]
or $\mathbf{y}=H_{\mathrm{fin}}\mathbf{c}$.

If $\mathrm{rank}(H_{\mathrm{fin}})=M$, then $H_{\mathrm{fin}}^*H_{\mathrm{fin}}$ is positive definite. The least-squares inverse
\[
    H_{\mathrm{fin}}^\dagger=(H_{\mathrm{fin}}^*H_{\mathrm{fin}})^{-1}H_{\mathrm{fin}}^*
\]
satisfies $H_{\mathrm{fin}}^\dagger H_{\mathrm{fin}}=I_M$, so $\mathbf{c}=H_{\mathrm{fin}}^\dagger\mathbf{y}$ is uniquely recovered. Moreover,
\[
    \sigma_{\min}(H_{\mathrm{fin}})\|\mathbf{c}\|_2
    \le
    \|H_{\mathrm{fin}}\mathbf{c}\|_2
    \le
    \sigma_{\max}(H_{\mathrm{fin}})\|\mathbf{c}\|_2,
\]
which gives finite frame bounds $A_M=\sigma_{\min}(H_{\mathrm{fin}})^2$ and $B_M=\sigma_{\max}(H_{\mathrm{fin}})^2$ in an orthonormal model basis.

Conversely, if $\mathrm{rank}(H_{\mathrm{fin}})<M$, then $\ker H_{\mathrm{fin}}$ contains some nonzero $\mathbf{c}_0$. The nonzero model signal
\[
    p_0(t)=\sum_{k=1}^M(c_0)_k\phi_k(t)
\]
satisfies $p_0(t_n)=0$ for all recorded samples. Hence $p$ and $p+p_0$ produce identical finite MT sample values on the record. No algorithm can distinguish them from those data alone. This proves the finite-record equivalence~\eqref{eq:finite_exact_rank_condition}.

\subsection{Critical-Density Complete Interpolation}
\label{app:pavlov_a2}
This subsection supplies the details behind the Shannon-type MT--Pavlov theorem in Subsection~\ref{sec:critical_complete_interp}. It is useful to separate three notions. A sampling sequence may be redundant; then its natural frequency-domain object is a frame of exponentials. A complete interpolating sequence is non-redundant; then the exponential system is a Riesz basis and one obtains a cardinal Lagrange formula. The Shannon grid is the special complete-interpolation sequence whose generating function is $\sin\pi z$.

\textbf{Step 1: Scaling the MT nodes.}
Let $S=\{s_n\}_{n\in\mathbb{Z}}\subseteq\Lambda_{\Delta V}(x)$ and define
\[
    \mu_n=\frac{\Omega}{\pi}s_n.
\]
For $f\in PW_\Omega$, set
\[
    g(u)=f\!\left(\frac{\pi}{\Omega}u\right).
\]
We verify directly that $g\in PW_\pi$. Write the inverse Fourier representation of $f$:
\[
    f(t)=\frac{1}{2\pi}\int_{-\Omega}^{\Omega}
        \hat f(\omega)e^{j\omega t}\,d\omega .
\]
Substituting $t=\pi u/\Omega$ gives
\[
    g(u)=\frac{1}{2\pi}\int_{-\Omega}^{\Omega}
        \hat f(\omega)e^{j(\pi\omega/\Omega)u}\,d\omega .
\]
Let $\xi=\pi\omega/\Omega$, so $\omega=\Omega\xi/\pi$ and $d\omega=(\Omega/\pi)d\xi$. Then
\[
    g(u)=
    \frac{1}{2\pi}
    \int_{-\pi}^{\pi}
        \frac{\Omega}{\pi}
        \hat f\!\left(\frac{\Omega}{\pi}\xi\right)
        e^{j\xi u}\,d\xi .
\]
Thus
\[
    \widehat g(\xi)
    =
    \frac{\Omega}{\pi}
    \hat f\!\left(\frac{\Omega}{\pi}\xi\right),
    \qquad
    \operatorname{supp}\widehat g\subseteq[-\pi,\pi].
\]
The samples agree under the same scaling:
\[
    g(\mu_n)
    =
    f\!\left(\frac{\pi}{\Omega}\mu_n\right)
    =
    f(s_n).
\]
The $L^2$ norm changes only by a constant:
\[
    \|g\|_{L^2(\mathbb{R})}^2
    =
    \int_{\mathbb{R}}
        \left|f\!\left(\frac{\pi}{\Omega}u\right)\right|^2\,du
    =
    \frac{\Omega}{\pi}
    \|f\|_{L^2(\mathbb{R})}^2.
\]
Consequently $S$ is a complete interpolating sequence for $PW_\Omega$ exactly when $\{\mu_n\}$ is a complete interpolating sequence for $PW_\pi$, after the harmless norm scaling above.

\textbf{Step 2: Complete interpolation and exponential Riesz bases.}
For $g\in PW_\pi$,
\[
    g(\mu_n)
    =
    \frac{1}{2\pi}
    \int_{-\pi}^{\pi}
        \widehat g(\xi)e^{j\xi\mu_n}\,d\xi .
\]
Set $h(\xi)=\widehat g(\xi)/\sqrt{2\pi}$. By Plancherel, $\|h\|_2=\|g\|_2$. With the $L^2[-\pi,\pi]$ inner product linear in the first argument,
\[
    g(\mu_n)
    =
    \frac{1}{\sqrt{2\pi}}
    \langle h,e^{-j\mu_n\xi}\rangle_{L^2[-\pi,\pi]}.
\]
Therefore the restriction map $g\mapsto(g(\mu_n))$ is, up to the constant $(2\pi)^{-1/2}$, the analysis operator of the exponential system
\[
    E_\mu=\{e^{-j\mu_n\xi}\}_{n\in\mathbb{Z}}.
\]
If $E_\mu$ is a Riesz basis, this analysis map is an isomorphism from $L^2[-\pi,\pi]$ onto $\ell^2(\mathbb{Z})$, hence the sample map is an isomorphism from $PW_\pi$ onto $\ell^2(\mathbb{Z})$. Conversely, if the sample map is an isomorphism, then the exponential analysis map is an isomorphism, so $E_\mu$ is the image of the standard orthonormal basis under a bounded invertible operator and is a Riesz basis. This proves the equivalence between complete interpolation and the Riesz-basis statement used in the Shannon-type MT--Pavlov theorem.

\textbf{Step 3: Pavlov/Lyubarskii--Seip criterion.}
Let $G$ be the canonical product with real zero set $\{\mu_n\}$, and assume the standard hypotheses under which $G$ is the generating function of the sequence: the zeros are separated, $G$ has the correct exponential type $\pi$, and the product has the usual convergence properties for real separated complete-interpolation sequences. Pavlov's theorem, in the Lyubarskii--Seip formulation, states that $\{\mu_n\}$ is a complete interpolating sequence for $PW_\pi$ if and only if
\[
    W_G(u)^2
    =
    \left(
        \frac{|G(u)|}{\mathrm{dist}(u,\{\mu_n\})}
    \right)^2
\]
belongs to the Muckenhoupt class $A_2$, namely if and only if~\eqref{eq:muckenhoupt_a2_condition} holds~\cite{LyubarskiiSeip1997}. This condition is exact in the complete-interpolation regime, but it is still global: evaluating it requires the canonical product, hence the complete zero geometry of the selected MT subsequence.

\textbf{Step 4: Lagrange functions and reconstruction formula.}
Assume the equivalent conditions above hold. Define
\[
    \ell_n(u)=
    \frac{G(u)}
         {G'(\mu_n)(u-\mu_n)}.
\]
For $m\ne n$, the numerator vanishes at $u=\mu_m$ while the denominator is finite, so
\[
    \ell_n(\mu_m)=0.
\]
For $m=n$, both numerator and denominator vanish to first order. Since $\mu_n$ is a simple zero of the generating function in the complete-interpolation setting,
\[
    G(u)=G'(\mu_n)(u-\mu_n)+\mathcal{O}((u-\mu_n)^2),
    \qquad u\to\mu_n.
\]
Hence
\[
    \ell_n(\mu_n)
    =
    \lim_{u\to\mu_n}
    \frac{G(u)}
         {G'(\mu_n)(u-\mu_n)}
    =
    1.
\]
Thus $\ell_n(\mu_m)=\delta_{nm}$. Since the restriction map is an isomorphism onto $\ell^2$, the inverse image of the standard basis vector $e_n\in\ell^2$ is precisely $\ell_n$, and every $g\in PW_\pi$ has the expansion
\[
    g(u)
    =
    \sum_{n\in\mathbb{Z}}g(\mu_n)\ell_n(u)
    =
    \sum_{n\in\mathbb{Z}}
    g(\mu_n)
    \frac{G(u)}
         {G'(\mu_n)(u-\mu_n)}.
\]
The convergence is the convergence supplied by the Riesz-basis inverse in $PW_\pi$; standard Paley-Wiener estimates also give local uniform convergence on finite horizontal strips. Returning to physical time $u=\Omega t/\pi$ and using $g(\mu_n)=f(s_n)$ yields
\[
    f(t)
    =
    \sum_{n\in\mathbb{Z}}
    f(s_n)
    \frac{G(\Omega t/\pi)}
         {G'(\mu_n)(\Omega t/\pi-\mu_n)}.
\]
This is exactly~\eqref{eq:mvt_shannon_pavlov_formula}. For the generating waveform measured by ideal MT,
\[
    f(s_n)=x(s_n)=m_n\Delta V,
\]
so the reconstruction coefficients are the recorded threshold labels multiplied by $\Delta V$.

\textbf{Step 5: Recovery of the Shannon formula.}
Let the MT subsequence coincide with a shifted Nyquist grid,
\[
    s_n=\tau+\frac{\pi}{\Omega}n.
\]
Then
\[
    \mu_n=\frac{\Omega}{\pi}s_n
    =
    \frac{\Omega\tau}{\pi}+n
    =
    \alpha+n,
    \qquad
    \alpha=\frac{\Omega\tau}{\pi}.
\]
Choose
\[
    G(u)=\sin\pi(u-\alpha).
\]
Then $G(\alpha+n)=0$ and
\[
    G'(\alpha+n)
    =
    \pi\cos\pi n
    =
    \pi(-1)^n.
\]
The $n$-th Lagrange kernel becomes
\begin{align}
    \ell_n(u)
    &=
    \frac{\sin\pi(u-\alpha)}
         {\pi(-1)^n(u-\alpha-n)}  \\
    &=
    \frac{\sin\{\pi[(u-\alpha-n)+n]\}}
         {\pi(-1)^n(u-\alpha-n)}  \\
    &=
    \frac{(-1)^n\sin\pi(u-\alpha-n)}
         {\pi(-1)^n(u-\alpha-n)}  \\
    &=
    \mathrm{sinc}(u-\alpha-n).
\end{align}
Substituting $u=\Omega t/\pi$ gives
\[
    \ell_n(\Omega t/\pi)
    =
    \mathrm{sinc}\!\left(
        \frac{\Omega t}{\pi}
        -\frac{\Omega\tau}{\pi}
        -n
    \right)
    =
    \mathrm{sinc}\!\left(
        \frac{\Omega(t-\tau)}{\pi}-n
    \right).
\]
Thus the MT--Pavlov formula reduces to the shifted Shannon expansion~\eqref{eq:shifted_shannon_from_pavlov}. In this special case the canonical product $G$ is just the sine generating function, and the general Lagrange kernels are the ordinary sinc kernels.

\textbf{Step 6: Why density alone is not the theorem.}
The complete-interpolation theorem gives an explicit kernel, but its condition is not merely $D^-=1$ after normalization. The normalized Shannon grid $\mathbb{Z}$ has critical density and is a complete interpolating sequence. Removing one point gives $\mathbb{Z}\setminus\{0\}$, whose lower Beurling density is still one. However,
\[
    q(u)=\mathrm{sinc}(u)
    =
    \frac{\sin\pi u}{\pi u}
\]
belongs to $PW_\pi$, satisfies $q(0)=1$, and satisfies
\[
    q(k)=0,\qquad k\in\mathbb{Z}\setminus\{0\}.
\]
Therefore $\mathbb{Z}\setminus\{0\}$ is not a uniqueness set for $PW_\pi$, and the sampling lower bound is zero even though the density has not changed. This example explains why a Shannon-type MT theorem must retain the canonical product or an equivalent global geometric object. Threshold spacing $\Delta V$ and average retained event density influence the node set generated by MT, but they cannot by themselves decide the critical non-redundant case.

\subsection{Periodic-Gap Complete Interpolation}
\label{app:periodic_gap_shannon}
This subsection proves the periodic-gap MT--Shannon theorem stated in Subsection~\ref{sec:periodic_gap_shannon}. The proof is independent of the full Pavlov $A_2$ criterion once the normalized gaps are periodic. The only external sampling-theoretic input is Landau's density necessity: a complete interpolating sequence for $PW_\pi$ is both a stable sampling set and an interpolation set, and therefore must have lower and upper Beurling density equal to one.

\textbf{Step 1: Normalize the physical problem.}
The scaling argument in Appendix~\ref{app:pavlov_a2} already shows that
\[
    f(t)\in PW_\Omega
    \quad\Longleftrightarrow\quad
    g(u)=f\!\left(\frac{\pi}{\Omega}u\right)\in PW_\pi,
\]
with
\[
    g(\mu_n)=f(s_n),
    \qquad
    \mu_n=\frac{\Omega s_n}{\pi}.
\]
Thus $S=\{s_n\}$ is a Shannon-type complete-interpolation sampling sequence for $PW_\Omega$ if and only if $\mu=\{\mu_n\}$ is a complete interpolating sequence for $PW_\pi$. Under the periodic-gap hypothesis
\[
    \mu_{n+1}-\mu_n=h_{n\bmod q},
    \qquad h_r>0,
\]
we have, for every $n$,
\[
    \mu_{n+q}-\mu_n
    =
    \sum_{r=0}^{q-1}h_r
    =
    H.
\]
Therefore every interval of normalized length $H$ contains exactly $q$ points, up to endpoint conventions.

\textbf{Step 2: Density forces $H=q$.}
The normalized periodic set has uniform Beurling density, where $D^+$ denotes the upper Beurling density (the limsup/sup analogue of $D^-$),
\[
    D^-(\mu)=D^+(\mu)=\frac{q}{H}.
\]
Indeed, for a long interval $[a,a+R]$, the number of complete periods contained in the interval is $R/H+\mathcal{O}(1)$, and each complete period contributes $q$ points. Hence
\[
    \#(\mu\cap[a,a+R])
    =
    \frac{q}{H}R+\mathcal{O}(1),
    \qquad R\to\infty,
\]
uniformly in the starting point $a$, which gives both lower and upper densities equal to $q/H$.

If $\mu$ is a complete interpolating sequence for $PW_\pi$, then it is a sampling set and an interpolation set. Landau's necessary density conditions for $PW_\pi$ give
\[
    D^-(\mu)\ge 1,
    \qquad
    D^+(\mu)\le 1.
\]
Substituting $D^-(\mu)=D^+(\mu)=q/H$ yields
\[
    \frac{q}{H}\ge 1,
    \qquad
    \frac{q}{H}\le 1,
\]
and hence
\[
    \frac{q}{H}=1,
    \qquad
    H=q.
\]
This proves the necessity of the finite block condition~\eqref{eq:periodic_mvt_shannon_condition}.

\textbf{Step 3: Periodic nodes become finitely many lattice translates.}
Assume now that $H=q$. Choose the residue representative $r\in\{0,\ldots,q-1\}$ and write $n=kq+r$. Define
\[
    \beta_0=\mu_0,
    \qquad
    \beta_r=\mu_0+\sum_{\ell=0}^{r-1}h_\ell,
    \quad r=1,\ldots,q-1.
\]
By summing the gaps,
\[
    \mu_{kq+r}
    =
    \mu_0+kH+\sum_{\ell=0}^{r-1}h_\ell.
\]
Since $H=q$,
\[
    \mu_{kq+r}=qk+\beta_r.
\]
Therefore
\[
    \mu
    =
    \bigcup_{r=0}^{q-1}(q\mathbb{Z}+\beta_r).
\]
Because all gaps $h_r$ are strictly positive and the full period has length $q$, the representatives satisfy
\[
    \beta_0<\beta_1<\cdots<\beta_{q-1}<\beta_0+q
\]
after the natural ordering along one period. Hence the residues $\beta_r\bmod q$ are distinct.

\textbf{Step 4: The finite fiber matrix is invertible.}
It remains to prove that
\[
    E_\mu=\{e^{-j(qk+\beta_r)u}:k\in\mathbb{Z},\ r=0,\ldots,q-1\}
\]
is a Riesz basis for $L^2[-\pi,\pi]$. A translation of the integration interval only multiplies each exponential by a unit-modulus constant, so it suffices to work on $L^2[0,2\pi]$.

Take finitely supported coefficients $c_{k,r}$ and define
\[
    F(u)=
    \sum_{r=0}^{q-1}\sum_{k\in\mathbb{Z}}
        c_{k,r}e^{-j(qk+\beta_r)u}.
\]
Fiber the interval $[0,2\pi]$ into $q$ subintervals by writing
\[
    u=\theta+\frac{2\pi m}{q},
    \qquad
    0\le\theta<\frac{2\pi}{q},
    \qquad
    m=0,\ldots,q-1.
\]
For each residue $r$, set
\[
    C_r(\theta)=
    \sum_{k\in\mathbb{Z}}c_{k,r}e^{-jqk\theta}.
\]
Then
\begin{align}
    F\!\left(\theta+\frac{2\pi m}{q}\right)
    &=
    \sum_{r=0}^{q-1}\sum_k
    c_{k,r}
    e^{-j(qk+\beta_r)\theta}
    e^{-j(qk+\beta_r)2\pi m/q}       \\
    &=
    \sum_{r=0}^{q-1}
    e^{-j\beta_r\theta}
    e^{-j2\pi m\beta_r/q}
    \sum_k c_{k,r}e^{-jqk\theta},
\end{align}
because $e^{-j2\pi mk}=1$. In vector form,
\[
    \mathbf{F}(\theta)
    =
    V_{\mathrm{per}}D(\theta)\mathbf{C}(\theta),
\]
where
\[
    \mathbf{F}_m(\theta)=
    F\!\left(\theta+\frac{2\pi m}{q}\right),
    \qquad m=0,\ldots,q-1,
\]
the matrix $D(\theta)$ is diagonal with
\[
    (D(\theta))_{r,r}=e^{-j\beta_r\theta},
    \qquad r=0,\ldots,q-1,
\]
and
\[
    (V_{\mathrm{per}})_{m,r}
    =
    e^{-j2\pi m\beta_r/q},
    \qquad
    m,r=0,\ldots,q-1.
\]
The matrix $V_{\mathrm{per}}$ is a Vandermonde matrix in the distinct nodes
\[
    z_r=e^{-j2\pi\beta_r/q}.
\]
Indeed,
\[
    (V_{\mathrm{per}})_{m,r}=z_r^m.
\]
Its determinant is
\[
    \det V_{\mathrm{per}}
    =
    \prod_{0\le r<s\le q-1}(z_s-z_r),
\]
which is nonzero because the residues $\beta_r\bmod q$ are distinct. Hence $V_{\mathrm{per}}$ is invertible. Let $\sigma_{\min}$ and $\sigma_{\max}$ denote its smallest and largest singular values. Since $D(\theta)$ is unitary for real $\theta$,
\[
    \sigma_{\min}^2\|\mathbf{C}(\theta)\|_2^2
    \le
    \|\mathbf{F}(\theta)\|_2^2
    \le
    \sigma_{\max}^2\|\mathbf{C}(\theta)\|_2^2.
\]
Integrating over the base interval gives
\[
    \int_0^{2\pi}\!|F(u)|^2\,du
    =
    \int_0^{2\pi/q}\!\|\mathbf{F}(\theta)\|_2^2\,d\theta.
\]
The functions $\{e^{-jqk\theta}\}_{k\in\mathbb{Z}}$ are orthogonal on $[0,2\pi/q]$ and have squared norm $2\pi/q$, so
\[
    \int_0^{2\pi/q}\!\|\mathbf{C}(\theta)\|_2^2\,d\theta
    =
    \frac{2\pi}{q}
    \sum_{r=0}^{q-1}\sum_{k\in\mathbb{Z}}|c_{k,r}|^2.
\]
Combining the last three displays yields the Riesz inequalities
\[
    \frac{2\pi}{q}\sigma_{\min}^2
    \sum_{r,k}|c_{k,r}|^2
    \le
    \|F\|_{L^2[0,2\pi]}^2
    \le
    \frac{2\pi}{q}\sigma_{\max}^2
    \sum_{r,k}|c_{k,r}|^2.
\]
Thus $E_\mu$ is a Riesz sequence.

It is also complete. The fiber map identifies $L^2[0,2\pi]$ with vector-valued functions $\mathbf{F}(\theta)\in L^2([0,2\pi/q];\mathbb{C}^q)$. The scalar Fourier systems $\{e^{-jqk\theta}\}_{k\in\mathbb{Z}}$ span each component $L^2[0,2\pi/q]$, while the matrix $V_{\mathrm{per}}D(\theta)$ is invertible for every $\theta$. Therefore the closed span of $E_\mu$ is all of $L^2[0,2\pi]$, hence also all of $L^2[-\pi,\pi]$ after the harmless interval translation. This proves that $E_\mu$ is a Riesz basis.

\textbf{Step 5: Riesz basis implies complete interpolation.}
For $g\in PW_\pi$,
\[
    g(\mu_n)
    =
    \frac{1}{2\pi}
    \int_{-\pi}^{\pi}\widehat g(\xi)e^{j\mu_n\xi}\,d\xi.
\]
Therefore the sample map $g\mapsto(g(\mu_n))$ is, up to the constant $(2\pi)^{-1/2}$, the analysis map of the exponential system $E_\mu$. Since $E_\mu$ is a Riesz basis, that analysis map is an isomorphism onto $\ell^2(\mathbb{Z})$. Hence $\mu$ is a complete interpolating sequence for $PW_\pi$. Scaling back by $s_n=(\pi/\Omega)\mu_n$ proves Shannon-type stable exact recovery on $PW_\Omega$. This proves sufficiency of $H=q$ and completes the equivalence of statements (i)--(iv).

\textbf{Step 6: The explicit generating function and Lagrange kernels.}
For the lattice union
\[
    \mu=\bigcup_{r=0}^{q-1}(q\mathbb{Z}+\beta_r),
\]
define
\[
    G_{\mathrm{per}}(u)
    =
    \prod_{r=0}^{q-1}
    \sin\!\left(\frac{\pi}{q}(u-\beta_r)\right).
\]
Each factor has zeros at $\beta_r+q\mathbb{Z}$. Since the residues $\beta_r\bmod q$ are distinct, the zero sets of the factors do not overlap, so every zero of $G_{\mathrm{per}}$ is simple and the full zero set is exactly $\mu$. Each sine factor has exponential type $\pi/q$, hence the product has exponential type $\pi$, the correct type for $PW_\pi$ complete interpolation.

For each node $\mu_n$, define
\[
    \ell_n^{\mathrm{per}}(u)
    =
    \frac{G_{\mathrm{per}}(u)}
         {G_{\mathrm{per}}'(\mu_n)(u-\mu_n)}.
\]
If $m\ne n$, then $G_{\mathrm{per}}(\mu_m)=0$ while the denominator is finite, so
\[
    \ell_n^{\mathrm{per}}(\mu_m)=0.
\]
At the node $\mu_n$, the zero is simple:
\[
    G_{\mathrm{per}}(u)
    =
    G_{\mathrm{per}}'(\mu_n)(u-\mu_n)
    +
    \mathcal{O}((u-\mu_n)^2),
    \qquad u\to\mu_n.
\]
Therefore
\[
    \ell_n^{\mathrm{per}}(\mu_n)
    =
    \lim_{u\to\mu_n}
    \frac{G_{\mathrm{per}}(u)}
         {G_{\mathrm{per}}'(\mu_n)(u-\mu_n)}
    =
    1.
\]
Thus $\ell_n^{\mathrm{per}}(\mu_m)=\delta_{nm}$. Since $\mu$ is complete interpolating, these cardinal functions are the inverse images of the standard basis vectors of $\ell^2$, and every $g\in PW_\pi$ has the expansion
\[
    g(u)=
    \sum_{n\in\mathbb{Z}}
    g(\mu_n)
    \ell_n^{\mathrm{per}}(u)
    =
    \sum_{n\in\mathbb{Z}}
    g(\mu_n)
    \frac{G_{\mathrm{per}}(u)}
         {G_{\mathrm{per}}'(\mu_n)(u-\mu_n)}.
\]
Returning to physical time $u=\Omega t/\pi$ gives
\[
    f(t)
    =
    \sum_{n\in\mathbb{Z}}
    f(s_n)
    \frac{G_{\mathrm{per}}(\Omega t/\pi)}
         {G_{\mathrm{per}}'(\mu_n)(\Omega t/\pi-\mu_n)}.
\]
For the ideal MT waveform itself,
\[
    f(s_n)=x(s_n)=m_n\Delta V,
\]
so the coefficients are exactly the recorded threshold labels multiplied by the threshold spacing.

\textbf{Step 7: Reduction to the classical Shannon formula.}
When $q=1$, the periodic condition is
\[
    H=h_0=1.
\]
Hence
\[
    \mu_n=\mu_0+n.
\]
The product~\eqref{eq:periodic_generating_function} becomes
\[
    G_{\mathrm{per}}(u)=\sin\pi(u-\mu_0),
\]
and, for $\mu_n=\mu_0+n$,
\[
    G_{\mathrm{per}}'(\mu_n)
    =
    \pi\cos\pi n
    =
    \pi(-1)^n.
\]
The Lagrange kernel is
\begin{align}
    \ell_n^{\mathrm{per}}(u)
    &=
    \frac{\sin\pi(u-\mu_0)}
         {\pi(-1)^n(u-\mu_0-n)} \\
    &=
    \frac{\sin\pi(u-\mu_0-n)}
         {\pi(u-\mu_0-n)}
    =
    \mathrm{sinc}(u-\mu_0-n).
\end{align}
Thus the periodic theorem reduces exactly to the shifted Shannon formula.

\textbf{Step 8: MT form of the block condition.}
The normalized gap is
\[
    h_{n\bmod q}
    =
    \mu_{n+1}-\mu_n
    =
    \frac{\Omega}{\pi}(s_{n+1}-s_n)
    =
    \frac{s_{n+1}-s_n}{\Delta T}.
\]
Therefore $H=q$ is equivalent to
\[
    \sum_{r=0}^{q-1}
    \frac{s_{kq+r+1}-s_{kq+r}}{\Delta T}
    =
    q,
\]
or
\[
    \sum_{r=0}^{q-1}(s_{kq+r+1}-s_{kq+r})
    =
    q\Delta T
    =
    \frac{q\pi}{\Omega}.
\]
If the consecutive retained events are adjacent-threshold crossings on monotone branches, then
\[
    |x(s_{n+1})-x(s_n)|=\Delta V.
\]
The ordinary mean-value theorem gives a point $\xi_n$ between the two crossings such that
\[
    |x'(\xi_n)|
    =
    \frac{|x(s_{n+1})-x(s_n)|}{s_{n+1}-s_n}
    =
    \frac{\Delta V}{s_{n+1}-s_n}.
\]
Substituting this into the normalized gap yields
\[
    h_{n\bmod q}
    =
    \frac{\Omega\Delta V}{\pi |x'(\xi_n)|}.
\]
Summing over one period gives the velocity-domain form
\[
    \sum_{r=0}^{q-1}
    \frac{\Delta V}{|x'(\xi_{k,r})|}
    =
    \frac{q\pi}{\Omega}.
\]
This is the promised explicit MT interpretation: in a periodic normalized gap stream, Shannon-type exact reconstruction holds exactly when one period of $q$ consecutive adjacent-threshold time gaps spans $q$ Nyquist intervals.

\section{Reproducing Kernel and Trace Formula}
\label{app:trace}
We derive the reproducing kernel $K(t,s)$ of $PW_\Omega$ and verify the trace formula \eqref{eq:trace}.

\textbf{Reproducing kernel.} For any $x \in PW_\Omega$, its inverse Fourier representation is:
\begin{equation}
    x(t) = \frac{1}{2\pi} \int_{-\Omega}^{\Omega} \hat{x}(\omega)\, e^{j\omega t}\, d\omega
    \label{eq:inverse_ft}
\end{equation}
A reproducing kernel $K(t,s)$ is a function satisfying $x(s) = \langle x, K(\cdot, s) \rangle_{L^2}$ for all $x \in PW_\Omega$. By Parseval's theorem, the $L^2$ inner product can be expressed in the frequency domain:
\begin{equation}
    \langle x, K(\cdot,s) \rangle_{L^2} = \frac{1}{2\pi} \int_{-\Omega}^{\Omega} \hat{x}(\omega)\, \overline{\widehat{K}(\omega, s)}\, d\omega
\end{equation}
Comparing with \eqref{eq:inverse_ft}, we identify $\overline{\widehat{K}(\omega, s)} = e^{j\omega s}$ for $|\omega| \le \Omega$, i.e., $\widehat{K}(\omega, s) = e^{-j\omega s}$. Computing the inverse Fourier transform:
\begin{align}
    K(t, s) & = \frac{1}{2\pi} \int_{-\Omega}^{\Omega} e^{j\omega(t-s)}\, d\omega
    = \frac{1}{2\pi} \left[\frac{e^{j\omega(t-s)}}{j(t-s)}\right]_{-\Omega}^{\Omega} \\
            & = \frac{e^{j\Omega(t-s)} - e^{-j\Omega(t-s)}}{2\pi j(t-s)}
    = \frac{\sin[\Omega(t-s)]}{\pi(t-s)}
    \label{eq:kernel_derivation}
\end{align}
which confirms \eqref{eq:repro_kernel}.

\textbf{Trace formula.} The time-frequency concentration operator $T_R = B_\Omega D_R B_\Omega$ acts on $PW_\Omega$ via the integral kernel $K(t,s)$ restricted to $[-R, R]$:
\begin{equation}
    (T_R f)(t) = \int_{-R}^{R} K(t,s)\, f(s)\, ds
\end{equation}
The trace of such an integral operator equals the integral of the kernel on the diagonal:
\begin{equation}
    \mathrm{Tr}(T_R) = \int_{-R}^{R} K(t,t)\, dt
\end{equation}
Evaluating $K(t,t)$ by L'H\^{o}pital's rule (or directly from the integral representation):
\begin{equation}
    K(t,t) = \lim_{s \to t} \frac{\sin[\Omega(t-s)]}{\pi(t-s)} = \frac{\Omega}{\pi}
\end{equation}
Therefore:
\begin{equation}
    \mathrm{Tr}(T_R) = \int_{-R}^{R} \frac{\Omega}{\pi}\, dt = \frac{2\Omega R}{\pi}
\end{equation}
confirming \eqref{eq:trace}. Since $\Omega = 2\pi f_{max}$, this equals $4 f_{max} R$, representing the number of Nyquist intervals within $[-R, R]$.

\section{MT Crossing Density Bounds}
\label{app:density}
We provide a detailed derivation of the retained event-density bounds used in Section~\ref{sec:structure_density}.

\textbf{Setup.} Consider the retained MT crossing set $\Lambda_{\Delta V}$ restricted to a time window $[t, t+r] \subseteq [0, T_{\text{obs}}]$. Let $n_{\mathrm{ret}}(t,r)$ denote the number of retained crossings in this window, and let $N_{\mathrm{supp}}(t,r)$ denote the number of monotone pieces whose first raw threshold hit is suppressed as a same-threshold return.

\textbf{Step 1: Upper bound on $n_{\mathrm{ret}}(t,r)$.} Let the retained events in $[t,t+r]$ be
\[
    u_1<\cdots<u_N.
\]
For every consecutive retained pair, the adjacent-transition convention gives threshold labels differing by one. Therefore
\[
    |x(u_{i+1})-x(u_i)|=\Delta V,
\]
and the total variation between these two retained times obeys
\[
    \int_{u_i}^{u_{i+1}}|x'(\tau)|\,d\tau\ge\Delta V.
\]
Summing over $i=1,\ldots,N-1$ gives
\[
    (N-1)\Delta V
    \le
    \int_t^{t+r}|x'(\tau)|\,d\tau .
\]
Thus $N\le \mathrm{TV}(x;[t,t+r])/\Delta V+1$; we keep the slightly looser $+2$ used in the main text to cover endpoint conventions and extracted subsequences. Therefore:
\begin{equation}
    n_{\mathrm{ret}}(t, r) \le \frac{1}{\Delta V} \int_t^{t+r} |x'(\tau)|\, d\tau + 2
\end{equation}
This proof uses retained adjacent transitions. It is not a raw-crossing bound: repeated up/down crossings of the same threshold can have small vertical travel between successive raw hits.

\textbf{Step 2: Lower bound on $n_{\mathrm{ret}}(t,r)$.} Decompose $[t, t+r]$ into $N_{\mathrm{mono}}(t,r)$ maximal monotone segments $I_1, \ldots, I_{N_{\mathrm{mono}}(t,r)}$, where $x$ is strictly monotone on each $I_k$. Let $\mathrm{TV}_k = \int_{I_k} |x'(\tau)|\, d\tau$ denote the total variation on segment $I_k$. Because the threshold phase and the endpoints of $I_k$ can remove at most one guaranteed interval from a monotone traversal, the raw threshold-hit count on $I_k$ is bounded below by $\mathrm{TV}_k/\Delta V-1$. Equivalently, using $\lfloor y \rfloor \ge y-1$ for $y\ge0$:
\begin{equation}
    \lfloor \mathrm{TV}_k / \Delta V \rfloor \ge \mathrm{TV}_k / \Delta V - 1
\end{equation}
On a strictly monotone segment, same-threshold suppression can remove at most the first raw hit: after the first retained adjacent-level transition, all later threshold indices continue monotonically and differ from the previously retained index. Thus the retained count is lower bounded by the raw floor count minus one additional unit on each segment counted by $N_{\mathrm{supp}}(t,r)$. Summing over all $N_{\mathrm{mono}}(t,r)$ segments:
\begin{align}
    n_{\mathrm{ret}}(t, r)
     & \ge \sum_{k=1}^{N_{\mathrm{mono}}(t,r)} \left(\frac{\mathrm{TV}_k}{\Delta V} - 1\right) \\
     &\quad -N_{\mathrm{supp}}(t,r)\\
     & = \frac{1}{\Delta V} \sum_{k=1}^{N_{\mathrm{mono}}(t,r)} \mathrm{TV}_k - N_{\mathrm{mono}}(t,r)-N_{\mathrm{supp}}(t,r)
\end{align}
Since $\sum_k \mathrm{TV}_k = \int_t^{t+r} |x'(\tau)|\, d\tau$ (the total variations of monotone segments sum to the total variation of the signal; see Appendix~\ref{app:tv_additivity} for a detailed derivation):
\begin{equation}
    n_{\mathrm{ret}}(t, r) \ge \frac{1}{\Delta V} \int_t^{t+r} |x'(\tau)|\, d\tau - N_{\mathrm{mono}}(t,r)-N_{\mathrm{supp}}(t,r)
\end{equation}

\textbf{Step 3: Bounding $N_{\mathrm{mono}}(t,r)$ by an explicit extremum-density certificate.} The number of maximal monotone segments $N_{\mathrm{mono}}(t,r)$ equals the number of sign changes of $x'(t)$ in $[t, t+r]$ plus one. Since sign changes occur only at zeros of $x'$, we have $N_{\mathrm{mono}}(t,r) \le Z(t,r) + 1$, where $Z(t,r)$ is the number of zeros of $x'$ in $[t,t+r]$.

No bandwidth-only bound of the form $Z(t,r)\le \Omega r/\pi+\mathcal{O}(1)$ holds uniformly on finite intervals for all of $PW_\Omega$: superoscillatory bandlimited functions can force arbitrarily many local oscillations while keeping the same spectral support~\cite{Kempf2000}. The density estimate used in Section~\ref{sec:density_analysis} therefore assumes an explicit certificate
\begin{equation}
    Z(t,r) \le \eta_0 r + C_0,
    \label{eq:app_zero_density_assumption}
\end{equation}
where $\eta_0$ and $C_0$ are signal-instance or signal-class constants. Thus
\begin{equation}
    N_{\mathrm{mono}}(t,r) \le \eta_0 r+\mathcal{O}(1).
\end{equation}
Similarly assume or verify the suppression-density certificate
\begin{equation}
    N_{\mathrm{supp}}(t,r) \le \eta_{\mathrm{supp},0}r+\mathcal{O}(1),
\end{equation}
with the retained-safe fallback $\eta_{\mathrm{supp},0}\le\eta_0$ because $N_{\mathrm{supp}}\le N_{\mathrm{mono}}$.

\textbf{Calibration example.} For the non-superoscillatory pure tone $x(t)=A\sin(\Omega t)$, $x'(t)=A\Omega\cos(\Omega t)$ has zeros spaced $\pi/\Omega=1/(2f_{max})$ apart. This particular signal admits the certificate $\eta_0=2f_{max}$ up to endpoint constants. The example calibrates one common signal family; it is not a universal Paley-Wiener zero-counting theorem.

\textbf{Step 4: Assembly.} Dividing the crossing count bounds by $r$:
\begin{align}
    \frac{\bar{\nu}(t,r)}{\Delta V}
    - \eta_0-\eta_{\mathrm{supp},0} - \mathcal{O}(1/r)
    &\le \frac{n_{\mathrm{ret}}(t,r)}{r}\\
    &\le \frac{\bar{\nu}(t,r)}{\Delta V} + \mathcal{O}(1/r)
\end{align}
where $\bar{\nu}(t,r) = \frac{1}{r}\int_t^{t+r}|x'(\tau)|\,d\tau$ is the local mean absolute velocity. For interval lengths $r \gg \Delta T$, the endpoint term $\mathcal{O}(1/r)$ becomes negligible. Taking the infimum over such long sub-intervals and then using the definition of $\nu_{min}$ yields the asymptotic envelopes
\begin{equation}
    \frac{\nu_{min}}{\Delta V} - \eta_0-\eta_{\mathrm{supp},0} \;\lesssim\; \inf \frac{n_{\mathrm{ret}}(t,r)}{r} \;\lesssim\; \frac{\nu_{min}}{\Delta V},
\end{equation}
which is the conditional long-window statement used in Section~\ref{sec:density_analysis}.

For a finite-window claim over all $r\ge R$, keep the additive constants instead of absorbing them into $\mathcal{O}(1/r)$. If
\begin{align}
    N_{\mathrm{mono}}(t,r)
    &\le\eta_0 r+C_{\mathrm{mono}},\\
    N_{\mathrm{supp}}(t,r)
    &\le\eta_{\mathrm{supp},0}r+C_{\mathrm{supp}},
\end{align}
then the same calculation gives
\begin{align}
    \frac{n_{\mathrm{ret}}(t,r)}{r}
    &\ge
    \frac{\nu_{min}^{(R)}}{\Delta V}
    -\eta_0-\eta_{\mathrm{supp},0}\\
    &\quad
    -\frac{C_{\mathrm{mono}}+C_{\mathrm{supp}}}{R},
    \qquad r\ge R,
\end{align}
which is the finite-window certificate~\eqref{eq:finite_R_density_bound}.

\section{Kadec's $1/4$ Theorem and Centered-Window Local Constraint Details}
\label{app:constraints}
We collect the appendical material associated with Kadec's theorem. The first subsection gives a detailed proof of the classical $1/4$ theorem used in Section~\ref{sec:kadec}. The remaining subsections give the centered-window extraction arguments specific to the MT analysis of Section~\ref{sec:sufficient}.

\subsection{Detailed Proof of Kadec's $1/4$ Theorem}
We prove the sufficient direction used in Section~\ref{sec:kadec}.

\textbf{Step 1: Normalize the bandwidth and reformulate the claim.} Let $\Delta T = 1/(2f_{max})$ and assume
\begin{equation}
    \sup_n |t_n - n\Delta T| < \frac{\Delta T}{4}.
\end{equation}
Define the normalized nodes
\begin{equation}
    \lambda_n = \frac{t_n}{\Delta T} = n + \delta_n,
    \qquad \sup_n |\delta_n| \le L < \frac{1}{4}.
\end{equation}
Under the unitary dilation
\begin{equation}
    (\mathcal{D}_{\Delta T}x)(u) = \Delta T^{1/2} x(\Delta T u),
\end{equation}
the space $PW_\Omega$ is mapped onto $PW_\pi$, and the samples $x(t_n)$ become the normalized samples $(\mathcal{D}_{\Delta T}x)(\lambda_n)$. Hence it is enough to prove the normalized statement: if $\lambda_n = n + \delta_n$ with $\sup_n |\delta_n| < 1/4$, then the exponential system
\begin{equation}
    f_n(t) = \frac{1}{\sqrt{2\pi}} e^{j\lambda_n t}, \qquad t \in [-\pi,\pi],
\end{equation}
is a Riesz basis for $L^2[-\pi,\pi]$.

\textbf{Step 2: A perturbation lemma for orthonormal bases.} Let
\begin{equation}
    e_n(t) = \frac{1}{\sqrt{2\pi}} e^{jnt}, \qquad n \in \mathbb{Z},
\end{equation}
so that $\{e_n\}_{n \in \mathbb{Z}}$ is the standard orthonormal basis of $L^2[-\pi,\pi]$.

\begin{lemma}
If there exists $\theta < 1$ such that for every finitely supported sequence $\{c_n\}$,
\begin{equation}
    \left\| \sum_n c_n (f_n - e_n) \right\|_{L^2[-\pi,\pi]}^2 \le \theta^2 \sum_n |c_n|^2,
    \label{eq:pw_perturbation}
\end{equation}
then $\{f_n\}$ is a Riesz basis for $L^2[-\pi,\pi]$.
\end{lemma}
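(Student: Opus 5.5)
The plan is the standard Neumann-series argument: I will show that the synthesis operator of $\{f_n\}$ is a small perturbation of a unitary operator, hence bounded and invertible. Define $U:\ell^2(\mathbb{Z})\to L^2[-\pi,\pi]$ by $U(c)=\sum_n c_n e_n$. Since $\{e_n\}$ is an orthonormal basis, $U$ is unitary.

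Next, define $T$ on finitely supported sequences by $T(c)=\sum_n c_n f_n$. Hypothesis~\eqref{eq:pw_perturbation} says that $\|(T-U)c\|\le\theta\|c\|_{\ell^2}$ on a dense subspace. Hence $T-U$ extends to a bounded operator with norm at most $\theta$, and $T$ extends to a bounded operator with $\|T\|\le1+\theta$. This extended $T$ is the synthesis operator of $\{f_n\}$: for every $c\in\ell^2$ the series $\sum_n c_n f_n$ converges unconditionally. Convergence follows by applying the bound to tails: on finite blocks $\|\sum_{n\in F}c_n f_n\|\le(1+\theta)(\sum_{n\in F}|c_n|^2)^{1/2}$, so the partial sums form a Cauchy net.

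The core step is to factor $T=U\bigl(I+U^{-1}(T-U)\bigr)$. Since $U^{-1}$ is unitary, $\|U^{-1}(T-U)\|\le\theta<1$, so the Neumann series shows that $I+U^{-1}(T-U)$ is invertible. Its inverse has norm at most $(1-\theta)^{-1}$. Therefore $T$ is a bounded bijection of $\ell^2$ onto $L^2[-\pi,\pi]$, and it satisfies
\[
(1-\theta)\|c\|\le\|Tc\|\le(1+\theta)\|c\|.
\]
This is exactly the statement that $f_n=Te_n'$, where $e_n'$ denotes the standard basis vector of $\ell^2$. So $\{f_n\}$ is the image of an orthonormal basis under a bounded invertible operator, which is the definition of a Riesz basis. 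It also gives the Riesz bounds $(1\mp\theta)^2$ quoted in~\eqref{eq:signal_instance_frame_bounds}.

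I expect no real obstacle in this step. The only point of care is the density and extension argument, which makes sense of $T$ on all of $\ell^2$ before inverting it; the Neumann step itself is routine. The genuinely hard part of Kadec's theorem is establishing~\eqref{eq:pw_perturbation} with $\theta=1-\cos\pi L+\sin\pi L$. That is not part of this lemma; it will be done next by expanding $e^{j\delta_n t}-1$ in the orthogonal system $\{1,\cos kt,\sin(k-\tfrac12)t\}$.
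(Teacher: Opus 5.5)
Your proposal is correct and is essentially the paper's argument. Both proofs bound the deviation from the orthonormal basis by $\theta$ on a dense subspace, extend by continuity, and invert via a Neumann series. The only difference is bookkeeping: you factor the synthesis operator $T:\ell^2\to L^2[-\pi,\pi]$ as $U\bigl(I+U^{-1}(T-U)\bigr)$, whereas the paper works directly with the operator $e_n\mapsto f_n$ on $L^2[-\pi,\pi]$, which is your $TU^{-1}$.
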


\textbf{Proof.} Define an operator $U$ on the finite linear span of $\{e_n\}$ by
\begin{equation}
    U\!\left(\sum_n c_n e_n\right) = \sum_n c_n f_n.
\end{equation}
Then
\begin{equation}
    (I-U)\!\left(\sum_n c_n e_n\right) = \sum_n c_n (e_n - f_n),
\end{equation}
so \eqref{eq:pw_perturbation} gives $\|I-U\| \le \theta < 1$ on a dense subspace, hence on all of $L^2[-\pi,\pi]$ by continuity. The Neumann series
\begin{equation}
    U^{-1} = \sum_{k=0}^{\infty} (I-U)^k
\end{equation}
therefore converges in operator norm, so $U$ is bounded and invertible. Since $f_n = Ue_n$, the sequence $\{f_n\}$ is the image of an orthonormal basis under a bounded invertible operator and is therefore a Riesz basis.

More explicitly, for every $g \in L^2[-\pi,\pi]$ we have
\begin{equation}
    \|Ug\| \le \|g\| + \|(U-I)g\| \le (1+\theta)\|g\|,
\end{equation}
while the reverse triangle inequality gives
\begin{equation}
    \|Ug\| \ge \|g\| - \|(U-I)g\| \ge (1-\theta)\|g\|.
\end{equation}
Hence, for all finitely supported $\{c_n\}$,
\begin{equation}
    (1-\theta)^2 \sum_n |c_n|^2
    \le \left\| \sum_n c_n f_n \right\|_{L^2[-\pi,\pi]}^2
    \le (1+\theta)^2 \sum_n |c_n|^2,
\end{equation}
which is exactly the Riesz-basis inequality. \hfill $\square$

\textbf{Step 3: Kadec's trigonometric estimate.} The main harmonic-analysis input is the following classical estimate.

\begin{lemma}[Kadec estimate]
If $|\delta_n| \le L < 1/4$ for all $n$, then for every finitely supported sequence $\{c_n\}$ and every trigonometric polynomial $F(t)=\sum_n c_n e^{jnt}$,
\begin{equation}
    \left\| \sum_n c_n \bigl(e^{j(n+\delta_n)t} - e^{jnt}\bigr) \right\|_{L^2[-\pi,\pi]}
    \le \theta(L) \|F\|_{L^2[-\pi,\pi]},
    \label{eq:kadec_ineq}
\end{equation}
where
\begin{equation}
    \theta(L) = 1 - \cos(\pi L) + \sin(\pi L).
    \label{eq:theta}
\end{equation}
\end{lemma}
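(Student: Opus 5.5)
The plan is to follow Kadec's classical argument: expand the perturbation $e^{j\delta_n t}-1$ in an orthogonal trigonometric system on $[-\pi,\pi]$ whose coefficients depend only on $\delta_n$. Then the perturbed sum becomes a combination of fixed bounded multipliers applied to modulated copies of $F$. Concretely, for $|t|\le\pi$ I will expand $\cos(\delta t)$ in the even system $\{1,\cos kt\}_{k\ge1}$. I will expand $\sin(\delta t)$ in the system $\{\sin((k-\tfrac12)t)\}_{k\ge1}$, which is orthogonal and complete for odd functions on $[-\pi,\pi]$. The standard computation gives
\[
1-\cos\delta t=\frac{\sin\pi\delta}{\pi\delta}\cdot\Bigl(1-\frac{\pi\delta}{\tan\pi\delta}\Bigr)\text{-type constant}+\sum_{k\ge1}a_k(\delta)\cos kt
\]
and
\[
\sin\delta t=\sum_{k\ge1}b_k(\delta)\sin\bigl((k-\tfrac12)t\bigr).
\]
The coefficients are explicit:
\[
a_k(\delta)=(-1)^{k}\frac{2\delta\sin\pi\delta}{\pi(k^2-\delta^2)},\qquad b_k(\delta)=(-1)^{k+1}\frac{2\delta\cos\pi\delta}{\pi((k-\frac12)^2-\delta^2)}.
\]
I will quote these coefficient formulas rather than rederive them.

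Next I substitute $e^{j(n+\delta_n)t}-e^{jnt}=e^{jnt}\bigl[(\cos\delta_n t-1)+j\sin\delta_n t\bigr]$ into $\sum_n c_n(\cdot)$ and interchange the finite sum in $n$ with the series in $k$. This yields
\[
\sum_n c_n(e^{j\lambda_n t}-e^{jnt})=-\sum_{k\ge0}\cos(kt)\,F_k^{(a)}(t)+j\sum_{k\ge1}\sin\bigl((k-\tfrac12)t\bigr)F_k^{(b)}(t),
\]
where $F_k^{(a)}=\sum_n c_n a_k(\delta_n)e^{jnt}$, and similarly for $F_k^{(b)}$. Each $F_k$ is a trigonometric polynomial with coefficients $c_n a_k(\delta_n)$. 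Since the multipliers $\cos kt$ and $\sin((k-\tfrac12)t)$ are bounded by $1$, Parseval gives
\[
\|F_k^{(a)}\|\le\sup_{|\delta|\le L}|a_k(\delta)|\,\|F\|.
\]
The triangle inequality then bounds the left side of \eqref{eq:kadec_ineq} by $\|F\|\bigl(\sum_k\sup_{|\delta|\le L}|a_k(\delta)|+\sum_k\sup_{|\delta|\le L}|b_k(\delta)|\bigr)$.

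The remaining task is to show that each supremum is attained at $|\delta|=L$. This holds because the functions $\delta\mapsto\delta\sin\pi\delta/(k^2-\delta^2)$ and $\delta\mapsto\delta\cos\pi\delta/((k-\tfrac12)^2-\delta^2)$ are even and increasing in $|\delta|$ on $[0,1/4]$. For the second function this needs a short check that the growth of $\delta/((k-\tfrac12)^2-\delta^2)$ dominates the decay of $\cos\pi\delta$; equivalently, evaluate the partial-fraction series at $\delta=L$. Once that is done, the two coefficient series sum in closed form at $\delta=L$ via the partial-fraction expansions of $\pi\cot$ and $\pi\tan$: the cosine part sums to $1-\cos\pi L$ and the sine part to $\sin\pi L$. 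This gives $\theta(L)=1-\cos\pi L+\sin\pi L$.

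I expect the main obstacle to be this monotonicity-and-summation step, in particular bookkeeping the constant term of the cosine expansion so that the constant together with $\sum_k|a_k(L)|$ equals exactly $1-\cos\pi L$. I will handle it by evaluating the expansions at $t=\pi$ or $t=0$ with alternating signs, which turns the absolute sums into $1-\cos(\pi L)$ and $\sin(\pi L)$ directly. Finally, $\theta(L)<1$ iff $\sin\pi L<\cos\pi L$ iff $L<1/4$, which feeds into the perturbation lemma above.
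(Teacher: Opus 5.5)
Your proposal is correct and follows essentially the paper's proof: the same cosine and half-integer sine expansions with the same coefficients, a Parseval bound for each coefficient (you use $|\cos kt|,|\sin((k-\tfrac12)t)|\le 1$ as pointwise multipliers where the paper splits into shifted exponentials, which is equivalent), monotonicity in $|\delta|$ on $[0,L]$, and closed-form summation, where evaluating the expansions at $t=\pi$ is just the pointwise form of the paper's $\pi\cot$/$\pi\tan$ partial-fraction identities. Three small fixes: the constant term is $a_0(\delta)=1-\sin(\pi\delta)/(\pi\delta)$ (your displayed expression actually equals $\sum_{k\ge1}|a_k(\delta)|$), you also need to check that $a_0$ is increasing in $|\delta|$, and only $t=\pi$, not $t=0$, turns the alternating coefficients into the absolute sums $1-\cos(\pi L)$ and $\sin(\pi L)$.
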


\textbf{Proof.} Set $\eta_n = |\delta_n| \in [0,L]$ and $\varsigma_n = \operatorname{sgn}(\delta_n) \in \{-1,0,1\}$. Since
\begin{equation}
    e^{j(n+\delta_n)t} - e^{jnt} = e^{jnt}\bigl((\cos(\delta_n t)-1) + j\sin(\delta_n t)\bigr),
\end{equation}
the triangle inequality gives
\begin{equation}
    \left\| \sum_n c_n \bigl(e^{j(n+\delta_n)t} - e^{jnt}\bigr) \right\|_{L^2}
    \le R + S,
\end{equation}
where
\begin{align}
    R & = \left\| \sum_n c_n e^{jnt}\bigl(1-\cos(\eta_n t)\bigr) \right\|_{L^2[-\pi,\pi]}, \\
    S & = \left\| \sum_n c_n e^{jnt}\sin(\delta_n t) \right\|_{L^2[-\pi,\pi]}.
\end{align}

We now estimate $R$ and $S$ separately by expanding the scalar multipliers in orthogonal trigonometric systems. This explicit coefficient proof follows the classical route of Young~\cite{Young2001}; it is self-contained at the level of the trigonometric expansion and avoids any appeal to Hilbert-transform or Hardy-space machinery.

For $0 \le \eta < 1$, the even function $1-\cos(\eta t)$ has the cosine expansion
\begin{equation}
    1-\cos(\eta t) = a_0(\eta) + \sum_{j=1}^{\infty} a_j(\eta) \cos(jt),
    \label{eq:kadec_even_expansion}
\end{equation}
with coefficients obtained directly from orthogonality (the explicit evaluation of each coefficient is recorded in Appendix~\ref{app:kadec_coeffs}):
\begin{align}
    a_0(\eta)
     & = \frac{1}{\pi} \int_{0}^{\pi} \bigl(1-\cos(\eta t)\bigr)\,dt
    = 1 - \frac{\sin(\pi \eta)}{\pi \eta},                                         \\
    a_j(\eta)
     & = \frac{2}{\pi} \int_{0}^{\pi} \bigl(1-\cos(\eta t)\bigr)\cos(jt)\,dt       \\
     & = \frac{2\eta\sin(\pi\eta)}{\pi} \frac{(-1)^j}{j^2-\eta^2}, \qquad j \ge 1.
\end{align}
For the odd function $\sin(\eta t)$, one expands instead in the orthogonal half-integer sine system. Thus, for $0 \le \eta < 1/2$,
\begin{equation}
    \sin(\eta t) = \sum_{j=1}^{\infty} b_j(\eta)\sin\!\left(\left(j-\frac{1}{2}\right)t\right),
    \label{eq:kadec_odd_expansion}
\end{equation}
where
\begin{align}
    b_j(\eta)
     & = \frac{2}{\pi} \int_{0}^{\pi} \sin(\eta t)\sin\!\left(\left(j-\frac{1}{2}\right)t\right)dt \\
     & = \frac{2\eta\cos(\pi\eta)}{\pi}
    \frac{(-1)^{j+1}}{\left(j-\frac{1}{2}\right)^2-\eta^2}, \qquad j \ge 1.
\end{align}
Because $L<1/4$, all denominators above are positive for $0 \le \eta \le L$. Introduce the nonnegative majorants
\begin{align}
    a_j^{\sharp}(\eta) & = \frac{2\eta\sin(\pi\eta)}{\pi(j^2-\eta^2)}, \qquad j \ge 1,                                     \\
    b_j^{\sharp}(\eta) & = \frac{2\eta\cos(\pi\eta)}{\pi\left(\left(j-\frac{1}{2}\right)^2-\eta^2\right)}, \qquad j \ge 1.
\end{align}

The coefficient functions are increasing on $[0,L]$. Indeed,
\begin{equation}
    a_0'(\eta) = \frac{\sin(\pi\eta)-\pi\eta\cos(\pi\eta)}{\pi\eta^2} > 0,
\end{equation}
because $\tan x > x$ for $x \in (0,\pi/2)$. Also $\eta\sin(\pi\eta)$ is increasing on $[0,L]$, while $j^2-\eta^2$ is decreasing, so each $a_j^{\sharp}$ is increasing. For $b_j^{\sharp}$, the numerator $\eta\cos(\pi\eta)$ has derivative
\begin{equation}
    \cos(\pi\eta)-\pi\eta\sin(\pi\eta)
    = \cos(\pi\eta)\bigl(1-\pi\eta\tan(\pi\eta)\bigr) > 0
\end{equation}
on $[0,L]$, because the function $x\tan x$ is increasing and $x\tan x < 1$ for $0 \le x < \pi/4$. Hence each $b_j^{\sharp}$ is also increasing on $[0,L]$.

We first estimate $R$. For $j \ge 1$, set
\begin{equation}
    d_{n,j}=c_n a_j(\eta_n),
    \qquad
    U_j^{\pm}=\left\|\sum_n d_{n,j}e^{j(n\pm j)t}\right\|_{L^2}.
\end{equation}
Using \eqref{eq:kadec_even_expansion},
\begin{align}
    R
     & \le a_0(L)\|F\|_{L^2}
    + \frac{1}{2}\sum_{j=1}^{\infty} U_j^+
    + \frac{1}{2}\sum_{j=1}^{\infty} U_j^-.
    \label{eq:kadec_R_first}
\end{align}
For each fixed $j$, the two shifted families $\{e^{j(n+j)t}\}_{n\in\mathbb{Z}}$ and $\{e^{j(n-j)t}\}_{n\in\mathbb{Z}}$ remain orthogonal on $[-\pi,\pi]$, so Parseval gives
\begin{equation}
    U_j^{\pm} \le a_j^{\sharp}(L)\|F\|_{L^2}.
\end{equation}
Substituting this into \eqref{eq:kadec_R_first} yields
\begin{align}
    R    & \le A(L)\|F\|_{L^2},
    \label{eq:kadec_R_bound}                                                \\
    A(L) & \triangleq a_0(L)+\sum_{j=1}^{\infty} a_j^{\sharp}(L).
\end{align}
Now use the classical partial-fraction identity (derived in Appendix~\ref{app:partial_frac})
\begin{equation}
    \pi\cot(\pi z) = \frac{1}{z} - 2z\sum_{j=1}^{\infty} \frac{1}{j^2-z^2},
    \qquad z \notin \mathbb{Z},
\end{equation}
with $z=L$. Rearranging gives
\begin{equation}
    C(L) \triangleq \sum_{j=1}^{\infty} \frac{1}{j^2-L^2}
    = \frac{1}{2L^2} - \frac{\pi}{2L}\cot(\pi L).
\end{equation}
Therefore
\begin{align}
    A(L)
     & = 1 - \frac{\sin(\pi L)}{\pi L}
    + \frac{2L\sin(\pi L)}{\pi}
    C(L)                     \\
     & = 1 - \cos(\pi L).
\end{align}
Combining with \eqref{eq:kadec_R_bound},
\begin{equation}
    R \le \bigl(1-\cos(\pi L)\bigr)\|F\|_{L^2}.
    \label{eq:kadec_real_part_bound}
\end{equation}

We next estimate $S$. Using \eqref{eq:kadec_odd_expansion} and the identity $\sin(\delta_n t)=\varsigma_n\sin(\eta_n t)$,
\begin{align}
    S
     & = \left\| \sum_{j=1}^{\infty} \sum_n c_n\varsigma_n b_j(\eta_n)
    e^{jnt}\sin\!\left(\left(j-\frac{1}{2}\right)t\right) \right\|_{L^2} \\
     & \le \frac{1}{2}\sum_{j=1}^{\infty}
    \left\| \sum_n c_n\varsigma_n b_j(\eta_n)e^{j(n+j-1/2)t} \right\|_{L^2} \\
     & \quad + \frac{1}{2}\sum_{j=1}^{\infty}
    \left\| \sum_n c_n\varsigma_n b_j(\eta_n)e^{j(n-j+1/2)t} \right\|_{L^2}.
    \label{eq:kadec_S_first}
\end{align}
For each fixed $j$, set $\widetilde{d}_{n,j}=c_n\varsigma_n b_j(\eta_n)$. The frequencies $n\pm(j-1/2)$ are all distinct half-integers, so the corresponding exponential families are orthogonal on $[-\pi,\pi]$. Hence
\begin{equation}
    \left\| \sum_n \widetilde{d}_{n,j}e^{j(n\pm j\mp 1/2)t} \right\|_{L^2}
    \le b_j^{\sharp}(L)\|F\|_{L^2},
\end{equation}
and \eqref{eq:kadec_S_first} becomes
\begin{align}
    S    & \le B(L)\|F\|_{L^2},
    \label{eq:kadec_imag_part_pre}                                   \\
    B(L) & \triangleq \sum_{j=1}^{\infty} b_j^{\sharp}(L).
\end{align}
Now invoke the companion partial-fraction formula
\begin{equation}
    \pi\tan(\pi z) = 2z\sum_{j=1}^{\infty}
    \frac{1}{\left(j-\frac{1}{2}\right)^2-z^2},
    \qquad |z|<\frac{1}{2}.
\end{equation}
Taking $z=L$ yields
\begin{align}
    \sum_{j=1}^{\infty} b_j^{\sharp}(L)
     & = \frac{2L\cos(\pi L)}{\pi}
    \sum_{j=1}^{\infty}
    \frac{1}{\left(j-\frac{1}{2}\right)^2-L^2} \\
     & = \sin(\pi L).
\end{align}
Thus \eqref{eq:kadec_imag_part_pre} becomes
\begin{equation}
    S \le \sin(\pi L)\|F\|_{L^2}.
    \label{eq:kadec_imag_part_bound}
\end{equation}

Finally, combining \eqref{eq:kadec_real_part_bound} and \eqref{eq:kadec_imag_part_bound} gives
\begin{multline}
    \left\| \sum_n c_n \bigl(e^{j(n+\delta_n)t} - e^{jnt}\bigr) \right\|_{L^2[-\pi,\pi]} \\
    \le \bigl(1-\cos(\pi L)+\sin(\pi L)\bigr)\|F\|_{L^2[-\pi,\pi]},
\end{multline}
which is exactly \eqref{eq:kadec_ineq}. \hfill $\square$

\textbf{Step 4: Complete the proof of the normalized theorem.} By \eqref{eq:kadec_ineq}, for finitely supported $\{c_n\}$,
\begin{align}
    \left\| \sum_n c_n (f_n - e_n) \right\|_{L^2[-\pi,\pi]}
     & \le \frac{\theta(L)}{\sqrt{2\pi}}\|F\|_{L^2[-\pi,\pi]} \\
     & = \theta(L) \left(\sum_n |c_n|^2\right)^{1/2},
\end{align}
so the perturbation lemma applies with this same $\theta(L)$. Thus $\{f_n\}$ is a Riesz basis provided $\theta(L) < 1$.

It remains to determine when this happens. Using \eqref{eq:theta},
\begin{align}
    \theta(L) < 1
     & \iff 1 - \cos(\pi L) + \sin(\pi L) < 1 \\
     & \iff \sin(\pi L) < \cos(\pi L)         \\
     & \iff \tan(\pi L) < 1.
    \label{eq:tan_ineq}
\end{align}
Since $L \in [0,1/4)$, this is equivalent to $L < 1/4$. Therefore the normalized exponential system $\{e^{j(n+\delta_n)t}\}_{n \in \mathbb{Z}}$ is a Riesz basis for $L^2[-\pi,\pi]$ whenever $\sup_n |\delta_n| < 1/4$.

\textbf{Step 5: Undo the normalization.} Returning to the original sampling sequence $\{t_n\}$, the unitary dilation $\mathcal{D}_{\Delta T}$ transports the normalized Riesz-basis statement back to bandwidth $\Omega = 2\pi f_{max}$. Hence any sequence satisfying
\begin{equation}
    \sup_n |t_n - n\Delta T| < \frac{\Delta T}{4} = \frac{1}{8f_{max}}
\end{equation}
is a sampling sequence for $PW_\Omega$. This completes the proof of the sufficient direction stated in Section~\ref{sec:kadec}.

\subsection{Subsequence Extraction from Centered Windows}
\textbf{Goal.} Show that one crossing in every centered window implies existence of a Kadec-compatible subsequence.

Fix a grid offset $\tau_0$ and consider the centered windows $J_n(\tau_0)$ from \eqref{eq:centered_window}. Suppose each window that lies inside the active observation interval contains at least one MT crossing. The interiors of adjacent windows are disjoint, and neighboring windows meet only at a single endpoint. Choose one crossing $s_n \in \Lambda_{\Delta V} \cap J_n(\tau_0)$ for each $n$. If a crossing lies exactly on a common boundary, assign it by a deterministic tie-breaking rule; under the strict inequalities used in Section~\ref{sec:sufficient}, the crossing produced by the range or curvature argument lies in the interior, so this boundary case is nongeneric.

By construction,
\begin{equation}
    \left|s_n - (\tau_0 + n\Delta T)\right| < \frac{\Delta T}{4},
\end{equation}
which is precisely Kadec's admissible perturbation bound. Therefore the selected subsequence is Kadec-compatible.

\subsection{Macroscopic Constraint: Detailed Argument}
\textbf{Goal.} Show that if $2\Delta V < R_x^{(c)}(\tau)$, then the centered window $J_\tau = [\tau-\Delta T/4,\, \tau+\Delta T/4]$ contains at least one retained adjacent-transition crossing under the conservative retained model. The weaker condition $\Delta V<R_x^{(c)}(\tau)$ guarantees only a raw threshold hit.

\textbf{Proof.} Fix an arbitrary centered window $J_\tau$. Let
\begin{equation}
    x_{min} = \min_{t \in J_\tau} x(t), \qquad x_{max} = \max_{t \in J_\tau} x(t),
\end{equation}
so that $R_x^{(c)}(\tau) = x_{max} - x_{min}$.

Since $2\Delta V < R_x^{(c)}(\tau)$, there are at least two threshold levels between $x_{min}$ and $x_{max}$. Indeed, the number of integer multiples of $\Delta V$ lying in $(x_{min},x_{max})$ is bounded below by
\begin{equation}
    \left\lfloor \frac{x_{max}}{\Delta V} \right\rfloor - \left\lceil \frac{x_{min}}{\Delta V} \right\rceil + 1,
\end{equation}
which is at least $2$ whenever $x_{max}-x_{min} > 2\Delta V$.

Choose two such distinct threshold indices. By the intermediate value theorem, since $x$ is continuous on the compact interval $J_\tau$ and attains both $x_{min}$ and $x_{max}$ there, both levels are crossed on monotone branches inside the window. At most one of these two crossings can be suppressed as a same-threshold return to the previously retained level; the other has a different threshold index and is retained. Hence $J_\tau$ contains at least one retained MT sample.

Applying the same argument to every centered window inside the active interval proves the retained range/no-gap version of the velocity condition. Together with the subsequence-extraction step above, this gives a rigorous retained Kadec-compatible subsequence.

\subsection{Microscopic Constraint: Detailed Taylor Analysis}
\textbf{Goal.} Near a local extremum $t^*$ where $x'(t^*) = 0$ and $x''(t^*) \neq 0$, derive a local condition ensuring a threshold crossing before the trajectory exits the centered window that contains $t^*$.

\textbf{Taylor expansion.} Expand $x(t)$ about $t^*$:
\begin{equation}
    x(t) = x(t^*) + \underbrace{x'(t^*)}_{=\,0}(t - t^*) + \frac{1}{2}x''(t^*)(t-t^*)^2 + R_3(t)
\end{equation}
where the remainder is $R_3(t) = \frac{1}{6}x'''(\xi)(t-t^*)^3$ for some $\xi$ between $t$ and $t^*$.

\textbf{Leading-order voltage excursion with extremum isolation.} Let $J_n(\tau_0)$ be the centered window containing $t^*$. Denote by $\ell_-$ and $\ell_+$ the distances from $t^*$ to the left and right boundaries of that window. Since $\ell_- + \ell_+ = \Delta T/2$, at least one of them is at least $\Delta T/4$. However, this window geometry alone is insufficient: the monotone branch on that side may end at a neighboring extremum before the first adjacent threshold is crossed. We therefore use the isolation radius $\rho(t^*)$ from~\eqref{eq:extremum_isolation_radius}. Since $\rho(t^*)\le \Delta T/4$ and $\rho(t^*)\le \rho_\pm(t^*)$ on both sides, the side of the window with boundary distance at least $\Delta T/4$ contains a monotone segment of length at least $\rho(t^*)$. Write $\Delta t=|t-t^*|$ along that side. Then
\begin{equation}
    |x(t) - x(t^*)| = \frac{1}{2}|x''(t^*)|(\Delta t)^2 + R_3
\end{equation}
At the certified isolation scale $\Delta t \le \rho(t^*)$, the remainder obeys
\begin{equation}
    |R_3| = \frac{1}{6}|x'''(\xi)|(\Delta t)^3 \le \frac{1}{6}M_3(t^*)(\Delta t)^3,
\end{equation}
so the quadratic term dominates whenever
\begin{equation}
    \Delta t \ll \frac{3|x''(t^*)|}{M_3(t^*)}.
\end{equation}
For non-degenerate extrema this dominance assumption is compatible with the Kadec scale only when the isolation radius is not too small and the third-order envelope is moderate; otherwise the curvature branch simply has no positive certificate.

\textbf{Crossing before leaving the window.} Let $\delta_{\mathrm{ret}}^{(\sigma)}(t^*;\Delta V)$ be the retained directional phase gap in~\eqref{eq:retained_extremum_gap} on the selected side. Neglecting the third-order remainder at leading order, the excursion reaches the first retained threshold once
\begin{equation}
    \frac{1}{2}|x''(t^*)|\rho(t^*)^2
    \ge
    \delta_{\mathrm{ret}}^{(\sigma)}(t^*;\Delta V)
\end{equation}
which yields the realized retained-phase form~\eqref{eq:curvature_leading}. With the third-order remainder retained, the conservative excursion at the isolation radius is
\begin{equation}
    \frac{1}{2}|x''(t^*)|\rho(t^*)^2
    -\frac{1}{6}M_3(t^*)\rho(t^*)^3,
\end{equation}
and requiring this quantity to exceed $\delta_{\mathrm{ret}}^{(\sigma)}(t^*;\Delta V)$ gives~\eqref{eq:curvature_exact}. Since $\delta_{\mathrm{ret}}^{(\sigma)}(t^*;\Delta V)\le2\Delta V$, the stronger requirement that the same quantity exceed $2\Delta V$ gives the retained phase-uniform condition~\eqref{eq:curvature_phase_uniform}. Because the chosen side has both window room and monotone-branch room for distance $\rho(t^*)$, the corresponding retained threshold crossing occurs before the trajectory leaves the centered window or hits the neighboring extremum.

Taking the infimum over all extrema gives the curvature term in~\eqref{eq:unified_condition}. Combining this local argument with the centered-window extraction step proves the combined sufficient statement from Section~\ref{sec:sufficient}.

\section{Convergence of the Iterative Projection Method}
\label{app:iterative}
We derive the convergence properties of the iterative algorithm \eqref{eq:iterative}.

\textbf{Error evolution.} Define the error function at iteration $k$ by $e_k(t) = x_k(t) - x(t)$. Substituting this into \eqref{eq:iterative} gives
\begin{align}
    e_{k+1}(t) & = x_{k+1}(t) - x(t)                                                                          \\
               & = x_k(t) - x(t)                                                                              \\
               & \quad + \lambda\, P_\Omega\!\left[\sum_n \bigl(y_n - x_k(t_n)\bigr)\delta(t - t_n)\right]    \\
               & = \bigl(x_k(t) - x(t)\bigr)                                                                  \\
               & \quad + \lambda\, P_\Omega\!\left[\sum_n \bigl(x(t_n) - x_k(t_n)\bigr)\delta(t - t_n)\right] \\
               & = e_k(t) - \lambda\, P_\Omega\!\left[\sum_n e_k(t_n)\,\delta(t - t_n)\right].
\end{align}
The projection term can be written as
\begin{align}
    P_\Omega\!\left[\sum_n e_k(t_n)\,\delta(t-t_n)\right]
     & = \sum_n e_k(t_n)\, K(t, t_n) \\
     & = (\mathfrak{S}_{\Lambda} e_k)(t),
\end{align}
where $\mathfrak{S}_{\Lambda}$ is the frame operator. Therefore:
\begin{equation}
    e_{k+1} = (I - \lambda \mathfrak{S}_{\Lambda})\, e_k = (I - \lambda \mathfrak{S}_{\Lambda})^{k+1}\, e_0
\end{equation}

\textbf{Spectral analysis.} By the frame condition \eqref{eq:frame}, the operator $\mathfrak{S}_{\Lambda}: PW_\Omega \to PW_\Omega$ is bounded, positive, and self-adjoint, with spectrum contained in
\begin{equation}
    \sigma(\mathfrak{S}_{\Lambda}) \subseteq [A, B],
\end{equation}
where $A > 0$. By spectral mapping, the spectrum of $\lambda\mathfrak{S}_{\Lambda}$ is contained in
\begin{equation}
    \sigma(\lambda \mathfrak{S}_{\Lambda}) \subseteq [\lambda A, \lambda B].
\end{equation}
Applying spectral mapping again to $I - \lambda \mathfrak{S}_{\Lambda}$ gives
\begin{equation}
    \sigma(I - \lambda \mathfrak{S}_{\Lambda}) \subseteq [1 - \lambda B, 1 - \lambda A].
\end{equation}
Since $I - \lambda \mathfrak{S}_{\Lambda}$ is self-adjoint, its operator norm equals the maximum absolute value attained on its spectrum:
\begin{equation}
    \begin{aligned}
        \|I - \lambda \mathfrak{S}_{\Lambda}\|
         & = \sup_{\mu \in \sigma(\mathfrak{S}_{\Lambda})} |1 - \lambda \mu|                         \\
         & = \max(|1 - \lambda A|,\; |1 - \lambda B|) \triangleq \rho(\lambda).
    \end{aligned}
\end{equation}

\textbf{Convergence condition.} For $\rho < 1$, we need both $|1 - \lambda A| < 1$ and $|1 - \lambda B| < 1$:
\begin{align}
    |1 - \lambda A| < 1 & \iff 0 < \lambda A < 2 \iff \lambda < 2/A \\
    |1 - \lambda B| < 1 & \iff 0 < \lambda B < 2 \iff \lambda < 2/B
\end{align}
Since $A \le B$, the binding constraint is $\lambda < 2/B$. For any $\lambda \in (0, 2/B)$:
\begin{equation}
    \|e_k\|_{L^2} = \|(I - \lambda \mathfrak{S}_{\Lambda})^k e_0\|_{L^2} \le \rho^k \|e_0\|_{L^2}
\end{equation}
establishing exponential convergence.

\textbf{Optimal relaxation.} The optimal $\lambda$ minimizes $\rho(\lambda) = \max(|1 - \lambda A|,\, |1 - \lambda B|)$. Setting $1 - \lambda A = -(1 - \lambda B)$ (equating the two branches):
\begin{equation}
    1 - \lambda^* A = \lambda^* B - 1 \implies \lambda^* = \frac{2}{A + B}
\end{equation}
The resulting optimal convergence rate is:
\begin{equation}
    \rho^* = 1 - \lambda^* A = 1 - \frac{2A}{A+B} = \frac{B - A}{B + A}
\end{equation}
This equals the condition number mismatch of the frame bounds. When $A = B$ (tight frame), $\rho^* = 0$ and convergence is achieved in a single iteration.

\section{Equation-by-Equation Derivations for Main-Text Formulae}
\label{app:main_eqs}
This appendix records representative equation-level derivations for the main algebraic and computational formulae in the text. When a formula is a definition rather than a theorem, we state that explicitly and explain why that construction is the natural one for the MT problem. The appendix provides a direct equation-to-equation bridge for formulae whose derivation is not already contained in the surrounding proof, complementing Appendices~\ref{app:landau}--\ref{app:iterative}.

\subsection{Formulae from the Preliminaries and Local Conditions}
\textbf{Main-text Eq.~\eqref{eq:pw_space}.} Equation~\eqref{eq:pw_space} is a definition. The physical assumptions are:
\begin{align}
     & \text{finite energy} \quad \Longrightarrow \quad x \in L^2(\mathbb{R}),                              \\
     & \text{bandlimit } \Omega \quad \Longrightarrow \quad \hat{x}(\omega)=0 \text{ for } |\omega|>\Omega.
\end{align}
The support statement $\hat{x}(\omega)=0$ for $|\omega|>\Omega$ is written compactly as $\mathrm{supp}(\hat{x})\subseteq[-\Omega,\Omega]$. Imposing both conditions simultaneously gives
\[
    PW_\Omega = \left\{x\in L^2(\mathbb{R}) : \mathrm{supp}(\hat{x})\subseteq[-\Omega,\Omega]\right\},
\]
which is exactly Eq.~\eqref{eq:pw_space}.

\textbf{Main-text Eq.~\eqref{eq:tv_def}.} For a continuously differentiable function, the classical definition of total variation is
\[
    \mathrm{TV}(x;[a,b]) = \sup_{P}\sum_{i=1}^{N}|x(t_i)-x(t_{i-1})|,
\]
where $P: a=t_0<t_1<\cdots<t_N=b$ ranges over all partitions. By the fundamental theorem of calculus,
\[
    x(t_i)-x(t_{i-1}) = \int_{t_{i-1}}^{t_i}x'(t)\,dt.
\]
Taking absolute values gives
\[
    |x(t_i)-x(t_{i-1})| \le \int_{t_{i-1}}^{t_i}|x'(t)|\,dt.
\]
Summing over $i$ yields
\[
    \sum_{i=1}^{N}|x(t_i)-x(t_{i-1})| \le \sum_{i=1}^{N}\int_{t_{i-1}}^{t_i}|x'(t)|\,dt = \int_a^b|x'(t)|\,dt.
\]
Since this holds for every partition $P$, one obtains
\[
    \mathrm{TV}(x;[a,b]) \le \int_a^b|x'(t)|\,dt.
\]
For the reverse inequality, refine the partition so that each subinterval lies inside a region where $x'$ does not change sign. On such a subinterval,
\[
    \left|\int_{t_{i-1}}^{t_i}x'(t)\,dt\right| = \int_{t_{i-1}}^{t_i}|x'(t)|\,dt.
\]
Summing over the refined partition gives the opposite inequality, hence
\[
    \mathrm{TV}(x;[a,b]) = \int_a^b|x'(t)|\,dt,
\]
which is Eq.~\eqref{eq:tv_def}.

\textbf{Main-text Eq.~\eqref{eq:mvt_set}.} Equation~\eqref{eq:mvt_set} is also a definition. Fix one maximal monotone interval $I_k$. On that interval the restriction $x|_{I_k}$ is strictly monotone, so for any threshold level $m\Delta V$ the equation $x(t)=m\Delta V$ has at most one solution in $I_k$. Therefore the raw set of horizontal intersections on that branch is
\[
    \{t\in I_k : x(t)\in \Delta V\mathbb{Z}\}.
\]
Collecting these raw crossings from all monotone branches gives
\[
    \Lambda_{\Delta V}^{\mathrm{raw}} = \bigcup_k \{t\in I_k : x(t)\in \Delta V\mathbb{Z}\}.
\]
The recorded set used in the sampling theory is obtained by applying the retained adjacent-transition map $\mathcal{A}_{\Delta V}$, which suppresses same-threshold returns and retains adjacent threshold-index transitions:
\[
    \Lambda_{\Delta V}=\mathcal{A}_{\Delta V}(\Lambda_{\Delta V}^{\mathrm{raw}}),
\]
which is Eq.~\eqref{eq:mvt_set}.

\textbf{Main-text Eq.~\eqref{eq:frame}.} Let $S_\Lambda:PW_\Omega\to \ell^2$ be the sampling operator defined by $S_\Lambda x = \{x(t_n)\}_{n\in\mathbb{Z}}$. Stable sampling means that the sample norm and the signal norm are comparable, namely that there exist constants $c_1,c_2>0$ such that
\[
    c_1\|x\|_{L^2} \le \|S_\Lambda x\|_{\ell^2} \le c_2\|x\|_{L^2}.
\]
Now compute
\[
    \|S_\Lambda x\|_{\ell^2}^2 = \sum_{n\in\mathbb{Z}}|x(t_n)|^2.
\]
Squaring the norm comparison and renaming $A=c_1^2$ and $B=c_2^2$ gives
\[
    A\|x\|_{L^2}^2 \le \sum_{n\in\mathbb{Z}}|x(t_n)|^2 \le B\|x\|_{L^2}^2,
\]
which is Eq.~\eqref{eq:frame}.

\textbf{Main-text Eq.~\eqref{eq:density_def}.} Equation~\eqref{eq:density_def} is the standard definition of lower Beurling density. For a window $[t,t+r]$, the average sampling rate inside that window is
\[
    \frac{n(t,r)}{r}.
\]
To capture the most poorly sampled window of length $r$, take the infimum over all $t\in\mathbb{R}$:
\[
    \inf_{t\in\mathbb{R}}\frac{n(t,r)}{r}.
\]
To extract the asymptotic worst-case rate on large windows, take the lower limit as $r\to\infty$:
\[
    D^-(\Lambda) = \liminf_{r\to\infty}\inf_{t\in\mathbb{R}}\frac{n(t,r)}{r},
\]
which is Eq.~\eqref{eq:density_def}.

\textbf{Main-text Eq.~\eqref{eq:kadec_base}.} Appendix~\ref{app:constraints} proves the normalized statement that if
\[
    \sup_n|\lambda_n-n| < \frac{1}{4},
\]
then the perturbed exponential system is a Riesz basis for $PW_\pi$. To translate this back to physical units, write
\[
    \lambda_n = \frac{t_n}{\Delta T}.
\]
Then
\[
    \left|\lambda_n-n\right| = \left|\frac{t_n}{\Delta T}-n\right| = \frac{|t_n-n\Delta T|}{\Delta T}.
\]
The normalized condition becomes
\[
    \frac{\sup_n|t_n-n\Delta T|}{\Delta T} < \frac{1}{4}.
\]
Multiplying by $\Delta T$ gives
\[
    \sup_n|t_n-n\Delta T| < \frac{\Delta T}{4}.
\]
Finally, substitute $\Delta T=1/(2f_{max})$ to obtain
\[
    \frac{\Delta T}{4} = \frac{1}{4}\cdot\frac{1}{2f_{max}} = \frac{1}{8f_{max}},
\]
which yields Eq.~\eqref{eq:kadec_base}.

\textbf{Main-text Eq.~\eqref{eq:centered_window}.} Equation~\eqref{eq:centered_window} is a definition chosen to encode the admissible Kadec neighborhood around the reference grid point $\tau_0+n\Delta T$. The center is $\tau_0+n\Delta T$ and the allowed radius is $\Delta T/4$. Therefore the left endpoint is center minus radius and the right endpoint is center plus radius:
\[
    \tau_0+n\Delta T-\frac{\Delta T}{4},
    \qquad
    \tau_0+n\Delta T+\frac{\Delta T}{4}.
\]
Putting them together gives Eq.~\eqref{eq:centered_window}.

\textbf{Range $R_x^{(c)}(\tau)$ (used in the Remark after Eq.~\eqref{eq:global_retained_velocity_condition}).} The local range is defined as the vertical excursion of the signal inside the centered window $J_\tau$. The highest signal value in the window is $\max_{t\in J_\tau}x(t)$ and the lowest is $\min_{t\in J_\tau}x(t)$. Their difference is the local range,
\[
    R_x^{(c)}(\tau) = \max_{t\in J_\tau}x(t)-\min_{t\in J_\tau}x(t).
\]

\textbf{Main-text Eqs.~\eqref{eq:global_velocity_condition}--\eqref{eq:global_retained_velocity_condition}.} The local retained monotone-window Kadec requirement is Eq.~\eqref{eq:velocity_kadec_condition},
\[
    \Delta V < \frac{v_n}{8\bar{\alpha}_n^{(\mathrm{ret})}f_{max}}.
\]
To impose it simultaneously over all centered windows that lie inside monotone portions of the active interval, first enumerate the monotone windows $\mathcal{N}_{\mathrm{mono}}(\tau_0)$ from~\eqref{eq:mono_window_set}, then define
\[
    \begin{aligned}
    v_{min}^{(\mathrm{mono})}(x;\tau_0)
    &= \min_{n\in\mathcal{N}_{\mathrm{mono}}(\tau_0)} v_n,\\
    \alpha_{\mathrm{mono}}^{(\mathrm{ret})}(x;\tau_0)
    &=
    \max_{n\in\mathcal{N}_{\mathrm{mono}}(\tau_0)}
    \alpha_n^{(\mathrm{ret})},\\
    \bar{\alpha}_{\mathrm{mono}}^{(\mathrm{ret})}(x;\tau_0)
    &=
    \max_{n\in\mathcal{N}_{\mathrm{mono}}(\tau_0)}
    \bar{\alpha}_n^{(\mathrm{ret})}.
    \end{aligned}
\]
Replacing the local $v_n$ and $\bar{\alpha}_n^{(\mathrm{ret})}$ by these envelopes gives
\[
    \Delta V <
    \frac{v_{min}^{(\mathrm{mono})}(x;\tau_0)}
    {8\bar{\alpha}_{\mathrm{mono}}^{(\mathrm{ret})}(x;\tau_0)f_{max}},
\]
which is Eq.~\eqref{eq:global_retained_velocity_condition}.

\textbf{Main-text Eq.~\eqref{eq:extremum_pattern}.} Start from the second-order Taylor formula with remainder around the extremum $t^*$:
\begin{align}
    x(t)
     & = x(t^*) + x'(t^*)(t-t^*) + \frac{1}{2}x''(t^*)(t-t^*)^2 \\
     & \quad + \frac{1}{6}x'''(\xi)(t-t^*)^3,
\end{align}
for some $\xi$ between $t$ and $t^*$. Because $t^*$ is an extremum, $x'(t^*)=0$. Dropping the cubic remainder when only the leading local geometry is retained gives
\[
    x(t) \approx x(t^*) + \frac{1}{2}x''(t^*)(t-t^*)^2,
\]
The first retained threshold in the excursion direction is not always the first raw threshold: the first same-threshold return may be suppressed. The retained directional phase gap is $\delta_{\mathrm{ret}}^{(\sigma)}(t^*;\Delta V)$ from~\eqref{eq:retained_extremum_gap}. The $j$-th subsequent retained threshold excursion is therefore $\delta_{\mathrm{ret}}^{(\sigma)}+j\Delta V$. Equating the quadratic excursion to that value gives
\[
    |t_{j,\mathrm{ret}}^{(\sigma)}-t^*|
    =
    \sqrt{\frac{2(\delta_{\mathrm{ret}}^{(\sigma)}+j\Delta V)}{|x''(t^*)|}},
\]
with the cubic remainder producing the multiplicative error in Eq.~\eqref{eq:extremum_pattern}.

\textbf{Main-text Eq.~\eqref{eq:curvature_leading}.} From Eq.~\eqref{eq:extremum_pattern}, the leading-order excursion over time offset $\Delta t$ is
\[
    \delta_{\mathrm{ret}}^{(\sigma)}(t^*;\Delta V) \approx \frac{1}{2}|x''(t^*)|(\Delta t)^2.
\]
To force a crossing before leaving both the Kadec-compatible window and the monotone branch, impose
\[
    \Delta t < \rho(t^*),
\]
where $\rho(t^*)$ is the isolation radius in~\eqref{eq:extremum_isolation_radius}. Substituting this strict upper bound for $\Delta t$ into the pure quadratic excursion equality algebraically guarantees:
\begin{align}
    \delta_{\mathrm{ret}}^{(\sigma)}(t^*;\Delta V)
     & < \frac{1}{2}|x''(t^*)|\rho(t^*)^2,
\end{align}
which is Eq.~\eqref{eq:curvature_leading}.

\textbf{Main-text Eq.~\eqref{eq:curvature_exact}.} Including the third-order remainder in the Taylor expansion around $t^*$ (see Eq.~\eqref{eq:extremum_pattern}), the guaranteed excursion at time offset $\Delta t$ is bounded below by
\[
    |x(t^*+\Delta t)-x(t^*)|
    \ge
    \frac{|\kappa|}{2}(\Delta t)^2
    - \frac{M_3(t^*)}{6}(\Delta t)^3.
\]
The realized retained-phase sufficient condition requires this conservative quadratic--cubic lower excursion at $\Delta t=\rho(t^*)$ to exceed $\delta_{\mathrm{ret}}^{(\sigma)}(t^*;\Delta V)$. Substituting gives:
\begin{align}
    \delta_{\mathrm{ret}}^{(\sigma)}(t^*;\Delta V)
     &< \frac{|x''(t^*)|}{2}\rho(t^*)^2
      - \frac{M_3(t^*)}{6}\rho(t^*)^3,
\end{align}
which is Eq.~\eqref{eq:curvature_exact}. Since $\delta_{\mathrm{ret}}^{(\sigma)}(t^*;\Delta V)\le2\Delta V$, replacing the left side by $2\Delta V$ gives the retained phase-uniform condition~\eqref{eq:curvature_phase_uniform}.

\textbf{Main-text Eq.~\eqref{eq:extremum_perturbation}.} Let $g_n=\tau_0+n\Delta T$, let $d_n=|t^*-g_n|$, and set $\varsigma_n=\operatorname{sgn}(g_n-t^*)$. On the side of $t^*$ pointing toward $g_n$, let $h_n$ denote the distance from the extremum to the first retained threshold crossing. The quadratic Taylor law gives
\[
    h_n=\sqrt{\frac{2\delta_{\mathrm{ret}}^{(\varsigma_n)}(t^*;\Delta V)}{|\kappa|}}\,(1+o(1)).
\]
Since the chosen crossing lies between the extremum and the grid center direction, its displacement from the grid anchor is the difference between the extremum-anchor offset and the extremum-crossing offset:
\[
    |s_n-g_n|\le |d_n-h_n|.
\]
Dividing by $\Delta T$ gives Eq.~\eqref{eq:extremum_perturbation}. This derivation also explains why the offset term $d_n$ cannot be dropped unless the grid is aligned with the extremum to the same order as $h_n$.

\subsection{Formulae from the Density Section}
\textbf{Main-text Eq.~\eqref{eq:local_spacing}.} Consecutive MT crossings differ by one threshold step, so
\[
    |x(t_{k+1})-x(t_k)| = \Delta V.
\]
Apply the mean value theorem to $x$ on $[t_k,t_{k+1}]$. There exists $\xi\in(t_k,t_{k+1})$ such that
\[
    x(t_{k+1})-x(t_k) = x'(\xi)(t_{k+1}-t_k).
\]
Taking absolute values gives
\[
    |x(t_{k+1})-x(t_k)| = |x'(\xi)|\,|t_{k+1}-t_k|.
\]
Since the left-hand side equals $\Delta V$, solving for the time gap gives
\[
    |t_{k+1}-t_k| = \frac{\Delta V}{|x'(\xi)|},
\]
which is Eq.~\eqref{eq:local_spacing}.

\textbf{Main-text Eq.~\eqref{eq:spacing_expansion}.} On a monotone branch let $\sigma_k=\operatorname{sgn}(x')$ and enumerate threshold levels in traversal order, $v_{m+1}=v_m+\sigma_k\Delta V$. With $t_m=\tau(v_m)$ and $\tau=x^{-1}$,
\[
    t_{m+1}-t_m=\tau(v_m+\sigma_k\Delta V)-\tau(v_m).
\]
Taylor expansion gives
\[
    t_{m+1}-t_m
    =\sigma_k\Delta V\,\tau'(v_m)+\frac{(\Delta V)^2}{2}\tau''(\eta_m).
\]
The inverse-function identities
\[
    \tau'(v)=\frac{1}{x'(\tau(v))},
    \qquad
    \tau''(v)=-\frac{x''(\tau(v))}{(x'(\tau(v)))^3}
\]
and $\sigma_k\tau'(v_m)=1/|x'(t_m)|$ yield Eq.~\eqref{eq:spacing_expansion}. The traversal-oriented index is essential on decreasing branches; otherwise $t_{m+1}-t_m$ would carry the wrong sign.

\textbf{Main-text Eq.~\eqref{eq:sep_bound}.} Starting from Eq.~\eqref{eq:local_spacing}, use the elementary bound $|x'(\xi)|\le \|x'\|_\infty$ to obtain
\[
    |t_{k+1}-t_k| = \frac{\Delta V}{|x'(\xi)|} \ge \frac{\Delta V}{\|x'\|_\infty}.
\]
Now take the infimum over $k$:
\[
    \inf_k|t_{k+1}-t_k| \ge \frac{\Delta V}{\|x'\|_\infty}.
\]
Bernstein's inequality gives $\|x'\|_\infty \le \Omega\|x\|_\infty = 2\pi f_{max}\|x\|_\infty$, hence
\[
    \delta_{\mathrm{sep}} = \inf_k|t_{k+1}-t_k| \ge \frac{\Delta V}{2\pi f_{max}\|x\|_\infty},
\]
which is Eq.~\eqref{eq:sep_bound}.

\textbf{Main-text Eq.~\eqref{eq:cross_upper}.} Let
\[
    V_{\mathrm{tot}}(t,r) = \int_t^{t+r}|x'(\tau)|\,d\tau
\]
be the total variation over $[t,t+r]$. The upper bound is derived from consecutive retained events, not from raw hits. Consecutive retained labels differ by one ladder step; hence the total variation between two consecutive retained times is at least $\Delta V$. Therefore the number of retained crossings cannot exceed $V_{\mathrm{tot}}(t,r)/\Delta V$ plus endpoint corrections. If the observation window cuts through a partial retained traversal at one or both ends, that creates at most a constant endpoint correction. Hence
\[
    n_{\mathrm{ret}}(t,r) \le \frac{1}{\Delta V}\int_t^{t+r}|x'(\tau)|\,d\tau + 2,
\]
which is Eq.~\eqref{eq:cross_upper}. The constant $2$ is explicit: the left endpoint can intersect at most one unfinished threshold traversal, and the right endpoint can intersect at most one more.

\textbf{Main-text Eq.~\eqref{eq:cross_lower}.} Decompose $[t,t+r]$ into maximal monotone pieces $I_1,\dots,I_{N_{\mathrm{mono}}(t,r)}$. On piece $I_k$, the total variation is
\[
    \mathrm{TV}_k = \int_{I_k}|x'(\tau)|\,d\tau.
\]
The proof needs only a guaranteed lower count, not an exact raw-hit formula. Threshold phase and endpoint conventions may change the exact number of raw hits, but a monotone traversal of total variation $\mathrm{TV}_k$ guarantees at least $\mathrm{TV}_k/\Delta V-1$ threshold intervals. This is the same floor-loss estimate as $\lfloor y\rfloor\ge y-1$:
\[
    \left\lfloor \frac{\mathrm{TV}_k}{\Delta V}\right\rfloor \ge \frac{\mathrm{TV}_k}{\Delta V}-1.
\]
On each monotone piece, at most one first raw hit can be suppressed as a same-threshold return. Summing over $k=1,\dots,N_{\mathrm{mono}}(t,r)$ and subtracting the suppression count gives
\begin{align}
    n_{\mathrm{ret}}(t,r)
     & \ge \sum_{k=1}^{N_{\mathrm{mono}}(t,r)}\left(\frac{\mathrm{TV}_k}{\Delta V}-1\right)           \\
     &\quad -N_{\mathrm{supp}}(t,r)\\
     & = \frac{1}{\Delta V}\sum_{k=1}^{N_{\mathrm{mono}}(t,r)}\mathrm{TV}_k - N_{\mathrm{mono}}(t,r)-N_{\mathrm{supp}}(t,r).
\end{align}
Because the total variations of the monotone pieces add up to the total variation of the whole interval,
\[
    \sum_{k=1}^{N_{\mathrm{mono}}(t,r)}\mathrm{TV}_k = \int_t^{t+r}|x'(\tau)|\,d\tau.
\]
Substituting this identity yields Eq.~\eqref{eq:cross_lower}.

\textbf{Main-text Eq.~\eqref{eq:density_profile}.} Equation~\eqref{eq:density_profile} is the definition of the instantaneous threshold-traversal envelope. Locally, a waveform with velocity magnitude $|x'(t)|$ traverses voltage distance $|x'(t)|\,dt$ during a small interval $dt$. Since each threshold interval has height $\Delta V$, the first-order number of threshold intervals traversed per unit time is
\[
    \rho_{MT}(t)=\frac{|x'(t)|}{\Delta V},
\]
which is Eq.~\eqref{eq:density_profile}. Same-threshold suppression is not included in this continuum envelope; it enters through $N_{\mathrm{supp}}$ in~\eqref{eq:cross_lower}.

\textbf{Main-text Eq.~\eqref{eq:velocity}.} Equation~\eqref{eq:velocity} defines the mean absolute speed on one interval and then takes its lower envelope. For a fixed interval,
\[
    \bar{\nu}(t,r) = \frac{1}{r}\int_t^{t+r}|x'(\tau)|\,d\tau,
\]
and the slowest such value is obtained by taking the infimum over all admissible intervals:
\[
    \nu_{min} = \inf_{\substack{[t,t+r]\subseteq[0,T_{\text{obs}}]\\ r\ge 1/(2f_{max})}}\bar{\nu}(t,r),
\]
which is Eq.~\eqref{eq:velocity}. Because that infimum is taken over subintervals of a fixed waveform, $\nu_{min}$ is signal-dependent rather than a universal constant of the full model class.

\textbf{Main-text Eq.~\eqref{eq:density_window_improved}.} Divide the upper estimate \eqref{eq:cross_upper} by $r$:
\begin{align}
    \frac{n_{\mathrm{ret}}(t,r)}{r}
     & \le \frac{1}{\Delta V}\cdot \frac{1}{r}\int_t^{t+r}|x'(\tau)|\,d\tau + \frac{2}{r} \\
     & = \frac{\bar{\nu}(t,r)}{\Delta V}+\frac{2}{r}.
\end{align}
Now divide the lower estimate \eqref{eq:cross_lower} by $r$:
\begin{align}
    \frac{n_{\mathrm{ret}}(t,r)}{r}
     & \ge \frac{1}{\Delta V}\cdot \frac{1}{r}\int_t^{t+r}|x'(\tau)|\,d\tau \\
     &\quad - \frac{N_{\mathrm{mono}}(t,r)}{r}-\frac{N_{\mathrm{supp}}(t,r)}{r} \\
     & = \frac{\bar{\nu}(t,r)}{\Delta V} - \frac{N_{\mathrm{mono}}(t,r)}{r}-\frac{N_{\mathrm{supp}}(t,r)}{r}.
\end{align}
Using the certified extremum-density bound $N_{\mathrm{mono}}(t,r)/r \le \eta_0+\mathcal{O}(1/r)$ and suppression-density bound $N_{\mathrm{supp}}(t,r)/r \le \eta_{\mathrm{supp},0}+\mathcal{O}(1/r)$ gives
\begin{align}
    \frac{\bar{\nu}(t,r)}{\Delta V} - \eta_0-\eta_{\mathrm{supp},0} - \mathcal{O}(1/r)
     & \le \frac{n_{\mathrm{ret}}(t,r)}{r}                               \\
     & \le \frac{\bar{\nu}(t,r)}{\Delta V} + \frac{2}{r},
\end{align}
which is Eq.~\eqref{eq:density_window_improved}.

\textbf{Main-text Eq.~\eqref{eq:density_bounds_improved}.} Start from Eq.~\eqref{eq:density_window_improved}. On intervals with $r\gg \Delta T$, the term $\mathcal{O}(1/r)$ is asymptotically negligible. Then
\[
    \frac{\bar{\nu}(t,r)}{\Delta V} - \eta_0-\eta_{\mathrm{supp},0} \lesssim \frac{n_{\mathrm{ret}}(t,r)}{r} \lesssim \frac{\bar{\nu}(t,r)}{\Delta V}.
\]
Now take the lower envelope over long subintervals inside the active window. By definition of $\nu_{min}$, the lowest achievable value of $\bar{\nu}(t,r)$ is $\nu_{min}$. Replacing $\bar{\nu}(t,r)$ by that lower envelope yields
\[
    \frac{\nu_{min}}{\Delta V} - \eta_0-\eta_{\mathrm{supp},0}
    \lesssim \inf \frac{n_{\mathrm{ret}}(t,r)}{r}
    \lesssim \frac{\nu_{min}}{\Delta V},
\]
which is Eq.~\eqref{eq:density_bounds_improved}. For a single verified instance one may use $\eta_0=\eta_x$ and $\eta_{\mathrm{supp},0}=\eta_{\mathrm{supp},x}$.

\textbf{Main-text Eq.~\eqref{eq:finite_R_density_design}.} For finite-window statements, keep the additive constants in the count certificates. If $r\ge R$ and~\eqref{eq:finite_R_count_constants} holds, then
\[
    \frac{N_{\mathrm{mono}}(t,r)}{r}
    +\frac{N_{\mathrm{supp}}(t,r)}{r}
    \le
    \eta_0+\eta_{\mathrm{supp},0}
    +\frac{C_{\mathrm{mono}}+C_{\mathrm{supp}}}{R}.
\]
Substituting this into the lower count estimate and using $\bar{\nu}(t,r)\ge\nu_{min}^{(R)}$ gives~\eqref{eq:finite_R_density_bound}. Requiring that lower bound to exceed $2f_{max}$ and solving for $\Delta V$ yields~\eqref{eq:finite_R_density_design}.

\textbf{Main-text Eq.~\eqref{eq:finite_R_retained_safe_fallback}.} If no suppression-topology certificate is available before acquisition, use the deterministic retained-safe estimate~\eqref{eq:supp_count_bound}. Together with $N_{\mathrm{mono}}(t,r)\le \eta_0 r+C_{\mathrm{mono}}$, it gives
\[
    N_{\mathrm{supp}}(t,r)
    \le
    N_{\mathrm{mono}}(t,r)
    \le
    \eta_0 r+C_{\mathrm{mono}}.
\]
Thus $\eta_{\mathrm{supp},0}=\eta_0$ and $C_{\mathrm{supp}}=C_{\mathrm{mono}}$ are valid finite-window choices in~\eqref{eq:finite_R_density_design}. Substitution gives~\eqref{eq:finite_R_retained_safe_fallback}.

\textbf{Main-text Eq.~\eqref{eq:strict_suff_improved}.} The displayed rule is the long-window scale obtained by demanding that the asymptotic lower estimate in Eq.~\eqref{eq:density_bounds_improved} exceed the Nyquist benchmark $2f_{max}$:
\[
    \frac{\nu_{min}}{\Delta V} - \eta_0-\eta_{\mathrm{supp},0} > 2f_{max}.
\]
Add $\eta_0+\eta_{\mathrm{supp},0}$ to both sides:
\[
    \frac{\nu_{min}}{\Delta V} > 2f_{max}+\eta_0+\eta_{\mathrm{supp},0}.
\]
Multiply by $\Delta V>0$:
\[
    \nu_{min} > (2f_{max}+\eta_0+\eta_{\mathrm{supp},0})\Delta V.
\]
Now divide by $2f_{max}+\eta_0+\eta_{\mathrm{supp},0}>0$ to obtain
\[
    \Delta V < \frac{\nu_{min}}{2f_{max}+\eta_0+\eta_{\mathrm{supp},0}},
\]
which is Eq.~\eqref{eq:strict_suff_improved}.

\textbf{Main-text Eq.~\eqref{eq:ultimate_ns}.} In a certified high-density regime, the correction term $\eta_0+\eta_{\mathrm{supp},0}$ in Eq.~\eqref{eq:density_bounds_improved} is lower order compared with $\nu_{min}/\Delta V$. The leading traversal-density proxy is therefore
\[
    \frac{\nu_{min}}{\Delta V}.
\]
Matching that leading term to the Nyquist benchmark gives
\[
    \frac{\nu_{min}}{\Delta V} \gtrsim 2f_{max}.
\]
Multiply by $\Delta V>0$ and divide by $2f_{max}>0$ to obtain
\[
    \Delta V \lesssim \frac{\nu_{min}}{2f_{max}},
\]
which is Eq.~\eqref{eq:ultimate_ns}. This is a leading-density benchmark under the stated extremum-density certificate; it is not a universal sharp threshold over $PW_\Omega$.

\subsection{Formulae from the Hardware Section}
\textbf{Main-text Eqs.~\eqref{eq:tdc_spec}--\eqref{eq:dac_spec}.} Let
\begin{align}
    C&=2\pi f_{max}\|x\|_\infty,\\
    \beta&=\frac{C}{v_{min}},\\
    \gamma_A^{\mathrm{det}}
    &=
    \frac{
    \sqrt{A_{\mathrm{hw}}^{\mathrm{det}}}
    \left(\|x\|_{L^2}/\sqrt{S_{\mathrm{tar}}}-E_{\mathrm{tc}}\right)
    -E_R}
    {3\sqrt{N_{\mathrm{in}}}}.
\end{align}
The deterministic sample-error norm~\eqref{eq:total_sample_error} has three leading components, $C\sigma_j$, $\beta\sigma_n$, and $(1+\beta)\sigma_q$, plus the reserved Taylor remainder budget $E_R$. Under the simultaneous-budget convention in Corollary~\ref{cor:hw_specs}, each leading component is required to be at most $\gamma_A^{\mathrm{det}}$. Because the tail-prior amplification factor is $1/\sqrt{A_{\mathrm{hw}}^{\mathrm{det}}}$, this guarantees that the sample-error contribution and the externally assigned tail/collar budget remain within the target full-signal perturbation amplitude. Solving the three scalar inequalities gives
\begin{align}
    \sigma_j\le\frac{\gamma_A^{\mathrm{det}}}{C},
    \qquad
    \sigma_n\le\frac{\gamma_A^{\mathrm{det}}}{\beta},
    \qquad\\
    \sigma_q\le\frac{\gamma_A^{\mathrm{det}}}{1+\beta},
\end{align}
which are exactly Eqs.~\eqref{eq:tdc_spec}--\eqref{eq:dac_spec}. The factor $1+\beta$ in the DAC rule accounts for the direct value error from the unreported residual threshold offset $r_{m_k}$ in addition to the velocity-amplified timing shift.

If $\theta(L_{\mathrm{hw}}^{\mathrm{det}})\le1-1/\sqrt{2}$, then $A_{\mathrm{hw}}^{\mathrm{det}}\ge1/(2\Delta T)$. Substituting this lower bound for $A_{\mathrm{hw}}^{\mathrm{det}}$ in $\gamma_A^{\mathrm{det}}$ gives the simplified explicit rules~\eqref{eq:tdc_spec_simple}--\eqref{eq:dac_spec_simple}. These simplified rules are therefore a stronger-margin corollary, not a consequence of $L_{\mathrm{hw}}^{\mathrm{det}}<1/4$ alone.

For the finite-dimensional specifications~\eqref{eq:tdc_spec_finite}--\eqref{eq:dac_spec_finite}, repeat the same algebra with
\[
    \gamma_M^{\mathrm{det}}
    =
    \frac{\sigma_{\min}(H_{\mathrm{hw},M})\|p\|_{L^2}/\sqrt{S_{\mathrm{tar}}}-E_R}
    {3\sqrt{N_{\mathrm{in}}}},
\]
because the finite-record amplification factor is $1/\sigma_{\min}(H_{\mathrm{hw},M})$ rather than $1/\sqrt{A_{\mathrm{hw}}^{\mathrm{det}}}$ and the target is the model-space ratio~\eqref{eq:model_snr_def}. Substituting $\gamma_M^{\mathrm{det}}$ for $\gamma_A^{\mathrm{det}}$ in the three scalar inequalities above gives the finite-dimensional model-space bounds.

For a finite-dimensional full-signal SNR claim, Theorem~\ref{thm:error_budget} gives
\[
    \|x-\hat{p}\|_{L^2}
    \le
    E_M + \sigma_{\min}(H_{\mathrm{hw},M})^{-1}\|\mathbf{e}\|_{\ell^2},
\]
where $E_M$ is defined in~\eqref{eq:model_residual_budget}. Therefore the hardware perturbation may consume only the remaining budget $R_M(S_{\mathrm{tar}})$ in~\eqref{eq:full_signal_remaining_budget}. Algebraically, this is the same substitution as above with
\[
    \frac{\sigma_{\min}(H_{\mathrm{hw},M})\|p\|_{L^2}}{\sqrt{S_{\mathrm{tar}}}}
    \quad\text{replaced by}\quad
    \sigma_{\min}(H_{\mathrm{hw},M})R_M(S_{\mathrm{tar}}).
\]

\subsection{Formulae from the Reconstruction Section}
\textbf{Main-text Eq.~\eqref{eq:repro_kernel}.} For $x\in PW_\Omega$, the inverse Fourier formula is
\[
    x(t) = \frac{1}{2\pi}\int_{-\Omega}^{\Omega}\hat{x}(\omega)e^{j\omega t}\,d\omega.
\]
To find a reproducing kernel $K(t,s)$, require
\[
    x(s) = \langle x, K(\cdot,s)\rangle_{L^2}.
\]
In the frequency domain this means $\widehat{K}(\omega,s)=e^{-j\omega s}$ on $[-\Omega,\Omega]$. Taking the inverse Fourier transform gives
\begin{align}
    K(t,s)
     & = \frac{1}{2\pi}\int_{-\Omega}^{\Omega}e^{j\omega(t-s)}\,d\omega                \\
     & = \frac{1}{2\pi}\left[\frac{e^{j\omega(t-s)}}{j(t-s)}\right]_{-\Omega}^{\Omega} \\
     & = \frac{e^{j\Omega(t-s)}-e^{-j\Omega(t-s)}}{2\pi j(t-s)}                        \\
     & = \frac{\sin[\Omega(t-s)]}{\pi(t-s)},
\end{align}
which is Eq.~\eqref{eq:repro_kernel}.

\textbf{Main-text Eq.~\eqref{eq:frame_operator}.} By the reproducing property,
\[
    f(t_n) = \langle f,k_n\rangle, \qquad k_n(t)=K(t,t_n).
\]
The frame operator associated with the frame $\{k_n\}$ is, by definition,
\[
    \mathfrak{S}_{\Lambda}f = \sum_n \langle f,k_n\rangle k_n.
\]
Substituting $\langle f,k_n\rangle = f(t_n)$ and $k_n(t)=K(t,t_n)$ gives
\[
    (\mathfrak{S}_{\Lambda}f)(t) = \sum_n f(t_n)K(t,t_n),
\]
which is Eq.~\eqref{eq:frame_operator}.

\textbf{Main-text Eq.~\eqref{eq:dual_frame}.} Since $\mathfrak{S}_{\Lambda}$ is invertible on $PW_\Omega$,
\[
    x = \mathfrak{S}_{\Lambda}^{-1}\mathfrak{S}_{\Lambda}x.
\]
Insert the frame-operator expansion from Eq.~\eqref{eq:frame_operator}:
\[
    x = \mathfrak{S}_{\Lambda}^{-1}\left(\sum_n x(t_n)k_n\right).
\]
Because $\mathfrak{S}_{\Lambda}^{-1}$ is linear,
\[
    x = \sum_n x(t_n)\mathfrak{S}_{\Lambda}^{-1}k_n.
\]
Defining the dual frame elements by $\tilde{k}_n = \mathfrak{S}_{\Lambda}^{-1}k_n$ yields
\[
    x(t) = \sum_{n\in\mathbb{Z}}x(t_n)\tilde{k}_n(t),
\]
which is Eq.~\eqref{eq:dual_frame}.

\textbf{Main-text Eq.~\eqref{eq:sinc_approx}.} The exact Shannon representation of a bandlimited signal on the Nyquist grid is
\[
    x(t) = \sum_{m\in\mathbb{Z}} x(m\Delta T)\,\mathrm{sinc}\!\left(\frac{t-m\Delta T}{\Delta T}\right).
\]
On a finite observation window, one retains only the $M$ grid points that materially affect the interval of interest and renames the retained coefficients as $a_m$. This truncation gives the practical approximation
\[
    x(t) \approx \sum_{m=1}^{M} a_m\,\mathrm{sinc}\!\left(\frac{t-m\Delta T}{\Delta T}\right),
\]
which is Eq.~\eqref{eq:sinc_approx}.

\textbf{Main-text Eq.~\eqref{eq:pseudoinverse}.} Starting from Eq.~\eqref{eq:sinc_approx}, evaluate the approximation at the sampling times $t_n$:
\[
    y_n = x(t_n) \approx \sum_{m=1}^{M} a_m\,\mathrm{sinc}\!\left(\frac{t_n-m\Delta T}{\Delta T}\right).
\]
Define the measurement vector $\mathbf{y} \in \mathbb{R}^N$, the coefficient vector $\mathbf{a} \in \mathbb{R}^M$, and the sampling matrix $\mathbf{H} \in \mathbb{R}^{N \times M}$ with entries $H_{n,m} = \mathrm{sinc}((t_n-m\Delta T)/\Delta T)$. Then the system becomes $\mathbf{y}=\mathbf{H}\mathbf{a}$.

When $N > M$, the system is overdetermined in general. To recover $\mathbf{a}$ in the least-squares sense, minimize the cost function
\[
    L(\mathbf{a}) = \|\mathbf{H}\mathbf{a}-\mathbf{y}\|_2^2 = (\mathbf{H}\mathbf{a}-\mathbf{y})^* (\mathbf{H}\mathbf{a}-\mathbf{y}).
\]
Expanding gives
\begin{align}
    L(\mathbf{a})
     & = (\mathbf{a}^*\mathbf{H}^* - \mathbf{y}^*)(\mathbf{H}\mathbf{a}-\mathbf{y})                                                                     \\
     & = \mathbf{a}^*\mathbf{H}^*\mathbf{H}\mathbf{a} - \mathbf{a}^*\mathbf{H}^*\mathbf{y} - \mathbf{y}^*\mathbf{H}\mathbf{a} + \mathbf{y}^*\mathbf{y}.
\end{align}
Since $\mathbf{y}^*\mathbf{H}\mathbf{a}$ is a scalar, it equals its own conjugate transpose, so $(\mathbf{y}^*\mathbf{H}\mathbf{a})^* = \mathbf{a}^*\mathbf{H}^*\mathbf{y}$. Therefore
\begin{equation}
    L(\mathbf{a}) = \mathbf{a}^*\mathbf{H}^*\mathbf{H}\mathbf{a} - 2\mathbf{y}^*\mathbf{H}\mathbf{a} + \mathbf{y}^*\mathbf{y}.
\end{equation}
Differentiate with respect to $\mathbf{a}$ using the standard identities $\nabla_{\mathbf{a}}(\mathbf{a}^*\mathbf{H}^*\mathbf{H}\mathbf{a}) = 2\mathbf{H}^*\mathbf{H}\mathbf{a}$ and $\nabla_{\mathbf{a}}(2\mathbf{y}^*\mathbf{H}\mathbf{a}) = 2\mathbf{H}^*\mathbf{y}$, and set the gradient to zero:
\[
    \nabla_{\mathbf{a}} L(\mathbf{a}) = 2\mathbf{H}^*\mathbf{H}\mathbf{a} - 2\mathbf{H}^*\mathbf{y} = \mathbf{0}.
\]
Dividing by $2$ gives the normal equations
\[
    \mathbf{H}^*\mathbf{H}\mathbf{a} = \mathbf{H}^*\mathbf{y}.
\]
If $\mathbf{H}$ has full column rank, then $\mathbf{H}^*\mathbf{H}$ is positive definite and invertible. Left-multiplying by $(\mathbf{H}^*\mathbf{H})^{-1}$ gives
\begin{align}
    (\mathbf{H}^*\mathbf{H})^{-1}(\mathbf{H}^*\mathbf{H})\mathbf{a} & = (\mathbf{H}^*\mathbf{H})^{-1}\mathbf{H}^*\mathbf{y}                                          \\
    \mathbf{I}\mathbf{a}                                            & = (\mathbf{H}^*\mathbf{H})^{-1}\mathbf{H}^*\mathbf{y}                                          \\
    \mathbf{a}                                                      & = (\mathbf{H}^*\mathbf{H})^{-1}\mathbf{H}^*\mathbf{y} \triangleq \mathbf{H}^\dagger\mathbf{y},
\end{align}
which is Eq.~\eqref{eq:pseudoinverse}.

\textbf{Main-text Eq.~\eqref{eq:iterative}.} Equation~\eqref{eq:iterative} is the Landweber-type descent step~\cite{Landweber1951} for enforcing sample consistency inside the bandlimited space. Consider the quadratic misfit
\[
    J(f) = \frac{1}{2}\sum_n |y_n-f(t_n)|^2,
    \qquad f\in PW_\Omega.
\]
Its gradient in the ambient distributional sense is
\[
    -\sum_n (y_n-f(t_n))\,\delta(t-t_n).
\]
To remain inside $PW_\Omega$, project that gradient onto the bandlimited space using $P_\Omega$. A gradient-descent step with step size $\lambda$ is therefore
\[
    x_{k+1} = x_k - \lambda\bigl(-P_\Omega[\sum_n (y_n-x_k(t_n))\delta(t-t_n)]\bigr).
\]
Removing the double minus sign gives
\[
    x_{k+1}(t) = x_k(t) + \lambda P_\Omega\!\left[\sum_n (y_n-x_k(t_n))\delta(t-t_n)\right],
\]
which is Eq.~\eqref{eq:iterative}.

\textbf{Main-text Eq.~\eqref{eq:iterative_convergence}.} Appendix~\ref{app:iterative} proves the exact error recursion
\[
    e_{k+1} = (I-\lambda \mathfrak{S}_{\Lambda})e_k,
    \qquad e_k = x_k-x.
\]
Iterating gives
\[
    e_k = (I-\lambda \mathfrak{S}_{\Lambda})^k e_0.
\]
Take the $L^2$ norm:
\[
    \|e_k\|_{L^2} \le \|I-\lambda \mathfrak{S}_{\Lambda}\|^k\|e_0\|_{L^2}.
\]
Define
\[
    \rho = \|I-\lambda \mathfrak{S}_{\Lambda}\| = \max(|1-\lambda A|,|1-\lambda B|).
\]
Then
\[
    \|x_k-x\|_{L^2} = \|e_k\|_{L^2} \le \rho^k\|e_0\|_{L^2} = \rho^k\|x_0-x\|_{L^2},
\]
which is Eq.~\eqref{eq:iterative_convergence}.

Together with Appendices~\ref{app:landau}--\ref{app:iterative} and Appendix~\ref{app:pedantic_extras}, the present appendix provides explicit derivation or construction records for the principal numbered formulae whose justification is not already contained in the adjacent proof text.

\section{Auxiliary Derivations}
\label{app:pedantic_extras}
Several sub-steps in the proofs of Appendices~\ref{app:landau}--\ref{app:constraints} were invoked by name (for example, ``the prolate-spheroidal eigenvalue asymptotics'', ``the classical partial-fraction identity for $\pi\cot$'', or ``the localized Plancherel-P\'olya inequality''). This appendix separates those auxiliary inputs into two groups. The Fourier-coefficient, partial-fraction, and interval-counting manipulations are derived in full. The two classical analytic ingredients that are not reproved here from first principles---the prolate transition theorem and the localized Plancherel-P\'olya inequality---are stated explicitly in the precise form needed, and the surrounding reduction arguments are completed so that the logic of Appendices~\ref{app:landau}--\ref{app:constraints} closes without hidden steps.

\subsection{Explicit Fourier Coefficients in Kadec's Estimate}
\label{app:kadec_coeffs}
We compute the coefficients $a_0(\eta)$, $a_j(\eta)$, and $b_j(\eta)$ used in Appendix~\ref{app:constraints}, Step~3.

\textbf{Evaluation of $a_0(\eta)$.} By definition,
\begin{equation}
    a_0(\eta) = \frac{1}{\pi}\int_0^{\pi}\bigl(1-\cos(\eta t)\bigr)\,dt.
\end{equation}
Split the integral:
\begin{equation}
    a_0(\eta) = \frac{1}{\pi}\int_0^{\pi} dt - \frac{1}{\pi}\int_0^{\pi}\cos(\eta t)\,dt.
\end{equation}
The first integral equals $\pi$, so its contribution is $1$. For the second, use the antiderivative $\int\cos(\eta t)\,dt = \sin(\eta t)/\eta$:
\begin{equation}
    \frac{1}{\pi}\int_0^{\pi}\cos(\eta t)\,dt = \frac{1}{\pi}\cdot\left[\frac{\sin(\eta t)}{\eta}\right]_0^{\pi} = \frac{\sin(\pi\eta)}{\pi\eta}.
\end{equation}
Therefore
\begin{equation}
    a_0(\eta) = 1 - \frac{\sin(\pi\eta)}{\pi\eta},
\end{equation}
in agreement with Appendix~\ref{app:constraints}.

\textbf{Evaluation of $a_j(\eta)$ for $j\ge 1$.} By definition,
\begin{equation}
    a_j(\eta) = \frac{2}{\pi}\int_0^{\pi}\bigl(1-\cos(\eta t)\bigr)\cos(jt)\,dt.
\end{equation}
The contribution of the constant $1$ vanishes because $\int_0^{\pi}\cos(jt)\,dt = [\sin(jt)/j]_0^{\pi} = 0$ for $j\in\mathbb{Z}\setminus\{0\}$. Hence
\begin{equation}
    a_j(\eta) = -\frac{2}{\pi}\int_0^{\pi}\cos(\eta t)\cos(jt)\,dt.
    \label{eq:aj_step1}
\end{equation}
Apply the product-to-sum identity $2\cos A\cos B = \cos(A-B)+\cos(A+B)$:
\begin{equation}
    2\cos(\eta t)\cos(jt) = \cos((\eta-j)t) + \cos((\eta+j)t).
\end{equation}
Substituting into \eqref{eq:aj_step1}:
\begin{equation}
    a_j(\eta) = -\frac{1}{\pi}\int_0^{\pi}\Bigl[\cos((\eta-j)t)+\cos((\eta+j)t)\Bigr]\,dt.
\end{equation}
Integrate each cosine term explicitly:
\begin{align}
    \int_0^{\pi}\cos((\eta\pm j)t)\,dt
     & = \left[\frac{\sin((\eta\pm j)t)}{\eta\pm j}\right]_0^{\pi} \\
     & = \frac{\sin(\pi(\eta\pm j))}{\eta\pm j}.
\end{align}
Using $\sin(\pi\eta\pm\pi j) = (-1)^j\sin(\pi\eta)$ (because $\sin(\theta\pm k\pi) = (-1)^k\sin\theta$),
\begin{equation}
    \int_0^{\pi}\cos((\eta\pm j)t)\,dt = \frac{(-1)^j\sin(\pi\eta)}{\eta\pm j}.
\end{equation}
Hence
\begin{align}
    a_j(\eta)
     & = -\frac{(-1)^j\sin(\pi\eta)}{\pi}\left(\frac{1}{\eta-j}+\frac{1}{\eta+j}\right) \\
     & = -\frac{(-1)^j\sin(\pi\eta)}{\pi}\cdot\frac{2\eta}{\eta^2-j^2}.
\end{align}
Rewriting $\eta^2-j^2 = -(j^2-\eta^2)$ yields
\begin{equation}
    a_j(\eta) = \frac{2\eta\sin(\pi\eta)}{\pi}\cdot\frac{(-1)^j}{j^2-\eta^2},
\end{equation}
which is the formula used in Appendix~\ref{app:constraints}.

\textbf{Evaluation of $b_j(\eta)$ for $j\ge 1$.} By definition,
\begin{equation}
    b_j(\eta) = \frac{2}{\pi}\int_0^{\pi}\sin(\eta t)\sin\!\left(\!\left(j-\tfrac{1}{2}\right)t\right)dt.
\end{equation}
Apply the product-to-sum identity $2\sin A\sin B = \cos(A-B)-\cos(A+B)$ with $A=\eta t$ and $B=(j-\tfrac{1}{2})t$:
\begin{align}
    2\sin(\eta t)\sin\!\left(\!\left(j-\tfrac{1}{2}\right)t\right)
     & = \cos\!\left(\!\left(\eta-j+\tfrac{1}{2}\right)t\right) \\
     & \quad - \cos\!\left(\!\left(\eta+j-\tfrac{1}{2}\right)t\right).
\end{align}
Hence
\begin{align}
    b_j(\eta)
     & = \frac{1}{\pi}\int_0^{\pi}\cos\!\left(\!\left(\eta-j+\tfrac{1}{2}\right)t\right)dt \\
     & \quad - \frac{1}{\pi}\int_0^{\pi}\cos\!\left(\!\left(\eta+j-\tfrac{1}{2}\right)t\right)dt.
\end{align}
Let $\alpha_\pm = \eta\pm(j-\tfrac{1}{2})$. As above,
\begin{equation}
    \int_0^{\pi}\cos(\alpha_\pm t)\,dt = \frac{\sin(\pi\alpha_\pm)}{\alpha_\pm}.
\end{equation}
Using $\sin(\pi\eta\pm\pi(j-\tfrac{1}{2})) = \pm(-1)^{j+1}\cos(\pi\eta)$ (which follows from $\sin(\theta\pm\pi/2) = \pm\cos\theta$ and $\cos(\theta-k\pi) = (-1)^k\cos\theta$), we obtain
\begin{align}
    \sin(\pi\alpha_+) & = (-1)^{j+1}\cos(\pi\eta), \\
    \sin(\pi\alpha_-) & = -(-1)^{j+1}\cos(\pi\eta).
\end{align}
more compactly written as
\begin{align}
    \sin(\pi\alpha_-) & = (-1)^{j}\cos(\pi\eta), \\
    \sin(\pi\alpha_+) & = (-1)^{j+1}\cos(\pi\eta).
\end{align}
Therefore
\begin{align}
    b_j(\eta)
     & = \frac{(-1)^{j}\cos(\pi\eta)}{\pi\alpha_-} - \frac{(-1)^{j+1}\cos(\pi\eta)}{\pi\alpha_+} \\
     & = \frac{(-1)^{j}\cos(\pi\eta)}{\pi}\left(\frac{1}{\alpha_-}+\frac{1}{\alpha_+}\right).
\end{align}
Computing the sum inside the parentheses:
\begin{align}
    \frac{1}{\alpha_-}+\frac{1}{\alpha_+}
     & = \frac{\alpha_++\alpha_-}{\alpha_+\alpha_-} = \frac{2\eta}{\eta^2-(j-\tfrac{1}{2})^2} \\
     & = -\frac{2\eta}{(j-\tfrac{1}{2})^2-\eta^2}.
\end{align}
Substituting,
\begin{equation}
    b_j(\eta) = \frac{(-1)^{j+1}\,2\eta\cos(\pi\eta)}{\pi\bigl((j-\tfrac{1}{2})^2-\eta^2\bigr)},
\end{equation}
which matches Appendix~\ref{app:constraints}.

\subsection{Partial-Fraction Identities for $\pi\cot(\pi z)$ and $\pi\tan(\pi z)$}
\label{app:partial_frac}
Appendix~\ref{app:constraints} uses two classical partial-fraction identities. We derive the cotangent identity from the Weierstrass product for $\sin(\pi z)$ and the tangent identity from the corresponding product for $\cos(\pi z)$.

\textbf{Step 1: Weierstrass product.} The entire function $\sin(\pi z)$ has simple zeros exactly at $z\in\mathbb{Z}$, simple behavior $\sin(\pi z)\sim\pi z$ near the origin, and order $1$. Its Hadamard factorization is
\begin{equation}
    \sin(\pi z) = \pi z\prod_{j=1}^{\infty}\left(1-\frac{z^2}{j^2}\right).
    \label{eq:weierstrass_sin}
\end{equation}
Taking logarithms on the simply connected domain $\{z: z\notin\mathbb{Z}\}$ and differentiating term by term:
\begin{equation}
    \frac{d}{dz}\log\sin(\pi z) = \frac{\pi\cos(\pi z)}{\sin(\pi z)} = \pi\cot(\pi z).
\end{equation}
Applying the same logarithmic derivative to the right side of \eqref{eq:weierstrass_sin}:
\begin{align}
    \frac{d}{dz}\log\left(\pi z\prod_{j=1}^{\infty}\left(1-\frac{z^2}{j^2}\right)\right)
     & = \frac{1}{z}+\sum_{j=1}^{\infty}\frac{d}{dz}\log\left(1-\frac{z^2}{j^2}\right) \\
     & = \frac{1}{z}+\sum_{j=1}^{\infty}\frac{-2z/j^2}{1-z^2/j^2}                      \\
     & = \frac{1}{z}-\sum_{j=1}^{\infty}\frac{2z}{j^2-z^2}.
\end{align}
Equating the two expressions yields
\begin{equation}
    \pi\cot(\pi z) = \frac{1}{z}-\sum_{j=1}^{\infty}\frac{2z}{j^2-z^2},
    \label{eq:cot_partial}
\end{equation}
valid for $z\notin\mathbb{Z}$. This is the identity used to compute $A(L)$.

\textbf{Step 2: Derivation of the companion $\pi\tan(\pi z)$ identity.} For the tangent identity, use the Weierstrass product
\begin{equation}
    \cos(\pi z)=\prod_{j=1}^{\infty}\left(1-\frac{z^2}{(j-\tfrac{1}{2})^2}\right).
    \label{eq:weierstrass_cos}
\end{equation}
Taking the logarithmic derivative on the complement of the half-integer zeros gives
\begin{align}
    -\pi\tan(\pi z)
     & = \sum_{j=1}^{\infty}
        \frac{-2z}{(j-\tfrac{1}{2})^2-z^2}.
\end{align}
Hence, for $|z|<\tfrac{1}{2}$,
\begin{equation}
    \pi\tan(\pi z) = 2z\sum_{j=1}^{\infty}
    \frac{1}{(j-\tfrac{1}{2})^2-z^2},
    \label{eq:tan_partial}
\end{equation}
valid for $|z|<\tfrac{1}{2}$. This is the identity used to compute $B(L)$.

\textbf{Step 3: Evaluation at $z=L$.} Setting $z=L$ in \eqref{eq:cot_partial}, dividing through by $2L$, and rearranging yields
\begin{equation}
    \sum_{j=1}^{\infty}\frac{1}{j^2-L^2} = \frac{1}{2L^2}-\frac{\pi\cot(\pi L)}{2L},
\end{equation}
which is the formula for $C(L)$ used in Appendix~\ref{app:constraints}. Next, set $z=L$ in \eqref{eq:tan_partial} and divide by $2L$ to obtain
\begin{equation}
    \sum_{j=1}^{\infty}\frac{1}{(j-\tfrac{1}{2})^2-L^2} = \frac{\pi\tan(\pi L)}{2L},
\end{equation}
from which $\sum_j b_j^{\sharp}(L) = \sin(\pi L)$ follows by direct substitution.

\subsection{Prolate Eigenvalue Dimension Asymptotics}
\label{app:prolate}
Appendix~\ref{app:landau} invoked the asymptotic formula
\begin{equation}
    \dim E_a(\alpha) = \frac{2\Omega a}{\pi}+o(a),\qquad a\to\infty,
    \label{eq:prolate_asymp_target}
\end{equation}
for every fixed $\alpha\in(0,1)$. This step does not follow from the trace identity alone. The required input is the classical Landau-Widom transition theorem~\cite{LandauWidom1980} for the eigenvalues of the time-frequency concentration operator $T_a$.

Let $\mu_1^{(a)} \ge \mu_2^{(a)} \ge \cdots \ge 0$ denote the eigenvalues of $T_a$ in decreasing order.

\textbf{Classical transition theorem.} For every $\varepsilon \in (0,1/2)$ there exists a constant $C_\varepsilon > 0$ such that for all $a \ge 1$,
\begin{align}
    k \le \frac{2\Omega a}{\pi} - C_\varepsilon \log(2+a)
     & \implies \mu_k^{(a)} \ge 1-\varepsilon,
    \label{eq:prolate_transition_left}         \\
    k \ge \frac{2\Omega a}{\pi} + C_\varepsilon \log(2+a)
     & \implies \mu_k^{(a)} \le \varepsilon.
    \label{eq:prolate_transition_right}
\end{align}
Equivalently, the indices for which $\mu_k^{(a)} \in [\varepsilon,1-\varepsilon]$ occupy an $\mathcal{O}(\log a)$-wide transition window centered at $2\Omega a/\pi$.

To deduce \eqref{eq:prolate_asymp_target}, fix any $\alpha \in (0,1)$ and choose
\[
    \varepsilon \in \bigl(0,\min\{\alpha,1-\alpha\}\bigr).
\]
Define
\[
    N_\alpha(a) \triangleq \#\{k : \mu_k^{(a)} \ge \alpha\}.
\]
Because $1-\varepsilon > \alpha$, every index satisfying \eqref{eq:prolate_transition_left} contributes to $N_\alpha(a)$, so
\begin{equation}
    N_\alpha(a) \ge \frac{2\Omega a}{\pi} - C_\varepsilon \log(2+a) + \mathcal{O}(1).
    \label{eq:prolate_lower}
\end{equation}
Because $\varepsilon < \alpha$, every index satisfying \eqref{eq:prolate_transition_right} has $\mu_k^{(a)} < \alpha$ and therefore does not contribute, so
\begin{equation}
    N_\alpha(a) \le \frac{2\Omega a}{\pi} + C_\varepsilon \log(2+a) + \mathcal{O}(1).
    \label{eq:prolate_upper}
\end{equation}
Combining \eqref{eq:prolate_lower} and \eqref{eq:prolate_upper} gives the sharper counting statement
\begin{equation}
    N_\alpha(a) = \frac{2\Omega a}{\pi} + \mathcal{O}(\log(2+a)).
    \label{eq:prolate_count_log}
\end{equation}
Since $\mathcal{O}(\log(2+a)) = o(a)$, \eqref{eq:prolate_count_log} immediately implies \eqref{eq:prolate_asymp_target}.

The trace formula \eqref{eq:trace} still explains the center $2\Omega a/\pi$ of the spectral transition. The Landau-Widom theorem supplies the additional information that the transition from eigenvalues near $1$ to eigenvalues near $0$ occurs inside only a logarithmically wide index window. That is the precise spectral input used in Appendix~\ref{app:landau}.

\subsection{Monotone-Segment Total Variation Additivity}
\label{app:tv_additivity}
Appendix~\ref{app:density} used the identity
\begin{equation}
    \sum_{k=1}^{N_{\mathrm{mono}}(t,r)}\mathrm{TV}_k = \int_t^{t+r}|x'(\tau)|\,d\tau.
    \label{eq:tv_additivity}
\end{equation}
Here is the detailed derivation.

Let $N_{\mathrm{mono}}^{\ast}=N_{\mathrm{mono}}(t,r)$ and $t = s_0<s_1<\cdots<s_{N_{\mathrm{mono}}^{\ast}}=t+r$ be the ordered endpoints of the maximal monotone segments $I_k = [s_{k-1},s_k]$. By the definition of a maximal monotone segment, $x'$ has constant sign on the interior of each $I_k$. Therefore, on $I_k$,
\begin{equation}
    |x'(\tau)| = \sigma_k\,x'(\tau),
\end{equation}
where $\sigma_k = +1$ if $x$ is increasing on $I_k$ and $\sigma_k = -1$ if $x$ is decreasing. Integrating,
\begin{equation}
    \int_{I_k}|x'(\tau)|\,d\tau
    = \sigma_k\int_{s_{k-1}}^{s_k}x'(\tau)\,d\tau
    = \sigma_k\bigl(x(s_k)-x(s_{k-1})\bigr).
\end{equation}
Because the signed increment $x(s_k)-x(s_{k-1})$ has sign $\sigma_k$, we have $\sigma_k(x(s_k)-x(s_{k-1})) = |x(s_k)-x(s_{k-1})|$. Therefore
\begin{equation}
    \mathrm{TV}_k \triangleq \int_{I_k}|x'(\tau)|\,d\tau = |x(s_k)-x(s_{k-1})|.
\end{equation}
Summing over $k$,
\begin{align}
    \sum_{k=1}^{N_{\mathrm{mono}}^{\ast}}\mathrm{TV}_k
     & = \sum_{k=1}^{N_{\mathrm{mono}}^{\ast}}\int_{s_{k-1}}^{s_k}|x'(\tau)|\,d\tau \\
     & = \int_{s_0}^{s_{N_{\mathrm{mono}}^{\ast}}}|x'(\tau)|\,d\tau                 \\
     & = \int_t^{t+r}|x'(\tau)|\,d\tau,
\end{align}
which is \eqref{eq:tv_additivity}. The second equality used the additivity of the Lebesgue integral over a finite partition.

\subsection{Localized Plancherel-P\'olya Inequality}
\label{app:pp_local}
Appendix~\ref{app:landau} used the estimate
\begin{equation}
    \sum_{\lambda\in\Lambda\cap E}|f(\lambda)|^2 \le C_{\Lambda,\Omega}\int_{E+[-r_{\mathrm{loc}},r_{\mathrm{loc}}]}|f(t)|^2\,dt.
    \label{eq:pp_local_target}
\end{equation}
The missing point in the previous draft is that \eqref{eq:pp_local_target} is local in the set $E$; it cannot be obtained by bounding a tail term with the global norm $\|f\|_{L^2}^2$ and then absorbing that global quantity into a local integral. The correct route is to combine the classical Plancherel-P\'olya inequality for uniformly discrete real sequences, in localized form~\cite{Pogany2023}, with the relative-separation decomposition already established in Appendix~\ref{app:landau}.

\textbf{Classical separated-set form.} If $\Gamma \subset \mathbb{R}$ is $\delta$-separated, i.e. $\inf_{\gamma \neq \gamma'} |\gamma-\gamma'| \ge \delta > 0$, then for every measurable set $E \subset \mathbb{R}$ and every $f \in PW_\Omega$ there exists a constant $C_{\delta,\Omega}$ such that
\begin{equation}
    \sum_{\gamma \in \Gamma \cap E} |f(\gamma)|^2 \le C_{\delta,\Omega} \int_{E+[-\delta,\delta]} |f(t)|^2\,dt.
    \label{eq:pp_local_separated}
\end{equation}
This is the standard localized Plancherel-P\'olya inequality for Paley-Wiener spaces~\cite{Pogany2023}.

\textbf{Reduction from relative separation to finite unions of separated sets.} By \eqref{eq:landau_local_multiplicity}, every interval of length $2r_{\mathrm{loc}}$ contains at most $N_\Lambda$ points of $\Lambda$. Order the set increasingly as $\Lambda = \{\lambda_n\}$. Color the points greedily from left to right with $N_\Lambda$ colors. When the point $\lambda_n$ is processed, at most $N_\Lambda-1$ earlier points can already lie in $[\lambda_n-r_{\mathrm{loc}},\lambda_n+r_{\mathrm{loc}}]$; otherwise that interval would already contain at least $N_\Lambda$ earlier sample points and adding $\lambda_n$ would contradict \eqref{eq:landau_local_multiplicity}. Hence one color is always available.

The resulting color classes
\[
    \Lambda = \Lambda^{(1)} \cup \cdots \cup \Lambda^{(N_\Lambda)}
\]
are pairwise disjoint, and each $\Lambda^{(m)}$ is $r_{\mathrm{loc}}$-separated. Applying \eqref{eq:pp_local_separated} to each class with $\delta = r_{\mathrm{loc}}$ and summing over $m$ gives
\begin{align}
    \sum_{\lambda \in \Lambda \cap E} |f(\lambda)|^2
     & = \sum_{m=1}^{N_\Lambda} \sum_{\lambda \in \Lambda^{(m)} \cap E} |f(\lambda)|^2     \\
     & \le \sum_{m=1}^{N_\Lambda} C_{r_{\mathrm{loc}},\Omega} \int_{E+[-r_{\mathrm{loc}},r_{\mathrm{loc}}]} |f(t)|^2\,dt \\
     & = N_\Lambda C_{r_{\mathrm{loc}},\Omega} \int_{E+[-r_{\mathrm{loc}},r_{\mathrm{loc}}]} |f(t)|^2\,dt.
\end{align}
Therefore \eqref{eq:pp_local_target} holds with
\[
    C_{\Lambda,\Omega} = N_\Lambda C_{r_{\mathrm{loc}},\Omega}.
\]
This is the precise localized estimate used in Appendix~\ref{app:landau}: relative separation is first converted into a finite union of separated subsequences, and the classical local Plancherel-P\'olya theorem is then applied to each subsequence individually.

\subsection{Growth of the Dimension Excess Argument in Step~3 of Landau's Proof}
\label{app:landau_step3_detail}
For completeness we also record the explicit counting step used to conclude $\dim E_{a_j'}(\alpha) > N_j$ from the asymptotic \eqref{eq:landau_dim_asymp} and the low-density estimate \eqref{eq:landau_low_density_interval}.

From \eqref{eq:landau_dim_asymp} with $a = a_j' = R_j-r_{\mathrm{loc}}$,
\begin{align}
    \dim E_{a_j'}(\alpha)
     & = \frac{2\Omega(R_j-r_{\mathrm{loc}})}{\pi}+o(R_j)                   \\
     & = \frac{2\Omega R_j}{\pi}-\frac{2\Omega r_{\mathrm{loc}}}{\pi}+o(R_j) \\
     & = \frac{2\Omega R_j}{\pi}+o(R_j),
\end{align}
since the constant $-2\Omega r_{\mathrm{loc}}/\pi$ is absorbed into the $o(R_j)$ error as $R_j\to\infty$. From \eqref{eq:landau_low_density_interval},
\begin{equation}
    N_j \le \frac{2\Omega R_j}{\pi}-2\varepsilon R_j.
\end{equation}
Subtracting,
\begin{align}
    \dim E_{a_j'}(\alpha)-N_j
     & \ge 2\varepsilon R_j+o(R_j).
\end{align}
Since the leading term $2\varepsilon R_j$ tends to $+\infty$ while the error is $o(R_j)$, for every sufficiently large $j$ we have $\dim E_{a_j'}(\alpha)>N_j$, which is exactly \eqref{eq:landau_dimension_excess}.

Together with Appendices~\ref{app:landau}--\ref{app:main_eqs}, the present appendix now provides either a full derivation or an explicit statement of the precise classical theorem needed at every auxiliary sub-step previously invoked by name in the main proofs.

\clearpage
\onecolumn

\section{Notation Summary}
\label{sec:notation}
A compact summary of the principal symbols used throughout the paper is collected in Tables~\ref{tab:notation} and~\ref{tab:notation-extended}. The summary is grouped by role so that the signal-space quantities, sampling quantities, local MT constraints, hardware perturbations, reconstruction variables, and appendix operators can be cross-checked quickly against the corresponding derivations.

\begingroup
\renewcommand{\arraystretch}{1.04}
\setlength{\tabcolsep}{3pt}
\scriptsize
\begin{longtable}{|p{0.22\textwidth}|p{0.70\textwidth}|}
    \caption{Summary of Principal and Recurring Notation: Signal, Sampling, and MT Conditions}
    \label{tab:notation}\\
        \hline
        \textbf{Symbol} & \textbf{Definition} \\
        \hline
\endfirsthead
        \multicolumn{2}{c}{\footnotesize TABLE~\thetable\ (continued)}\\
        \hline
        \textbf{Symbol} & \textbf{Definition} \\
        \hline
\endhead
        \hline
\endfoot
        \multicolumn{2}{|l|}{\textit{Standard spaces and conventions}} \\
        $\mathbb{R}$, $\mathbb{Z}$ & Real line and integer lattice \\
        $L^2(\mathbb{R})$, $L^\infty(\mathbb{R})$, $\ell^2$ & Square-integrable functions, essentially bounded functions, and square-summable sequences \\
        $\|\cdot\|_{L^2}$, $\|\cdot\|_\infty$ & $L^2$ norm and supremum/essential-supremum norm \\
        $I$, $\mathbf{I}$ & Identity operator and finite-dimensional identity matrix \\
        $o(\cdot)$, $\mathcal{O}(\cdot)$ & Standard little-$o$ and big-$O$ asymptotic notation \\
        $\lesssim$, $\gtrsim$ & One-sided comparison up to a constant factor \\
        \hline
        \multicolumn{2}{|l|}{\textit{Signal and frequency}} \\
        $x(t)$ & Continuous real-valued input signal \\
        $\hat{x}(\omega)$ & Fourier transform of $x$ \\
        $j$ & Imaginary unit, $j^2=-1$ (IEEE convention) \\
        $\mathcal{F}$, $\mathcal{F}^{-1}$ & Fourier and inverse Fourier transform operators \\
        $f_{max}$ & Maximum signal frequency in Hz \\
        $\Omega$ & Angular bandwidth, $\Omega=2\pi f_{max}$ (rad/s) \\
        $PW_\Omega$ & Paley-Wiener space of $\Omega$-bandlimited finite-energy signals \\
        $\mathcal{B}_\Omega$ & Bernstein space used in the weak-limit uniqueness characterization \\
        $\mathrm{sinc}(u)$ & Normalized sinc function, $\sin(\pi u)/(\pi u)$ \\
        $x^{(k)}$, $x'$, $x''$ & $k$-th derivative of $x$; first and second derivatives \\
        $\mathrm{TV}(x;[a,b])$ & Total variation of $x$ on $[a,b]$ (Eq.~\ref{eq:tv_def}) \\
        \hline
        \multicolumn{2}{|l|}{\textit{MT event sets and sampling geometry}} \\
        $\Delta V$ & Uniform voltage threshold spacing \\
        $\{I_k\}$ & Open maximal monotone intervals of $x(t)$ \\
        $\Lambda_{\Delta V}^{\mathrm{raw}}$ & Raw threshold-intersection set before retained adjacent-transition filtering \\
        $\mathcal{A}_{\Delta V}$ & Retained adjacent-transition map applied to raw crossings \\
        $E_{\Delta V}(x)$, $\mathcal{N}$ & Retained labeled MT event record $\{(t_n,m_n)\}_{n\in\mathcal{N}}$ and its ordered event-index set \\
        $\Lambda_{\Delta V}(x)$, $\Lambda_{\Delta V}$ & Recorded MT sampling set generated by $x$ \\
        $\Lambda=\{t_n\}$ & Fixed realized nonuniform sampling sequence \\
        $\Lambda_0$ & Uniform Shannon grid used in the critical-density boundary example \\
        $t_n$, $m_n$, $y_n$ & $n$-th crossing time, threshold index, and observed threshold value $y_n=m_n\Delta V$ \\
        $A$, $B$ & Frame lower and upper bounds \\
        $D^-(\Lambda)$, $D^+(\Lambda)$ & Lower and upper Beurling densities \\
        $W(\Lambda)$ & Translate weak-limit hull of a separated sampling set \\
        $\Gamma$ & Generic weak limit of $\Lambda$; tested as a uniqueness set for $\mathcal{B}_\Omega$ \\
        $S=\{s_n\}$, $T_S$ & Selected MT subsequence and its restriction map $T_S f=(f(s_n))$ in the Shannon-type complete-interpolation theorem \\
        $\mu_n$ & Nyquist-normalized node of $S$, $\mu_n=\Omega s_n/\pi$ \\
        $G$, $W_G$ & Canonical product with zeros $\{\mu_n\}$ and associated Pavlov/Lyubarskii--Seip weight \\
        $q$, $h_r$, $H$ & Period length, normalized periodic gaps, and one-period total length in the periodic-gap theorem \\
        $\beta_r$ & Residue representatives for periodic normalized nodes, $\mu_{kq+r}=qk+\beta_r$ when $H=q$ \\
        $G_{\mathrm{per}}$, $V_{\mathrm{per}}$ & Periodic sine-type generating function and finite Vandermonde/fiber matrix \\
        $\ell_n$ & Lagrange cardinal function generated by $G$ for the normalized MT subsequence \\
        $n(t,r)$ & Generic number of sample points in $[t,t+r]$ \\
        $n_{\mathrm{ret}}(t,r)$ & Number of retained adjacent-transition MT events in $[t,t+r]$ \\
        $\delta_{\mathrm{sep}}$ & Minimum inter-sample separation, $\inf_{m\neq n}|t_m-t_n|$ \\
        $\Delta T$ & Nyquist interval, $\Delta T=1/(2f_{max})$ \\
        \hline
        \multicolumn{2}{|l|}{\textit{Kadec windows and finite-window extraction}} \\
        $\tau_0$ & Offset of the reference Nyquist grid \\
        $J_n(\tau_0)$ & Centered half-Nyquist window around $\tau_0+n\Delta T$ \\
        $g_n$ & Nyquist-grid anchor, $g_n=\tau_0+n\Delta T$ \\
        $s_n$ & Selected MT crossing paired with $g_n$ \\
        $\tilde{s}_n$ & Nyquist-padded bi-infinite extension of the selected crossings \\
        $\mathcal{N}_{\mathrm{mono}}(\tau_0)$ & Grid-index set whose centered windows lie inside monotone branches \\
        $\mathcal{I}$, $N_{\mathcal{I}}$ & Active grid-index set, $\{n:J_n(\tau_0)\subseteq[0,T_{\text{obs}}]\}$, and its sample count \\
        $V_M$, $H_{\mathrm{fin}}$, $H_{\mathcal{I},M}$ & Finite reconstruction space, exact finite-record matrix, and active-window finite MT sampling matrix \\
        $\{\phi_m\}_{m=1}^M$ & Basis functions spanning the finite reconstruction space $V_M$ \\
        $\sigma_{\min}(H_{\mathrm{fin}})$, $\sigma_{\max}(H_{\mathrm{fin}})$ & Extremal singular values in the finite-record rank condition \\
        $\delta_n$ & Normalized Kadec perturbation of the $n$-th node \\
        $L$, $L_n^{(\mathrm{mono})}$, $L_n^{(\mathrm{ext})}$ & Generic, monotone-window, and extremal normalized perturbation ratios \\
        $\theta(L)$ & Kadec perturbation constant, $1-\cos(\pi L)+\sin(\pi L)$ \\
        $\mathcal{D}_{\Delta T}$ & Unitary dilation used for bandwidth normalization \\
        \hline
        \multicolumn{2}{|l|}{\textit{Local MT regularity and sufficient conditions}} \\
        $\tau(v)$ & Inverse function $x^{-1}(v)$ on a monotone branch \\
        $v_m$, $t_m$ & Threshold level in traversal order and its crossing time \\
        $\sigma_k$, $\sigma_v$ & Sign of $x'$ on monotone branch $I_k$ and value-space sign away from an extremum \\
        $v_n$ & Local minimum velocity on window $J_n$ \\
        $\eta_n^{(\mathrm{ret})}$ & Voltage excursion from $g_n$ to nearest retained-eligible threshold \\
        $\alpha_n^{(\mathrm{ret})}$ & Retained-eligibility factor, $\eta_n^{(\mathrm{ret})}/\Delta V$ \\
        $\bar{\alpha}_n^{(\mathrm{ret})}$ & Clipped retained factor, $\max\{1/2,\alpha_n^{(\mathrm{ret})}\}$ \\
        $\alpha_{\mathrm{mono}}^{(\mathrm{ret})}$, $\bar{\alpha}_{\mathrm{mono}}^{(\mathrm{ret})}$, $\alpha_{\mathrm{ret}}$ & Worst realized, clipped, and generic retained factors used in Kadec scaling laws \\
        $v_{\mathrm{loc}}$ & Minimum velocity on the relevant monotone branch or interval \\
        $v_{min}^{(\mathrm{mono})}(x;\tau_0)$ & Minimum over enumerated monotone Kadec windows for offset $\tau_0$ \\
        $v_{min}$ & Context-specific minimum velocity used in design and hardware bounds \\
        $t^*$ & Non-degenerate extremum of $x$ \\
        $\kappa=x''(t^*)$ & Local curvature at the extremum $t^*$ \\
        $\delta_0(t^*;\Delta V)$ & Directional raw threshold-phase gap from an extremum \\
        $\delta_{\mathrm{ret}}^{(\sigma)}(t^*;\Delta V)$ & Directional retained phase gap after same-threshold suppression \\
        $\chi_{\mathrm{ret}}^{(\sigma)}(t^*)$ & Indicator that the first raw crossing on side $\sigma$ is suppressed as a same-threshold return \\
        $\kappa_{min}$, $\mathcal{E}_x$ & Minimum extremal curvature magnitude and active-window extremum set \\
        $\rho_\pm(t^*)$, $\rho(t^*)$, $\rho_{min}$ & Left/right monotone radii, extremum isolation radius, and minimum active-window isolation radius \\
        $M_3(t^*)$ & Local third-derivative envelope on the isolated extremal neighborhood \\
        $\varsigma_n$, $d_n$, $h_n$ & Extremum-side sign, extremum-grid offset, and first retained-crossing distance from the extremum \\
        $\Delta V_{\mathrm{vel,clean}}^*$, $\Delta V_{\mathrm{vel,ret}}^*$ & Clean-topology and retained-safe velocity margins \\
        $\Delta V_{\mathrm{curv,clean}}^*$, $\Delta V_{\mathrm{curv,ret}}^*$ & Clean-topology and retained-safe curvature margins \\
        $\Delta V_{\mathrm{Kadec,ret}}^*(x;\tau_0)$ & Phase-uniform retained-safe fixed-grid Kadec design margin \\
        $\mathsf{Cert}_{\mathrm{Kadec}}(x,\Delta V,\tau_0)$ & Fixed-$\Delta V$ post-realization retained-subsequence certificate \\
        $\Delta V_{\mathrm{clean}}^*(x;\tau_0)$ & Clean fixed-topology threshold used for plotted normalization/design \\
        $L_{MT}$ & Worst-case MT Kadec perturbation over active windows \\
        $A_{MT}$, $B_{MT}$ & Signal-instance MT frame bounds \\
        $\kappa_{MT}$, $\kappa_{\text{target}}$ & MT condition number $B_{MT}/A_{MT}$ and target condition number for design \\
        $S_{\mathrm{in}}$, $S_{\mathrm{out}}$, $S_{\mathrm{bdry}}$ & Inside-window MT sample energy, outside padded-grid energy, and unrecorded boundary-collar grid samples \\
        $E_{\mathrm{tail}}^{\mathrm{collar}}$ & Collar-enlarged tail energy outside the active window \\
        $C_{\Delta T,\Omega}$ & Localized Plancherel-Polya constant for the padded Nyquist sub-grid \\
        \hline
        \multicolumn{2}{|l|}{\textit{Density and signal classes}} \\
        $T_{\text{obs}}$ & Active observation-window length \\
        $R_x^{(c)}(\tau)$ & Range of $x$ over the centered window at $\tau$ \\
        $\rho_{MT}(t)$ & Instantaneous threshold-traversal envelope, $|x'(t)|/\Delta V$; not itself the retained counting measure \\
        $\bar{\rho}(t,r)$ & Window-average threshold-traversal envelope, $r^{-1}\int_t^{t+r}\rho_{MT}(\tau)d\tau$ \\
        $\bar{\nu}(t,r)$ & Local mean absolute velocity (Eq.~\ref{eq:velocity}) \\
        $\nu_{min}$ & Minimum mean absolute velocity over active sub-intervals \\
        $R$, $\nu_{min}^{(R)}$ & Minimum certified window length and finite-window velocity lower envelope \\
        $N_{\mathrm{mono}}(t,r)$ & Number of maximal monotone segments in $[t,t+r]$ \\
        $N_{\mathrm{supp}}(t,r)$ & Number of monotone pieces whose first raw hit is suppressed as a same-threshold return \\
        $C_{\mathrm{mono}}$, $C_{\mathrm{supp}}$ & Additive finite-window constants in monotone-piece and suppression-count certificates \\
        $Z_x(t,r)$ & Number of zeros of $x'$ in $[t,t+r]$ \\
        $\eta_x$ & Operational extremum density of $x$ \\
        $\eta_0$, $C_0$ & Class-level extremum-density slope and additive constant in the anti-superoscillation certificate \\
        $\eta_{\mathrm{supp},x}$, $\eta_{\mathrm{supp},0}$ & Instance and class-level same-threshold-suppression density certificates \\
        $f_0$ & Effective oscillation scale used in density-calibration examples \\
        $\mathcal{X}$, $\mathcal{S}$ & Signal classes and structured point-set families used in class-hull and finite-state prior statements \\
        \hline
    \end{longtable}

\begin{longtable}{|p{0.22\textwidth}|p{0.70\textwidth}|}
    \caption{Summary of Principal and Recurring Notation: Hardware, Reconstruction, and Appendix Quantities}
    \label{tab:notation-extended}\\
        \hline
        \textbf{Symbol} & \textbf{Definition} \\
        \hline
\endfirsthead
        \multicolumn{2}{c}{\footnotesize TABLE~\thetable\ (continued)}\\
        \hline
        \textbf{Symbol} & \textbf{Definition} \\
        \hline
\endhead
        \hline
\endfoot
        \multicolumn{2}{|l|}{\textit{Hardware error theory}} \\
        $\hat{t}_k$, $\epsilon_k^{(j)}$ & Reported crossing time and TDC jitter error \\
        $\tilde{x}(t)$, $w(t)$ & Noisy comparator input and additive noise waveform, $\tilde{x}=x+w$ \\
        $\tilde{t}_k$ & Noise-shifted crossing time before TDC jitter is added \\
        $v_k$ & Local velocity at an implicit point between clean and perturbed crossings \\
        $\widehat{\Lambda}$, $\mathcal{M}$ & Reported event set and partial matching from clean to reported crossings \\
        $\Lambda_{\mathrm{spur}}$, $\Lambda_{\mathrm{miss}}$, $\mathcal{J}$ & Spurious events, missed clean events, and certified matched subset after topology changes \\
        $V_m^{(\mathrm{actual})}$ & Actual physical threshold level of comparator $m$ \\
        $\hat{V}_{m_k}$ & Calibrated threshold level reported for event $k$ \\
        $\bar{q}_m$ & Calibrated threshold offset reported by the threshold table \\
        $r_m$, $r_{m_k}$ & Unreported residual threshold offset for level $m$ and event $k$ \\
        $\sigma_j$ & Worst-case TDC timing jitter bound \\
        $\sigma_n$ & Front-end additive noise amplitude bound \\
        $\sigma_q$ & Threshold imprecision bound (DAC/comparator/reference) \\
        $\hat{s}_n$ & Hardware-perturbed selected crossing paired with $g_n$ \\
        $\alpha_{\mathrm{hw}}^{(\mathrm{ret})}$, $\bar{\alpha}_{\mathrm{hw}}^{(\mathrm{ret})}$ & Realized and clipped retained-eligibility factors used in hardware perturbation bounds \\
        $R_{2,n}$, $R_{2,\mathrm{hw}}$ & Certified second-order implicit time-shift remainder and its selected-window supremum \\
        $\Delta V_{\mathrm{hw}}^*(v_n,\bar{\alpha}_n^{(\mathrm{ret})},R_{2,n})$ & Hardware-reduced admissible threshold spacing with implicit-shift remainder \\
        $L_{\mathrm{hw},n}^{(1)}$, $L_{\mathrm{hw}}^{(1)}$ & Local and global leading-order hardware Kadec diagnostics \\
        $L_{\mathrm{hw},n}^{\mathrm{det}}$, $L_{\mathrm{hw}}^{\mathrm{det}}$ & Local and global deterministic hardware Kadec perturbations (Eqs.~\ref{eq:L_hw_local}--\ref{eq:L_hw}) \\
        $A_{\mathrm{hw}}^{\mathrm{det}}$, $B_{\mathrm{hw}}^{\mathrm{det}}$ & Auxiliary deterministic hardware-perturbed frame bounds under tail-prior use \\
        $\kappa_{\mathrm{hw}}^{\mathrm{det}}$ & Deterministic hardware condition number, $B_{\mathrm{hw}}^{\mathrm{det}}/A_{\mathrm{hw}}^{\mathrm{det}}$ \\
        $\hat{y}_k$ & Reported crossing value $m_k\Delta V+\bar{q}_{m_k}$ \\
        $e_k$, $\mathbf{e}$, $E_R$ & Per-sample value error, the corresponding error vector, and the $\ell^2$ Taylor-remainder budget on $\mathcal{I}$ \\
        $H_{\mathrm{hw},M}$, $s_M$ & Hardware-perturbed finite sampling matrix and $s_M=\sigma_{\min}(H_{\mathrm{hw},M})$ \\
        $H_M$, $\delta H_M$, $\mu_M$ & Clean finite matrix, hardware matrix perturbation, and $\|H_M^\dagger\|_2\|\delta H_M\|_2$ \\
        $\boldsymbol{\eta}$ & Residual value-error vector in the finite-matrix perturbation bound \\
        $N_{\mathrm{in}}$ & Number of inside-window samples used in the error budget \\
        $\mathrm{SNR}_{\mathrm{recon}}$, $\mathrm{SNR}_{M}$, $S_{\mathrm{tar}}$ & Full-signal SNR, model-space SNR, and target linear SNR \\
        $\Gamma_A^{\mathrm{det}}$, $\gamma_A^{\mathrm{det}}$, $\Gamma_M^{\mathrm{det}}$, $\gamma_M^{\mathrm{det}}$ & Deterministic tail-prior and model-space hardware budgets after reserving $E_R$ \\
        $E_M$, $R_M(S_{\mathrm{tar}})$, $E_{\mathrm{tc}}$ & Finite-model residual allowance, remaining full-signal hardware budget, and external tail/collar error budget \\
        $\beta$ & Velocity amplification factor, $2\pi f_{max}\|x\|_\infty/v_{min}$ \\
        \hline
        \multicolumn{2}{|l|}{\textit{Reconstruction variables and operators}} \\
        $K(t,s)$ & Reproducing kernel of $PW_\Omega$ (Eq.~\ref{eq:repro_kernel}) \\
        $k_n(t)$ & Kernel function at $t_n$, $k_n(t)=K(t,t_n)$ \\
        $\mathfrak{S}_{\Lambda}$ & Frame operator on $PW_\Omega$ associated with $\Lambda$ \\
        $\{\tilde{k}_n\}$ & Dual frame functions, $\tilde{k}_n=\mathfrak{S}_{\Lambda}^{-1}k_n$ \\
        $x_{\mathrm{rec}}$, $x_{\mathrm{tail}}$ & Finite-record and tail-conditional reconstructed signals \\
        $a_m$, $M$, $N$ & Sinc coefficient, number of retained grid coefficients, and number of samples \\
        $P_{V_M}$ & Projection onto the declared finite reconstruction space $V_M$ \\
        $\mathbf{y}$, $\mathbf{a}$ & Measurement vector and coefficient vector \\
        $\mathbf{H}$ & Sampling matrix, $(\mathbf{H})_{n,m}=\mathrm{sinc}((t_n-m\Delta T)/\Delta T)$ \\
        $\mathbf{H}^{\dagger}$ & Moore-Penrose pseudo-inverse of $\mathbf{H}$ \\
        $s_{\mathcal{I},M}$, $s_{\mathrm{floor}}$, $\kappa_{\max}$ & Realized finite-matrix singular-value certificate and acceptance thresholds \\
        $\mathbf{I}$ & Identity matrix in finite-dimensional normal equations \\
        $P_\Omega$ & Bandlimiting projection onto $PW_\Omega$ \\
        $x_k(t)$ & $k$-th iterative reconstruction estimate \\
        $\lambda$, $\lambda^*$ & Relaxation parameter and optimal relaxation parameter \\
        $\rho$, $\rho^*$ & Iterative convergence rate and optimal convergence rate \\
        $\delta(\cdot)$ & Dirac delta distribution \\
        \hline
        \multicolumn{2}{|l|}{\textit{Appendix operators and auxiliary recurring symbols}} \\
        $\mathbf{1}_E$ & Indicator function of a set $E$ \\
        $T_a$, $T_R$ & Time-frequency concentration operators on windows $[-a,a]$ and $[-R,R]$ \\
        $D_a$, $D_R$, $B_\Omega$ & Time-truncation and bandlimiting operators \\
        $c_\Omega$ & Bandwidth-normalization scale factor $\Omega/\pi$ in Appendix~\ref{app:landau} \\
        $\widetilde{\Lambda}$ & Scaled sampling set, $\widetilde{\Lambda}=(\Omega/\pi)\Lambda$ \\
        $E_a(\alpha)$ & High-energy eigenspace of $T_a$ with eigenvalues at least $\alpha$ \\
        $\mu_k^{(a)}$, $\phi_k^{(a)}$ & Eigenvalues and orthonormal eigenfunctions of $T_a$ \\
        $\kappa_x(t)$ & Normalized reproducing kernel centered at $x$ \\
        $S_\Lambda$, $S_{\widetilde{\Lambda}}$ & Sampling operators on $\Lambda$ and on the scaled set $\widetilde{\Lambda}$ \\
        $\lambda_n$ & Normalized Kadec node $t_n/\Delta T=n+\delta_n$ \\
        $e_n(t)$, $f_n(t)$ & Reference and perturbed exponential basis functions in Kadec's proof \\
        $F(t)$ & Trigonometric polynomial $\sum_n c_n e^{jnt}$ in Kadec's proof \\
        $\mathrm{Tr}(T_a)$, $\mathrm{Tr}(T_R)$ & Trace of the indicated concentration operator \\
        $N_\Lambda$, $\Lambda^{(m)}$ & Local multiplicity bound and separated color classes in Appendix~\ref{app:pp_local} \\
        $r_{\mathrm{loc}}$, $C_{\Lambda,\Omega}$, $C_{\delta,\Omega}$, $C_{r_{\mathrm{loc}},\Omega}$ & Localization radius and positive constants in localized Plancherel-Polya estimates \\
        \hline
    \end{longtable}

Purely local dummy variables, summation indices, and one-step proof constants are defined in situ and are intentionally omitted from the tables.
\endgroup

\clearpage
\twocolumn
\flushbottom

\bibliographystyle{IEEEtran}
\bibliography{references}

\end{document}